\algrenewcommand\algorithmicindent{2mm}%
\algnewcommand{\LineComment}[1]{\Statex \hskip\ALG@thistlm /*  
			      \parbox[t]{\dimexpr\textwidth-2\leftmargin-\labelsep-\labelwidth-2\rightmargin}{ #1 */ \strut}}
\newlength{\continueindent}
\newcommand*{\ALG@customparshape}{\parshape 2 \leftmargin \linewidth \dimexpr\ALG@tlm+\continueindent\relax \dimexpr\linewidth+\leftmargin-\ALG@tlm-\continueindent\relax}
\apptocmd{\ALG@beginblock}{\ALG@customparshape}{}{\errmessage{failed to patch}}
\newtheorem{lem}{Lemma}
\newtheorem{theorem}{Theorem}
\newtheorem{observation}[theorem]{Observation}
\def\riso{\underset{r}\approx}
\def\notiso{\not\approx}
\def\mojiparline#1{
    \newcounter{mpl}
    \setcounter{mpl}{#1}
    \@tempdima=\linewidth
    \advance\@tempdima by-\value{mpl}zw
    \addtocounter{mpl}{-1}
    \divide\@tempdima by \value{mpl}
    \advance\kanjiskip by\@tempdima
    \advance\parindent by\@tempdima
}
\providecommand{\ch}{\mathrm{ch}}
\providecommand{\rank}{\mathrm{rank}}
\providecommand{\lca}{\mathrm{lca}}
\providecommand{\col}{\mathrm{col}}
\providecommand{\val}{\mathrm{val}}
\providecommand{\deg}{\mathrm{deg}}
\providecommand{\mul}{\mathrm{mul}}
\providecommand{\code}{\mathrm{code}}
\providecommand{\pot}{\mathrm{pot}}
\providecommand{\anc}{\mathrm{anc}}
\providecommand{\mm}{\mathrm{M}}
\providecommand{\rev}{\mathrm{rev}}
\providecommand{\res}{\mathrm{res}}
\providecommand{\rootG}[1]{\mathrm{r}(#1, G)}
\providecommand{\pathset}{\mathcal{P}}
\providecommand{\fvset}{\mathbf{f}}
\providecommand{\freq}{\mathrm{frq}}
\providecommand{\Ch}{\mathrm{ch}}
\providecommand{\parent}{\mathrm{p}}
\providecommand{\depth}{\mathrm{d}}
\newcommand{\procMain}{1}
\newcommand{\procTreeSignature}{2}
\newcommand{\procTreeRanking}{3}
\newcommand{\procChildCheck}{4}
\newcommand{\procPotEdges}{5}
\newcommand{\procRestrictedRoot}{6}
\newcommand{\procRestrictedCentroid}{7}
\begin{document}

 \title{Enumerating Chemical Graphs with Two Disjoint Cycles
	    Satisfying Given Path Frequency Specifications}
	    
 \author[1]{Kyousuke Yamashita}
 \author[1]{Ryuji Masui}
 \author[1]{Xiang Zhou}
 \author[1]{Chenxi Wang}
 \author[1]{Aleksandar Shurbevski}
 \author[1]{Hiroshi Nagamochi}
 \author[2]{Tatsuya Akutsu}
 \affil[1]{Department of Applied Mathematics and Physics, 	  Graduate School of Informatics, 	  Kyoto University, \mailsamp}
 \affil[2]{Bioinformatics Center,             Institute for Chemical Research,             Kyoto University,       \mailsbio}\renewcommand\Authands{ and }

\date{\today}
 \maketitle

\begin{center}
{\bf Abstract}\\
\end{center}
\begin{sloppypar}
\noindent Enumerating chemical graphs satisfying given constraints is 
a fundamental problem in mathematical and computational chemistry,
and plays an essential part in a recently proposed framework
for the inverse QSAR/QSPR.
In this paper, 
constraints are given by feature vectors each of which consists
of the frequencies of paths in a given set of paths.
We consider the problem of enumerating chemical graphs 
that satisfy the path frequency constraints, 
which are given by a pair of feature vectors specifying upper and lower
bounds of the frequency of each path.
We design a branch-and-bound algorithm for enumerating chemical graphs
of bi-block 2-augmented structure, that is
graphs that contain two edge-disjoint cycles.
We present some computational experiments with an
implementation of our proposed algorithm.
\end{sloppypar}

\section{Introduction}\label{sec:introduction}
Among various combinatorial problems in chemistry, 
the enumeration of chemical structures is a fundamental problem and
has a long history going as far back as the work of  Cayley~\cite{cayley1875}.
Applications of the enumeration of chemical structures include 
structure determination using mass-spectrum and/or NMR-spectrum~\cite{BF78-dendral, FS96-chemics}, 
virtual exploration of the chemical universe~\cite{FR07, MS07}, 
reconstruction of molecular structures from their signatures~\cite{FCV03, HDK93}, 
and classification of chemical compounds~\cite{DKWK05}. 
It also plays a key role in 
inverse QSAR/QSPR (quantitative structure-activity
relationship/quantitative 
structure-property relationship)~\cite{FCV03, HDK93, KHLF93, Skvortsova93}, 
which seeks for new chemical compounds having specified activities/properties.
A closely related problem, called the \textit{pre-image problem}, 
has been studied in the field of machine learning
\cite{BZT04, WSB04}, in which
some properties of a desired object are computed as a feature vector
in a feature space, 
and then the feature vector is mapped back to the input space, 
where feature vectors correspond to descriptors in QSAR/QSPR studies and
the object which is mapped back is called a {\em pre-image}. 
 Recently, Azam \textit{et al.}~\cite{ACZSNA20, IWSNA20}, as well as
 Chiewvanichakorn {\it et al.}~\cite{CWZSNA20} used
 graph enumeration
 in a newly developed framework for the inverse QSAR/QSPR
  together with the idea of Akutsu and Nagamochi~\cite{AN19}
 for getting an input vector that achieves a given output value 
 of an artificial neural network
 to search for chemical compounds that are likely to achieve 
 a given value for some chemical property.

Widely used ways of defining feature vectors for graph objects 
are based on the frequency of labeled paths~\cite{KTI03, MUAPV05}, 
or small fragments~\cite{BFSS03, DKWK05}. 
Akutsu \textit{et al.}~\cite{akutsu12} have formulated the graph 
pre-image problem as the problem of inferring graphs 
from the frequency of paths of labeled vertices, 
and proved that the problem is NP-hard even for planar graphs with bounded degree~\cite{akutsu12}. 
They also developed a dynamic programming algorithm for trees with
bounded degree, which was extended by Fazekas \textit{et al.}~\cite{FIOST13} 
for graphs of bounded degree and treewidth~2.
Although these algorithms do not work in time polynomial in the input
size (i.e., the size of a given feature vector), 
Nagamochi~\cite{nagamochi09} has proved that a graph determined by 
the frequency of paths with length 1 can be found in polynomial time, 
if such a graph exists. 

From a practical viewpoint, 
useful enumeration tools such as MOLGEN~\cite{MOLGEN5}, 
OMG~\cite{OMG}, and similar  have been
developed.
However, 
they are not necessarily very efficient if large structures are to be enumerated
because many of them treat general graph structures. 
In particular, the number of
molecules (i.e., chemical graphs) with up to 30 atoms (vertices)
{\tt C}, {\tt N}, {\tt O}, and {\tt S}, 
may exceed~$10^{60}$~\cite{BMG96}.
It might be possible to develop much faster algorithms 
if we restrict the class of target chemical structures
and employ recent techniques for enumeration of graph structures. 
Fujiwara~\textit{et al.}~\cite{Fujiwara08} studied the enumeration 
of tree-like chemical graphs that satisfy
a given feature vector which specifies the frequency of all paths of up to a prescribed length $K$
in a chemical compound to be constructed.
They proposed a branch-and-bound algorithm 
which consists of a branching procedure based on 
the tree enumeration algorithm of Nakano and Uno~\cite{NU05_colored, NU03_rooted} 
and bounding operations based on properties of path frequency and atom-atom bonds.
To reduce the size of the search space, 
Ishida {\em et al}.~\cite{IZNA08} introduced a new bounding operation, 
called \emph{detachment-cut}, 
based on the result of  Nagamochi~\cite{nagamochi09}. 
However, in many instances, constraining the frequency of prescribed paths 
by a single feature vector admits no solution.
To seek solutions effectively with relaxed constraints, 
Shimizu {\em et al}.~\cite{SNA11} 
introduced the problem of enumerating 
tree-like hydrogen-suppressed chemical graphs that  
satisfy one of a given set of feature vectors, 
which is specified as the range between a pair of an upper and a lower feature vector. 
They proposed a branch-and-bound algorithm for the problem, 
and afterward 
Suzuki {\em et al}.~\cite{Suzuki14} proposed
a more efficient and effective algorithm, called a 2-Phase algorithm. 
Furthermore, Suzuki~{\em et al.}~\cite{Suzuki14} extended the class of structures to 
{\em monocyclic graphs} (i.e., graphs containing exactly one cycle).
They proposed an algorithm for enumerating monocyclic graphs
by adding a simple edge to a pair of nonadjacent vertices of a multi-tree 
obtained by the 2-Phase algorithm. 
The above work on efficient enumeration of restricted classes of chemical graphs
has been identified by
Vogt and Bajorath~\cite{VB12} as a new trend in 
the field of chemoinformatics.

As shown by Nakano and Uno~\cite{NU03_rooted, NU05_colored}, 
the class of trees admits a nice labeling scheme based on a depth-first-search, 
called ``left-heavy trees,"
which enables us to generate trees in a constant time per tree
without executing any explicit 
comparison for isomorphism with
previously generated labeled graphs. 
Development of algorithms for  enumerating chemical graphs with a ``non-tree structure"
is a challenging task if we still wish to attain as high computational
efficiency as has been achieved for the enumeration of tree-like chemical graphs, 
since no such effective labeling scheme is known for general graphs. 
 
A multi-graph is a graph that can have multiple edges between the same pair of vertices.
In this work, two multiple edges with the same endvertices are not considered as a cycle. 
Rather, these multi-edges can be thought of as multiple 
bonds in chemical compounds. 
Let us define  a {\em $k$-augmented tree} to be  a connected multi-graph such that the number of pairs of
adjacent vertices minus the number of vertices is $k-1$. 
That is, a $k$-augmented tree is a graph obtained by adding some edges to a pair of nonadjacent vertices in a $(k-1)$-augmented tree. 
Hence a multi-tree is a 0-augmented tree. 

In the $97, 092, 888$  chemical compounds in
the  PubChem database, 
the ratio of the number of chemical compounds of a $k$-augmented tree structure 
to that of  all registered   chemical compounds
is around $2.9\%$, $13.3\%$, $28.2\%$, $24.2\%$  and $16.0\%$   
for $k=0, 1, 2, 3$ and $4$, respectively.

As a next step toward efficient enumeration of non-tree chemical graphs, 
we consider the problem of enumerating 
chemical graphs with a 2-augmented tree structure.
We classify 2-augmented trees into two types of structure;
one type of structure has  two edge-disjoint cycles and
the other has three cycles each pair of which share edges. 
We call the former structure a {\em bi-block  2-augmented tree}.
See Figure~\ref{fig-anchor} for an example of a bi-block 2-augmented tree  structure.

\begin{figure}[htbp]
\begin{center}
\includegraphics[width=8cm]{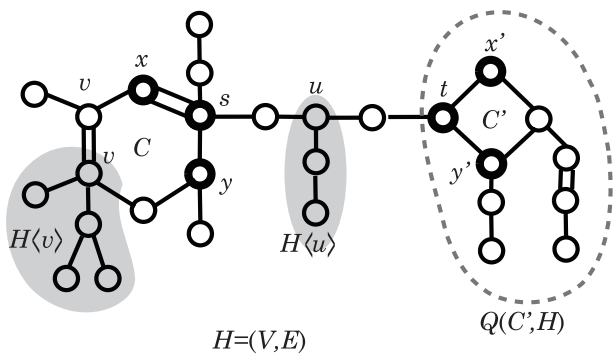}
\caption{An example of a bi-block 2-augmented tree $H$ with cycles $C$ and $C'$, 
where the vertices $s$ and $t$ are anchor vertices of $C$ and $C'$, respectively, 
and the pairs $\{s, x\}$ and $\{s, y\}$ are anchor pairs of $C$, 
and the pairs $\{t, x'\}$ and $\{t, y'\}$ are anchor pairs of $C'$.  
The subgraph enclosed by a dotted line represents $Q(C', H)$, and the subgraphs
 shaded in gray represent $H\langle u \rangle$ and $H\langle v \rangle$, respectively. } 
\label{fig-anchor}
\end{center}
\end{figure}

In this paper we design a branch-and-bound algorithm for enumerating 
bi-block 2-augmented trees
that satisfy a specification of path frequencies of graphs to be enumerated. 
Although no effective labeling scheme
akin to the one for trees proposed by 
Nakano and Uno~\cite{NU03_rooted, NU05_colored}
 is known for bi-block 2-augmented trees, 
we can have a monocyclic chemical graph by removing the edges between 
two adjacent vertices in one of the cycles of a bi-block 2-augmented tree. 
Thus, we can obtain bi-block 2-augmented trees by 
adding edges between a pair of nonadjacent vertices of a monocyclic chemical graph. 
For a bi-block 2-augmented tree $H$ and a pair $\{x, y\}$ of adjacent vertices in $H$, 
an algorithm for enumerating monocyclic chemical graphs, 
e.g., the one which 
Suzuki~{\em et al}.~\cite{Suzuki14} proposed, can be used without any major modification, 
to enumerate 
such monocyclic chemical graphs obtained by removing all edges between $x$ and $y$ 
from the bi-block 2-augmented tree $H$.
Thus the main task is to efficiently test canonicity (isomorphic uniqueness) of  
generated labeled bi-block 2-augmented trees.
To design such a procedure, 
we define  the  ``parent''  of a bi-block 2-augmented tree $H$ 
to be a monocyclic chemical graph $G_H$ so that the same bi-block 2-augmented tree $H$ will not be
generated from different monocyclic chemical graphs $G$ and $G'$.
As a result, we can combine a new procedure
with the algorithm for enumerating monocyclic chemical graphs to obtain
an algorithm for enumerating hydrogen-suppressed 
chemical graphs with bi-block 2-augmented tree
structure which satisfy a given specification on path frequencies.

From our experimental results where path frequency specifications
were obtained from chemical compounds from the PubChem database, 
we observe that our algorithm offers a clear advantage when
frequencies of non-trivial paths are specified as input, 
over other chemical graph enumeration tools such as
MOLGEN~5.0~\cite{MOLGEN5}, which is  regarded to be a state-of-the art 
software package for chemical graph enumeration.

\section{Preliminaries}
\label{sec:chemical-graphs}
This section reviews some basic definitions on graphs and introduces chemical graphs used in the paper.  

\subsection{Multigraphs}
Let $\mathbb{Z_+}$ denote the set of positive integers.
For two integers $a$ and $b$ with $a \leq b$, the set of all integers 
$i$ with $a \leq i \leq b$ is denoted by $[a, b]$. 

A graph is defined to be an ordered pair $(V, E)$ of a finite set $V$ 
of vertices and a finite set $E$ of edges, where an edge is an unordered pair of distinct vertices. 
Thus, we assume that no self-loops exist in graphs. 
An edge with two endvertices $u$ and $v$ is denoted by~$uv$. 

A graph such that  each vertex and each edge have unique indexes is called a {\em labeled graph}. 
We treat graphs as labeled when we distinguish vertices, edges or some other structures in a graph. 
A graph is called a {\em multi-graph} when there can be more than
one edge between the same endvertices, and a graph is called a {\em simple graph} 
if no multiple edges are allowed. 
In this paper, graphs are treated to be multigraphs unless stated otherwise. 

We denote the vertex set and the edge set of a graph $G$ by $V(G)$ and $E(G)$, respectively. 
Let $G$ be a graph. 
The multiplicity, i.e., the number of edges between
two vertices $u, v \in V(G)$ is denoted by $\mathrm{mul}_G(u, v)$. 
If the graph $G$ is clear from the context, 
then we denote $\mathrm{mul}_G(u, v)$ by $\mathrm{mul}(u, v)$. 
An edge $uv$ in  $G$ is called {\em simple} if $\mathrm{mul}(u, v)=1$. 
If there is no edge between $u$ and $v$, then we denote $\mathrm{mul}(u, v) = 0$. 
For a vertex $v$ in  $G$, we call the number of edges incident to $v$ the {\em degree}
 of $v$ and denote it by $\mathrm{deg}(v)$. 
A pair $(V', E')$ of  subsets $V'\subseteq V(G)$ 
and $E' \subseteq E(G)$  is called a {\em subgraph} of $G$
if  $(V', E')$ forms a graph; i.e., the two endvertices of each edge in $E'$ are contained in $V'$. 
We write $G' \subseteq G$ when $G'$ is a subgraph of $G$. 
We say that a subset $X\subseteq V(G)$ {\em induces} a subgraph $G'$ if
$V(G')=X$ and $E(G')$ contains every edge between two vertices in $X$, 
and the subgraph induced from $G$ by $X$ is denoted by $G[X]$. 
Let $\{u, v\} \subseteq V(G)$ be a pair of  vertices. 
Let $G - uv$ denote
 the  graph $G'$  obtained by removing $\mathrm{mul}_G(u, v)$
 simple edges between $u$ and $v$ from $G$.
Conversely, let $G + q \cdot uv$ denote the  graph $G'$ 
obtained by adding $q\geq 0$ simple edges between $u$ and $v$ to $G$, 
i.e., $\mathrm{mul}_{G}'(u, v)=\mathrm{mul}_G(u, v)+q$.
In particular, if $q=1$, then we denote $G+1\cdot uv$ by $G+uv$. 
For a subset $E' \subseteq E(G)$, let $G - E'$ denote the graph $(V(G), E(G) \setminus E')$.

For a nonnegative integer $k$, a  graph $P$   
which consists of $k+1$ distinct vertices $v_0, v_1, \ldots, v_k $ and
$k$ edges $v_{i-1}v_{i}$, $i \in [1, k]$, is called a {\em path} of length $k$, 
also a $v_0, v_k$-path, and is denoted by $P=v_0 v_1\cdots v_k$. 
A graph $C$ which consists of a path $v_0v_1 \cdots v_k$ of length $k \geq 2$ and an edge 
between $x_k$ and $x_0$ is called a {\em cycle}, denoted by $C=v_0 v_1 \cdots v_k v_0$. 
Thus, in this paper, a graph with two vertices $u$ and $v$ 
and two edges between them is not treated
as a cycle.
A connected multi-graph (resp., simple graph) that does not contain any cycle
as its subgraph is called a {\em multi-tree} (resp., {\em simple tree}). 
 
Every tree $T$ has either a vertex $v$ or
an adjacent vertex pair $\{v, v'\}$ 
removal of which  leaves no connected component 
with more than $\lfloor|V(T)|/2 \rfloor$ vertices~\cite{J69}. 
Such a vertex or an adjacent vertex pair is called a {\em centroid}, 
where  a centroid $v$ is called a {\em unicentroid} and
a centroid $\{v, v'\}$  is called a {\em bicentroid}.

A {\em rooted} tree is a tree in which either a vertex or an edge is designated as a root, 
where a vertex $v$ is called an {\em ancestor} of another vertex $u$ if $v$ 
appears along the path between $u$ and the root, 
and an ancestor $v$ of a vertex $u$ is called the {\em parent} if $v$ and $u$ are adjacent. 
In a rooted multi-tree $T$, the parent vertex of a non-root vertex $v$ is denoted by 
$\parent(v)$ and 
the {\em depth}  $\depth(v)$ of a vertex $v$ is defined to be 
the length of the path in $T$ between $v$ and the root. 
If a vertex $v \in V(T)$ is the root or an endvertex of the root edge, 
then it has no parent vertex and $\depth(v)=0$. 
For a vertex $v \in V(T)$, we denote by $D_v$ 
the set of descendants of $v$, where $v\not\in D_v$, 
by $\ch(v)$ the set of children of $v$, 
and  by  $T_v$   the subtree $T[\{v\}\cup D_v]$ of $T$ induced from $T$ by $\{v\}\cup D_v$. 
For an edge $uv \in E(T)$ such that $u=\parent(v)$, 
we denote by $T_{uv}$  the subtree $T[\{u, v\}\cup D_v]$ of $T$.
That is, $T_{uv}$ consists of the subtree $T_v$ and the vertex $u=\parent(v)$
 joined by $\mathrm{mul}_T(u, v)$ edges between $u$ and $v$.

\subsection{$k$-Augmented Trees}
In Section~\ref{sec:introduction}, we defined a $k$-augmented tree 
to be a connected multi-graph such that the number 
of pairs of adjacent vertices minus the number of vertices is $k-1$. 
Note that we can obtain a $k$-augmented tree from a multi-tree $T$ by adding edges 
between $k$ different pairs of nonadjacent vertices in~$T$. 
A monocyclic chemical graph is a connected graph which contains exactly one cycle. 
We introduce terminology on the structure 
of monocyclic chemical graphs and bi-block 2-augmented trees.
\medskip

\noindent {\bf Monocyclic chemical graphs~} 
Let $G$ be a monocyclic chemical graph with a unique cycle $C$. 
For a vertex $v \in V(C)$, we define the {\em pendent tree} 
$G \langle v \rangle$ rooted at $v$   to be 
the subgraph induced by the set of vertices reachable from $v$
in the graph $G - E(C)$.

We treat $G\langle v \rangle$ as a tree rooted at $v$. 
We extend the notation such that for vertices $v \in V(C)$ and 
$u \in V(G \langle v \rangle)$, $u \neq v$, 
we denote by $G \langle u \rangle$ the tree induced in 
$G\langle v \rangle$ by the set of vertices reachable
from $u$ in the graph $G\langle v \rangle - u\parent(u)$, and take it to be rooted at~$u$.
For a vertex $v \in V(G)$, let $\mathrm{r}(v, G)$ denote the vertex 
$v^*$ in $C$ such that $v \in V(G \langle v^* \rangle)$.  
\medskip

\noindent {\bf Bi-block 2-augmented trees~} 
Let $H$ be a bi-block 2-augmented tree with two edge-disjoint cycles $C$ and $C'$. 
We define subgraphs $P(C, C')$ and $Q(C, H)$ as follows.
We define $P(C, C')$ to be the path from $C$ to $C'$ that includes exactly 
one vertex from $C$ and $C'$ as endpoints, 
 respectively, where  $P(C, C')$ consists of a single vertex when the two cycles share a vertex. 
We call each endpoint of this path, or the vertex that is contained by both cycles an {\em anchor vertex}.  
We denote the anchor vertex in $C$ of $H$ by $\mathrm{anc}(C, H)$. 
We call a pair of an anchor vertex $v=\mathrm{anc}(C, H)$ 
and a neighbor of $v$ in $C$ an {\em anchor pair} in $C$. 
Let $\mathrm{mma}(H)$ denote the minimum multiplicity over all anchor pairs. 
We define $Q(C, H)$ to be the subgraph induced from
$H$ by the set of vertices reachable from $\mathrm{anc}(C, H)$
in the graph $H - (E(P(C, C')) \cup E(C'))$.

Figure~\ref{fig-anchor} illustrates a bi-block 2-augmented tree $H$
with two edge-disjoint cycles $C$ and $C'$. 
In the figure, we see that $s = \mathrm{anc}(C, H)$, $t = \mathrm{anc}(C', H)$, $\{s, x\}$ and $\{s, y\}$
 are the anchor pairs of $C$ and $\{t, x'\}$ and $\{t, y'\}$ are the anchor pairs of $C'$, 
 and $\mathrm{mma}(H)=1$.
The subgraph enclosed by a dotted line in this figure represents $Q(C', H)$.

In the context of bi-block 2-augmented trees, 
for each vertex $v \in V(P(C, C')) \cup V(C) \cup V(C')$ we define
the pendent tree $H \langle v \rangle$ to be the tree rooted at $v$
and induced by $v$ and all vertices reachable
from $v$ 
in the graph $H - (E(P(C, C')) \cup E(C) \cup E(C'))$.
Again, we extend the notation, such that for a vertex $v \in  V(P(C, C')) \cup V(C) \cup V(C')$ 
and $u \in V(H \langle v \rangle)$, $u \neq v$, we denote 
by $H \langle u \rangle$ the tree rooted at $u$ induced in $H \langle v \rangle$ by the 
set of vertices reachable from $u$ in $H \langle v \rangle - u \parent(u)$.
Two examples of pendent trees are illustrated in Figure~\ref{fig-anchor}, shaded in gray.

\subsection{Chemical Graphs}

To represent chemical compounds as multigraphs, 
we define  a set $\Sigma$  of colors on vertices, each of which corresponds to 
a chemical element  such as oxygen~{\tt O}, nitrogen~{\tt N}, or carbon~{\tt C}, 
denoting the color of a vertex $v$  by $\mathrm{col}(v)$, 
and the valence of a chemical element by an integer function val({\em c}) $\in \mathbb{Z_+}$
of  each color {\em c} $\in \Sigma$.
The size of a bond between two adjacent elements is indicated 
by the multiplicity between the two corresponding vertices.
A multi-graph $G$ is said to be $\Sigma$-{\em colored}
if each vertex $v \in V(G)$ is assigned a color $\mathrm{col}(v) \in \Sigma$. 
Then chemical compounds can be viewed as $\Sigma$-colored, 
connected multi-graphs without self-loops, where vertices and colors 
represent atoms and elements, respectively. 
In the remaining part of this paper, we assume that graphs are
$\Sigma$-colored multi-graphs unless stated otherwise. 

Let $G$ be a $\Sigma$-colored multi-graph. 
In this paper, we treat {\em hydrogen-suppressed chemical graphs} where for some vertex $v$, 
$\mathrm{deg}(v)$ may be smaller than the valence $\mathrm{val}(\mathrm{col}(v))$ of the color $\mathrm{col}(v)$ of $v$.
For a vertex $v \in V(G)$, we define the {\em residual degree} $\res(v)$ 
to be $\mathrm{val}(\mathrm{col}(v))-\mathrm{deg}(v)$.   
In a hydrogen-suppressed chemical graph, for a vertex $v$, the residual degree $\res(v)$ represents the number of hydrogen atoms that are adjacent to $v$.

\subsection{Isomorphism of Chemical Graphs}
\label{sec:isomorphism}
In enumerating chemical graphs, we must avoid duplication of equivalent graphs.
For example, two chemical graphs $G$ and $G'$ may have the same graph structure, 
and imply the same chemical compound even if they are different as labeled graphs.
This case is formalized by the notion of \emph{isomorphism} as follows.
Let $G = (V, E)$ and $G' = (V', E')$ be two chemical graphs.
The following bijection $\psi$ from $V(G)$ to $V(G')$ is called an {\em isomorphism} from $G$ to $G'$:
\begin{itemize}
\item[(i)] for each vertex $x \in V(G)$, it holds that ${\rm col}(x) = {\rm col}(\psi(x))$; and
\item[(ii)] for each pair $\{x, y\} \subseteq V(G)$, 
it holds that ${\rm mul}_{G}(x, y) = {\rm mul}_{G}'(\psi(x), \psi(y))$. 
\end{itemize}

If there exists an isomorphism from $G$ to $G'$, then we say that $G$ and $G'$ are {\em isomorphic}.
We write $G \approx G'$ if $G$ and $G'$ are isomorphic, and write $G \not\approx G'$ otherwise.
For two sets $\mathcal{G}$ and  $\mathcal{G}'$ of chemical graphs, 
we say that  $\mathcal{G}'$  
{\em represents} $\mathcal{G}$ if \\
~-~for each  chemical graph  $G\in \mathcal{G}$, there is a chemical graph $G'\in \mathcal{G}'$ such that 
$G\approx G'$ and vice versa; and \\
~-~for any two chemical graphs $G'_1, G'_2\in \mathcal{G}'$, it holds that $G'_1\not\approx G'_2$. 

An {\em automorphism} of a chemical graph $G$ is defined to be 
an isomorphism $\psi$ from $V(G)$ to $V(G)$ itself. 

We also introduce isomorphism between rooted chemical graphs.
Let $G$ be a chemical graph rooted at a vertex $v_{r}$, 
and $G'$ be a chemical graph rooted at a vertex $v'_{r}$.
If there exists an isomorphism $\psi$ from $G$ to $G'$ 
such that $\psi(v_{r}) = v'_{r}$, then we call $\psi$ a {\em rooted isomorphism}.
If there exists a rooted isomorphism from $G$ to $G'$, we say that $G$ and~$G'$ 
are {\em rooted isomorphic} and denote this by $G \underset{r}{\approx} G'$.

\subsection{Feature Vectors}
\label{sec:feature_vectors}

Our feature vectors represent occurrences of path structures in graphs. 
For a set $\Sigma$ of colors where each color 
$c \in \Sigma$ is attributed a valence $\val(c)$, 
and integers $d \geq 1$ and $K \geq 0$, 
let $c_0, c_1, \ldots, c_K \in \Sigma$ be $K+1$ 
colors and $m_1, m_2, \ldots, m_K \in [1, d]$ be $K$ integers, 
where possibly $c_i=c_j$ or $m_i=m_j$ for some $i, j$.  
Then the alternating sequence  $t = (c_0, m_1, c_1, \ldots , m_K, c_K)$
is called a {\em colored sequence} of length $|t| = K$.
Let $\Sigma^{K, d}$ denote the set of all
colored sequences of length $K$ and maximum multiplicity at most~$d$,
and denote the union of $\Sigma^{0, d}$, $\Sigma^{1, d}$, 
$\ldots$, $\Sigma^{K, d}$ by $\Sigma^{\leq K, d}$.
For a colored sequence $t=(c_0, m_1, c_1, \ldots , m_K, c_K)\in \Sigma^{K, d}$, 
we define the reverse sequence 
$\mathrm{rev}(t)\in \Sigma^{K, d}$ of $t$ to be 
$\mathrm{rev}(t) = (c_K, m_K, c_{K-1}, \ldots , $ $m_1, c_0)$.

Given a $\Sigma$-colored path $P=v_0v_1\cdots v_K$ of length $K$, we define its 
colored sequence $ \gamma(P) \in \Sigma^{K, d}$ to be  
\begin{equation}
\gamma(P)=(\mathrm{col}(v_0), \mathrm{mul}(v_0, v_1), \mathrm{col}(v_1), \ldots, 
\mathrm{mul}(v_{K-1}, v_K), \mathrm{col}(v_K)). \nonumber
\end{equation}
Let $G$ be a $\Sigma$-colored labeled multigraph. 
For a colored sequence $t \in \Sigma^{\leq K, d}$, 
the {\em frequency} $\mathrm{frq}(t, G)$ of $t$ in $G$
is defined to be the number of vertex-rooted subgraphs $G' \subseteq G$ such that $G'$ is a path 
 with $\gamma(G') = t$. 
We define the {\em feature vector} $\mathbf{f}(G)$ of {\em level} $K$ of $G$ 
to be the vector such that $\mathbf{f}(G)[t] = \mathrm{frq}(t, G)$ 
for each colored sequence $t \in \Sigma^{\leq K, d}$. 

Figure~\ref{fig-featurevector} gives an example of 
a hydrogen-suppressed chemical graph $G$
and its feature vector $\mathbf{f}(G)$, where 
$G$ is a $\Sigma$-colored multigraph with $\Sigma=\{${\tt O}, {\tt N}, {\tt C}$\}$, 
$d=2$, and $K=1$.

\begin{figure}[htbp]
\begin{center}
\includegraphics[width=13cm]{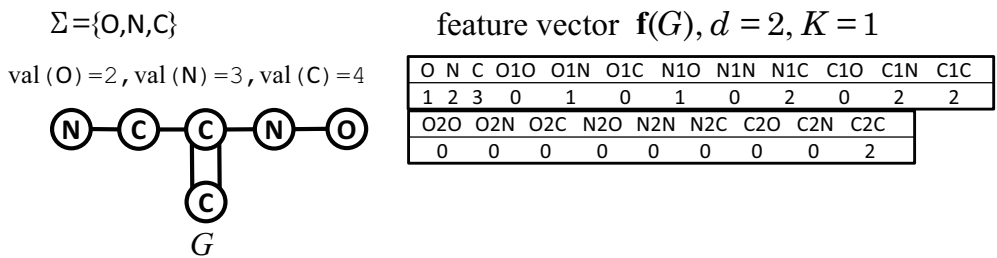}
\caption{An example of a hydrogen-suppressed chemical graph $G$ 
and its feature vector~$\mathbf{f}(G)$ 
with maximum multiplicity $d=2$ and maximum length $K=1$. 
The feature vector consists of the number of 
occurrences of each atom type and each bond type. 
The field {\tt C}2{\tt C} denotes the double bond between two carbon atoms. 
Each path is counted when traversing it in both directions, and the entries 
$\mathrm{frq}(t, G)$ for $t=${\tt C}1{\tt C} and $t=${\tt C}2{\tt C} are 2, 
 even though there is a single such bond in the chemical compound implied by~$G$. 
}
\label{fig-featurevector}
\end{center}
\end{figure}

Given a color set $\Sigma$ and integers $d$ and $K$, 
the set of $|\Sigma^{\leq K, d}|$-dimensional vectors whose entries are
nonnegative integers is  
called a  {\em feature vector space} and is
denoted by ${\bf f}(\Sigma, K, d)$.
Equivalently, each vector $\mathbf{g} \in \fvset(\Sigma, K, d)$ 
is a mapping $\mathbf{g} : \Sigma^{\leq K, d} \to \mathbb{Z}_{+}$.
For two vectors $\mathbf{g}, \mathbf{g}' \in \fvset(\Sigma, K, d)$, 
we write $\mathbf{g} \leq \mathbf{g}'$ if for each entry $t \in \Sigma^{\leq K, d}$, 
it holds that $\mathbf{g}[t] \leq \mathbf{g}'[t]$. 
For two given vectors ${\bf g}, {\bf g}' \in {\bf f}(\Sigma, K, d)$, 
a chemical graph $G$ is called {\em feasible}
if ${\bf g} \leq {\bf f}(G) \leq {\bf g}'$ and $\mathrm{res}(v) \geq 0$
holds for all vertices~$v \in V(G)$.
Let $\mathcal{G}({\bf g}, {\bf g}')$
denote the set of all chemical graphs feasible to the pair $({\bf g}, {\bf g}')$.

\section{The Problem of Enumerating Bi-block 2-Augmented Trees}
\label{sec:problem_2-cycle}

Let $\mathcal{G}_1$ and $\mathcal{G}_2$ respectively
denote the sets of labeled
$\Sigma$-colored monocyclic and bi-block 2-augmented trees.
Following, for two given vectors ${\bf g}_{\ell}$ and ${\bf g}_{u}$, 
let $\mathcal{G}_1(\mathbf{g}_{\ell}, \mathbf{g}_u)$
and  $\mathcal{G}_2(\mathbf{g}_{\ell}, \mathbf{g}_u)$
denote the set of graphs in $\mathcal{G}_1$ and $\mathcal{G}_2$, respectively, 
that are feasible to the pair $(\mathbf{g}_{\ell}, \mathbf{g}_u)$.

The problem we deal with in this paper
is for given vectors $\mathbf{g}_{\ell}$ and $\mathbf{g}_u$
to construct a set $\mathcal{G}'_2$
that represents the set $\mathcal{G}_2(\mathbf{g}_{\ell}, \mathbf{g}_u)$
of feasible bi-block 2-augmented trees.

We approach this task based on the assumption that we are given a
correct collection $\mathcal{G}'_1$ of monocyclic chemical graphs
(e.g., obtained by the algorithm due to Suzuki~{\em et al.}~\cite{Suzuki14}), 
in the sense that
for each bi-block 2-augmented tree $H \in \mathcal{G}_2(\mathbf{g}_{\ell}, \mathbf{g}_u)$
a graph isomorphic to $H$ can be obtained by
adding a multiple edge between a pair of non-adjacent vertices
in some of the graphs in the collection~$\mathcal{G}'_1$.

We here consider how to characterize 
a set of monocyclic chemical graphs that is sufficient 
to generate a set of bi-block 2-augmented trees that represents 
$\mathcal{G}_2(\mathbf{g}_{\ell}, \mathbf{g}_u)$
in the above manner. 
Analogously with the idea due to Suzuki~{\em et al.}~\cite{Suzuki14}, 
we modify $\mathbf{g}_{\ell}$ into a lower bound $\mathbf{g}^{\dagger}_{\ell}$ 
so that a graph isomorphic to 
each of the bi-block 2-augmented trees in $\mathcal{G}_2(\mathbf{g}_{\ell}, \mathbf{g}_u)$ 
can be constructed from
the {\rm 1}-augmented trees in $\mathcal{G}_1(\mathbf{g}^{\dagger}_{\ell}, \mathbf{g}_{\ell})$.
For a given lower bound $\mathbf{g}_{\ell} \in \fvset(\Sigma, K, d)$ on feature vectors, 
let $\mathbf{g}^{\dagger}_{\ell} \in \fvset(\Sigma, K, d)$ be the vector defined as follows:

  - for each $t \in \Sigma^{0, d}$, let $\mathbf{g}^{\dagger}_{\ell}[t] = \mathbf{g}_{\ell}[t]$;  

 - for each $t = (c, m, c') \in \Sigma^{1, d}$ with $c, c' \in \Sigma$ and $m \in [1, d]$, let 
\begin{equation}
\mathbf{g}^{\dagger}_{\ell}[t]= \left\{ 
\begin{array}{ll}
{\rm max}\{ 0, \mathbf{g}_{\ell}[t]-1\} \ {\ } \mathrm{if} \ c \neq c'  \nonumber \\
{\rm max}\{ 0, \mathbf{g}_{\ell}[t]-2\} \ {\ } \mathrm{if} \ c = c;
\end{array}
\right. 
\end{equation}

  - for each $t \in \Sigma^{\leq K, d} \setminus \Sigma^{\leq 1, d}$, let $\mathbf{g}^{\dagger}_{\ell}[t] = 0$.

\begin{lem}
\label{lem:existence} 
For a given set $\Sigma$ of colors and integers $K \geq 0$ and $d \geq 1$, 
let the vectors $ \mathbf{g}_{\ell}, \mathbf{g}_u \in \fvset(\Sigma, K, d)$
satisfy   $\mathbf{g}_{\ell} \leq \mathbf{g}_u$ and
 $\mathbf{g}_{\ell}[t] = \mathbf{g}_u[t]$ for each $t \in \Sigma^{0, d}$, 
and  let $H \in \mathcal{G}_{{\rm 2}}(\mathbf{g}_{\ell}, \mathbf{g}_u)$. 
Then, for each anchor pair $\{x, y\}$ in $H$, it holds that 
$H - xy \in \mathcal{G}_1(\mathbf{g}^{\dagger}_{\ell}, \mathbf{g}_u)$.  
\end{lem}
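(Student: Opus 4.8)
The plan is to first check that $H - xy$ belongs to $\mathcal{G}_1$, and then to verify the three conditions that make it feasible to $(\mathbf{g}^{\dagger}_{\ell}, \mathbf{g}_u)$: the upper bound $\mathbf{f}(H-xy) \le \mathbf{g}_u$, the nonnegativity of all residual degrees, and the lower bound $\mathbf{g}^{\dagger}_{\ell} \le \mathbf{f}(H-xy)$. For membership in $\mathcal{G}_1$, I would use that $\{x,y\}$ is an anchor pair, so the edge between $x$ and $y$ is an edge of the cycle $C$; hence $\mul_H(x,y) \ge 1$ and $x,y$ stay joined through the rest of $C$, so $H - xy$ is still connected, while deleting all edges between $x$ and $y$ removes exactly one adjacent vertex pair. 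Thus $H-xy$ has $|V(H)|$ vertices and $|V(H)|$ adjacent vertex pairs, i.e.\ it is a $1$-augmented tree, equivalently a monocyclic chemical graph, and since colors are untouched, $H - xy \in \mathcal{G}_1$.

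For the upper bound and the residual degrees, the key observation is monotonicity under edge deletion: every path subgraph of $H - xy$ is also a path subgraph of $H$ with the same colored sequence, because no path edge of $H-xy$ joins $x$ and $y$ while every other multiplicity is unchanged. Hence $\mathrm{frq}(t, H-xy) \le \mathrm{frq}(t,H)$ for all $t \in \Sigma^{\le K, d}$, so $\mathbf{f}(H - xy) \le \mathbf{f}(H) \le \mathbf{g}_u$; likewise $\deg_{H-xy}(v) \le \deg_H(v)$, whence $\res_{H-xy}(v) \ge \res_H(v) \ge 0$ for every vertex $v$.

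The substantive part is the lower bound, which I would prove entrywise. For $t \in \Sigma^{0,d}$ the vertex set and the colors are unchanged, so $\mathbf{f}(H-xy)[t] = \mathbf{f}(H)[t] \ge \mathbf{g}_{\ell}[t] = \mathbf{g}^{\dagger}_{\ell}[t]$; for $t$ of length at least $2$ we have $\mathbf{g}^{\dagger}_{\ell}[t] = 0$ and nothing to prove. The only real work is for $t = (c, m, c') \in \Sigma^{1,d}$, where I would show that deleting the edges between $x$ and $y$ decreases $\mathrm{frq}(t,\cdot)$ by at most $1$ when $c \ne c'$ and by at most $2$ when $c = c'$: the length-$1$ paths of $H$ that disappear are exactly those consisting of an $x$--$y$ edge, and among the (at most) two directed copies of that edge at most one has colored sequence $t$ when $c \ne c'$, while both have colored sequence $t$ when $c = c'$ (the double-counting phenomenon noted in the caption of Figure~\ref{fig-featurevector}). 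Combining this bound with $\mathrm{frq}(t,H) \ge \mathbf{g}_{\ell}[t]$ and $\mathrm{frq}(t, H-xy) \ge 0$ gives $\mathrm{frq}(t, H - xy) \ge \max\{0, \mathbf{g}_{\ell}[t] - 1\}$ (resp.\ $\max\{0, \mathbf{g}_{\ell}[t] - 2\}$), which is precisely $\mathbf{g}^{\dagger}_{\ell}[t]$. I expect this last step to be the main obstacle: one has to handle the monochromatic edge ($c = c'$) and the bichromatic edge ($c \ne c'$) separately and keep careful track of each undirected edge being counted once or twice in the path frequency — this is exactly what the $-1$ versus $-2$ in the definition of $\mathbf{g}^{\dagger}_{\ell}$ is designed for. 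Assembling the three parts yields $H - xy \in \mathcal{G}_1(\mathbf{g}^{\dagger}_{\ell}, \mathbf{g}_u)$.
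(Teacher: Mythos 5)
Your proposal is correct and follows essentially the same route as the paper's proof: monotonicity of path frequencies under edge deletion for the upper bound, and an entrywise case analysis on $|t|$ with the $-1$ versus $-2$ drop for bichromatic versus monochromatic length-one sequences for the lower bound. Your explicit verification that $H-xy$ is a $1$-augmented tree and that residual degrees stay nonnegative is slightly more careful than the paper's (which asserts the former in one line and omits the latter), but it is not a different argument.
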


\begin{proof}
Let $xy \in E(H)$ be an edge such that $\{x, y\}$ is an anchor pair in $H$,
and let $G$ be the graph $H - xy$.
We show that $G \in \mathcal{G}_1({\bf g}^{\dagger}_{\ell}, {\bf g}_{u})$ 
i.e., $G$ is a $1$-augmented tree with ${\bf g}^{\dagger}_{\ell} \leq {\bf f}(G) \leq {\bf g}_{u}$.
Obviously, $G = H - xy$ is a monocyclic chemical graph, 
since removing from $H$ any edge between an anchor pair leaves exactly one cycle in $G$.
For $H\in \mathcal{G}_2({\bf g}_{\ell}, {\bf g}_{u})$, 
we have ${\bf g}_{\ell}\leq \fvset(H)\leq {\bf g}_{u}$. 
Since the frequency of any colored sequence does not increase, 
we have $\fvset(G)\leq \fvset(H)\leq {\bf g}_{u}$.
Hence, the upper bound ${\bf g}_{u}$ remains valid for $G$.
Next we consider the lower bound of $G=H - xy$.
For any colored sequence $t \in \Sigma^{0, d}$, 
when removing an edge between $x$ and $y$, 
we have 
${\bf g}_{\ell}[t] = \freq(t, G) = \freq(t, H) = {\bf g}_{u}[t]$ 
since the frequency of a colored sequence with length 0 does not change.
For a colored sequence $t = (c, m, c') \in \Sigma^{1, d}$ 
with $\lvert t \rvert \leq L$, when removing an edge $xy$, 
if $\col(x)=c$, $\col(y)=c'$ or $\col(x)=c'$, $\col(y)=c$ and $\mul_H(x, y)=m$, then we have:
\begin{equation*}
	\fvset(G)[t] = 
	\left\{
		\begin{array}{ll}
			\max\{0 , \fvset(H)[t] - 1\}, & \text{if}~c \neq c' \\
			\max\{0 , \fvset(H)[t] - 2\}, & \text{if}~c = c' 
		\end{array}
	\right.
\end{equation*}
otherwise we have $\fvset(G)[t]=\fvset(H)[t]$.

For any colored sequence $t=(c, m, c') \in \Sigma^{1, d}$, 
since ${\bf g}_{\ell}\leq \fvset(H)$, 
by removing edge $xy$ we get:
\begin{equation*}
\label{eq:decrease_in_fv}
	\fvset(G)[t] \geq 
	\left\{
		\begin{array}{ll}
			\max\{0 , \mathbf{g}_{L}[t] - 1\}, & \text{if}~c \neq c' \\
			\max\{0 , \mathbf{g}_{L}[t] - 2\}, & \text{if}~c = c'. 
		\end{array}
	\right.
\end{equation*}

For any colored sequence 
$t\in\Sigma^{\leq K, d} \setminus \Sigma^{\leq 1, d}$, 
it is clear that $\fvset(G)[t]\geq \mathbf{g}^{\dagger}_{\ell}[t]=0$.
Hence, we have $\mathbf{g}^{\dagger}_{\ell} \leq \fvset(G) \leq \mathbf{g}_{u}$,
as required.
%
f
\end{proof}

Lemma~\ref{lem:existence} demonstrates that for each
 bi-block 2-augmented tree 
$H \in \mathcal{G}_2(\mathbf{g}_{\ell}, \mathbf{g}_u)$, 
there is at least one monocyclic chemical graph 
$G \in \mathcal{G}_1(\mathbf{g}^{\dagger}_{\ell}, \mathbf{g}_u)$ 
and a pair $\{x, y\}$ of nonadjacent vertices 
that delivers $H$ as $G+q \cdot xy$
for an integer $q \leq \min \{\res(y), \res(x)\}$. 
Thus our first task is to generate a set $\mathcal{G}'_1$ that represents 
$\mathcal{G}_1(\mathbf{g}^{\dagger}_{\ell}, \mathbf{g}_u)$ 
by some rules  for generating monocyclic chemical graphs such as Suzuki~{\em et al}.~\cite{Suzuki14} proposed. 
Then we enumerate all bi-block 2-augmented trees in $\mathcal{G}_2(\mathbf{g}_{\ell}, \mathbf{g}_u)$
 by adding a multiple edge between a pair of nonadjacent vertices in each given monocyclic chemical graph, 
where the vertex pair becomes an anchor pair in the newly created bi-block 2-augmented tree. 
The problem is formalized as follows. 

\bigskip

\noindent 
{\bf Enumerating bi-block 2-augmented trees from a given set of monocyclic chemical graphs}\\
{\bf Input}: A color set $\Sigma$, integers $K \geq 0$ and $d \geq 1$, 
two vectors $\mathbf{g}_{\ell}, \mathbf{g}_u \in \fvset(\Sigma, K, d)$ such that
 $\mathbf{g}_{\ell} \leq \mathbf{g}_u$ and 
 $\mathbf{g}_{\ell}[t] = \mathbf{g}_u[t]$ for each  $t \in \Sigma^{0, d}$, 
and a set $\mathcal{G}'_1$ that represents
the set $\mathcal{G}_1(\mathbf{g}^{\dagger}_{\ell}, \mathbf{g}_u)$. \\
{\bf Output}: A set $\mathcal{G}'_2$ that represents the set
			  $\mathcal{G}_2(\mathbf{g}_{\ell}, \mathbf{g}_u)$.\par
\medskip

\subsection{Ideas Behind the Enumeration Process}
\label{sec:ideas}
We start with the assumption that
we are given a set $\mathcal{G}'_1 = \{G_1, G_2, \ldots, G_p\}$
that represents the set 
$\mathcal{G}_1(\mathbf{g}^{\dagger}_{\ell}, \mathbf{g}_u)$.
From this set we construct a set $\mathcal{G}'_2$
that represents the set  $\mathcal{G}_2(\mathbf{g}_{\ell}, \mathbf{g}_u)$ by
iterating over the set $\mathcal{G}'_1$ and for each graph $G_i$,
constructing a bi-block 2-augmented graph $H = G_i + q \cdot xy$
by adding a multiple edge between pairs $\{x, y \} \subseteq V(G_i)$ 
of non-adjacent vertices,
and each integer $q \in [1, \min \{d, \res(x), \res(y)\}]$,
discarding such graphs that are not feasible for the lower and
upper bounds ${\bf g}_{\ell}$ and ${\bf g}_u$ on feature vectors.
During this process, we must make sure that
a graph $H'$ isomorphic to $H$ has not already been
constructed as $G_j + q' \cdot x'y'$ during the enumeration process.
We call the duplication arising when 
$H = G_i + q\cdot xy$ and $H' = G_j + q'\cdot x'y'$ are isomorphic
with $i \neq j $ \emph{inter-duplication}, 
and that when $H = G_i + q \cdot xy \approx H' = G_i + q' \cdot x'y'$
\emph{intra-duplication}.
The main challenge now is how to avoid such duplications 
without storing all generated graphs and 
explicitly comparing the new one with each of them.

In many enumerating algorithms, in order to efficiently cope with duplication, 
the concept of a {\em family tree} has been 
widely employed~\cite{NU03_rooted, NU05_colored}. 
To define a family tree for graphs, we need to define a {\em parent-child relationship}
between graph structures, so that the parent structure of a given multi-graph $G$
is uniquely determined from the  topological structure of $G$.
As it will be observed in Section~\ref{sec:parent-child-relationship},   each child structure
  is obtained by adding a vertex or an edge to its parent structure. 
Let $\pi$ be a mapping giving a parent-child relationship among graphs, 
where a graph $G = \pi(H)$ is called the {\em parent} of $H$ 
and a graph $H$ with $\pi(H) = G$ is called a {\em child} of~$G$. 
Any definition for a parent-child relationship is valid as long as the parent $\pi(H)$
is uniquely determined as an unlabeled multi-graph based only on the topological
information of $H$. 

\begin{lem}
For two graphs $H$ and $H'$, if $\pi(H) \notiso \pi(H')$, then $H \notiso H'$. 
\end{lem}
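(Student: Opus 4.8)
The plan is to prove the contrapositive: assuming $H \approx H'$, I will show that $\pi(H) \approx \pi(H')$, which immediately yields the stated implication.

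First I would invoke the defining requirement on the mapping $\pi$ emphasized just above the lemma, namely that $\pi(H)$ is uniquely determined as an \emph{unlabeled} multi-graph using only the topological structure of $H$, with no reference to the indexing of its vertices or edges. Phrased differently, this condition says precisely that $\pi$ factors through isomorphism classes of graphs: if $H$ and $H'$ are isomorphic, then the unlabeled multi-graphs $\pi(H)$ and $\pi(H')$ coincide, hence are trivially isomorphic. So the core of the argument is simply to unpack this well-definedness.

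To make this concrete, I would take an isomorphism $\psi$ from $H$ to $H'$ and observe that, since (as noted in the excerpt) each child structure is obtained from its parent by adding a vertex or an edge, the parent $\pi(H)$ is obtained from $H$ by a deletion of a vertex or an edge that is specified purely by the topology of $H$. The map $\psi$ carries that distinguished vertex or edge of $H$ to the correspondingly distinguished vertex or edge of $H'$ (because both are singled out by isomorphism-invariant conditions), so the appropriate restriction of $\psi$ is an isomorphism from $\pi(H)$ to $\pi(H')$; in particular $\pi(H) \approx \pi(H')$. Taking the contrapositive gives exactly $\pi(H) \notiso \pi(H') \Rightarrow H \notiso H'$.

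I expect the only real subtlety to be stating the phrase ``uniquely determined from the topological structure of $H$'' precisely enough to be used as a hypothesis; but since the excerpt adopts this as the governing validity condition for \emph{any} admissible parent-child relationship, it can be cited directly, and the proof collapses to a short contrapositive argument. The genuinely nontrivial work—verifying that the concrete mapping $\pi$ introduced later actually satisfies this invariance condition—belongs to the section defining the parent-child relationship and is not needed here.
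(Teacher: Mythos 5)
Your proof is correct and follows essentially the same route as the paper: the paper also assumes $H \approx H'$, invokes the stipulation that the parent is uniquely determined by the topological structure alone to conclude $\pi(H) \approx \pi(H')$, and derives the contradiction. Your additional elaboration via the isomorphism $\psi$ carrying the distinguished deleted vertex or edge is just a more explicit unpacking of the same one-line argument.
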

\begin{proof}
Suppose that $H \approx H'$. 
Since the parents of $H$ and $H'$ are 
uniquely determined based on the topological structures of $H$ and $H'$, 
we have $\pi(H) \approx \pi(H')$, 
which contradicts the assumption that $\pi(H) \notiso \pi(H')$. 
\end{proof}

We define a parent-child relationship $\pi$
between monocyclic chemical graphs and bi-block 2-augmented trees in Section~\ref{sec:parent-child-relationship}.
This leaves the problem of dealing with intra-duplications, with which we
deal in Section~\ref{sec:avoid-symmetrical-duplication}.
To this aim, for a monocyclic graph $G$ 
we define a \emph{proper} set $F$ of edges
to be a set of non-adjacent vertex pairs $\{x, y\} \subseteq V(G)$ such that \\
- for any $\{x, y\} \in F$, $G + xy$ is a bi-block 2-augmented tree, \\ 
- for any $H$ with $\pi(H) = G$ there is a pair $\{x, y\} \in F$ and an integer $q$ such that
  $H$ is isomorphic to $G + q \cdot xy$, and \\
- for any two distinct pairs $\{x, y\}, \{x', y'\} \in F$ and integers $q$ and $q'$
  it holds that $G + q \cdot xy \notiso G + q' \cdot x' y'$. \\
We investigate how to construct a proper set of edges in 
Secs.~\ref{sec:avoid-symmetrical-duplication} and~\ref{sec:potential_edges}.

To conclude this section, we give a general procedure 
that given a monocyclic graph generates all its mutually non-isomorphic children,
as an outline of our 
enumeration algorithm.

\bigskip
{\bf Procedure~{\procMain}}
\begin{algorithmic}[1]
\Require A monocyclic graph $G$ with maximum edge multiplicity at most~$d$.
\Ensure A set of graphs representing the 
			set of children bi-block 2-augmented trees 
		  of $G$ with multiplicity at most~$d$.
\State{Calculate a set $F$ of potential edges}; 
/* By Lemma~\ref{lem:proper_set} in Sec.~\ref{sec:potential_edges} */
  \For{{\bf each} pair $\{x, y\}$ of vertices in $F$}
    \For{{\bf each} $q \in [1, \min\{ d, \res(x), \res(y)\} ]$}
	    \If {$G + q \cdot xy$ is a child of $G$}
	    /* Checked by Procedure~{\procChildCheck} in Sec.~\ref{sec:parent-child-relationship} */
		    \State{{\bf output} $G + q \cdot  xy$}
	    \EndIf
    \EndFor
\EndFor.
\end{algorithmic}

\section{Signature}
\label{sec:signature}

We store multi-graphs as labeled ones and two graphs
with different labelings may be isomorphic to each other. 
Hence, testing for isomorphism of two multi-graphs reduces to finding
such labelings for these graphs so that the two labeled 
graphs completely match each other including the color of vertices, 
adjacency and edge multiplicity between each pair of vertices. 
For a class $\mathcal{G}$ of multi-graphs, if we have a way of 
choosing a labeling of each multi-graph $G \in \mathcal{G}$ 
which is unique up to the graph's automorphism, 
then we can test the isomorphism of two graphs directly by comparing their labels. 
Such a labeling for $G$ is called a {\em canonical form} of $G$. 
Once such a canonical form is obtained, we can easily 
encode each multi-graph $G \in \mathcal{G}$ into a code $\sigma(G)$, 
called the {\em signature} of $G$, such that two multi-graphs
$G, G' \in \mathcal{G}$ are isomorphic if and only if $\sigma(G)=\sigma(G')$.

\subsection{Lexicographical Order}
We fix a total order of colors in $\Sigma$ arbitrarily, e.g., ${\tt O} < {\tt N}< {\tt C}$. 
We define a {\em lexicographical order} among sequences with elements 
in $\Sigma \cup \mathbb{Z_+}$ as follows. 
A sequence $A=(a_1, a_2, \ldots , a_p)$ is lexicographically 
smaller than a sequence $B=(b_1, b_2, \ldots , b_q)$
 if there is an index $k \in [1, \mathrm{min}\{p, q\}]$ such that \\
~~ (1) $a_i=b_i$ for each $i \in [1, k]$; and \\
~~ (2) $k=p<q$ and $k < \mathrm{min}\{p, q\}$, or $a_{k+1} < b_{k+1}$. \\
In this case we denote $A \prec B$. 
If $p=q$ and $a_i = b_i$ for each $i \in [1, p]$, then we denote $A=B$. 
Let $A \preceq B$ mean that $A \prec B$ or $A = B$.

We will represent a lexicographically ascending (resp., descending) 
order on a collection $S = (s_1, s_2, \ldots, s_k)$ 
of $k$ sequences by a permutation
$\pi : [1, k] \to [1, k]$ such that for $1 \leq i < j \leq k$
it holds that $s_{\pi(i)} \preceq s_{\pi(j)}$ 
(resp., $s_{\pi(j)} \preceq s_{\pi(i)}$).

For a collection $S$ of sequences, let us denote by $||S|| = \sum_{s \in S} |s|$
the total length of the sequences
in the collection~$S$.
A known algorithm due to Aho {\em et al.}~\cite{AHU74}
can be used to lexicographically sort a collection $S$ of sequences
over an alphabet of size $n$
in $O(||S|| + n)$ computation time.

\subsection{Canonical Form and Signature of Trees}
We review canonical forms of rooted trees~\cite{NU03_rooted, NU05_colored}.

\subsubsection{Rooted Trees and Ordered  Trees}
\label{sub:ordered-trees}

An {\em ordered tree} is a rooted tree, given together with  a total order among the children of each vertex, 
where by convention we assume that the order of child vertices is from left to right. 

Let $T$ be a rooted multi-tree with $n$ vertices.
We can have many different ordered trees isomorphic to $T$.
A canonical form of  $T$ is given by an adequately 
chosen ordered tree on~$T$. 
Let $\tau$ be an ordered tree rooted-isomorphic to~$T$.
We assume that when we conduct a depth-first-search on $\tau$, we visit children from left to right. 
We denote the vertices in $V$ by $v_1, v_2, \ldots, v_n$ in the order 
visited by the depth-first-search starting from the root.
Let $\delta(\tau)$ denote the alternating sequence $(c_1, d_1, \ldots , c_n, d_n)$
 that consists of the color $c_i$ and depth $d_i$ of the $i$-th vertex $v_i$ 
 in the depth-first-search for $i \in [1, n]$, and
 let $\mathrm{M}(\tau)$ denote the sequence $(m_2, m_3, \ldots , m_{n})$ 
of the multiplicity $m_i=\mathrm{mul}(v_i, \parent(v_i))$ 
of the edge joining the $i$-th vertex $v_i$ and its parent $\parent(v_i)$ in $T$ for each $i \in [2, n]$.  
 
Let $T$ be an ordered $\Sigma$-colored multi-tree. 
For each vertex $v \in V(T)$, let 
$\mathrm{dfs}(v)=i$ if  $v$ is the $i$-th vertex $v_i$. 
For a vertex $v \in V(T)$, let $\mathrm{left}_T(v)$ 
denote the sibling immediately on the left of $x$, if such a vertex exists. 
Note that for each vertex $v \in V(T)$ such that $\mathrm{left}_T(v)$ exists, 
 it holds that $\parent(v) = \parent(\mathrm{left}_T(v))$. 
Let $\{u, v\} \subseteq V(T)$ be a pair of nonadjacent vertices with $\mathrm{dfs}(u) < \mathrm{dfs}(v)$. 
Let $P_T(u, v)$ denote the path in $T$ between $u$ and $v$. 
Let $\mathrm{lca}_T(u, v)$ denote the least common ancestor of 
$u$ and $v$, i.e., the highest vertex in $P_T(u, v)$. 
We define the respective greatest uncommon ancestors of two vertices $u$ and $v$: 
Let $\mathrm{gua}_T(u, v)$ denote the child of $\mathrm{lca}_T(u, v)$ 
that is in $P_T(u, \mathrm{lca}_T(u, v))$, 
 i.e., the vertex in $P_T(u, \mathrm{lca}_T(u, v))$ 
 which is closest to $\mathrm{lca}(u, v)$ except for $\mathrm{lca}_T(u, v)$; 
and $\mathrm{gua}_T(v, u)$ denote the child of $\mathrm{lca}_T(u, v)$ 
that is in $P_T(v, \mathrm{lca}_T(u, v))$. 
We see that $\mathrm{gua}_T(u, v) = \mathrm{gua}_T(v, u)$ 
if and only if $\mathrm{lca}_T(u, v) = u$. 
In the following subsection, we define the concept of a 
``left-heavy tree, " and in Section~\ref{sec:avoid-symmetrical-duplication}, 
we present how to leverage the notion of left-heavy trees to efficiently  detect isomorphic rooted trees.

\subsubsection{Left-heavy Trees}

A {\em left-heavy} tree of a rooted multi-tree $T$ is an ordered tree $\tau$ 
that has the maximum code $\delta(\tau)$ among all ordered trees of $T$. 
Note that a left-heavy tree has the following recursive structure: 
For every vertex $v \in V(T)$, the subtree $T_v$ 
is also a left-heavy tree and $\sigma(T_v)$ is a continuous subsequence of $\sigma(T)$. 
Following Nakano and Uno~\cite{NU03_rooted, NU05_colored}, 
we define the canonical form of a rooted multi-tree $T$ to be the left-heavy 
tree $\tau$ that has the maximum sequence $\mathrm{M}(\tau)$
among all left-heavy trees of $T$, and define the signature of $T$ 
to be $\sigma(T) = (\delta(\tau), \mathrm{M}(\tau))$.

Figure~\ref{figrootedtree} gives  an example  of the ordered trees of a rooted multi-tree $T$. 
The integer beside each vertex $v$ represents the value of $\mathrm{dfs}(v)$. 
Since the ordered trees $\tau_2$ and $\tau_3$ have lexicographically maximum sequences
$\sigma(\tau_2) = \sigma(\tau_3)$ among all ordered trees
$\tau$ of the rooted multi-tree $T$, $\tau_2$ and $\tau_3$ are left-heavy
tree representations of $T$. 
The ordered tree $\tau_3$ is the canonical form of $T$ since it is a 
left-heavy tree with lexicographically maximum sequence 
$\mathrm{M}(\tau_3)$ among all left-heavy trees of $T$. 
\begin{figure}[htbp]
\begin{center}
\includegraphics[width=12cm]{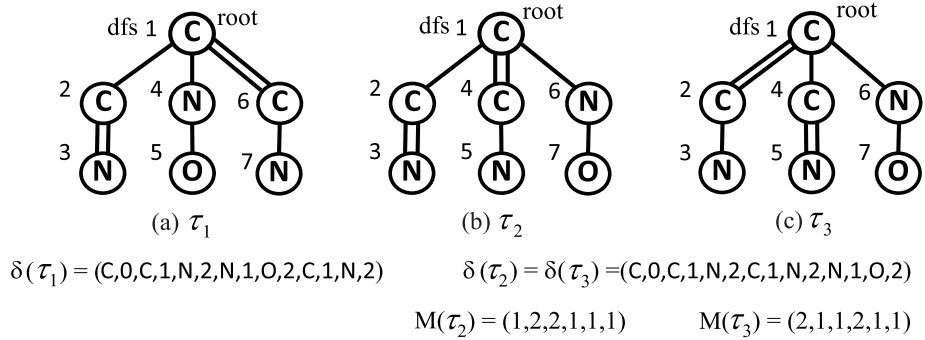}
\caption{An illustration of a rooted multi-tree $T$ and  the ordered trees 
$\tau_1$, $\tau_2$ and $\tau_3$ of $T$. 
It holds that $\sigma(\tau_1) = ({\tt C}, 0, {\tt C} , 1 , {\tt N}, 2, {\tt N}, 1, {\tt O}, 2, {\tt C}, 1, {\tt N}, 2)$, 
$\sigma(\tau_2) = \sigma(\tau_3) =
({\tt C}, 0, {\tt C}, 1, {\tt N}, 2, {\tt C}, 1, {\tt N}, 2, {\tt N}, 1, {\tt O}, 2)$, 
$\mathrm{M}(\tau_2) = (1, 2, 2, 1, 1, 1)$ and 
$\mathrm{M}(\tau_3) = (2, 1, 1, 2, 1, 1)$:
(a)~An ordered tree $\tau_1$ rooted at the centroid of $T$; 
(b)~A left-heavy tree $\tau_2$; and 
(c)~The canonical form $\tau_3$ of $T$. }
\label{figrootedtree}
\end{center}
\end{figure}

Using the canonical form for rooted multi-trees, we can define a canonical
form for unrooted multi-trees $T$ by regarding them 
as trees rooted at their centroids.

\subsubsection{Calculating the Signature of Rooted Multi-Trees}
\label{sec:calculating_sigma}
For two sequences $S_1$ and $S_2$, 
let $S_1 \oplus S_2$ denote the concatenation of~$S_1$ and~$S_2$.
Given an ordered multi-tree $T$ 
on $n$ vertices indexed $v_1, v_2, \ldots, v_n$ as visited in a depth-first
traversal,
let $\delta(T) = (c_1, d_1, c_2, d_2, \ldots, c_n, d_n)$
be its color-depth sequence as defined in Section~\ref{sub:ordered-trees}.
For an integer $k \geq 1$ we define the \emph{$k$-shift} 
$\delta^k(T)$ of the sequence $\delta(T)$ to be the sequence 
$(c_1, d_1 + k, c_2, d_2 + k, \ldots, c_n, d_n + k)$
obtained by adding $k$ 
to each of the depth entries of~$\delta(T)$.

Let $T$ be an ordered multi-tree rooted at a vertex $r$,
and let $u_1, u_2, \ldots, u_{\deg(r)}$ denote
the children of $r$ indexed according to their left-to-right ordering.
Let $\mm(u_i)$ denote $\mul(u_i, \parent(u_i)) \oplus \mm(T(u_i))$ for each $i \in [1,\deg(r)]$.
Given the signatures $\sigma(T(u_i)) = (\delta(T(u_i)), \mm(T(u_i)))$ for all $i \in [1,\deg(r)]$, 
we devise a way to represent $\sigma(T)$ by $\sigma(T(u_i))$ via the following observation.

\begin{observation}
\label{obs:representation}
Let $T$ be an ordered multi-tree rooted at a vertex $r$, and let 
$\Ch(r) = \{u_1, u_2, \ldots, u_{\deg(r)} \}$ 
denote the set of children of $r$ indexed  according to their left-to-right ordering.
Given the sequences $\delta(T_{u})$ and $\mm(T_{u})$ 
for all $u \in \Ch(r)$,
for the sequences $\delta(T)$ and $\mm(T)$ it holds that:
\begin{align*}
\delta(T) &= (\col(r), 0) \oplus \delta^1(T(u_1)) \oplus  
\delta^1(T(u_2)) \oplus  \cdots \oplus  \delta^1(T(u_{\deg(r)})) ; \\
\mm(T) &= \mm(u_1) \oplus \mm(u_2) \oplus  \cdots \oplus \mm(u_{\deg(r)}). 
\end{align*}
\end{observation}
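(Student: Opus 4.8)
The plan is to unfold the definitions of $\delta$ and $\mm$ in terms of a depth-first traversal of $T$ and simply track how the traversal of the whole tree decomposes into the traversals of the subtrees rooted at the children of $r$. First I would recall that, by definition, the vertices $v_1, v_2, \ldots, v_n$ of $T$ are indexed in the order they are visited by a depth-first search that begins at the root $r$ and visits children from left to right; hence $v_1 = r$, and the search then completely explores $T(u_1)$, then $T(u_2)$, and so on, finishing with $T(u_{\deg(r)})$. This gives a partition of the index set $[1,n]$ into consecutive blocks: $\{1\}$ for $r$ itself, followed by a block for each $V(T(u_i))$ in left-to-right order. The key observation is that the \emph{internal} left-to-right traversal order within each block coincides exactly with the depth-first traversal order used to define $\delta(T(u_i))$ and $\mm(T(u_i))$ when $T(u_i)$ is considered as an ordered tree in its own right rooted at $u_i$.

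The second step is to check the color and depth entries. For the root we have $c_1 = \col(r)$ and $d_1 = 0$, contributing the leading pair $(\col(r), 0)$. For a vertex $v$ lying in the subtree $T(u_i)$, its color is unchanged whether we regard it as a vertex of $T$ or of $T(u_i)$, but its depth in $T$ is exactly one more than its depth in $T(u_i)$, because the unique path from $v$ to $r$ in $T$ is the path from $v$ to $u_i$ in $T(u_i)$ extended by the single edge $u_i r$ (this edge is counted once regardless of its multiplicity, since depth is a length in the underlying simple tree). Therefore the color-depth subsequence contributed by $V(T(u_i))$ is precisely $\delta^1(T(u_i))$, the $1$-shift of $\delta(T(u_i))$. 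Concatenating the root's pair with these blocks in order yields the claimed formula for $\delta(T)$.

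The third step handles the multiplicity sequence. By definition $\mm(T)$ lists $\mul(v_j, \parent(v_j))$ for $j = 2, 3, \ldots, n$ in traversal order. Within the block for $T(u_i)$, the first vertex visited is $u_i$ itself, whose parent in $T$ is $r$, so its contribution is $\mul(u_i, r) = \mul(u_i, \parent(u_i))$; every subsequent vertex $v$ in that block has the same parent in $T$ as in $T(u_i)$, so its contribution matches the corresponding entry of $\mm(T(u_i))$. Thus the block for $T(u_i)$ contributes exactly $\mul(u_i, \parent(u_i)) \oplus \mm(T(u_i))$, which is precisely $\mm(u_i)$ by the definition given just before the observation. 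Concatenating over $i = 1, \ldots, \deg(r)$ gives the second formula.

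I do not anticipate a genuine obstacle here — the statement is essentially a bookkeeping identity about how a DFS decomposes — so the only thing requiring care is making the consecutive-block decomposition of the traversal rigorous and noting explicitly the depth-shift-by-one and the parent-edge-of-$u_i$ bookkeeping; once those two points are stated cleanly, the two displayed equations follow by concatenating the blocks in left-to-right order.
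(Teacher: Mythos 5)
Your proof is correct; the paper states this as an Observation with no explicit proof, and your argument (DFS order decomposes into the root followed by consecutive blocks for the subtrees $T(u_1),\ldots,T(u_{\deg(r)})$, with depths shifted by one and the parent-edge multiplicity of each $u_i$ prepended to $\mm(T(u_i))$) is exactly the intended justification. No gaps.
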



By Observation~\ref{obs:representation}, 
we show an algorithm to calculate the signature of a given rooted 
multi-tree in Procedure~{\procTreeSignature}.
Note that Procedure~{\procTreeSignature} calculates the signatures of all rooted subtrees
of a given tree.

\bigskip
{\bf Procedure~{\procTreeSignature}} {\sc SubTreeSignature}
\label{alg:signature_rmt}
\begin{algorithmic}[1]
\Require A $\Sigma$-colored multi-tree $T$
		    with multiplicity at most $d$ rooted at a vertex $r \in V(T)$.
\Ensure The signatures $\sigma(T_v)$ of each rooted tree $T_v$, $v \in V(T)$.
\State {$s := \emptyset$};
\For {{\bf each} $v \in V(T)$ in DFS-post order} \label{line:for1-begin}
	\If{ $v$ is a leaf}
		\State {$\delta[v] := (\col(v), 0); \mm[v] := \emptyset$}
	\Else
		\Statex {~~~/* The signatures $s[u]$ of all children of $v$ are already obtained */}
		\For{{\bf each} $u \in \ch(v)$} \label{line:for2-begin}
			\State {$\delta' :=$ 1-shift of $\delta[u]$}; \label{line:get-delta}
			\State {$\mm':= \mul(u, \parent(u)) \oplus \mm[u]$}; \label{line:get-mul}
			\State {$s'[u] := (\delta', \mm')$}
		\EndFor; \label{line:for2-end}
		\State {$S := (s'[u] \mid u \in \ch(v) )$};
		\State {Let $k :=  |\ch(v)|$};
		\Statex {~~~/* Represent $S$ as $( s_i = (\delta_i, \mm_i) \mid i \in [1, k] )$ */}
		\State {Sort $S$ in lexicographically descending order $\pi$}; \label{line:lex-sort}
		\State {$\delta[v] := (\col(v), 0) \oplus \delta_{\pi(1)} \oplus \delta_{\pi(2)} \oplus 
			      \cdots \oplus \delta_{\pi(k)}$};
		\State {$\mm[v] := \mm_{\pi(1)} \oplus \mm_{\pi(2)} \oplus \cdots \oplus \mm_{\pi(k)}$}
	\EndIf
	\State {$s[v] := (\delta[v], \mm[v])$}
\EndFor; \label{line:for1-end}
\State {\bf for each} $v \in V(T)$ {\bf output} $s[v]$ as $\sigma(T_v)$.
\end{algorithmic}
\bigskip

\begin{lem}
 \label{lem:complexity_of_signature}
 Given a $\Sigma$-colored rooted tree $T$ on $n$ vertices
 and multiplicity at most~$d$, 
 Procedure~{\procTreeSignature} computes the signatures $\sigma(T_v)$
 of all rooted subtrees $T_v$, $v \in V(T)$, of $T$
 in $O(n \cdot(n + |\Sigma| + d))$ time.
\end{lem}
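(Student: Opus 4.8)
The plan is to analyze Procedure~\procTreeSignature\ step by step, accounting for the total work across all vertices, and to argue that the dominant cost is the lexicographic sorting of children-signatures at each internal vertex. First I would observe that the outer loop at lines~\ref{line:for1-begin}--\ref{line:for1-end} visits each vertex $v \in V(T)$ exactly once in DFS-post order; leaves contribute only $O(1)$ work (setting $\delta[v]$ and $\mm[v]$), so the interesting case is an internal vertex $v$ with child set $\ch(v)$. For such a $v$, the inner loop at lines~\ref{line:for2-begin}--\ref{line:for2-end} does, for each child $u \in \ch(v)$, a $1$-shift of $\delta[u]$ (line~\ref{line:get-delta}) and a single concatenation to build $\mm'$ (line~\ref{line:get-mul}); each of these touches $|\delta[u]|$, respectively $|\mm[u]|$, entries, which is $O(|V(T_u)|)$.

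The key accounting step is to bound $\sum_v \sum_{u \in \ch(v)} |V(T_u)|$. Since every non-root vertex $w$ is a descendant of exactly those ancestors on its root-path, the subtree $T_u$ containing $w$ is counted once for each ancestor $u$ of $w$ that is a child of some vertex; hence the double sum equals $\sum_{w} \depth(w) = O(n^2)$ in the worst case (a path). Thus all the $1$-shift and concatenation operations over the whole run cost $O(n^2)$ in total. Similarly, building each sequence $\delta[v]$ and $\mm[v]$ by concatenation at line with the $\oplus$ operations costs $O(|V(T_v)|)$ per vertex, again summing to $O(n^2)$; the extra $O(|\Sigma|)$ term per vertex will absorb the cost of handling colors, and the extra $d$-per-edge cost is already subsumed since multiplicities are single integers bounded by $d$ contributing $O(1)$ each. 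Writing the output at the final line is $O(\sum_v |\sigma(T_v)|) = O(n^2)$ as well.

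The main obstacle — and the term that forces the $|\Sigma|$ and $d$ into the bound — is the lexicographic sort at line~\ref{line:lex-sort}. Here I would invoke the Aho--Hopcroft--Ullman radix-sort result quoted earlier in the excerpt: a collection $S$ of sequences over an alphabet of size $a$ can be sorted in $O(\|S\| + a)$ time, where $\|S\| = \sum_{s \in S}|s|$. At vertex $v$ the collection $S$ consists of the $s'[u] = (\delta_u, \mm_u)$ for $u \in \ch(v)$, whose entries range over $\Sigma \cup [0,d] \cup [0,n]$, an alphabet of size $O(|\Sigma| + d + n)$, so the sort costs $O(\|S_v\| + |\Sigma| + d + n)$ where $\|S_v\| = \sum_{u\in\ch(v)} O(|V(T_u)|)$. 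Summing over all internal vertices $v$, the $\|S_v\|$ terms again telescope to $O(n^2)$ by the depth argument above, while the $O(|\Sigma| + d + n)$ additive term is incurred at most $n$ times, giving $O(n(|\Sigma| + d + n)) = O(n(n + |\Sigma| + d))$. Adding up all contributions yields the claimed $O(n \cdot (n + |\Sigma| + d))$ bound. The one subtlety to be careful about is that the alphabet used for the sort must be ``reindexed'' to a contiguous range of size $O(|\Sigma| + d + n)$ so that the $O(a)$ term in the AHU bound is legitimate; this is a standard preprocessing step and I would mention it explicitly.
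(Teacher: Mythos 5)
Your proposal is correct and follows essentially the same route as the paper's proof: bound the per-vertex work for the shifts and concatenations by $O(|V(T_v)|)$ (summing to $O(n^2)$ over all vertices), and charge each invocation of the Aho--Hopcroft--Ullman lexicographic sort an additive alphabet-size term of $O(|\Sigma| + d + n)$, incurred at most $n$ times. Your explicit accounting via $\sum_w \depth(w)$ and the remark on reindexing the alphabet are slightly more detailed than the paper's argument but do not change the approach.
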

\begin{proof}
Let $n_v$ denote the number of vertices in the subtree $T_v$ rooted at vertex~$v$,
and ${\rm d}_v$ the maximum depth of a leaf in the rooted tree $T_v$,
where~$v$ is taken to have depth~$0$, and it holds that ${\rm d}_v \leq n_v$.

The for-loop of lines~\ref{line:for1-begin} to~\ref{line:for1-end}
is executed for each vertex $v$ in $T$.
Since vertices are iterated in an DFS-post order,
in each iteration, the signatures $\sigma(T_u)$ are already computed for 
each child $u$ of $v$ in $T$.
Then, in the for-loop of lines~\ref{line:for2-begin} to~\ref{line:for2-end},
their signatures are gathered and the depth entries are offset by~1 in line~\ref{line:get-delta}.
This obviously takes at most $O(n_v)$ time.
Then, in line~\ref{line:lex-sort} the gathered sequences are sorted lexicographically.
The total length of the sequences is $O(n_v)$, and they are over the 
alphabets $\Sigma$ for the color of each vertex, $[1, {\rm d}_v]$ for the
depth, and $[1, d]$ for the multiplicity, thus the total alphabet size is
$|\Sigma| + d +  {\rm d}_v$.
By the algorithm for lexicographical sorting due to Aho {\em et al.}~\cite{AHU74},
the lexicographical sorting in line~\ref{line:lex-sort} takes~$O(n_v + |\Sigma| + d +  {\rm d}_v)$ time.
Finally, summing over all vertices $v$ in $T$,
for the computational complexity we get
\begin{align*}
 \sum_{v \in V(T)} O(n_v + |\Sigma| + d +  {\rm d}_v) 
 &= O(\sum_{v \in V(T)} (n_v + {\rm d}_v) + n \cdot (|\Sigma| + d) ) \\
 &= O(n^2 + n \cdot(|\Sigma| + d)),
\end{align*}
as required. 
\end{proof}


\subsubsection{Ranking of Rooted Trees}
\label{sec:ranking_of_trees}

Let $\mathcal{T}$ be a  finite set of rooted multi-trees,
and let $Z = \{\sigma(T) \mid T \in \mathcal{T} \}$
denote the set of signatures of the trees in $\mathcal{T}$.
We define a lexicographical order over $Z$
in the usual sense, i.e., 
for $\sigma_1 = (\delta_1, {\rm M}_1), \sigma_2=(\delta_2, {\rm M}_2) \in Z$
we write $\sigma_1 \prec \sigma_2$ if 
``$\delta_1 \prec \delta_2$'' or 
``$\delta_1 = \delta_2$ and ${\rm M}_1 \prec {\rm M}_2$.''
Then, we use the lexicographical order over the set $Z$ to
define a ranking $\rank_{\mathcal{T}} : \mathcal{T} \to [1, |Z|]$,
such that for two trees $T_1, T_2 \in \mathcal{T}$,
$\rank_{\mathcal{T}} (T_1) < \rank_{\mathcal{T}}(T_2)$ 
if $\sigma(T_1) \prec \sigma(T_2)$,
and $\rank_{\mathcal{T}}(T_1) = \rank_{\mathcal{T}}(T_2)$ 
means that $\sigma(T_1) = \sigma(T_2)$,
i.e. $T_1$ and $T_2$ are isomorphic.
It  follows that having a rank function over a set of multi-trees,
we can check whether two trees in the set 
are isomorphic to each other by comparing their ranks.

There exist algorithms reported in the literature
that calculate the rank of each subtree of a given tree~\cite{DIR99}
and rooted subgraph of an outerplanar graph~\cite{IN12}
in time linear in the number of vertices in the graph.
In our implementation we use simpler algorithms for this purpose
at the cost of a higher time complexity.

For a set $\mathcal{T}$ of rooted trees, let $\mathcal{T}^*$
denote the set of all rooted subtrees of trees in $\mathcal{T}$.
We give a procedure to calculate a ranking 
of a given set $\mathcal{T}$  of rooted trees
in Procedure~{3}.
By  Procedure~{2},
we in fact obtain a ranking in the set $\mathcal{T}^*$ at no additional cost.

\bigskip
{\bf Procedure~{\procTreeRanking}} {\sc TreeRanking}
\begin{algorithmic}[1]
\Require A set $\mathcal{T}$ 
		of $\Sigma$-colored rooted multi-trees with 
		multiplicity at most~$d$.
\Ensure A ranking function $\rank_{\mathcal{T}^*}$ of~$\mathcal{T}^*$.
  \State{$R:=\emptyset$; $h := |\mathcal{T}^*|$};
  \State{$S := (\sigma(T_i) \mid  i \in [1, h])$;  \label{line:get_sigma}
		/* Calculate $\sigma(T)$ by Procedure~{2} in Sec.~\ref{sec:calculating_sigma} */}
  \Statex{/* Treat $S = (s_1, s_2, \ldots, s_h)$ as an ordered set */}
  \State{Sort $S$ in lexicographically ascending order $\pi$;} \label{line:sort_all}
  \State{$R[T_{\pi(1)}] := 1$;} {$r := 1$};
  \For{ {\bf each} $i \in [2, h]$} \label{line:for3-begin}
	  \State{{\bf if} $s_{\pi(i-1)} \prec s_{\pi(i)}$ {\bf then} $r := r+1$ {\bf endif}}; \label{line:lexcompare}
	  \State{$R[T_{\pi(i)}] := r$;}	
  \EndFor; \label{line:for3-end}
  \State{{\bf output} $R$ as  $\rank_{\mathcal{T}^*}$}.
\end{algorithmic}
\bigskip


\begin{lem}
 \label{lem:ranking_all_subtrees}
 Let $\mathcal{T}$ be a given set of $\Sigma$-colored
 rooted multi-trees with multiplicity at most~$d$,
and let $n$ denote the total number of vertices
over trees in $\mathcal{T}$.
Then, the rank of each rooted subtree of all trees in $\mathcal T$ can be computed in
$O(n (n + |\Sigma| + d ))$ time in total.
\end{lem}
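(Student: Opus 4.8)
The plan is to combine the two preceding results, namely Lemma~\ref{lem:complexity_of_signature} on the cost of computing signatures and the description of Procedure~\procTreeRanking, and to account separately for the signature-computation phase and the ranking phase. First I would observe that $\mathcal{T}^*$, the set of all rooted subtrees of trees in $\mathcal{T}$, has exactly $n$ elements, since each vertex $v$ appearing in some tree $T \in \mathcal{T}$ contributes precisely one rooted subtree $T_v$, and $n$ is by hypothesis the total number of vertices over all trees in $\mathcal{T}$. Hence $h = |\mathcal{T}^*| = n$ in the notation of Procedure~\procTreeRanking.

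Next I would bound the cost of line~\ref{line:get_sigma}, which invokes Procedure~\procTreeSignature on each tree of $\mathcal{T}$. Applying Lemma~\ref{lem:complexity_of_signature} to a tree $T$ on $n_T$ vertices gives a bound of $O(n_T(n_T + |\Sigma| + d))$; summing over all $T \in \mathcal{T}$ and using $\sum_T n_T = n$ together with $\sum_T n_T^2 \leq (\sum_T n_T)^2 = n^2$ yields a total of $O(n(n + |\Sigma| + d))$ for obtaining the signatures of every rooted subtree. It is worth noting that this single call already produces the signatures of all members of $\mathcal{T}^*$, so no further signature computation is needed.

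Then I would handle the sorting in line~\ref{line:sort_all} and the ranking loop of lines~\ref{line:for3-begin}--\ref{line:for3-end}. The collection $S$ consists of $n$ signatures; each signature $\sigma(T_v) = (\delta(T_v), \mm(T_v))$ has length $O(n_v) = O(n)$, so the total length $\|S\|$ is $O(\sum_v n_v)$, which is $O(n^2)$ in the worst case. The entries of these sequences come from the alphabet $\Sigma$ (colors), from the range of depths, which is $O(n)$, and from $[1,d]$ (multiplicities), for a total alphabet size of $O(n + |\Sigma| + d)$. By the lexicographical sorting algorithm of Aho {\em et al.}~\cite{AHU74}, sorting $S$ takes $O(\|S\| + (n + |\Sigma| + d)) = O(n^2 + |\Sigma| + d)$ time, which is absorbed into $O(n(n + |\Sigma| + d))$. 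The ranking loop performs $n - 1$ lexicographical comparisons of adjacent signatures in line~\ref{line:lexcompare}; a single such comparison costs $O(n)$ in the length of the signatures, giving $O(n^2)$ in total, again within the claimed bound. Adding the three phases gives the stated $O(n(n + |\Sigma| + d))$ overall.

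The main obstacle, if any, is purely bookkeeping: one must be careful that the "$n$" in Lemma~\ref{lem:complexity_of_signature}, which refers to the number of vertices in a single tree, is correctly replaced by the per-tree count $n_T$ and then re-aggregated, and that the bound $\sum_T n_T^2 \leq n^2$ (rather than a naive per-tree sum) is what makes the signature phase fit the target complexity; similarly, one should make explicit that the depth alphabet is bounded by $n$ and the signature lengths by $O(n)$ so that both the sort and the comparison loop stay within budget. No genuinely new idea is required beyond combining the earlier lemmas.
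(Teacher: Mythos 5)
Your proposal is correct and follows essentially the same route as the paper's proof: one call to the signature procedure per tree costing $O(n_i(n_i+|\Sigma|+d))$ each, the bound $\sum_i n_i^2 \leq n^2$ on the total length of all subtree signatures, the alphabet bound $|\Sigma|+n+d$, and the $O(\lVert S\rVert + \text{alphabet})$ lexicographic sort of Aho et al. The only difference is that you make the cost of the final ranking loop explicit, which the paper leaves as ``straightforward''; this is a harmless refinement.
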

\begin{proof}
Let $\mathcal{T} = \{ T_1, T_2, \ldots, T_k \}$, and
let $n_i$, $i \in [1, k]$ denote the number of vertices in tree~$T_i$,
where $n = \sum_{i \in [1, k]} n_i$.
The signature $\sigma(T_i) = (\delta(T_i), \mm(T_i))$ of each
tree $T_i$ is a sequence with $O(n_i)$ entries
with $|\Sigma| + n_i + d$ possible values,
for the color of vertices and depth in a tree in $\delta(T_i)$,
and multiplicity with the parent in $\mm(T_i)$, respectively.

 By Lemma~\ref{lem:complexity_of_signature},
 computing the signatures of all rooted subtrees
 in line~\ref{line:get_sigma} takes 
 $O(n_i \cdot (n_i + |\Sigma| + d ))$ time for each tree $T_i$,
 and therefore $O(n \cdot ( n + |\Sigma| + d ))$ time in total.
 Now, each tree $T_i$ has $n_i$ rooted subtrees, 
 and the total number of vertices over these subtrees is $O(n_i^2)$.
 Therefore, the collection of signatures for the rooted subtrees
 of tree $T_i$ has in total $O(n_i^2)$ elements taking at
 most $|\Sigma| + n_i + d$ different values (alphabet size).
 Over all trees $T_i$, the elements of the subtree signatures
 take at most 
 $|\Sigma| + \max_{i \in [1, k]} \{n_i\} + d \leq |\Sigma| + n+ d$
 different values.
 Summing over all trees $T_i \in \mathcal{T}$,
 we get that the total length of the signatures over all subtrees is
 $\sum_{i \in [1, k]} O(n_i^2) = O(n^2)$.
Then, all these signatures can be lexicographically sorted in 
$O(n^2 + |\Sigma| + d)$ time~\cite{AHU74}, which is dominated by the time
to calculate the signatures.

Finally, having the lexicographically sorted signatures,
we assign rank to trees in a straightforward manner
by iterating over the sorted signatures as in lines~\ref{line:for3-begin} to~\ref{line:for3-end} in 
Procedure~{3},
and the claim follows. 
\end{proof}

\subsection{Signatures of 2-Cycle Trees}
In this subsection, we define codes that represent 
the topological structure of a bi-block 2-augmented tree
using the code $\sigma$ of rooted trees defined previously.  
We will demonstrate how to use these codes to uniquely 
determine the parent of a bi-block 2-augmented tree in the forthcoming section. 

Let $H$ be a bi-block 2-augmented tree with two edge-disjoint cycles $C$ and $C'$. 
For a vertex $v \in V(P(C, C')) \cup V(C) \cup V(C')$, 
let $\rank(H\langle v_j\rangle)$ denote the rank of $H\langle v_j\rangle$ 
among all pendent trees of $H$ 
as defined in Section~\ref{sec:ranking_of_trees}.
For an anchor vertex $v = \anc(C, H)$, we assume without loss of generality 
that the distinct anchor pairs $\{v, v_i\}, \{v, v_j\}$ satisfy 
$(\col(v_i), \deg(v_i), \rank(H\langle v_i\rangle))
\preceq (\col(v_j), \deg(v_j), \rank(H\langle v_j\rangle))$, 
and we define the \emph{heuristic code} of the anchor vertex $v$ to be
\begin{align*}
\code^*(v)  \triangleq &\, (\col(v), \deg(v), \rank(H\langle v\rangle),  \col(v_i), \col(v_j),  \\
 & \quad\deg(v_i), \deg(v_j), \rank(H\langle v_i\rangle), 
      \rank(H\langle v_j\rangle)).  
\end{align*}
For a path $P = v_1v_2\cdots v_p$ in $H$ such that each $v_i$ 
is in $V(P(C, C')) \cup V(C) \cup V(C')$, we define the code of the path $P$ to be 
\begin{align*}
\code_{\rm P}(P) \triangleq &\, (\rank(H\langle v_1\rangle), 
\mul(v_1 v_2), \rank(H\langle v_2\rangle), \\
& ~~~ \mul(v_2 v_3), \ldots, 
\mul(v_{p-1} v_p), \rank(H\langle v_p\rangle)). \nonumber
\end{align*}
For a cycle $C=v_1v_2\cdots v_pv_1$ in $H$, where $v_1$ is the anchor vertex $\anc(C, H)$, 
let $P$ be the path $v_1v_2\cdots v_p$, and let $P_{\rev}$ be the path $v_1v_pv_{p-1}\cdots v_2$.
We define the code of the cycle $C$ to be $\code_{\rm C}(C) \triangleq \min\{\code_{\rm P}(P), \code_{\rm P}(P_{\rev})\}$, 
and we define the \emph{heuristic code} of the cycle $C$ to be 
\begin{align*}
 \code_{\rm C}^*(C) \triangleq &\, (|V(H)| - |V(Q(C, H))|, |C|, \\
    &\code^*(\anc(C, H)), \code_{\rm P}(P(C, C')), \code_{\rm C}(C))).
\end{align*}

\section{A Parent-Child Relationship Between \\Monocyclic Graphs 
			and Bi-block 2-Augmented Trees}
\label{sec:parent-child-relationship}

Let $H$ be a bi-block 2-augmented tree with cycles $C$ and $C'$, 
and let $u = \anc(C, H)$ and $v = \anc(C', H)$ 
denote the anchor vertices of~$H$.
Without loss of generality, we assume that $\code_{\rm C}^*(C) \preceq \code_{\rm C}^*(C')$ holds.
In addition, let $C =u_1 u_2 \cdots u_p u_1$, where $u_1 = u$, 
let $P$ be the path $u_1 u_2 \cdots u_p$, and let $P_{\rev}$ 
be the path $u_1 u_p u_{p-1} \cdots u_2$.
We assume that $\code_{\rm P}(P) \preceq \code_{\rm P}(P_{\rev})$ holds.
We define \emph{the parent} of $H$ to be the graph $H - u u_2$, 
that is, the graph obtained by deleting the edge between the anchor pair $\{u, u_2\}$.

\subsection{Necessary and Sufficient Conditions for Generating Children}
\label{sec:child_conditions}

Let $G$ be a monocyclic chemical graph with a cycle $C$, 
let $H$ be the graph obtained by adding an edge $xy$ between 
non-adjacent vertices $x, y \in V(G)$, and let $C'$ be the cycle containing the edge $xy$ in $H$.
Now, we devise a necessary and sufficient condition to determine whether $H$ is a child of $G$ or not.

\begin{lem}
\label{lem:child_condition}
Let $G$ be a monocyclic chemical graph and let $C$ denote the unique cycle of $G$.
Let $x$ and $y$ be two non-adjacent vertices in $V(G)$, and $q$ be a positive integer
such that $q \leq \min \{\res(x), \res(y)\}$.
Let $H$ denote the graph $G + q \cdot xy$, and let $C'$ be a cycle containing the edge $xy$ in $H$.
Then, $H$ is a child of $G$ if and only if the following conditions are satisfied:
\begin{description}
\item[(i)] $y \in V(G \langle x \rangle)$;
\item[(ii)] $\code_{\rm C}^*(C') \preceq \code_{\rm C}^*(C)$; and
\item[(iii)] For the path $P = v_1v_2\cdots v_p$ such that 
$v_1 = x$, $v_2 = y$, and $\{v_1, v_p\}$ is the other anchor pair of $C'$, and
$P_{\rev} = v_1v_pv_{p-1}\cdots v_2$, it holds that $\code_{\rm P}(P) \preceq \code_{\rm P}(P_{\rev})$.
\end{description} 
\end{lem}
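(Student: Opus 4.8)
The plan is to compare, step by step, the definition of the parent map $\pi$ from Section~\ref{sec:parent-child-relationship} against the three conditions. Since $q \le \min\{\res(x),\res(y)\}$, the multigraph $H = G + q\cdot xy$ is a legitimate chemical graph and $H - xy = G$ holds literally; hence $\pi(H) = G$ precisely when the procedure defining the parent deletes exactly the $\mul_H(x,y) = q$ edges between $x$ and $y$. So it suffices to determine when the canonical anchor-pair edge singled out by that procedure is the pair $\{x,y\}$. The procedure makes three successive choices: the cycle of minimum heuristic code $\code_{\rm C}^*$, its anchor vertex, and one of its two anchor pairs, selected by the requirement $\code_{\rm P}(P) \preceq \code_{\rm P}(P_{\rev})$. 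I will show that these correspond, respectively, to (ii), to (i) together with the structure it imposes, and to (iii).

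For the ``if'' direction, observe that $y \in V(G\langle x\rangle)$ forces $\rootG x = \rootG y$, so the unique $x,y$-path of $G$ lies in the pendent tree $G\langle \rootG x\rangle$ and avoids $E(C)$; hence $xy$ together with this path is a cycle $C'$ edge-disjoint from $C$, $H$ has no further cycle, and $H$ is a bi-block 2-augmented tree with cycles $C$ and $C'$. As $y$ is a proper descendant of $x$ and $x,y$ are non-adjacent, $x$ is the vertex of $C'$ nearest $C$, so $x = \anc(C',H)$ and $\{x,y\}$ is one of the two anchor pairs of $C'$, the other being $\{x,v_p\}$ for the $C'$-neighbour $v_p$ of $x$ other than $y$. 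Applying the parent definition: by (ii), $C'$ is the cycle of minimum heuristic code, so its anchor-pair edge is deleted; with $v_1 := x$, the two orientations of $C'$ are $P = v_1v_2\cdots v_p$ (with $v_2 = y$) and $P_{\rev} = v_1v_pv_{p-1}\cdots v_2$, and by (iii) the procedure picks the orientation with $v_2 = y$ and so deletes $v_1v_2 = xy$. (If $\code_{\rm C}^*(C') = \code_{\rm C}^*(C)$ or $\code_{\rm P}(P) = \code_{\rm P}(P_{\rev})$, the two admissible choices yield isomorphic monocyclic graphs, so $\pi(H)$ is still $\approx G$.) Therefore $\pi(H) = H - xy = G$ and $H$ is a child of $G$.

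For the ``only if'' direction, assume $H$ is a child of $G$. Then $H$ is a bi-block 2-augmented tree, hence has exactly two cycles; since $H - xy = G$ is monocyclic with cycle $C$, the two cycles of $H$ are $C$ and a cycle $C'$ through $xy$, namely $xy$ plus an $x,y$-path of $G$, and edge-disjointness of $C$ and $C'$ forces that path to avoid $E(C)$, i.e.\ $\rootG x = \rootG y$. Because $H$ is a child, the parent procedure deletes exactly the edges between $x$ and $y$, so $\{x,y\}$ is an anchor pair of the minimum-code cycle; it is not an anchor pair of $C$ (those are edges of $G$, whereas $x,y$ are non-adjacent in $G$), so $C'$ is the minimum-code cycle --- this is (ii) --- and one of $x,y$ equals $\anc(C',H)$. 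Under the convention that $x$ denotes the endpoint nearer $C$, i.e.\ $x = \anc(C',H)$, the $C'$-neighbour $v_p \ne y$ of $x$ must be a child of $x$ in $G\langle\rootG x\rangle$, so the $x,y$-path of $G$ descends from $x$ and $y \in V(G\langle x\rangle)$, which is (i). The structure from the ``if'' paragraph now applies, and since the procedure selects $\{x,y\}$ among the two anchor pairs $\{x,y\},\{x,v_p\}$ of $C'$ via the orientation $\code_{\rm P}(P) \preceq \code_{\rm P}(P_{\rev})$ with $P = v_1v_2\cdots v_p$, $v_1 = x$, $v_2 = y$, condition (iii) holds.

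The step needing the most care is the ``only if'' direction, in particular ruling out the configuration in which neither of $x,y$ is an ancestor of the other: then $\anc(C',H) = \lca_G(x,y)$ differs from both $x$ and $y$, so $\{x,y\}$ is an anchor pair of no cycle of $H$, $\pi$ cannot delete it, and $H$ is not a child of $G$ --- consistent with (i) failing. The other delicate point is tie-handling in the lexicographic comparisons of heuristic codes and of path codes, where one must check that the alternative admissible choices in the parent definition always yield isomorphic monocyclic graphs, so that ``$H$ is a child of $G$'' is unaffected while the weak inequalities $\preceq$ in (ii) and (iii) still hold.
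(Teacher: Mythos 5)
Your proposal is correct and follows essentially the same route as the paper: both directions are proved by unwinding the definition of the parent map (minimum-heuristic-code cycle, then anchor pair selected by the orientation with smaller path code) and matching each selection step to conditions (ii), (i), and (iii) respectively, with the same convention that $x$ is taken to be the ancestor of $y$. If anything, your treatment of the tie cases ($\code_{\rm C}^*(C') = \code_{\rm C}^*(C)$ or $\code_{\rm P}(P) = \code_{\rm P}(P_{\rev})$) is more explicit than the paper's, which disposes of the sufficiency direction by asserting that (ii) and (iii) hold ``trivially by definition.''
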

\begin{proof}
{\bf Necessity}.
We prove that if the pair $\{x, y\}$ of vertices does not 
satisfy any of the conditions (i) to (iii) of the lemma, then
the graph $H = G + xy$ is not a child of $G$.
Suppose that $\rootG{x} \neq \rootG{y}$ holds.
In this case, the graph $G + xy$ will not be a bi-block 2-augmented tree. 
Hence, in order to obtain a bi-block 2-augmented tree, $\rootG{x} = \rootG{y}$ must hold.
From the definition of parent, in order to obtain the parent of $H$, 
an edge between an anchor pair must be deleted. 
Since the anchor vertex $\anc(C', H)$ is a common ancestor of $x$ and $y$, 
if $x$ (resp., $y$) is not an ancestor of $y$ (resp., $x$), then the pair $\{x, y\}$ is not an anchor pair.
Hence, in this case, $H = G + xy$ is not a child of $G$.
As a result, without loss of generality, we assume that $x$ is an ancestor of $y$.
Now, the graph $H = G + xy$ is a bi-block 2-augmented tree. 

Suppose that $\code_{\rm C}^*(C) \prec \code_{\rm C}^*(C')$ holds.
The parent of $H$ is obtained by deleting an edge in the cycle $C$.
This implies that the parent of $H$ is not isomorphic to $G$, and we get that $H$ is a child of $G$.
Hence $\code_{\rm C}^*(C') \preceq \code_{\rm C}^*(C)$ must be satisfied.

Finally, for the path $P = v_1v_2\cdots v_p$ 
such that $v_1 = x$, $v_2 = y$, and $\{v_1, v_p\}$ is the other anchor pair of $C'$, 
let $P_{\rev} = v_1v_pv_{p-1}\cdots v_2$.
If $\code_{\rm P}(P_{\rev}) \prec \code_{\rm P}(P_{\rev})$ holds, 
then from the definition of the parent of a bi-block 2-augmented tree, 
the parent of $H$ is obtained by deleting the edge $xv_p$, and it is not isomorphic to $G$.
This contradicts that $H$ is a child of $G$.

{\bf Sufficiency}.
Assume that $G$ is indeed the parent of the 2-cycle graph $H = G + q \cdot xy$.
By the definition of the parent-child relationship, $\{x, y\}$ is an anchor pair
in $H$, and by the preceding arguments, this is only possible if Condition~(i)
of the lemma is satisfied.
Conditions~(ii) and~(iii) are trivially satisfied by the definition of the parent-child relationship, 
from which the claim follows.
\end{proof}

From Condition~(ii) of Lemma~\ref{lem:child_condition}, we have the following theorem.
\begin{theorem}
\label{thm:unique_subtree}
If a monocyclic chemical graph $G$ with $n$ vertices has at least one child bi-block 2-augmented tree, 
then there exists a unique pendent tree $T \subseteq G$ with at least $n/2$ vertices.
\end{theorem}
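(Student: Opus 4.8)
The plan is to exploit Condition~(ii) of Lemma~\ref{lem:child_condition}, which says that if $H = G + q\cdot xy$ is a child of $G$, then the newly created cycle $C'$ satisfies $\code_{\rm C}^*(C') \preceq \code_{\rm C}^*(C)$, where $C$ is the unique cycle of $G$. The very first entry of $\code_{\rm C}^*(\,\cdot\,)$ for a cycle $D$ in a bi-block 2-augmented tree $H$ is $|V(H)| - |V(Q(D, H))|$, so I would first unpack what this quantity measures: $Q(D, H)$ is the subgraph reachable from $\anc(D, H)$ after deleting the edges of the connecting path $P(C, C')$ and of the other cycle. Since $H$ has exactly $|V(G)| = n$ vertices (adding multiple edges $q\cdot xy$ does not add vertices), the first coordinate of $\code_{\rm C}^*(C')$ is $n - |V(Q(C', H))|$ and similarly for $C$. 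The inequality $\code_{\rm C}^*(C') \preceq \code_{\rm C}^*(C)$ therefore forces, at the level of the first coordinate, $n - |V(Q(C', H))| \le n - |V(Q(C, H))|$, i.e. $|V(Q(C', H))| \ge |V(Q(C, H))|$.

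Next I would argue that $Q(C, H)$ and $Q(C', H)$ partition $V(H)$ (equivalently $V(G)$). Indeed, removing the edges of the connecting path $P(C, C')$ and of both cycles disconnects $H$ into the components hanging off $C$ and those hanging off $C'$ — more precisely, $V(H)$ is the disjoint union of the vertex set reachable from $\anc(C, H)$ in $H - (E(P(C,C'))\cup E(C'))$ and the vertex set reachable from $\anc(C', H)$ in $H - (E(P(C,C'))\cup E(C))$, since every vertex of $H$ lies in exactly one pendent tree attached to $P(C,C')\cup C\cup C'$, and that pendent tree is attached on one side or the other (the common vertices of $P(C,C')$ with the two sides are accounted for at the anchors). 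So $|V(Q(C,H))| + |V(Q(C',H))| = n$. Combined with $|V(Q(C', H))| \ge |V(Q(C, H))|$ from the previous paragraph, this yields $|V(Q(C, H))| \le n/2$, hence $|V(Q(C', H))| \ge n/2$.

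Finally I would identify $Q(C', H)$ with a pendent tree of $G$ and argue uniqueness. Because the cycle $C'$ in $H$ arises from adding the edge(s) $xy$ inside a single pendent tree $G\langle x^* \rangle$ of $G$ (Condition~(i) of Lemma~\ref{lem:child_condition} forces $\rootG{x} = \rootG{y} =: x^*$, so $x$ and $y$ lie in the same pendent tree of $G$), deleting $xy$ and the edges of $C$ from $H$ leaves exactly the pendent tree $G\langle x^* \rangle$ reachable from $\anc(C', H)$; that is, $V(Q(C', H))$ is precisely the vertex set of the pendent tree $T := G\langle r \rangle$ of $G$ rooted at the cycle vertex $r = \rootG{x}$, which is a tree since all cycle edges have been removed. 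Hence $|V(T)| = |V(Q(C', H))| \ge n/2$. For uniqueness: the pendent trees $G\langle v\rangle$, $v\in V(C)$, are vertex-disjoint and their vertex sets partition $V(G)$, so at most one of them can contain $\ge n/2$ vertices — unless two of them each have exactly $n/2$ vertices and all other pendent trees are empty (i.e. consist of single cycle vertices) and $|V(C)|=2$, which is impossible since a cycle has at least three vertices by the paper's convention; a more careful count shows that if $|V(C)|\ge 3$ then the remaining pendent trees contribute at least three further vertices, so two disjoint pendent trees of size $\ge n/2$ would force $n \ge n + 3$, a contradiction. Therefore $T$ is the unique pendent tree of $G$ with at least $n/2$ vertices, completing the proof.

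The main obstacle I anticipate is pinning down the vertex-partition claim $|V(Q(C,H))| + |V(Q(C',H))| = n$ cleanly from the definitions — in particular handling the shared vertices along $P(C,C')$ and the degenerate case where the two cycles meet at a single anchor vertex — and then turning the ``first-coordinate of $\code_{\rm C}^*$'' comparison into the clean size inequality without getting tangled in the lexicographic tie-breaking (ties in the first coordinate are harmless here since $|V(Q(C',H))| \ge |V(Q(C,H))|$ holds whether the first coordinates are equal or strictly ordered). The uniqueness count is routine once the convention $|V(C)| \ge 3$ is invoked.
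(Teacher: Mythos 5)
Your high-level plan (read off a size inequality from the first coordinate of $\code_{\rm C}^*$ and identify the large side with a pendent tree of $G$) is the right one, and your uniqueness argument is essentially the paper's. But the existence half has a genuine gap: the partition claim $|V(Q(C,H))| + |V(Q(C',H))| = n$ is false whenever the connecting path $P(C,C')$ has interior vertices, because each interior vertex $w$ of $P(C,C')$, together with its pendent tree $H\langle w \rangle$, becomes unreachable from \emph{both} anchors once $E(P(C,C'))$ is deleted, so it lies in neither $Q(C,H)$ nor $Q(C',H)$. The correct values (the paper's Lemma~\ref{lem:vertices_Q}) are $|V(Q(C,H))| = n - |V(G\langle v \rangle)|$ and $|V(Q(C',H))| = |V(G\langle x \rangle)|$, where $v$ is the child of $\rootG{x}$ whose subtree contains $y$; these sum to $n$ only when $G\langle x\rangle = G\langle v\rangle$. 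Your identification $V(Q(C',H)) = V(G\langle \rootG{x}\rangle)$ is likewise wrong: $Q(C',H)$ is only the subtree hanging below $x$, not the whole pendent tree below the cycle vertex $\rootG{x}$, and these two errors do not cancel. Concretely, let $G$ be a triangle $r a b$ with a path $r v_1 v_2 \cdots v_8$ attached at $r$ (all carbon, $n=11$), and let $H = G + v_5 v_8$. All conditions of Lemma~\ref{lem:child_condition} hold ($v_8 \in V(G\langle v_5\rangle)$; the first coordinates are $11-4=7 \prec 11-3=8$; the two path codes around $C'=v_5v_6v_7v_8v_5$ coincide since all its pendent trees are singletons), so $H$ is a child of $G$; yet $|V(Q(C,H))|+|V(Q(C',H))| = 3+4 = 7 \neq 11$, and $|V(Q(C',H))| = 4 < n/2$, refuting your intermediate conclusion. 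The pendent tree the theorem promises is $G\langle r\rangle$ with $9$ vertices, which is neither of your two $Q$-sets.

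The repair is the paper's route through Lemmas~\ref{lem:vertices_Q} and~\ref{lem:child_big_tree}: condition~(ii) forces $|V(Q(C',H))| \geq |V(Q(C,H))|$, i.e.\ $|V(G\langle x \rangle)| \geq n - |V(G\langle v \rangle)|$; combined with $G\langle x \rangle \subseteq G\langle v \rangle$ this gives $|V(G\langle v \rangle)| \geq n/2$, and hence the pendent tree $G\langle \rootG{x} \rangle \supseteq G\langle v \rangle$ rooted on the cycle has more than $n/2$ vertices. (The paper states this contrapositively: if every pendent tree of $G$ has fewer than $n/2$ vertices, no added edge can satisfy condition~(ii).) Your uniqueness count via $|V(C)| \geq 3$ is correct and matches the paper.
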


In order to give a proof of Theorem~\ref{thm:unique_subtree}, we show three lemmas.

\begin{lem}
\label{lem:anchor_vertices}
Let $G$ be a monocyclic chemical graph.
For vertices $x \in V(G)$ and a descendant $y \in V(G\langle x \rangle)$ 
of $x$ such that $xy \not\in E(G)$, 
the anchor vertices of $G + xy$ are $x$ and $\rootG{x}$.
\end{lem}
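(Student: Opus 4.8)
\textbf{Proof proposal for Lemma~\ref{lem:anchor_vertices}.}

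The plan is to argue directly from the definitions of the pendent-tree structure of a monocyclic graph and of the anchor vertices of a bi-block 2-augmented tree. Write $G$ for the monocyclic chemical graph, $C$ for its unique cycle, and set $H = G + xy$ where $y \in V(G\langle x\rangle)$ and $xy \notin E(G)$. First I would identify the second cycle created by adding $xy$: since $y$ is a descendant of $x$ in the pendent tree $G\langle \rootG{x}\rangle$, the path $P_G(x,y)$ in $G$ together with the new edge $xy$ forms a cycle $C'$, and this is the only new cycle because $P_G(x,y)$ lies entirely inside the tree $G\langle \rootG{x}\rangle$ and hence is edge-disjoint from everything outside that pendent tree. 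In particular $C$ and $C'$ are edge-disjoint, confirming that $H$ is indeed a bi-block 2-augmented tree with cycles $C$ and $C'$.

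Next I would locate $P(C,C')$, the path connecting the two cycles. Because $C' \subseteq G\langle \rootG{x}\rangle$ and $G\langle \rootG{x}\rangle$ is attached to the cycle $C$ only at the single vertex $\rootG{x} = \root(x,G)$, every path from a vertex of $C'$ to a vertex of $C$ in $H$ must pass through $\rootG{x}$. Hence the vertex of $C$ that serves as an endpoint of $P(C,C')$ is exactly $\rootG{x}$, so $\anc(C,H) = \rootG{x}$. For the other endpoint, I would observe that $x$ is the unique vertex of $C'$ that is a (non-strict) ancestor of $\rootG{x}$ — equivalently, $x$ is the vertex of $C'$ closest to $\rootG{x}$ — so the path from $C'$ to $C$ leaves $C'$ at $x$; thus $\anc(C',H) = x$. (If $x$ itself lies on $C$, i.e.\ $x = \rootG{x}$, then $C$ and $C'$ share the vertex $x$ and $P(C,C')$ degenerates to that single vertex, and both anchor vertices coincide with $x = \rootG{x}$, which is still consistent with the claim.)

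The one point needing a little care is checking that $x$ really is on the cycle $C'$ and is the correct endpoint: by construction $C'$ consists of the vertices of $P_G(x,y)$ plus the edge $xy$, so $x \in V(C')$, and since $P_G(x,y)$ runs downward from $x$ through descendants, $x$ is the topmost vertex of $C'$ in the rooted structure of $G\langle\rootG{x}\rangle$; therefore the ascending path from any vertex of $C'$ toward $C$ (equivalently toward the root of $G$) exits $C'$ through $x$. I would also note that the definition of anchor vertex only requires $C$ and $C'$ to be edge-disjoint, not vertex-disjoint, so the degenerate case above causes no difficulty. The main obstacle, such as it is, is simply being precise about the rooted/pendent-tree bookkeeping — that $G\langle\rootG{x}\rangle$ meets $C$ only at $\rootG{x}$, and that $x$ is the unique highest vertex of $C'$ — after which the identification of the two anchor vertices is immediate.
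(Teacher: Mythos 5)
Your proposal is correct and follows essentially the same route as the paper's proof: identify the new cycle $C'$ as the $x,y$-path in $G$ together with the edge $xy$, observe that the other cycle is $C$, and note that the connecting path $P(C,C')$ is the $x,\rootG{x}$-path, which forces the anchor vertices to be $x$ and $\rootG{x}$. You simply supply more detail than the paper does (edge-disjointness, why the pendent tree meets $C$ only at $\rootG{x}$, and the degenerate case $x=\rootG{x}$), with only a harmless slip in calling $x$ an ``ancestor of $\rootG{x}$'' where you clearly mean the topmost vertex of $C'$.
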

\begin{proof}
Let $H$ denote the graph $G + xy$.
Since we have $\rootG{x} = \rootG{y}$, the graph $H$ is a bi-block 2-augmented tree.
The $x, y$-path in $G$ and the edge $xy$ form a cycle in $H$.
The other cycle of $H$ is the cycle in $G$.
The path connecting the two cycles in $H$ is the $x, \rootG{x}$-path  in $G$.
Hence, the anchor vertices are~$x$ and~$\rootG{x}$.
\end{proof}

\begin{lem}
\label{lem:vertices_Q}
Let $G$ be a monocyclic chemical graph, and let $C$ denote the cycle in $G$.
Let $x$ be a vertex in $G$, let $y \in V(G\langle x \rangle)$ such that
$xy \notin E(G)$ be a descendant of $x$, 
and let $v$ denote the child of $\rootG{x}$ such that $y \in V(G\langle v \rangle)$.
Let $H$ denote the graph $G + xy$, and let $C'$ denote the cycle in $H$ containing the edge $xy$.
Then, it holds that $|V(Q(C, H))| = |V(G)| - |V(G \langle v \rangle)|$ 
and $|V(Q(C', H))| = |V(G \langle x \rangle )|$.
\end{lem}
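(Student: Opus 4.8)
The plan is to first fix the global shape of $H$ by invoking Lemma~\ref{lem:anchor_vertices}, and then to read off each cardinality from a short reachability argument in the edge‑deleted graph occurring in the definition of $Q(\cdot,H)$. Since $C'$ is the cycle of $H$ through $xy$, Lemma~\ref{lem:anchor_vertices} gives $\anc(C',H)=x$ and $\anc(C,H)=\rootG{x}$, and (from its proof) $P(C,C')$ is exactly the $x,\rootG{x}$-path of $G$. The structural facts I would record at the outset are: the parent of $v$ in $G$ is $\rootG{x}$, so $\rootG{x}v$ is the unique edge of $G$ with exactly one endpoint in $V(G\langle v\rangle)$; because $x$ is an ancestor of $y$, when $x\neq\rootG{x}$ we have $x\in V(G\langle v\rangle)$ and $G\langle x\rangle\subseteq G\langle v\rangle$, whereas when $x=\rootG{x}$ the vertex $v$ is the child of $x$ lying on the $x,y$-path, $x\notin V(G\langle v\rangle)$, and $G\langle v\rangle\subsetneq G\langle x\rangle$; and the edge $\rootG{x}v$ lies on $P(C,C')$ in the first case and on $C'$ (as the first edge of the $x,y$-path) in the second.

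For the first equality I would identify $V(Q(C,H))$ as the vertex set of the component of $\rootG{x}$ in $H-(E(P(C,C'))\cup E(C'))$, and establish two claims. First, every edge of $H$ with exactly one endpoint in $V(G\langle v\rangle)$ lies in $E(P(C,C'))\cup E(C')$: in $H$ these are $\rootG{x}v$ (which, by the previous paragraph, lies in $E(P(C,C'))$ or $E(C')$) together with $xy$ in the degenerate case (and $xy\in E(C')$ always). Hence no vertex of $G\langle v\rangle$ is reachable from $\rootG{x}$. Second, no edge of $E(P(C,C'))\cup E(C')$ has both endpoints outside $V(G\langle v\rangle)$, since the edges of $P(C,C')$ other than $\rootG{x}v$ and the edges of the $x,y$-path other than its first edge lie inside $G\langle v\rangle$; therefore the subgraph of $H-(E(P(C,C'))\cup E(C'))$ induced on $V(G)\setminus V(G\langle v\rangle)$ coincides with $G[V(G)\setminus V(G\langle v\rangle)]$, which is connected (removing the pendent branch $G\langle v\rangle$ leaves a connected graph) and contains $\rootG{x}$. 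Combining the two claims gives $V(Q(C,H))=V(G)\setminus V(G\langle v\rangle)$, hence $|V(Q(C,H))|=|V(G)|-|V(G\langle v\rangle)|$.

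For the second equality I would identify $V(Q(C',H))$ as the component of $x$ in $H-(E(P(C,C'))\cup E(C))$. The edges of $G\langle x\rangle$ survive the deletion: $G\langle x\rangle$ is edge‑disjoint from $C$ by definition of a pendent tree, and the edges of $P(C,C')$ other than the parent edge of $x$ join ancestors of $x$, so none lie in $G\langle x\rangle$; moreover $xy$ survives and has both endpoints in $G\langle x\rangle$. Conversely, every edge of $H$ with exactly one endpoint in $V(G\langle x\rangle)$ is deleted: when $x\neq\rootG{x}$ this is the parent edge of $x$, lying on $P(C,C')$, and when $x=\rootG{x}$ these are the two cycle edges of $C$ at $x$. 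Since $G\langle x\rangle$ is connected, the component of $x$ is exactly $V(G\langle x\rangle)$, so $|V(Q(C',H))|=|V(G\langle x\rangle)|$.

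I expect the only genuine work to be the bookkeeping over the degenerate configurations — $x$ lying on $C$ (so $\rootG{x}=x$ and $P(C,C')$ is a single vertex) and $v=x$ — where the edge $\rootG{x}v$ that "cuts off" $G\langle v\rangle$ migrates between $E(P(C,C'))$ and $E(C')$ and, in the former case, additionally $xy$ must be counted as a boundary edge of $G\langle v\rangle$. Once the two claims above are checked to hold uniformly across these cases, the reachability arguments and the cardinality count are routine.
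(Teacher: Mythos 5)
Your proposal is correct and follows essentially the same route as the paper: invoke Lemma~\ref{lem:anchor_vertices} to identify the anchors and the connecting path, then determine each $Q(\cdot,H)$ by a reachability argument showing that the deleted edge set $E(P(C,C'))\cup E(C')$ (resp.\ $\cup\, E(C)$) contains exactly the boundary edges of $V(G\langle v\rangle)$ (resp.\ $V(G\langle x\rangle)$). Your write-up is in fact more careful than the paper's, which states the reachability claims without spelling out the boundary-edge bookkeeping or the degenerate case $x=\rootG{x}$.
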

\begin{proof}
From Lemma~\ref{lem:anchor_vertices}, the anchor vertices of $H$ are $x$ and $\rootG{x}$.
First, observe that  $|V(Q(C', H))| = |V(G\langle x \rangle)|$
is the number of vertices reachable from $x$ in $G - (E(P(C, C')) \cup E(C))$.
On the other hand, $|V(Q(C, H))|$ is the number of vertices reachable
from $\rootG{x}$ in $G - (E(P(C, C')) \cup E(C'))$.
Since there is no vertex in $V(G \langle v \rangle)$ that is reachable from $\rootG{x}$ 
in the graph $H - (E(P(C, C')) \cup E(C'))$, 
it follows that $|V(Q(C, H))| = |V(G)| - |V(G\langle v \rangle)|$.
\end{proof}

\begin{lem}\label{lem:child_big_tree}
Let $G$ be a monocyclic chemical graph, and let $C$ denote the cycle in $G$.
For a vertex $v \in V(C)$ with $|V(G\langle v \rangle)| < |V(G)|/2$, 
let $x$ be a vertex in $V(G\langle v \rangle)$, 
let $y$ be a vertex in $V(G\langle x \rangle)$, such that $xy \notin E(G)$, 
let $H$ denote the graph $G + xy$, and let $C'$ denote the cycle in $H$ containing the edge $xy$.
Then, it holds that $|V(Q(C', H))| < |V(Q(C, H))|$, and therefore $G + xy$ is not a child of $G$.
\end{lem}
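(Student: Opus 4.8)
The plan is to reduce the claim directly to the two earlier lemmas on the sizes of $Q$-subgraphs and on where anchor vertices land. First I would invoke Lemma~\ref{lem:anchor_vertices}: since $x \in V(G\langle v\rangle)$ and $y \in V(G\langle x\rangle)$ with $xy \notin E(G)$, we have $\rootG{x} = \rootG{y} = v$, so $H = G+xy$ is a bi-block 2-augmented tree whose anchor vertices are $x$ and $v$. Next I would apply Lemma~\ref{lem:vertices_Q} with $\rootG{x} = v$: letting $w$ denote the child of $v$ with $y \in V(G\langle w\rangle)$ (note $w$ may or may not equal $x$, but in any case $V(G\langle w\rangle) \subseteq V(G\langle v\rangle)$), we get $|V(Q(C',H))| = |V(G\langle x\rangle)|$ and $|V(Q(C,H))| = |V(G)| - |V(G\langle w\rangle)|$.

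Now the arithmetic. Since $w$ is a child of $v$, the pendent tree $G\langle w\rangle$ is contained in $G\langle v\rangle$, so $|V(G\langle w\rangle)| \leq |V(G\langle v\rangle)| < |V(G)|/2$ by the hypothesis on $v$; hence $|V(Q(C,H))| = |V(G)| - |V(G\langle w\rangle)| > |V(G)|/2$. On the other hand $G\langle x\rangle \subseteq G\langle v\rangle$ as well (since $x \in V(G\langle v\rangle)$ and $G\langle x\rangle$ is the subtree hanging below $x$ inside $G\langle v\rangle$), so $|V(Q(C',H))| = |V(G\langle x\rangle)| \leq |V(G\langle v\rangle)| < |V(G)|/2$. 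Chaining these two inequalities gives $|V(Q(C',H))| < |V(G)|/2 < |V(Q(C,H))|$, which is the first assertion.

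For the final clause, recall that the heuristic code $\code_{\rm C}^*(C)$ has $|V(H)| - |V(Q(C,H))|$ as its first coordinate. From $|V(Q(C',H))| < |V(Q(C,H))|$ we get $|V(H)| - |V(Q(C,H))| < |V(H)| - |V(Q(C',H))|$, so the first coordinate of $\code_{\rm C}^*(C)$ is strictly smaller than that of $\code_{\rm C}^*(C')$, whence $\code_{\rm C}^*(C) \prec \code_{\rm C}^*(C')$. This violates Condition~(ii) of Lemma~\ref{lem:child_condition}, so $G+xy$ is not a child of $G$. The only point requiring a little care — and the one I would state explicitly — is that $G\langle x\rangle$ and $G\langle w\rangle$ are genuinely subtrees of $G\langle v\rangle$; this is immediate from the extended definition of pendent trees $G\langle\cdot\rangle$ for non-cycle vertices, but it is the hinge on which the size comparison turns, so it deserves a sentence rather than being left implicit.
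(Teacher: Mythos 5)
Your proposal is correct and follows essentially the same route as the paper's proof: apply Lemma~\ref{lem:anchor_vertices} and Lemma~\ref{lem:vertices_Q} to express $|V(Q(C,H))|$ and $|V(Q(C',H))|$ in terms of pendent-tree sizes, chain both through the bound $|V(G\langle v\rangle)| < |V(G)|/2$, and conclude via the first coordinate of the heuristic cycle code and Condition~(ii) of Lemma~\ref{lem:child_condition}. Your final step is in fact stated more carefully than the paper's, which garbles the direction of the code comparison (it should read $\code_{\rm C}^*(C) \prec \code_{\rm C}^*(C')$, as you have it).
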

\begin{proof}
For simplicity, let $n$ denote $|V(G)|$.
Let $v'$ denote the child of $v$ in $H \langle v \rangle$ 
such that $H \langle v' \rangle$ contains~$y$.
From Lemma~\ref{lem:vertices_Q}, we have 
$|V(Q(C, H))| = n - |V(G\langle v' \rangle)|$ 
and $|V(Q(C', H))| = |V(G\langle x \rangle)|$.
Since it always holds that 
$|V(G\langle v' \rangle)| < |V(G\langle v \rangle)|$ and 
$|V(G\langle x \rangle)| \leq |V(G\langle v \rangle)|$, 
we get that $|V(Q(C, H))| =  n - |V(G\langle v_y \rangle)| > n/2$ 
and $|V(Q(C', H))| = |V(G\langle x \rangle)| < n/2$.
Therefore, it follows that $|V(Q(C', H))| < |V(Q(C, H))|$.

If $|V(Q(C', H))| < |V(Q(C, H))|$ holds, then $\code_{\rm C}^*(C') \preceq \code_{\rm C}^*(C)$ always holds.
This contradicts Condition~(ii) of Lemma~\ref{lem:child_condition}.
As a result, we conclude that the graph $G + xy$ is not a child of $G$.
\end{proof} 

Now, we are ready to give a proof of Theorem~\ref{thm:unique_subtree}.
\begin{proof}[Proof of Theorem~\ref{thm:unique_subtree}]
Suppose that a monocyclic chemical graph $G$ has no pendent tree 
with at least $|V(G)|/2$ vertices, then, from Lemma~\ref{lem:child_big_tree}, $G$ has no children.

We prove that any monocyclic chemical graph $G$ has at most one pendent tree with at least $|V(G)|/2$ vertices
by a straightforward observation.
Suppose that $T_1$ and $T_2$ are distinct pendent trees of $G$, each with at least $|V(G)|/2$ vertices.
From the definition of cycle, a cycle contains at least 3 vertices, and
therefore there exists at least one vertex in the cycle of $G$ 
that is neither in $T_1$ or $T_2$.
Hence, it must hold that 
$|V(G)| \geq |V(T_1)| + |(V(T_2)| + 1 \geq |V(G)|/2 + |V(G)|/2 + 1 = |V(G)| + 1$, a contradiction.

Therefore, we conclude that if a monocyclic chemical graph $G$ has at least one child, 
then there exists a unique pendent tree in $G$ with at least $|V(G)|/2$ vertices.
\end{proof}

From Theorem~\ref{thm:unique_subtree}, we have the following
necessary and sufficient conditions for generating the children of a given monocyclic chemical graph.
\begin{lem}\label{lem:child_condition2}
Let $G$ be a monocyclic chemical graph and let $C$ denote the cycle of $G$.
Let $x$ and $y$ be non-adjacent vertices in $V(G)$, 
let $H$ denote the graph $G + xy$, and let $C' \subseteq H$ be a cycle containing the edge~$xy$.
Then $H$ is a child of $G$ if and only if the following conditions are satisfied:
\begin{description}
\item[(i)] $G$ has a unique pendent tree with at least $|V(G)|/2$ vertices;
 \item[(ii)] $y \in V(G \langle x \rangle)$;
\item[(iii)] $\code_{\rm C}^*(C') \preceq \code_{\rm C}^*(C)$; and
\item[(iv)] For the path $P = v_1v_2\cdots v_p$ such that
$v_1 = x$, $v_2 = y$, and $\{v_1, v_p\}$ is the other anchor pair of $C'$, 
and $P_{\rev} = v_1v_pv_{p-1}\cdots v_2$, it holds that $\code_{\rm P}(P) \preceq \code_{\rm P}(P_{\rev})$.
\end{description} 
\end{lem}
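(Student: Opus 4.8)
The plan is to derive Lemma~\ref{lem:child_condition2} by combining Lemma~\ref{lem:child_condition} with Theorem~\ref{thm:unique_subtree}, observing that the only difference between the two statements is the extra Condition~(i) on the existence of a unique large pendent tree, while Conditions~(ii)--(iv) of Lemma~\ref{lem:child_condition2} are exactly Conditions~(i)--(iii) of Lemma~\ref{lem:child_condition}. First I would argue the ``only if'' direction: if $H = G + xy$ is a child of $G$, then by Lemma~\ref{lem:child_condition} Conditions~(ii)--(iv) hold immediately, and moreover $G$ has at least one child, so Theorem~\ref{thm:unique_subtree} applies and yields a unique pendent tree of $G$ with at least $|V(G)|/2$ vertices, establishing Condition~(i).

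For the ``if'' direction, I would assume all four conditions of Lemma~\ref{lem:child_condition2} hold and note that Conditions~(ii)--(iv) are precisely the hypotheses needed to invoke the sufficiency part of Lemma~\ref{lem:child_condition}, which directly gives that $H$ is a child of $G$. In this direction Condition~(i) is not even needed as a hypothesis for the conclusion, but it is automatically true whenever a child exists (again by Theorem~\ref{thm:unique_subtree}), so its inclusion in the list does not harm the equivalence; I would remark briefly that Condition~(i) is stated chiefly because it is the cheapest condition to check first and it prunes graphs $G$ that cannot possibly have a child before the more expensive code comparisons are performed.

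The one point requiring a little care is making sure the correspondence of condition labels between the two lemmas is exact --- in particular that Condition~(iv) here, the path-code comparison $\code_{\rm P}(P) \preceq \code_{\rm P}(P_{\rev})$ for the path whose first two vertices are $x, y$ and whose endpoints form the other anchor pair, matches Condition~(iii) of Lemma~\ref{lem:child_condition} verbatim, and likewise Condition~(iii) here with Condition~(ii) there. That is a direct syntactic check. The only genuine content beyond Lemma~\ref{lem:child_condition} is the forward implication to Condition~(i), and that is handled entirely by Theorem~\ref{thm:unique_subtree}; so I do not anticipate any real obstacle, and the proof should be just a few lines.

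\begin{proof}
The equivalence follows by combining Lemma~\ref{lem:child_condition} with Theorem~\ref{thm:unique_subtree}.

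Suppose first that $H = G + xy$ is a child of $G$. Then by Lemma~\ref{lem:child_condition}, Conditions~(i)--(iii) of that lemma hold, which are exactly Conditions~(ii)--(iv) of the present lemma. Moreover $G$ has at least one child bi-block 2-augmented tree, namely $H$, so by Theorem~\ref{thm:unique_subtree} there exists a unique pendent tree $T \subseteq G$ with at least $|V(G)|/2$ vertices; this is Condition~(i).

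Conversely, suppose Conditions~(i)--(iv) of the present lemma hold. Then Conditions~(ii)--(iv) coincide with Conditions~(i)--(iii) of Lemma~\ref{lem:child_condition}, and by the sufficiency part of that lemma we conclude that $H = G + xy$ is a child of $G$.
\end{proof}
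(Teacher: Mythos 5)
Your proof is correct and takes essentially the same route as the paper, which states Lemma~\ref{lem:child_condition2} as an immediate consequence of Lemma~\ref{lem:child_condition} together with Theorem~\ref{thm:unique_subtree} without writing out a separate argument. Your additional observation that Condition~(i) is logically redundant in the ``if'' direction (being implied by the existence of a child via Theorem~\ref{thm:unique_subtree}) is accurate and consistent with its role in the paper as a cheap pruning test.
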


We devise an algorithm to determine whether
for a monocyclic graph $G$ and a pair $\{x, y\} \subseteq V(G)$
of nonadjacent vertices the graph $G + xy$ 
is a child of $G$ or not according to Lemma~\ref{lem:child_condition}.
First, we check whether $\code_{\rm C}^*(C') \preceq \code_{\rm C}^*(C)$ or not.
Then,  for the path $P = v_1v_2\cdots v_p$ such that $v_1 = x$, $v_2 = y$, 
and $\{v_1, v_p\}$ is the other anchor pair of $C'$, 
and the reverse path $P_{\rev} = v_1v_pv_{p-1}\cdots v_2$, 
we check whether $\code_{\rm P}(P) \preceq \code_{\rm P}(P_{\rev})$ or not.
We show an algorithm to determine whether $G + xy$ 
is a child of $G$ or not 
as Procedure~{\procChildCheck} \textsc{ChildCheck}.

\bigskip

{\bf Procedure~{\procChildCheck}}
\textsc{ChildCheck}
\begin{algorithmic}[1]
\Require A monocyclic chemical graph $G$ with a cycle $C$ 
and a non-adjacent vertex pair $\{x, y\} \subseteq V(G)$.
\Ensure If $G + xy$ is a child of $G$, then ``Yes'', otherwise ``No''.
\Statex{/* Let $C'$ denote the cycle of $G + xy$ such that $\{x, y\} \subseteq V(C')$ */}
\State ${\rm Answer} := $``No'';
\If{$\code_{\rm C}^*(C') \preceq \code_{\rm C}^*(C)$}
	\State{$P := v_1v_2\cdots v_p$ such that $v_1 = x$, $v_2 = y$, 
	  and $\{v_1, v_p\}$ is the other anchor pair of $C'$};
	\State{$P_{\rev} := v_1v_pv_{p-1}\cdots v_2$};
	\If{$\code_{\rm P}(P) \preceq \code_{\rm P}(P_{\rev})$}
		\State{${\rm Answer} := $ ``yes''}
	\Else
		\State{${\rm Answer} := $ ``no''}
	\EndIf
\Else
	\State{${\rm Answer} := $ ``no''}
\EndIf;
\State {\bf output} ${\rm Answer}$.
\end{algorithmic}

\section{Avoiding Intra-Duplication}
\label{sec:avoid-symmetrical-duplication}

In this section, we introduce a way to avoid 
intra-duplication, that is
generating 
 isomorphic  bi-block 2-augmented trees
 from a single monocyclic chemical graph. 
 
Suzuki~{\em et al.}~\cite{Suzuki14} introduced 
an ``admissibility'' condition, a necessary and sufficient condition
for a pair of nonadjacent vertices to be joined by a  new edge to generate a monocyclic chemical graph 
 from a single multi-tree so that no pair of isomorphic monocyclic chemical graphs will be generated
as long as only ``admissible'' pairs of nonadjacent vertices receive new edges.
We first review the admissibility condition in terms of generating monocyclic chemical graphs. 
Next we explain that this condition can be applied to pairs of nonadjacent vertices
 of monocyclic chemical graphs to avoid generating 
isomorphic bi-block 2-augmented trees.  


Recall that for a tree $T$ and a pair $\{x, y\}$ of  vertices in $V(T)$, 
  $T+xy$ denotes the graph obtained from $T$ by adding a single edge~$xy$.
When $T$ is rooted at a vertex~$r$, we also regard $T+xy$ as a graph rooted at ~$r$.
The next lemma states that if two isomorphic monocyclic chemical graphs are generated 
from a left-heavy multi-tree rooted at its centroid 
by adding edges to different pairs of nonadjacent vertices, 
then these monocyclic chemical graphs are rooted-isomorphic. 

\begin{lem}\label{lem:riso}{\rm (Suzuki }{\em et al.~}{\rm \cite{Suzuki14})}
Let $T$ be a left-heavy tree rooted at its centroid~$c_T$.
For any  two pairs   $ \{s, t\}, \{x, y\}  \subseteq V(T)$ of nonadjacent vertices, 
 if $T+st \approx  T+xy$ then $T+st \riso  T+xy$. 
\end{lem}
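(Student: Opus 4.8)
The statement is attributed to Suzuki \emph{et al.}~\cite{Suzuki14}, so the natural route is to reconstruct their argument from first principles using the machinery already developed in this excerpt, in particular the canonical form of rooted multi-trees and the uniqueness/near-uniqueness of the centroid. The plan is to show that any isomorphism between the monocyclic graphs $T+st$ and $T+xy$ must fix (setwise) the vertex set of the underlying tree $T$ and, moreover, must map the centroid of $T$ to the centroid of $T$, so that restricting the isomorphism to $T$ yields a rooted automorphism of $T$ (rooted at $c_T$); composing appropriately then exhibits $T+st$ and $T+xy$ as rooted isomorphic.

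First I would observe that $T$ is recovered from $T+st$ purely topologically: $T+st$ is a monocyclic graph, its unique cycle is $C_{st}$, and deleting the single ``chord-type'' edge that, when removed, destroys the cycle — equivalently, the edge $st$ is the unique edge of $C_{st}$ whose removal leaves a tree isomorphic to $T$. More carefully, since $T$ has $n-1$ edges and $T+st$ has $n$ edges, $T$ is obtained from $T+st$ by deleting exactly one edge so as to break the cycle; the combinatorial structure of $T$ (as an unlabeled tree) is thus determined, and any isomorphism $\varphi\colon T+st \to T+xy$ carries the cycle $C_{st}$ to the cycle $C_{xy}$ and carries $T+st$ minus its cycle-edges to $T+xy$ minus its cycle-edges. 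The key point to extract is that $\varphi$ restricts to an isomorphism of the trees obtained by deleting a cycle-edge, and since these trees are all isomorphic to $T$, the centroid is preserved: by the defining property of the centroid (the vertex or adjacent pair whose removal leaves no component with more than $\lfloor n/2\rfloor$ vertices, which is unique as a vertex or as an edge by~\cite{J69}), $\varphi$ must send $c_T$ to $c_T$ (respectively, the bicentroid edge to itself).

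Next I would convert this into a rooted statement. Having $\varphi(c_T)=c_T$, the restriction $\varphi|_{T}$ is a rooted isomorphism of $T$ onto $T$ (rooted at $c_T$), i.e.\ a rooted automorphism; call it $\psi$. Then $\psi$ maps the pair $\{s,t\}$ to a pair of nonadjacent vertices of $T$, and since $\varphi$ is an isomorphism carrying the added edge $st$ to the added edge $xy$, we get $\psi(\{s,t\})=\{x,y\}$. Therefore $\psi$ extends to an isomorphism $T+st \to T+xy$ fixing the root $c_T$, which is exactly a rooted isomorphism $T+st \riso T+xy$, as required. The handling of the bicentroid case is analogous, with ``root edge'' in place of ``root vertex'': $\varphi$ fixes the bicentroid edge, possibly swapping its two endpoints, and in either case restricts to a root-edge-preserving isomorphism of $T$.

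The main obstacle I anticipate is the claim that an arbitrary isomorphism $\varphi\colon T+st \to T+xy$ must actually respect the decomposition into ``tree part'' and ``cycle edge'' — one has to rule out the possibility that $\varphi$ maps some cycle-edge of $C_{st}$ to the edge $xy$ but maps $st$ into the interior of $C_{xy}$ in a way that does not come from a tree automorphism. This is where the centroid argument does the real work: because \emph{every} way of deleting one cycle-edge from $T+st$ yields a tree isomorphic to $T$ (they are the various ``$T$'s'' sitting inside the monocyclic graph), and $\varphi$ permutes these deletions among themselves, the invariance of the centroid under any such deletion-then-isomorphism forces $\varphi(c_T)=c_T$ regardless of which cycle-edge plays the role of the added edge. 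Once that fixed point is secured, the rest is bookkeeping. I would also need a small lemma that the left-heaviness of $T$ is not actually needed for this particular statement (it is used elsewhere in~\cite{Suzuki14} to pin down a \emph{canonical} pair), but since the hypothesis is given I would simply not belabor it.
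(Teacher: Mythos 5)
The paper itself gives no proof of Lemma~\ref{lem:riso} --- it is imported from Suzuki \emph{et al.}~\cite{Suzuki14} --- so your reconstruction has to stand on its own, and as written it does not. The load-bearing step is your claim that every way of deleting one cycle-edge from $T+st$ yields a tree isomorphic to $T$, from which you conclude that the unlabeled tree $T$ and the added edge are recoverable and that any isomorphism $\varphi\colon T+st\to T+xy$ preserves the centroid. That claim is false: take $T$ to be the path $v_1v_2v_3v_4v_5$ and $\{s,t\}=\{v_1,v_3\}$; deleting the cycle edge $v_1v_2$ from $T+v_1v_3$ leaves a tree in which $v_3$ has degree~$3$, not a path, while deleting $v_1v_3$ or $v_2v_3$ each leaves a $5$-path. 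So $T$ is not determined by $T+st$, the edge $st$ is not the unique cycle edge whose deletion recovers a copy of $T$, and $\varphi$ need not carry $E(T)$ onto $E(T)$: it carries $T$ onto the spanning tree $(T+xy)-\varphi(st)$, which is in general a \emph{different} spanning tree of $T+xy$ whose centroid is a different vertex. The fixed point $\varphi(c_T)=c_T$ is therefore not secured, and the subsequent ``bookkeeping'' (in particular the assertion $\psi(\{s,t\})=\{x,y\}$) rests on the unproved special case $\varphi(st)=xy$.

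There is a second, independent failure: even when all cycle-edge deletions do give trees isomorphic to $T$, their centroids are distinct vertices of the graph, so ``invariance of the centroid'' does not follow. For $T$ the path on seven vertices and $\{s,t\}=\{x,y\}=\{v_1,v_7\}$, the graph $T+st$ is a $7$-cycle; every edge deletion gives a $7$-path, yet a rotation by one position is an isomorphism sending $c_T=v_4$ to $v_5$. This also shows that the statement you are actually arguing for --- that \emph{every} isomorphism fixes $c_T$ --- is false; the lemma only asserts that \emph{some} isomorphism does. A correct proof must therefore construct a centroid-fixing isomorphism, typically by locating $c_T$ through an isomorphism-invariant feature of the monocyclic graph (for instance its relation to the unique pendant tree with at least half the vertices, in the spirit of Theorem~\ref{thm:unique_subtree} and Lemma~\ref{lem:child_big_tree}, with a separate case for $c_T$ on the cycle) and then composing $\varphi$ with a suitable automorphism of $T+xy$. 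That construction is the real content of the result in~\cite{Suzuki14} and is missing here. (You are right, incidentally, that left-heaviness plays no role in this particular statement.)
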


\noindent Thus we only need to be aware of rooted isomorphisms in order to avoid the duplication 
by adding edges between different pairs of nonadjacent vertices to a multi-tree. 
 
Let $T$ be a multi-tree rooted at a vertex $r$. 
See Section~\ref{sub:ordered-trees} for the definition of 
$\mathrm{dfs}()$, $\mathrm{left}()$, $\mathrm{lca}()$ and $\mathrm{gua}()$ in $T$. 
To facilitate finding pairs of nonadjacent vertices to be joined to generate monocyclic chemical graphs that are not
rooted-isomorphic to each other, 
Suzuki~{\em et al.}~\cite{Suzuki14} 
%
%
introduced a 0-1 function $\mathrm{copy}$ as follows:
For each non-root vertex $v \in V(T)$, let 
\begin{equation}
\mathrm{copy}_T(v)= \left\{ 
\begin{array}{ll}
1{\ \ }$if$ {\ }\mathrm{left}_T(v){\ }$exists${\ } $and${\ }T_{v\parent(v)} \riso T_{\mathrm{left}_T(v)\parent(v)}\\
0{\ \ }{\rm otherwise}.
\end{array}
\right. 
\end{equation}
Then 
Suzuki~{\em et al.}~\cite{Suzuki14} defined an admissibility condition
 in terms of generating monocyclic chemical graphs as follows: 
\begin{itemize}
\item[{\rm (a-1)}]  For each vertex $w \in V(P_T(\mathrm{lca}_T(u, v), r))$, 
it holds $\mathrm{copy}_T(w)=0$;   
\item[{\rm (a-2)}]   For  each vertex  
$w \in V(P_T(u, \mathrm{gua}_T(u, v)))$ $\cup$ 
$V(P_T(v, \mathrm{gua}_T(v, u))) \setminus \{\mathrm{gua}_T(u, v)\}$, 
 it holds $\mathrm{copy}_T(w)=0$; and
\item[{\rm (a-3)}] If $\mathrm{copy}_T(\mathrm{gua}_T(v, u))=1$, then 
\begin{itemize}
\item[{\rm (i)}] $\mathrm{gua}_T(u, v)=\mathrm{left}_T(\mathrm{gua}_T(v, u))${\rm ;} 
\item[{\rm (ii)}] The copy $\hat{u}$ of vertex $u$ in $T_{\mathrm{gua}(u, v)}$ satisfies   
    $\mathrm{dfs}(u) \geq \mathrm{dfs}(\hat{u})$ \\
~~ $(=\mathrm{dfs}(u)+|V(T_{\mathrm{gua}(u, v)})|)$.
\end{itemize}
\end{itemize}

We call these conditions (a-1), (a-2) and (a-3) the {\em admissibility conditions}. 
For a left-heavy multi-tree $T$, we call an ordered pair $(u, v)$ 
 of nonadjacent vertices $u, v\in  V(T)$
 which satisfies  the admissibility conditions an {\em admissible pair}.   
For a left-heavy tree rooted at a vertex $r$, 
we define $A(T)$ to be the set of  admissible pairs  $(x, y)$ in $T$.
When a tree is rooted at its centroid $c_T$, 
Suzuki~{\em et al.}~\cite{Suzuki14} have shown the following claim. 

\begin{lem}\label{lem:ad}{\rm (Suzuki }{\em et al.~}{\rm \cite{Suzuki14})}
 Let $T$ be a left-heavy tree rooted at its centroid~$c_T$. 
\begin{enumerate}
\item[{\rm (i)}]
For any pair $\{s, t\} \subseteq V(T)$ of nonadjacent vertices, 
there exists an admissible pair $(x, y)\in A(T)$   
such that $T+st \riso T+xy$; and
\item[{\rm (ii)}]
For two   admissible pairs $(s, t), (x, y)\in A(T)$, 
 if $T+st \riso T+xy$ then it holds that $(s, t)=(x, y)$. 
\end{enumerate}
\end{lem}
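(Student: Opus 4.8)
The plan is to recognize that, taken together, parts (i) and (ii) assert that $A(T)$ is a system of distinct representatives for the equivalence that declares two oriented nonadjacent pairs equivalent when the resulting monocyclic chemical graphs are rooted-isomorphic. The first thing I would establish is the bridge between graph isomorphism and tree automorphism: for nonadjacent pairs $\{s,t\}$ and $\{x,y\}$, $T+st \riso T+xy$ holds if and only if there is a rooted automorphism $\phi$ of $T$ with $\phi(\{s,t\}) = \{x,y\}$. The ``if'' direction is immediate, since such a $\phi$ extends to a rooted isomorphism $T+st \to T+xy$. For the ``only if'' direction, a rooted isomorphism $\psi : T+st \to T+xy$ must carry the unique cycle of $T+st$ (which is $P_T(s,t)$ together with the edge $st$) onto the unique cycle of $T+xy$; the key point is that, under the admissibility hypothesis, $\psi$ may be taken to send the added edge $st$ to $xy$, so that $\psi$ restricts to a rooted automorphism of $T$. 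I expect this step --- ruling out, or absorbing by a further automorphism, the possibility that $\psi$ sends $st$ to another edge of the cycle --- to be the main obstacle, and I would handle it by exploiting the canonical choices built into the definition of the parent, namely the minimality of $\code_{\rm C}$ and $\code_{\rm P}$, to pin down which cycle edge plays the role of the added edge.

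Next I would record the structure of the rooted automorphism group of a left-heavy tree. Using the recursive structure of left-heavy trees --- in particular that rooted-isomorphic sibling subtrees occur consecutively --- an automorphism of rooted $T$ amounts to choosing, at each internal vertex, a permutation within each maximal block of mutually rooted-isomorphic children, applied recursively; and $\mathrm{copy}_T(v) = 0$ exactly when $v$ is the first vertex of its block. This is the combinatorial fact that makes the admissibility conditions meaningful: (a-1) and (a-2) say that the vertices along the relevant paths are leftmost in their blocks, and (a-3) breaks the remaining symmetry in the one case where the two branches $\mathrm{gua}_T(u,v)$ and $\mathrm{gua}_T(v,u)$ span rooted-isomorphic subtrees.

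For part (i), starting from an arbitrary nonadjacent pair I would first fix an orientation $(u,v)$ by a convention (say, $u$ on the path from $v$ to the root, or $\mathrm{dfs}(u) < \mathrm{dfs}(v)$ when neither is an ancestor of the other), and then drive the pair toward an admissible position by a finite sequence of ``leftward'' automorphisms: whenever a vertex $w$ on $P_T(\lca_T(u,v), r)$ has $\mathrm{copy}_T(w) = 1$, swap the subtree rooted at $w$ with the rooted-isomorphic subtree rooted at $\mathrm{left}_T(w)$; likewise along $P_T(u, \mathrm{gua}_T(u,v))$ and $P_T(v, \mathrm{gua}_T(v,u))$ to secure (a-2); and finally one more swap plus a choice of which of the two isomorphic branches carries $u$ to secure (a-3). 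Each swap is a rooted automorphism of $T$, hence leaves $T+uv$ unchanged up to rooted isomorphism, and the process terminates because a suitable monovariant --- for instance the lexicographic value of the sorted pair of depth-first indices of $u$ and $v$ --- strictly decreases at each step; the terminal pair satisfies all the admissibility conditions.

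For part (ii), let $(s,t),(x,y)\in A(T)$ with $T+st\riso T+xy$. By the bridge established above there is a rooted automorphism $\psi$ of $T$ with $\psi(\{s,t\}) = \{x,y\}$. Since rooted automorphisms preserve least common ancestors, $\psi(\lca_T(s,t)) = \lca_T(x,y)$ and $\psi$ maps $P_T(\lca_T(s,t), r)$ onto $P_T(\lca_T(x,y), r)$. I would then argue by induction from the root that $\psi$ fixes every vertex of this path: if $\psi$ fixes the parent of the next path vertex $w$, then $\psi$ maps $w$ inside its own sibling block, and by (a-1) applied to both $(s,t)$ and $(x,y)$ both $w$ and $\psi(w)$ are the first vertex of that block, whence $\psi(w) = w$. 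In particular $\lca_T(s,t) = \lca_T(x,y)$, and the same inductive argument along $P_T(u,\mathrm{gua}_T(u,v))$ and $P_T(v,\mathrm{gua}_T(v,u))$, using (a-2), forces $\psi$ to fix $u$ and $v$; condition (a-3) disposes of the one remaining case, where the two branches are rooted-isomorphic and $\psi$ could a priori interchange them. Hence $\{s,t\} = \{x,y\}$, and since both pairs obey the same orientation convention (with (a-3) resolving the tie in the symmetric case), $(s,t) = (x,y)$, as claimed.
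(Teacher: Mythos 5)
First, a point of reference: the paper does not prove Lemma~\ref{lem:ad} at all --- it is imported verbatim from Suzuki \emph{et al.}~\cite{Suzuki14} and used as a black box --- so there is no in-paper proof to compare your argument against. Judged on its own terms, your architecture is the natural one and very likely mirrors the original: reduce rooted isomorphism of $T+st$ and $T+xy$ to a rooted automorphism of $T$ carrying one pair to the other, describe rooted automorphisms of a left-heavy tree as recursive permutations within maximal blocks of mutually rooted-isomorphic sibling subtrees (so that $\mathrm{copy}_T(v)=0$ exactly for the leftmost member of a block), obtain part~(i) by a terminating sequence of leftward block swaps, and obtain part~(ii) by showing that (a-1)--(a-3) force the automorphism to fix the vertices of the relevant paths.

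The genuine gap is in the step you yourself identify as the main obstacle, namely that a rooted isomorphism $\psi : T+st \to T+xy$ can be adjusted so that it carries the added edge $st$ onto $xy$ and hence restricts to an automorphism of $T$. You propose to close this by ``exploiting the canonical choices built into the definition of the parent, namely the minimality of $\code_{\rm C}$ and $\code_{\rm P}$,'' but those codes do not exist in this setting: they are defined only for the parent--child relationship between monocyclic graphs and \emph{bi-block 2-augmented trees} (Sections~4--5 of the paper), and they say nothing about which spanning tree of a monocyclic graph plays the role of $T$. What is actually needed is an argument about the unicyclic graph $H'=T+xy$ itself: if $\psi(st)$ were a tree edge $e'\neq xy$ of the cycle of $H'$, then $H'-xy$ and $H'-e'$ would be two rooted-isomorphic spanning trees of $H'$, and one must show --- for instance by viewing the cycle together with its pendent trees as a necklace anchored at the cycle vertex closest to the root, and checking that two cuts yielding the same unordered pair of arms force the reflection through the anchor to be a symmetry of the entire necklace --- that this situation produces a rooted automorphism of $H'$ carrying $e'$ to $xy$; composing $\psi$ with it then yields the automorphism of $T$ you need. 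Without some argument of this kind, part~(ii) is not established, since a priori $\psi$ could ``rotate'' the cycle and fail to restrict to $T$.
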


Lemma~\ref{lem:riso} and Lemma~\ref{lem:ad} indicate that  adding only the edges 
between each admissible pair never generates duplications of 
monocyclic chemical graphs from a single multi-tree.

\begin{figure}[htbp]
\begin{center}
\includegraphics[width=12cm]{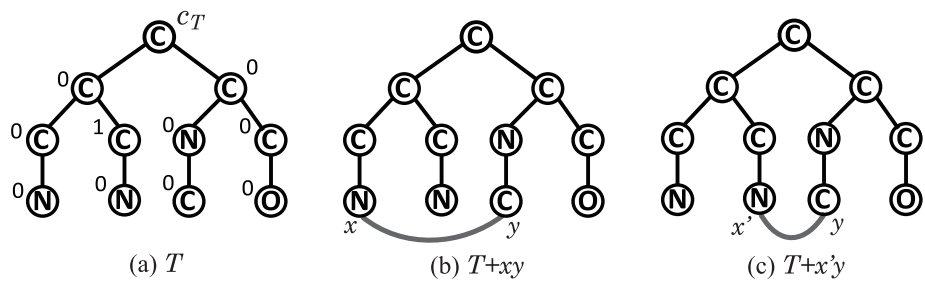}
\caption{An illustration of monocyclic chemical graphs obtained by adding an edge between an admissible pair 
and a non-admissible pair, respectively:
 (a) a left-heavy multi-tree $T$ rooted at its centroid $c_T$, where 
the integer beside each vertex $v$ of $T$ represents the value of $\mathrm{copy}_T(v)$;
 (b) a monocyclic chemical graph obtained by adding a new edge to an admissible pair $(x, y)$; and
 (c) a monocyclic chemical graph obtained by adding  a new edge to a non-admissible pair $(x', y)$. }
\label{figadmissible}
\end{center}
\end{figure}

Figure~\ref{figadmissible}(a) illustrates a left-heavy tree $T$ rooted at its centroid $c_T$. 
Figure~\ref{figadmissible}(b) (resp., (c)) provides an example of a monocyclic chemical graph
 obtained by adding an edge between
  an admissible pair $(x, y)$ (resp.,  a non-admissible pair $(x', y)$), 
where $(x', y)$ is not admissible 
since there is a vertex $v$ such that $\mathrm{copy}_T(v) = 1$ in $P_T(x', c_T)$.

For a $\Sigma$-colored labeled multi-graph $H$ and a set $A$ of ordered pairs of vertices in $V(H)$, 
let $\mathcal{G}(H, A)$ be the set of $\Sigma$-colored labeled multi-graphs $H+xy$
for all $(x, y)\in A$. 
We show that Lemma~\ref{lem:ad} remains valid
for any choice of root in $T$
 restating properties (i) and (ii) in the lemma
as follows.

\begin{lem}\label{lem:ad-any-root} 
 Let $T$ be a left-heavy tree rooted at a vertex $r$.
Then for any pair $\{s, t\} \subseteq V(T)$ of nonadjacent vertices, 
exactly one monocyclic chemical graph $T+xy$ in $\mathcal{G}(H, A(T))$ is isomorphic to $T+st$. 
\end{lem}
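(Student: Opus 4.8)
The plan is to reduce the general-root statement (Lemma~\ref{lem:ad-any-root}) to the centroid-rooted statement (Lemma~\ref{lem:ad}) by exploiting the fact that the admissibility conditions, the \textbf{copy} function, and the left-heavy property are all defined purely in terms of the rooted-tree structure, and that re-rooting a tree at a different vertex does not change which unrooted monocyclic graphs $T+xy$ are isomorphic.

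\medskip
\noindent\textbf{Step 1 (existence).}
Given a pair $\{s,t\}\subseteq V(T)$ of nonadjacent vertices, I want to produce an admissible pair $(x,y)\in A(T)$ with $T+xy\approx T+st$. The idea is: let $c_T$ be the centroid of $T$ and let $T'$ be the same underlying tree re-rooted at $c_T$; since left-heaviness is preserved up to re-ordering children, we may take $T'$ to be a left-heavy tree rooted at its centroid. Apply Lemma~\ref{lem:ad}(i) to $T'$ and the pair $\{s,t\}$ to obtain an admissible pair $(x,y)\in A(T')$ with $T'+xy\riso T'+st$, hence $T+xy\approx T+st$ as unrooted graphs. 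The subtlety is that $(x,y)$ is admissible with respect to the centroid-rooting, not the $r$-rooting; so the real content of Step~1 is a lemma asserting that for any pair $\{s,t\}$ there is an admissible pair in $A(T)$ (the $r$-rooted admissible set) yielding an isomorphic graph. I expect this to follow by the same combinatorial argument Suzuki \emph{et al.} use for Lemma~\ref{lem:ad}(i): the admissibility conditions are a canonical-form selection among all pairs producing rooted-isomorphic graphs, and that selection procedure (choosing the pair with the largest/smallest dfs-indices subject to the \textbf{copy} constraints) is well-defined for \emph{any} fixed root, not just the centroid. So I would re-examine the proof of Lemma~\ref{lem:ad}(i) and observe that it never uses the centroid property — only that $T$ is rooted and left-heavy.

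\medskip
\noindent\textbf{Step 2 (uniqueness).}
Suppose $(s,t),(x,y)\in A(T)$ with $T+st\approx T+xy$. I must show $(s,t)=(x,y)$. Here the centroid \emph{does} matter in the original Lemma~\ref{lem:ad}(ii), because it upgrades $\approx$ to $\riso$ via Lemma~\ref{lem:riso}, and Lemma~\ref{lem:riso} is stated only for centroid-rooted $T$. The key step is therefore to argue that, even for a non-centroid root $r$, if $(s,t),(x,y)\in A(T)$ and $T+st\approx T+xy$, then in fact $T+st\riso T+xy$. I would do this by observing that the added edge, together with the unique cycle it creates, pins down the cycle of the monocyclic graph, and that the root $r$ lies in a determined position relative to that cycle; more precisely, re-root both $T+st$ and $T+xy$ at $c_T$ and note that since $\approx$ holds as unrooted graphs, the centroid of the underlying tree is preserved by the isomorphism, so the isomorphism can be taken to fix the centroid, giving $T+st\riso_{c_T} T+xy$. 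Then apply the uniqueness part of the argument behind Lemma~\ref{lem:ad}(ii) — which, like (i), is really a statement that the admissibility selection is injective on rooted-isomorphism classes for \emph{any} fixed root. Combined, these force $(s,t)=(x,y)$.

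\medskip
\noindent\textbf{Main obstacle.}
The crux is Step~2: separating out from the Suzuki \emph{et al.} proof exactly where the centroid hypothesis is used and showing it can be weakened. I expect the centroid to be needed only to guarantee that a graph isomorphism between $T+st$ and $T+xy$ maps the distinguished root to the distinguished root (which is automatic for the centroid, since it is an isomorphism invariant of the underlying tree). For an arbitrary root $r$ this need not hold directly, so the argument must instead route through the centroid: the underlying trees of $T+st$ and $T+xy$ are both $T$ (up to isomorphism), their centroids correspond under any isomorphism, hence we recover a \emph{rooted} isomorphism with respect to $c_T$, and from there the original $c_T$-based uniqueness applies verbatim. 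Once that reduction is in place, both halves follow from Lemma~\ref{lem:ad} applied to the centroid-rerooting of $T$, and the proof is essentially a bookkeeping exercise in re-rooting.
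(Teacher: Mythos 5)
There is a genuine gap, and it sits exactly where you yourself locate the difficulty. Your plan is to re-root $T$ at its centroid $c_T$, invoke Lemma~\ref{lem:ad} there, and transfer the conclusion back to the $r$-rooted admissible set $A(T)$. But the statement to be proved is about $A(T)$, which is defined by conditions (a-1)--(a-3) computed with respect to the root $r$: the functions $\mathrm{lca}$, $\mathrm{gua}$, $\mathrm{copy}$, the dfs order, and even which ordered tree is left-heavy all change under re-rooting, so $A(T)$ and $A(T')$ (for the centroid-rooted $T'$) are different sets and you establish no correspondence between them. In Step~1 you fall back on the assertion that the proof of Lemma~\ref{lem:ad}(i) ``never uses the centroid property''; that is essentially the claim Lemma~\ref{lem:ad-any-root} exists to establish, and it cannot simply be asserted. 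In Step~2 the problem is sharper: uniqueness in Lemma~\ref{lem:ad}(ii) rests on Lemma~\ref{lem:riso}, i.e., on upgrading $\approx$ to $\riso$ \emph{with respect to the root}, and the centroid hypothesis does real work there (the centroid is an isomorphism invariant of the underlying tree; an arbitrary root $r$ is not). Routing through $c_T$ yields at best a rooted isomorphism with respect to $c_T$, after which you would need Lemma~\ref{lem:ad}(ii) for pairs in $A(T)$ --- but its hypothesis is membership in $A(T')$, so it cannot be ``applied verbatim.''

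The paper resolves this with a device your proposal lacks: instead of moving the root to the centroid, it moves the centroid to the root. It attaches to $r$ a path $T_1$ of $|V(T)|-1$ vertices of a new color ${\tt Ic}$ smaller than every color in $\Sigma$, obtaining a tree $T^*$ in which $r$ is the centroid, the appended path is the rightmost subtree (so $T^*$ remains left-heavy), and --- crucially --- the ordered rooted structure of $T$ inside $T^*$ is completely unchanged, so the dfs order, the $\mathrm{copy}$ values, and hence the admissible pairs among vertices of $V(T)$ coincide with $A(T)$. Lemmas~\ref{lem:riso} and~\ref{lem:ad} then apply to $T^*$ as black boxes and restrict to the desired statement about $T$. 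To repair your argument you would need either this padding construction or a from-scratch re-proof of the Suzuki \emph{et al.} selection argument for arbitrary roots; neither is the ``bookkeeping exercise in re-rooting'' your proposal describes.
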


\begin{proof} 
Note that $r$ is not necessarily the centroid of $T$. 
To prove this lemma, it suffices to show that,  for any pair $\{s, t\} \subseteq V(T)$ 
of nonadjacent vertices, 
there is exactly one monocyclic chemical graph in $\mathcal{G}(H, A(T))$ isomorphic to $T +st$. 

We modify $T$ by adding some vertices to $r$ to make $r$ the centroid 
$c_{T^*}$ of the resulting tree $T^*$, 
where the left-heavy structure of $T$ and the label of each vertex $v \in V(T)$ are kept. 
We introduce an imaginary color {\tt Ic} such that $\mathrm{val}({\tt Ic}) = 2$ and
 for any color $c \in \Sigma$, it holds that ${\tt Ic} < c$. 
Let $T_1$ be a path that consists of $|V(T)|-1$ vertices of color~{\tt Ic}. 
Let $T^*$ be the resulting ordered-tree obtained by joining $T$ and $T_1$ 
with a new edge $ru$ for an endvertex $u$ of path $T_1$,  and $n^*=|V(T^*)|$, 
where we regard $u$ as the rightmost child of $r$ in $T^*$. 
Clearly the centroid $c_{T^*}$ of $T^*$  is $r$. 
We see that 
the current $\mathrm{dfs}$ order along $T^*$ gives an alternating sequence
 $\delta(T^*)=(c_1, d_1, \ldots, c_{n^*}, d_{n^*})$
of color $c_i$ and depth $d_i$ of the $i$-th vertex
 which is lexicographically maximum among all
other ordered trees of $T^*$, 
since for any color $c \in \Sigma$, it holds that ${\tt Ic} < c$.
Therefore $T^*$ is a left-heavy tree, as required.
\end{proof}

\begin{figure}[htbp]
\begin{center}
\includegraphics[width=11cm]{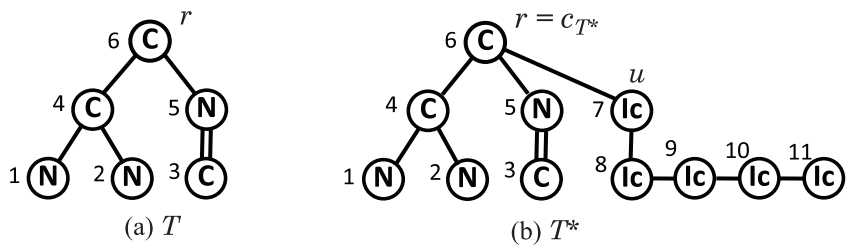}
\caption{An illustration of rooted multi-trees; 
The integer beside each vertex $u$ represents the label $\mathrm{lab}(u)$ of $u$. 
Note that the labeling of the vertices in $T$ is kept in $T^*$. 
 (a) A multi-tree $T$ rooted at a vertex~$r$; and
 (b) A left-heavy multi-tree $T^*$ rooted at its centroid $r = c_{T^*}$ obtained by adding a path $T_1$ 
which consists of $|V(T)| - 1$ vertices of color {\tt Ic} to the rightmost side of~$T$. }
\label{fig-modified-tree}
\end{center}
\end{figure}

Figure~\ref{fig-modified-tree}(a) illustrates a labeled left-heavy multi-tree $T$ rooted at a vertex $r$, 
and \ref{fig-modified-tree}(b) illustrates a labeled left-heavy multi-tree $T^*$
 rooted at its centroid $r = c_{T^*}$.

Suzuki~\textit{et al}.~\cite{Suzuki14} have shown that for a rooted left-heavy tree $T$ 
and a pair $\{x, y\} \subseteq V(T)$ of nonadjacent vertices, 
we can check the admissibility conditions in $\mathcal{O}(1)$ time 
with an adequate data structure. 

\section{Preprocessing for Efficient Computation}

\subsection{Ranking of Pendent Trees in a 2-Cycle Tree}
\label{sec:ranking_pendent_trees_in_2A2B}
For a monocyclic chemical graph $G$, let $\mathcal{H}(G)$
denote the set of all bi-block 2-augmented tree children of $G$, 
and we define the set $\mathcal{T}_{\mathcal{H}}(G)$ 
to be the set of all pendent trees of bi-block 2-augmented trees in $\mathcal{H}(G)$.
In addition, for each $u \in V(G) \setminus V(C)$
we define $T_{G}(u)$ to be the graph 
obtained by deleting the subgraph containing $u$ from $G\langle \parent(u) \rangle$, 
that is, $T_{G}(u) \triangleq  G\langle \parent(u) \rangle - (E(G\langle u \rangle) \cup u \parent(u))$.
Given a monocyclic chemical graph $G$, we devise an algorithm 
to calculate a ranking among all the pendent trees in 
$\mathcal{T}_{\mathcal{H}}(G)$ following the lexicographically
ascending order of signatures for rooted multi-trees 
defined in Sec.~\ref{sec:signature}.
First, we show an important property in the following lemma.

\begin{lem}
\label{lem:property_of_T_H}
Let $G$ be a monocyclic chemical graph, with a unique cycle $C$.
If there exists a vertex $v \in V(C)$ such that
$|V(G\langle v \rangle)| \geq |V(G)|/2$, 
then it holds that
\[
\mathcal{T}_{\mathcal{H}}(G) = 
			\{G\langle u \rangle \mid u \in V(C) \setminus \{v\} \} \cup 
			\{T_{G}(u) \mid u \in G\langle v \rangle \setminus \{v\}\}.
\]
Otherwise, it holds that 
\[
 \mathcal{T}_{\mathcal{H}}(G) = \emptyset.
\]
\end{lem}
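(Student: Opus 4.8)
The plan is to establish the two directions of the characterization of $\mathcal{T}_{\mathcal{H}}(G)$ by working through the anatomy of a child $H = G + q\cdot xy$ as laid out in Lemmas~\ref{lem:anchor_vertices}--\ref{lem:child_big_tree}. First I would dispose of the ``otherwise'' case: if no vertex $v \in V(C)$ satisfies $|V(G\langle v\rangle)| \geq |V(G)|/2$, then every vertex $v\in V(C)$ has $|V(G\langle v\rangle)| < |V(G)|/2$, so Lemma~\ref{lem:child_big_tree} applies to every choice of $x \in V(G\langle v\rangle)$, $y\in V(G\langle x\rangle)$ with $xy\notin E(G)$; combined with Condition~(ii) of Lemma~\ref{lem:child_condition} (which forces $y$ to be a descendant of $x$), this shows $G$ has no child, so $\mathcal{H}(G) = \emptyset$ and hence $\mathcal{T}_{\mathcal{H}}(G) = \emptyset$. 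Equivalently, this is just the contrapositive of Theorem~\ref{thm:unique_subtree} together with the uniqueness part proven there.

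For the main case, assume the (necessarily unique, by Theorem~\ref{thm:unique_subtree}) vertex $v\in V(C)$ with $|V(G\langle v\rangle)| \geq |V(G)|/2$ exists. I would prove the set equality by mutual inclusion. For $\subseteq$: take any child $H = G + q\cdot xy \in \mathcal{H}(G)$ with cycles $C$ (the old one) and $C'$ (through $xy$), and anchor vertices $x$ and $\rootG{x}$ by Lemma~\ref{lem:anchor_vertices}; Condition~(ii) of Lemma~\ref{lem:child_condition} gives $y\in V(G\langle x\rangle)$, and by the argument in Lemma~\ref{lem:child_big_tree} the inequality $|V(Q(C,H))| > n/2 > |V(Q(C',H))|$ would force $\code_{\rm C}^*(C')\preceq \code_{\rm C}^*(C)$ automatically, violating the child condition unless $x$ lies in the large pendent tree, i.e.\ $x\in V(G\langle v\rangle)$ (here $\rootG{x}=v$). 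Now I enumerate the pendent trees of $H$: for each vertex $w$ on a cycle or the connecting path of $H$, the pendent tree $H\langle w\rangle$ equals the corresponding pendent tree of $G$ except where the new edge $xy$ changes the local structure. Vertices $w\in V(C)\setminus\{v\}$ keep their pendent trees $G\langle w\rangle$ unchanged; the vertex $v=\rootG{x}$ and the vertices along the $x$-to-$v$ path in $G$ — which become the path $P(C,C')$ and part of $C'$ — have their pendent trees re-rooted: for such a vertex $u$ lying strictly between $x$ and $v$ (inclusive of $v$, exclusive of $x$), the subtree hanging off $u$ away from the new cycle is exactly $T_G(\mathrm{child})$, i.e.\ $G\langle u\rangle$ with the branch containing $x$ (equivalently $y$) pruned — which is $T_G(u')$ for the appropriate child $u'$ of $u$ on the $u$-to-$x$ path. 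This gives every pendent tree of $H$ as a member of the claimed right-hand side. For $\supseteq$: I would exhibit, for each $u\in V(C)\setminus\{v\}$ and each $T_G(u)$ with $u\in G\langle v\rangle\setminus\{v\}$, a single child $H$ of $G$ in whose pendent-tree collection the tree appears — using the freedom to pick $x = v$'s neighbor (or $x=u$, $y$ a descendant) and invoking Condition~(i)/(ii) of Lemma~\ref{lem:child_condition} to confirm childhood, since $|V(G\langle v\rangle)|\geq n/2$ guarantees $|V(Q(C',H))| = |V(G\langle x\rangle)|$ can be made $\geq n/2$ so that Condition~(iii) on cycle codes can be satisfied.

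The main obstacle I anticipate is the bookkeeping in the $\subseteq$ direction: precisely matching each pendent tree $H\langle w\rangle$ of an arbitrary child against the right-hand side requires a careful case analysis of where $w$ sits relative to the cycle $C$, the new cycle $C'$, and the connecting path $P(C,C')$, and in particular verifying that the re-rooting of the pendent tree at a vertex $u$ between $x$ and $v$ yields exactly $T_G(u')$ as defined — i.e.\ that deleting the branch toward $u$ from $G\langle\parent(u')\rangle$ coincides with the pendent tree of $H$ at that vertex. This is essentially a structural unwinding of the definitions of $G\langle\cdot\rangle$, $H\langle\cdot\rangle$, $P(C,C')$, and $T_G(\cdot)$, using Lemma~\ref{lem:vertices_Q} to control vertex counts; once the correspondence is set up explicitly the verification is routine, but stating it cleanly is where the care is needed.
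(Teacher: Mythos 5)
Your proposal is correct and follows essentially the same route as the paper's proof: dispose of the empty case via Theorem~\ref{thm:unique_subtree} (the paper cites Lemma~\ref{lem:child_condition2}), then analyze an arbitrary child $H=G+xy$ with $x\in V(G\langle v\rangle)$ and $y$ a descendant of $x$, observing that pendent trees rooted at $V(C)\setminus\{v\}$ are preserved while those rooted along the $v,y$-path become the pruned trees $T_G(\cdot)$. You are in fact somewhat more explicit than the paper about the two inclusions and about why $x$ must lie in the large pendent tree, but the underlying argument is the same.
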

\begin{proof}
By Lemma~\ref{lem:child_condition2}, 
unless the monocyclic chemical graph $G$ has a vertex $v \in V(C)$
such that $|V(G\langle v \rangle)| \geq |V(G)|/2$ it has
no children bi-block 2-augmented trees,
 and the latter part of the lemma follows.
Therefore, assume that there exists such a vertex $v \in V(C)$, 
let $x \in V(G\langle v \rangle)$ and $y \in G \langle x \rangle$
be two vertices, 
and $H = G + xy$.
In this case, the graph $H$ is a bi-block 2-augmented tree.
Let $P_{vy}$ denote the rooted tree induced by the vertices in the $v, y$-path in $H$ rooted at $v$.
Then, for any vertex $u \in V(P_{vy})$, we get that $u$ is the root of a pendent tree in $H$.
By a simple observation, we get that $G\langle y \rangle$ is preserved as $H \langle y \rangle$.
Let $u$ be a vertex in $V(P_{vy}) \setminus \{y\}$, and let $u_c$ denote the child of $u$ in $V(P_{vy})$.
The pendent tree $H\langle u \rangle$ is the graph obtained by deleting
the subgraph containing $u_c$ from $G\langle u \rangle$.
Finally, since none of the pendent trees of $G$ rooted at 
vertices other than $v$ are modified in $H$, it holds that
$\{G\langle v \rangle \mid v \in V(C) \setminus \{v\}\} \subseteq  \mathcal{T}_{\mathcal{H}}(G)$, 
from where the claim follows.
\end{proof}

Given a monocyclic chemical graph $G$, 
we obtain the signatures of all pendent trees in 
$\mathcal{T}_{\mathcal{H}}(G)$  by Procedure~{\procTreeSignature}.
Recall that for a collection of trees, Procedure~{\procTreeSignature}
in fact calculates the signature of each rooted subtree
for all trees in the collection.
Then, we obtain their ranking by Procedure~{\procTreeRanking}.

\subsection{Identifying Potential Edges}
\label{sec:potential_edges}

Let $G$ be a monocyclic chemical graph.
For two non-adjacent vertices $x$ and $y$ in $V(G)$, if one of the conditions in
Lemma~\ref{lem:child_condition}
is not satisfied, then the graph $G + xy$ is not a child of $G$.
For a vertex $x \in V(G)$, and a descendant $y$ of $x$ such that $xy \not\in E(G)$, 
let $H$ denote the graph $G + xy$, 
and let $C'$ denote the cycle in $H$ containing the edge $xy$. 
We say that the vertex pair $\{x, y\}$ is a \emph{potential edge}
if $(|V(H)| - |V(Q(C', H))|, |C'|) \preceq (|V(H)| - |V(Q(C, H))|, |C|)$ holds.
We define $E_{\pot}(G)$ to be the set of all potential edges in $G$.
Obviously, $E_{\pot}(G)$ contains all pairs $\{x, y\} \subseteq V(G)$
of non-adjacent vertices such that $G + xy$ is a child of~$G$.

\begin{lem}\label{lem:epot_child}
For a monocyclic chemical graph $G$ and a vertex pair $\{x, y\} \not\in E_{\pot}(G)$, 
the graph $G + xy$ is not a child of $G$.
\end{lem}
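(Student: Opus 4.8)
The plan is to prove the contrapositive of the desired implication directly from the definitions, leveraging the structural lemmas established above. Specifically, I want to show that if $G + xy$ is a child of $G$, then $\{x,y\} \in E_{\pot}(G)$; equivalently, if $\{x,y\} \notin E_{\pot}(G)$, then $G+xy$ is not a child of $G$.

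First I would assume that $G+xy$ is a child of $G$ and derive that $\{x,y\} \in E_{\pot}(G)$. By Lemma~\ref{lem:child_condition2}, if $H = G+xy$ is a child of $G$, then Conditions~(i)--(iv) of that lemma all hold. In particular, Condition~(ii) gives $y \in V(G\langle x\rangle)$, so $y$ is a descendant of $x$ and the notion of a potential edge applies to the pair $\{x,y\}$; and Condition~(iii) gives $\code_{\rm C}^*(C') \preceq \code_{\rm C}^*(C)$, where $C'$ is the cycle of $H$ containing $xy$ and $C$ is the unique cycle of $G$ (which remains a cycle of $H$).

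The key step is to observe that the first two components of the heuristic code $\code_{\rm C}^*$ are exactly $|V(H)| - |V(Q(\cdot, H))|$ and $|\cdot|$ (the cycle length), as read off from the definition $\code_{\rm C}^*(C) \triangleq (|V(H)| - |V(Q(C,H))|, |C|, \ldots)$. Since lexicographic comparison of $\code_{\rm C}^*(C')$ and $\code_{\rm C}^*(C)$ first compares these leading pairs, the inequality $\code_{\rm C}^*(C') \preceq \code_{\rm C}^*(C)$ immediately implies $(|V(H)| - |V(Q(C', H))|, |C'|) \preceq (|V(H)| - |V(Q(C, H))|, |C|)$. This is precisely the defining condition for $\{x,y\}$ to be a potential edge, so $\{x,y\} \in E_{\pot}(G)$. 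Taking the contrapositive yields the claim. I do not anticipate a serious obstacle here; the only thing to be careful about is making sure that the anchor vertices and hence $Q(C,H)$, $Q(C',H)$ used in the definition of $E_{\pot}$ agree with those used implicitly in the definition of $\code_{\rm C}^*$ via Lemma~\ref{lem:child_condition2} — but this is guaranteed since, by Lemma~\ref{lem:anchor_vertices}, when $y \in V(G\langle x\rangle)$ the anchor vertices of $G+xy$ are determined to be $x$ and $\rootG{x}$, consistently in both places.

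Here is the proof.

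\begin{proof}
We prove the contrapositive: if $G + xy$ is a child of $G$, then $\{x, y\} \in E_{\pot}(G)$.

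So suppose $H = G + xy$ is a child of $G$, and let $C$ denote the unique cycle of $G$ and $C'$ the cycle of $H$ containing the edge $xy$. By Lemma~\ref{lem:child_condition2}, Conditions~(i)--(iv) of that lemma hold. In particular, Condition~(ii) states $y \in V(G\langle x\rangle)$, so $y$ is a descendant of $x$ with $xy \notin E(G)$, and hence the definition of a potential edge is applicable to the pair $\{x, y\}$. Moreover, by Lemma~\ref{lem:anchor_vertices} the anchor vertices of $H$ are $x$ and $\rootG{x}$, so the subgraphs $Q(C, H)$ and $Q(C', H)$ referred to in the definition of $E_{\pot}(G)$ are well-defined and coincide with those appearing in $\code_{\rm C}^*(C)$ and $\code_{\rm C}^*(C')$.

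By Condition~(iii) of Lemma~\ref{lem:child_condition2}, we have $\code_{\rm C}^*(C') \preceq \code_{\rm C}^*(C)$. Recall from Section~\ref{sec:signature} that
\[
\code_{\rm C}^*(C) = (|V(H)| - |V(Q(C, H))|, |C|, \code^*(\anc(C, H)), \code_{\rm P}(P(C, C')), \code_{\rm C}(C)),
\]
and analogously for $C'$. Since the lexicographic order compares entries from left to right, the inequality $\code_{\rm C}^*(C') \preceq \code_{\rm C}^*(C)$ implies in particular that
\[
(|V(H)| - |V(Q(C', H))|, |C'|) \preceq (|V(H)| - |V(Q(C, H))|, |C|).
\]
This is precisely the condition defining a potential edge, so $\{x, y\} \in E_{\pot}(G)$.

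Taking the contrapositive, if $\{x, y\} \notin E_{\pot}(G)$, then $G + xy$ is not a child of $G$.
\end{proof}
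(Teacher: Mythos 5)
Your proof is correct and is essentially the paper's argument in contrapositive form: both rest on Lemma~\ref{lem:child_condition} (equivalently \ref{lem:child_condition2}) together with the observation that the two leading entries of $\code_{\rm C}^*$ are exactly the quantities compared in the definition of a potential edge, so that the lexicographic comparison of the heuristic cycle codes is decided by, and hence implies, the comparison defining $E_{\pot}(G)$. The paper merely organizes this as a direct case analysis on the three ways a pair can fail to be in $E_{\pot}(G)$, which is logically the same content.
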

\begin{proof}
From the definition of a potential edge, for a vertex pair $\{x, y\} \not\in E_{\pot}(G)$ of $G$, 
at least one of the following properties holds: \\
(i)~ $xy \in E(G)$, \\
(ii)~$x$ is not an ancestor of $y$, and \\
(iii)~$(|V(H)| - |V(Q(C, H))|, |C|) \prec (|V(H)| - |V(Q(C', H))|, |C'|) $.

Suppose that either $xy \in E(G)$ holds or $x$ is not an ancestor of $y$.
From Lemma~\ref{lem:child_condition}, the graph $G + xy$ is not a child of $G$.
Hence, we assume that $xy \not\in E(G)$ and $x$ is an ancestor of $y$.
Now, the graph $G+xy$ is a bi-block 2-augmented tree whose anchor vertices 
are $x$ and $\rootG{x}$ by Lemma~\ref{lem:anchor_vertices}, 
and we denote the bi-block 2-augmented tree by $H$.
Let $C$ denote the cycle in $G$, and let $C'$ denote the cycle in $H$ containing the edge $xy$.
Suppose that $(|V(H)| - |V(Q(C, H))|, |C|) \prec (|V(H)| - |V(Q(C', H))|, |C'|) $ holds.
Since the two leftmost entries of $\code^*_C(C')$ 
(resp., $\code^*_C(C))$ are $|V(H)| - |V(Q(C', H))|$ and  
$|C'|$ (resp., $|V(H)| - |V(Q(C, H))|$ and $|C|$), 
we have $\code_{\rm C}^*(C) \prec \code_{\rm C}^*(C')$.
However, from Lemma~\ref{lem:child_condition}, the graph $G + xy$ is not a child of~$G$. 

As a result, we conclude that for a monocyclic chemical graph $G$ 
and a vertex pair $\{x, y\} \not\in E_{\pot}(G)$, 
the graph $G + xy$ is not a child of $G$.
\end{proof}

Note that, for a vertex pair $\{x, y\} \in E_{\pot}(G)$, the graph $G + xy$ might  not be a child of~$G$.
We construct an algorithm to generate the set of potential edges of a given monocyclic chemical graph.
This can be achieved by checking all vertex pairs in~$G$, 
however, in this section, we propose an efficient algorithm by skipping unnecessary vertex pairs.
In order to develop an efficient algorithm, we show some useful lemmas.

\begin{lem}
\label{lem:epot_big_tree}
Let $G$ be a monocyclic chemical graph.
If $G$ has no pendent tree that contains at least half the number of vertices in $G$, 
then $E_{\pot}(G) = \emptyset$.
\end{lem}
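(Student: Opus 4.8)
The plan is to derive this as an almost immediate consequence of Lemma~\ref{lem:child_big_tree} together with the definition of $E_{\pot}(G)$. First I would recall that, by the definition of a potential edge, every pair $\{x,y\} \in E_{\pot}(G)$ consists of a vertex $x$ and a descendant $y \in V(G\langle x\rangle)$ with $xy \notin E(G)$; in particular $x$ lies in some pendent tree $G\langle v\rangle$ rooted at a cycle vertex $v \in V(C)$, and $y$ lies in $V(G\langle x\rangle) \subseteq V(G\langle v\rangle)$. So it suffices to show that if no pendent tree of $G$ has at least $|V(G)|/2$ vertices, then no such pair can be a potential edge.

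Next I would fix an arbitrary candidate pair $\{x,y\}$ of the above form and let $v \in V(C)$ be the cycle vertex with $x \in V(G\langle v\rangle)$. By the hypothesis, $|V(G\langle v\rangle)| < |V(G)|/2$. This is exactly the situation of Lemma~\ref{lem:child_big_tree}, applied with this $v$, $x$, $y$, and $H = G+xy$ with $C'$ the cycle of $H$ through $xy$: that lemma gives $|V(Q(C',H))| < |V(Q(C,H))|$. The first coordinate of the pair compared in the definition of a potential edge is $|V(H)| - |V(Q(\cdot,H))|$, so $|V(H)| - |V(Q(C,H))| < |V(H)| - |V(Q(C',H))|$, which means
\[
(|V(H)| - |V(Q(C,H))|,\, |C|) \prec (|V(H)| - |V(Q(C',H))|,\, |C'|),
\]
and therefore the condition $(|V(H)| - |V(Q(C',H))|,\, |C'|) \preceq (|V(H)| - |V(Q(C,H))|,\, |C|)$ fails. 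Hence $\{x,y\} \notin E_{\pot}(G)$.

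Since $\{x,y\}$ was an arbitrary pair of the only form that could possibly belong to $E_{\pot}(G)$, we conclude $E_{\pot}(G) = \emptyset$, as claimed. There is essentially no obstacle here; the only point requiring a word of care is verifying that the definition of $E_{\pot}(G)$ already restricts attention to pairs where $x$ is an ancestor of $y$ (so that the ``first coordinates'' of the two codes are indeed $|V(H)| - |V(Q(C',H))|$ and $|V(H)| - |V(Q(C,H))|$ respectively), and that the hypothesis of Lemma~\ref{lem:child_big_tree}, namely $|V(G\langle v\rangle)| < |V(G)|/2$ for the relevant cycle vertex $v$, is supplied directly by the assumption that $G$ has no pendent tree with at least half the vertices of $G$.
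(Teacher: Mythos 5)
Your proof is correct and follows essentially the same route as the paper: both reduce the claim to Lemma~\ref{lem:child_big_tree} applied to an arbitrary ancestor--descendant pair inside a pendent tree of fewer than $|V(G)|/2$ vertices. If anything, your version is slightly more careful, since it ties the inequality $|V(Q(C',H))| < |V(Q(C,H))|$ directly to the lexicographic condition defining $E_{\pot}(G)$ rather than to the child condition of Lemma~\ref{lem:child_condition}.
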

\begin{proof}
Let $n$ denote the number of vertices in $G$.
For a pendent tree $T$ with less than $n/2$ vertices, and a vertex pair in $V(T)$, 
from Lemma~\ref{lem:child_big_tree}, such a vertex pair 
contradicts Condition~(ii) of Lemma~\ref{lem:child_condition}.
Hence, we conclude that if $G$ has no pendent tree with 
at least half the number of vertices of $G$, then $E_{\pot}(G) = \emptyset$ holds.
\end{proof}

Given a monocyclic chemical graph $G$, 
for a vertex $x \in V(G)$ and a vertex $x' \in V(G\langle x \rangle)$, 
since it always holds that 
$|V(G\langle x \rangle)| \geq |V(G\langle x' \rangle)|$, 
we have the following lemma.

\begin{lem}
\label{lem:skipping_x}
Let $G$ be a monocyclic chemical graph, 
and let $C$ denote the cycle in~$G$.
For a vertex $v \in V(C)$, a child $v'$ of $v$,
and a vertex $x \in V(G\langle v' \rangle)$, 
if $|V(G\langle x \rangle)| < |V(G)| - |V(G\langle v' \rangle)|$ holds, 
then there are no vertices $x', y' \in V(G\langle x \rangle)$ such that $\{x', y'\} \in E_{\pot}(G)$.
\end{lem}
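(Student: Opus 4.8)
The plan is to show that any candidate potential edge $\{x',y'\}$ with $x',y' \in V(G\langle x\rangle)$ must violate the size condition that defines $E_{\pot}(G)$, by tracking the two cycle-sizes in the resulting bi-block 2-augmented tree. First I would set $H := G + x'y'$ and let $C'$ be the cycle of $H$ containing the edge $x'y'$. Since $x',y' \in V(G\langle v'\rangle) \subseteq V(G\langle v\rangle)$ with $v \in V(C)$, the vertex $x'$ is a (non-cycle) vertex of $G$ and $\rootG{x'} = v$; if $x'$ is not an ancestor of $y'$ in $G\langle v'\rangle$ then $\{x',y'\}\notin E_{\pot}(G)$ already by definition, so I may assume $x'$ is an ancestor of $y'$. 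Then by Lemma~\ref{lem:anchor_vertices} the anchor vertices of $H$ are $x'$ and $v$, and Lemma~\ref{lem:vertices_Q} applies.

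The key computation: by Lemma~\ref{lem:vertices_Q}, writing $w$ for the child of $v$ with $y' \in V(G\langle w\rangle)$ (note $w$ lies on the $v,x'$-path, so $G\langle w\rangle \subseteq G\langle v'\rangle$ when $w = v'$, and in general $|V(G\langle w\rangle)| \le |V(G\langle v'\rangle)|$), we have
\begin{align*}
|V(H)| - |V(Q(C',H))| &= |V(G)| - |V(G\langle x'\rangle)|, \\
|V(H)| - |V(Q(C,H))| &= |V(G\langle w\rangle)| \le |V(G\langle v'\rangle)|.
\end{align*}
Wait --- I should double check the direction: $Q(C',H)$ is rooted at the anchor $x'$, so $|V(Q(C',H))| = |V(G\langle x'\rangle)|$, giving $|V(H)|-|V(Q(C',H))| = |V(G)|-|V(G\langle x'\rangle)|$; and $Q(C,H)$ is rooted at $v$, with $|V(Q(C,H))| = |V(G)| - |V(G\langle w\rangle)|$, giving $|V(H)|-|V(Q(C,H))| = |V(G\langle w\rangle)|$. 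Now the hypothesis $|V(G\langle x\rangle)| < |V(G)| - |V(G\langle v'\rangle)|$, combined with $|V(G\langle x'\rangle)| \le |V(G\langle x\rangle)|$ (as $x' \in V(G\langle x\rangle)$) and $|V(G\langle w\rangle)| \le |V(G\langle v'\rangle)|$, yields
\[
|V(H)| - |V(Q(C,H))| = |V(G\langle w\rangle)| \le |V(G\langle v'\rangle)| < |V(G)| - |V(G\langle x\rangle)| \le |V(G)| - |V(G\langle x'\rangle)| = |V(H)| - |V(Q(C',H))|.
\]
Hence $|V(H)| - |V(Q(C,H))| < |V(H)| - |V(Q(C',H))|$, which makes the first coordinate of the pair for $C$ strictly smaller than that for $C'$, so $(|V(H)|-|V(Q(C',H))|,|C'|) \not\preceq (|V(H)|-|V(Q(C,H))|,|C|)$. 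By the definition of a potential edge, $\{x',y'\} \notin E_{\pot}(G)$, as claimed.

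The main obstacle I anticipate is getting the bookkeeping of which subtree sizes dominate which exactly right --- in particular being careful that $w$ (the child of $v$ toward $y'$) need not equal $v'$, and that the chain of inequalities $|V(G\langle w\rangle)| \le |V(G\langle v'\rangle)| < |V(G)| - |V(G\langle x\rangle)| \le |V(G)| - |V(G\langle x'\rangle)|$ uses only the monotonicity of pendent-tree sizes under the descendant relation together with the stated hypothesis. Once that chain is in place the conclusion is immediate from the definitions of $Q(\cdot,\cdot)$ and of $E_{\pot}(G)$; no appeal to the code-comparison Conditions (iii)/(iv) of Lemma~\ref{lem:child_condition} is needed, since failure already occurs at the leading size-coordinate.
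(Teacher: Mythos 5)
Your proof is correct and follows essentially the same route as the paper's: apply Lemma~\ref{lem:vertices_Q} with anchor $x'$, then use monotonicity of pendent-tree sizes and the hypothesis to show the leading size-coordinate comparison in the definition of a potential edge already fails. One small simplification: since $y' \in V(G\langle x\rangle) \subseteq V(G\langle v'\rangle)$, the child $w$ of $v$ toward $y'$ is necessarily $v'$ itself, so your hedged inequality $|V(G\langle w\rangle)| \leq |V(G\langle v'\rangle)|$ holds with equality and needs no separate justification (indeed, for a genuinely different child it would not follow, but that case cannot occur here).
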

\begin{proof}
Suppose that there exists a vertex pair $\{x', y'\} \subseteq V(G\langle x \rangle)$ in $E_{\pot}(G)$.
From the definition of $E_{\pot}(G)$, $y'$ is a descendant of $x$.
Let $H = G + x'y'$, and let $C'$ be the cycle in $H$ containing the edge $x'y'$.
Since it always holds that $|V(G\langle x' \rangle)| \leq |V(G\langle x \rangle)|$, 
we have $|V(Q(C', H))|  = |V(G\langle x' \rangle)| < |V(G)| - |V(G\langle v' \rangle)| = |V(Q(C, H))|$.
This contradicts Condition~(iii) of Lemma~\ref{lem:epot_child}.
Hence, such a vertex pair $\{x', y'\} \not\in E_{\pot}(G)$.
\end{proof}

For a monocyclic chemical graph $G$, a vertex $x$ in $G$, 
and a descendant $y$ of $x$, the length of the cycle in $G + xy$
containing the edge $xy$ is $\depth(y) - \depth(x) + 1$.
From this observation, we have the following lemma.

\begin{lem}~\label{lem:cycle_bound}
Let $G$ be a monocyclic chemical graph, let $x$ be a vertex in $V(G)$, 
and let $y$ be a vertex in $V(G\langle x \rangle)$ such that $xy \not\in E(G)$.
Let $H$ denote the graph $G + xy$, let $C'$ denote the cycle of $H$ 
containing the edge $xy$, and let $C$ denote the other cycle of $H$.
If both (a) and (b) below hold true,  then $\{x, y\} \not\in E_{\pot}(G)$.\\
~~ (a)~$|V(Q(C, H))| = |V(Q(C', H))|$ and \\
~~ (b)~$\depth(y) > |C|  + \depth(x) - 1$.
\end{lem}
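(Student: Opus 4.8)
The plan is to derive a contradiction from the assumption that $\{x,y\}\in E_{\pot}(G)$ together with conditions (a) and (b). First I would recall the relevant definitions: by Lemma~\ref{lem:cycle_bound}'s preamble, the cycle $C'$ in $H=G+xy$ containing $xy$ has length $|C'| = \depth(y)-\depth(x)+1$, so hypothesis~(b) rewrites as $|C'| > |C|$. Since $\{x,y\}\in E_{\pot}(G)$, by definition we have $(|V(H)|-|V(Q(C',H))|,\,|C'|) \preceq (|V(H)|-|V(Q(C,H))|,\,|C|)$ in the lexicographical order on pairs.

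Next I would split on the first coordinate. Hypothesis~(a) states $|V(Q(C,H))| = |V(Q(C',H))|$, hence $|V(H)|-|V(Q(C',H))| = |V(H)|-|V(Q(C,H))|$, so the two pairs agree in their first entry. For the lexicographical inequality $(a_1,|C'|)\preceq(a_1,|C|)$ to hold when the first entries coincide, we must have $|C'| \le |C|$. But from hypothesis~(b) we derived $|C'| > |C|$, a contradiction. Therefore the assumption $\{x,y\}\in E_{\pot}(G)$ is untenable, and we conclude $\{x,y\}\notin E_{\pot}(G)$, as claimed.

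I expect this proof to be essentially a short unwinding of the definition of a potential edge, so there is no serious obstacle; the only point requiring a little care is confirming the formula $|C'| = \depth(y)-\depth(x)+1$ for the length of the newly created cycle, which is exactly the observation stated just before the lemma (the $x,y$-path in $G$ has length $\depth(y)-\depth(x)$ since $y$ is a descendant of $x$, and adding the edge $xy$ closes it into a cycle of one more edge). Given that, the rest is a direct comparison in the lexicographical order, using hypothesis~(a) to collapse the first coordinate and hypothesis~(b) to contradict the required inequality on the second coordinate.
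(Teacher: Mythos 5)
Your proof is correct and follows essentially the same route as the paper: the observation $|C'|=\depth(y)-\depth(x)+1$ turns hypothesis~(b) into $|C'|>|C|$, and together with hypothesis~(a) this violates the lexicographic inequality $(|V(H)|-|V(Q(C',H))|,|C'|)\preceq(|V(H)|-|V(Q(C,H))|,|C|)$ defining a potential edge. The paper phrases this as a contradiction with Condition~(iii) in the proof of Lemma~\ref{lem:epot_child}, which is exactly the same comparison you unwind directly from the definition.
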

\begin{proof}
Assume that both $|V(Q(C, H))| = |V(Q(C', H))|$ and $\depth(y) > |C|  + \depth(x) - 1$ hold.
This contradicts Condition~(iii) of Lemma~\ref{lem:epot_child}.
Hence, the vertex pair $\{x, y\}$ is not in $E_{\pot}(G)$.
\end{proof}

Now, we propose an efficient algorithm to construct the set of potential edges by 
Lemmas~\ref{lem:epot_big_tree}, \ref{lem:skipping_x}, and~\ref{lem:cycle_bound}.
Let $G$ be a given monocyclic chemical graph, and let $C$ denote the cycle in $G$.
By Lemma~\ref{lem:epot_big_tree}, first, we look for the root $v$ of a 
pendent tree that contains at least half of the vertices in $G$ - by Theorem~\ref{thm:unique_subtree}, 
if such a pendent tree of $G$ exists, then it is unique. 
If there exists no such pendent tree, then $E_{\pot}(G) = \emptyset$.
On the other hand, if such a pendent tree exists, then, for each child $v'$ of $v$, 
from Lemma~\ref{lem:skipping_x}, we traverse the subtree induced by $\{v\} \cup V(G\langle v \rangle)$ 
in depth-first order by skipping vertices $x$ such that 
$|V(G\langle x \rangle)| < |V(G)| - V(G\langle v' \rangle)$.
Let $x$ denote the vertex being currently visited in the depth-first traversal.
If $|V(G\langle x \rangle)| < |V(G)| - V(G\langle v' \rangle)$ holds, 
then, for any descendant $y \in V(T^*)$ of $x$ such that $xy \in E(G)$, 
the vertex pair $\{x, y\}$ belongs to $E_{\pot}(G)$.
On the other hand, if $|V(G\langle x \rangle)| = |V(G)| - V(G\langle v' \rangle)$ holds, 
then, from Lemma~\ref{lem:cycle_bound}, we choose $y$ satisfying $\depth(y) > |C|  + \depth(x) - 1$.
We show an efficient algorithm to generate the set of potential edges of a given monocyclic chemical graph $G$
as Procedure~{\procPotEdges} \textsc{GeneratePotentialEdges}.

\bigskip 

{\bf Procedure~{\procPotEdges}}
\textsc{GeneratePotentialEdges}
\begin{algorithmic}[1]
\Require A monocyclic chemical graph $G$ with a cycle~$C$.
\Ensure The set $E_{\pot}(G)$ of potential edges in~$G$.
\State{$S := \emptyset$};
\If{there exists a vertex $v \in V(C)$ such that $|V(G\langle v \rangle)| \geq |V(G)|/2$}
	\For{{\bf each} child $v'$ of $v$}
		\For{{\bf each} $x \in V(G\langle v' \rangle)\cup\{v\}$ such that $|V(G\langle x \rangle)| \geq |V(G)| - |V(G\langle v' \rangle)|$}
			\If{$|V(G\langle x \rangle)| > |V(G)| - |V(G\langle v' \rangle)|$}
				\For{{\bf each} descendant $y$ of $x$}
					\State{$S := S \cup \{x, y\}$}
				\EndFor
			\Else
				\For{{\bf each} descendant $y$ of $x$ such that $\depth(y) > |C| + \depth(x) - 1$}
					\State{$S := S \cup \{x, y\}$}
				\EndFor
			\EndIf
		\EndFor
	\EndFor
\EndIf;
\State{{\bf output} $S$ as $E_{\pot}(G)$}.
\end{algorithmic}

\begin{lem}
\label{lem:proper_set}
Given a monocyclic graph $G$ on $n$ vertices
 with a unique pendent tree $T$
  such that $|V(T)| > n/2$,
 the set $E_{\pot}(G) \cap A(T)$ of pairs of non-adjacent vertices
 is proper.
\end{lem}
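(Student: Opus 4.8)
The plan is to check the three defining properties of a proper set for $F:=E_{\pot}(G)\cap A(T)$, where $T=G\langle v\rangle$ is the unique pendent tree with $|V(T)|>n/2$ (as usual we take $T$ rooted at $v$ and in left-heavy form, so that $A(T)$ is defined). The first property is immediate: if $\{x,y\}\in F\subseteq E_{\pot}(G)$, then by the definition of a potential edge $xy\notin E(G)$ and $y$ is a descendant of $x$ in $G$, so $\rootG{x}=\rootG{y}$, and Lemma~\ref{lem:anchor_vertices} shows $G+xy$ is a bi-block $2$-augmented tree with anchor vertices $x$ and $\rootG{x}$; adding further parallel copies preserves this.

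For the completeness property, let $H$ be any child of $G$, so $H=G+q\cdot st$ with $\{s,t\}$ the anchor pair deleted by $\pi$, and $\{s,t\}$ satisfies Conditions~(i)--(iii) of Lemma~\ref{lem:child_condition}. Condition~(i) together with the comparison of the two leading coordinates of Condition~(ii) is exactly the requirement for $\{s,t\}\in E_{\pot}(G)$, since those coordinates ($|V(H)|-|V(Q(\cdot,H))|$ and the cycle lengths) are unchanged if $q$ is varied; hence $\{s,t\}\in E_{\pot}(G)$, and, as in the proof of Lemma~\ref{lem:skipping_x}, the potential-edge inequality forces $\rootG{s}=v$, so $s,t\in V(T)$. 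Now apply Lemma~\ref{lem:ad-any-root} to $T$ rooted at $v$: there is a unique admissible pair $(x,y)\in A(T)$ together with a rooted isomorphism $\psi\colon T+st\to T+xy$ fixing $v$ and sending the edge $st$ to $xy$. Extending $\psi$ by the identity on $V(G)\setminus V(T)$ gives an isomorphism $G+q\cdot st\to G+q\cdot xy$; its restriction to $V(G)$ is an automorphism of $G$ carrying $\{s,t\}$ to $\{x,y\}$. Since automorphisms of $G$ preserve the cycle, the pendent trees and the ancestor relation, $\{x,y\}$ again satisfies the potential-edge inequality; therefore $\{x,y\}\in E_{\pot}(G)\cap A(T)=F$ and $H\approx G+q\cdot xy$, as required.

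For the no-inter-pair-duplication property, suppose $\{x,y\},\{x',y'\}\in F$ are distinct and $\phi\colon G+q\cdot xy\to G+q'\cdot x'y'$ is an isomorphism; I aim for a contradiction. Both graphs have exactly two cycles, the cycle $C$ of $G$ and the newly created one ($C'$, resp.\ $C''$); by the potential-edge inequality together with $\rootG{x}=\rootG{x'}=v$ (which again follows from the inequality and uniqueness of $T$), the new cycle lies inside $T$, while $C$ meets $T$ only at $v$. As $\phi$ maps cycles to cycles, either $\phi(C)=C$ or $\phi$ interchanges the two cycles. In the first case $\phi$ maps $(G+q\cdot xy)-E(C)$ onto $(G+q'\cdot x'y')-E(C)$, hence the unique cyclic component $T+q\cdot xy$ (attached to $C$ at $v$) onto the unique cyclic component $T+q'\cdot x'y'$; in particular $\phi(v)=v$, giving a rooted isomorphism $T+q\cdot xy\riso T+q'\cdot x'y'$. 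Stripping parallel edges down to simplicity yields $q=q'$ and $T+xy\riso T+x'y'$, so the uniqueness clause of Lemma~\ref{lem:ad-any-root} (Lemma~\ref{lem:ad}(ii)) forces $(x,y)=(x',y')$, contradicting distinctness.

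The remaining case, in which $\phi$ interchanges the two cycles, is the one I expect to be the main obstacle. The plan there is: first, since $Q(\cdot)$ is a topological construction, Lemma~\ref{lem:vertices_Q} yields $|V(G\langle x\rangle)|=|V(G\langle x'\rangle)|=n-|V(G\langle v'\rangle)|$, where $v'$ and $v''$ are the children of $v$ containing $x$ and $x'$; as $G\langle v'\rangle$ and $G\langle v''\rangle$ are disjoint subtrees of $G\langle v\rangle$ whose sizes would otherwise sum to at least $n$, this forces $v'=v''$, and, since a cycle interchange forces both new cycles to be simple, $q=q'=1$. Then $\phi$ restricts to a chain of rooted isomorphisms $G\langle x\rangle+xy\riso G[V(G)\setminus V(G\langle v'\rangle)]\riso G\langle x'\rangle+x'y'$ (rooted at $x$, at $v$, and at $x'$ respectively, with $\phi(x)=v$ and $\phi(v)=x'$). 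Finally one passes to the subtrees $T_x=G\langle x\rangle$ and $T_{x'}=G\langle x'\rangle$ of $T$: admissibility localizes to these, because $\mathrm{copy}_T$ is determined locally and Condition~(a-1) becomes vacuous at the subtree root, so $(x,y)\in A(T_x)$ and $(x',y')\in A(T_{x'})$; combining this with the chain above and the uniqueness in Lemma~\ref{lem:ad-any-root} applied inside $T$ — exploiting that an admissible pair forbids $\mathrm{copy}_T=1$ along its ancestor path, so a rooted isomorphism of $T_x$ with $T_{x'}$ cannot move $x$ to a genuinely different position — forces $(x,y)=(x',y')$, a contradiction. Everything except this last step is routine bookkeeping with the anchor/$Q$ structure of Sections~\ref{sec:signature}--\ref{sec:potential_edges}; the delicate point is precisely showing that a cycle-interchanging isomorphism cannot coexist with both vertex pairs being admissible.
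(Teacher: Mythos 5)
Your treatment of the first two defining properties of a proper set is sound and follows the same skeleton as the paper's (much terser) argument: membership in $E_{\pot}(G)$ forces $y$ to be a descendant of $x$ with $\rootG{x}=v$, so Lemma~\ref{lem:anchor_vertices} gives the bi-block structure, and for completeness you combine the fact that $E_{\pot}(G)$ contains every child-producing pair with the existence half of Lemma~\ref{lem:ad-any-root}, extending the rooted isomorphism of $T+st$ and $T+xy$ by the identity outside $T$. One small point there: Lemma~\ref{lem:ad-any-root} only asserts that $T+st$ and $T+xy$ are isomorphic; your added claim that the isomorphism ``sends the edge $st$ to $xy$,'' which you need in order to transport the multiplicity $q>1$, is not part of the cited statement and would need a word of justification (the paper glosses over the same issue).

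The genuine gap is in the third property, and you have flagged it yourself. When two distinct pairs $\{x,y\},\{x',y'\}\in E_{\pot}(G)\cap A(T)$ admit an isomorphism $\phi\colon G+q\cdot xy\to G+q'\cdot x'y'$, your reduction to the uniqueness statement of Lemma~\ref{lem:ad}(ii) works only when $\phi$ carries the original cycle $C$ to itself, since only then does $\phi$ restrict to a rooted isomorphism $T+xy\riso T+x'y'$. The case in which $\phi$ interchanges $C$ with the newly created cycle is left as a plan whose decisive step (``a cycle-interchanging isomorphism cannot coexist with both vertex pairs being admissible'') is asserted rather than proved; the counting you do carry out ($v'=v''$, equal $Q$-sizes, $q=q'=1$) narrows the configuration but does not exclude it. As written, the proof is therefore incomplete at exactly that point. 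For what it is worth, the paper's own proof does not treat this case either: it passes directly from uniqueness of the admissible pair up to isomorphism of $T+st$ (Lemma~\ref{lem:ad-any-root}) to uniqueness up to isomorphism of $G+st$, silently assuming that any isomorphism between the augmented graphs respects the original cycle. So you have isolated a real subtlety that the paper glosses over, but to stand as a complete proof your cycle-interchange case must be carried through, e.g.\ by showing that a cycle-swapping isomorphism induces a cycle-preserving one, or by a direct structural comparison of $Q(C,H)$ and $Q(C',H)$.
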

\begin{proof}
By Theorem~\ref{thm:unique_subtree}, unless the monocyclic graph
$G$ has a pendent tree with at least $n/2$ vertices, then
it has no children bi-block 2-augmented trees, and such a pendent tree
is unique.
By the definition of the set $A(T)$ and Lemma~\ref{lem:ad-any-root},
for each pair $\{x, y\} \subseteq V(T)$ of non-adjacent
vertices there exists exactly one vertex pair $\{s, t\} \in A(T)$
such that $G + xy \approx G + st$.
On the other hand, the set $E_{\pot}(G)$ contains all vertex pairs $\{x, y\}$
such that $G + xy$ is a child of $G$, from which the claim follows.
\end{proof}

\section{Restricted Monocyclic Chemical Graphs}
\label{sec:generating_restricted_1-augs}

Based on Theorem~\ref{thm:unique_subtree}, 
only monocyclic chemical graphs that have a pendent tree with at least half the number
of vertices in the graph have children bi-block 2-augmented trees.
We examine how to limit an existing process for generating monocyclic chemical graphs (see~\cite{Suzuki14})
to such trees.
Let $T$ be a rooted multi-tree with $n$ vertices.
For two non-adjacent vertices $x, y \in V(T)$, 
the graph $G = T + xy$ is a monocyclic chemical graph whose cycle comprises
the $x, y$-path in $T$ and the edge $xy$.
We have the following lemma for the number of vertices in the pendent trees of $G$.

\begin{lem}\label{lem:restricted:1aug}
Let $T$ be a rooted multi-tree with $n$ vertices. For non-adjacent vertices $x$ and $y$ in $T$, 
let $P_{xy}$ denote the rooted subtree of $T$ induced by the $x, y$-path in 
$T$ rooted at the lowest common ancestor of $x$ and $y$, 
and let $G$ denote the monocyclic chemical graph $T + xy$.
Then, for any vertex $v \in V(P_{xy})$ the following claims hold true. \\
{\rm (i)} If $v$ is a leaf in $P_{xy}$, then $|V(G\langle v \rangle)| = |V(T_{v})|$; \\
{\rm (ii)} Any $v \in P_{xy} \setminus \{x, y, \lca(x, y)\}$ has a 
    unique child $c \in P_{xy}$, and it holds that 
    $|V(G\langle v \rangle)| = |V(T_{v})| -  |V(T(c))|$; and \\
{\rm (iii)} For the root $r = \lca(x, y)$ of $P_{xy}$
	and the set $\ch(r)$  of children of $r$ in $P_{xy}$, 
	it holds that $|V(G \langle r \rangle)| = n - \sum_{c \in \ch(r)} |V(T(c))|$. 
\end{lem}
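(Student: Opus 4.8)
\medskip
\noindent The plan is to reduce everything to a statement about connected components of $T$ after deleting the edges of $P_{xy}$. Since $xy \notin E(T)$, the unique cycle $C$ of $G = T + xy$ satisfies $V(C) = V(P_{xy})$ and $E(C) = E(P_{xy}) \cup \{xy\}$, so that $G - E(C) = T - E(P_{xy})$. By the definition of a pendent tree, for each vertex $v \in V(C) = V(P_{xy})$ the tree $G\langle v \rangle$ is precisely the connected component of $T - E(P_{xy})$ containing $v$. Hence it suffices to identify this component for every vertex on the path, and I would do so through the three-way case analysis of the lemma. The one structural observation I would isolate and reuse is that, because $P_{xy}$ is a simple path whose highest vertex is $r = \lca(x,y)$, the component of $v$ in $T - E(P_{xy})$ is obtained from the rooted subtree $T_v$ by deleting only the path edges incident to $v$: every other edge of $P_{xy}$ lies strictly inside a subtree that has already been detached from $v$'s component, so it causes no further splitting.

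First I would handle case (i): a leaf $v$ of $P_{xy}$ is an endpoint of the path different from $r$, hence $v \in \{x, y\}$ and $v$ lies strictly below $r$, so the only edge of $P_{xy}$ incident to $v$ is $v\,\parent(v)$; since $v\,\parent(v)$ is also the only edge of $T$ joining $T_v$ to the rest of $T$, deleting it makes $T_v$ a full component and $|V(G\langle v \rangle)| = |V(T_v)|$. For case (ii), a vertex $v \in V(P_{xy}) \setminus \{x, y, r\}$ is an interior vertex of the path; its path-neighbour towards $r$ is $\parent(v)$ and its path-neighbour away from $r$ is therefore a child of $v$ in $T$, which is the unique child $c$ of $v$ in $P_{xy}$. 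Deleting the two path edges $v\,\parent(v)$ and $vc$ detaches $T(c)$ and everything above $v$; the remaining edges of $P_{xy}$ all lie inside $T(c)$, so $G\langle v \rangle = T_v - V(T(c))$ and $|V(G\langle v \rangle)| = |V(T_v)| - |V(T(c))|$. For case (iii), the edges of $P_{xy}$ incident to $r$ are exactly the edges $rc$ with $c \in \ch(r)$, whereas the edge joining $r$ to its parent in $T$, if any, is not on the path; deleting the edges $rc$ detaches the subtrees $T(c)$, and every other edge of $P_{xy}$ lies inside one of them, so $G\langle r \rangle = T - \bigcup_{c \in \ch(r)} V(T(c))$ and $|V(G\langle r \rangle)| = n - \sum_{c \in \ch(r)} |V(T(c))|$.

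I do not expect a genuine obstacle here: the argument is essentially bookkeeping on a tree. The two points that need a little care are (a) the identification of $G\langle v \rangle$ with the component of $T - E(P_{xy})$ containing $v$, which follows immediately from the definition of a pendent tree together with $G - E(C) = T - E(P_{xy})$; and (b) checking, in each case, that the path edges not incident to $v$ already sit inside detached subtrees — this is exactly where the simple-path structure of $P_{xy}$ and the fact that $r$ is its top vertex are used. I would also record that the degenerate situation $x = \lca(x,y)$ (so $x = r$) is consistent with the statement: then $x$ is covered by (iii), only $y$ is a leaf of $P_{xy}$, and the three cases still partition $V(P_{xy}) = V(C)$.
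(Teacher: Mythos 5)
Your proposal is correct and follows essentially the same route as the paper: both reduce the claim to identifying $G\langle v\rangle$ with the set of vertices reachable from $v$ in $T - E(P_{xy})$ (using $E(C) = E(P_{xy}) \cup \{xy\}$) and then run the same three-way case analysis on the position of $v$ in the path. The only nitpick is your phrase in case (ii) that ``the remaining edges of $P_{xy}$ all lie inside $T(c)$'' — some lie above $v$ or on the other branch from $r$ — but since those are already outside $v$'s component the argument is unaffected.
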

\begin{proof}
The cycle $C$ in $G$ comprises the $x, y$-path in $T$ and the edge $xy$.
For a vertex $v$ in $C$, the number of vertices in the pendent tree 
rooted at $v$ is the number of vertices reachable from $v$ in $T - E(P_{xy})$.\\
(i) Assume that $v$ is a leaf of $P_{xy}$.
All vertices in $V(T_{v})$ are reachable from $v$ since $E(T_{v}) \cap E(P_{xy}) = \emptyset$.
However, any other vertex in $T$ is not reachable from $v$ in $G - E(C)$ 
since $E(P_{xy}) \subseteq E(C)$ and for the parent $\parent(v)$ of $v$ it holds that  $\parent(v) \in V(P_{xy})$.
Hence we have $|V(G\langle v \rangle)| = |V(T_{v})|$.\\
(ii) 
Since $P_{xy}$ is induced by the $x, y$-path in $T$ and $v$ is not the root of $P_{xy}$, 
$v$ must have a unique child $c$ in $P_{xy}$, 
and it holds that $\{vc, v\parent(v)\} \subseteq E(P_{xy})$.
Hence, the pendent tree $G\langle v \rangle$ consists of all 
descendants of $v$ not in $V(T(c))$, and we have $|V(G\langle v \rangle)| = |V(T_{v})| -  |V(T(c))|$.\\
(iii) 
In this case, the number of children of $r$ is 1 or 2. 
For a child $c$ of $r$, 
none of the vertices in $V(T(c))$ are reachable from $v$ in $T - E(P_{xy})$ since $vc \in E(P_{xy})$. 
However, any other vertices are reachable.
Hence, we have $|V(G\langle r \rangle)| = n - \sum_{c \in \ch(r)} |V(T(c))|$.
\end{proof}

Based on Lemma~\ref{lem:restricted:1aug}, 
a rooted multi-tree $T$ on $n$ vertices
and two non-adjacent vertices $x, y \subseteq V(T)$, 
we devise an algorithm to determine whether the 
monocyclic chemical graph $T + xy$ has a pendent tree with at least $n/2$ vertices or not.
Assume that, for each vertex $v$ in $T$, 
the number of vertices in the subtree $T_{v}$ is already calculated.
Based on Lemma~\ref{lem:restricted:1aug}, we calculate the number 
of vertices of each pendent tree in $T + xy$.
A description of the algorithm is shown 
as Procedure~{\procRestrictedRoot}.

\bigskip

{\bf Procedure~{\procRestrictedRoot}}
\textsc{GenerateRestrictedRootedMonocyclicGraphs}
\begin{algorithmic}[1]
\Require A rooted multi-tree $T$ on $n$ vertices, 
	    and two non-adjacent vertices $\{x, y\} \subseteq T$.
\Ensure If the monocyclic chemical graph $T + xy$ has a pendent tree 
	with at least $n/2$ vertices, then a message ``Yes, '' otherwise, a message ``No.''
\State{$\textrm{Answer} :=$ ``No''};
\State{$v := \lca(x, y)$};
\State{$m := |V(T)|$};
\If{$x \neq v$}
	\Statex{/* $x$ is a leaf of $P_{xy}$ */}
	\If{$|V(T(x))| \geq n$}
		\State{$\textrm{Answer} :=$ ``Yes''}
	\EndIf;
	\State{$c := x; w := \parent(x)$};
	\While{$w \neq v$}
		\If{$|V(T(w))| - |V(T(c))| \geq n$}
			\State{$\textrm{Answer} :=$ ``Yes''}
		\EndIf;
		\State{$c := w; w := \parent(w)$}
	\EndWhile;
	\State{$m := m - |V(T(c))|$}
\EndIf;
\If{$y \neq v$}
	\Statex{/* $y$ is a leaf of $P_{xy}$ */}
	\If{$|V(T(y))| \geq n$}
		\State{$\textrm{Answer} :=$ ``Yes''}
	\EndIf;
	\State{$c := y; w := \parent(y)$};
	\While{$w \neq v$}
		\If{$|V(T(w))| - |V(T(c))| \geq n$}
			\State{$\textrm{Answer} :=$ ``Yes''}
		\EndIf;
		\State{$c := w; w := \parent(w)$}
	\EndWhile;
	\State{$m := m - |V(T(c))|$}
\EndIf;
\LineComment{Here it holds that $m = n - \sum_{c \in \ch(r)} |V(T(c))|$}
\If{$m \geq n/2$}
	\State{$\textrm{Answer} :=$ ``Yes''}
\Else
	\State{$\textrm{Answer} :=$ ``No''}
\EndIf;
\State {\bf output} $\textrm{Answer}$.
\end{algorithmic}

\subsection{The Case of Centroid-Rooted Multi-Trees}
\label{sec:centroid-rooted}

Assume that monocyclic chemical graphs are obtained by adding 
multi-edges to an $n$-vertex multi-tree rooted at its centroid.
In this case, we propose an efficient algorithm to determine
whether a monocyclic chemical graph obtained in this way 
has a pendent tree with at least $n/2$ vertices or not.

Let $T$ be a multi-tree with $n$ vertices
rooted at its centroid,
let $x$ and $y$ be non-adjacent vertices in $T$, 
and let $G = T + xy$.
Let $P_{xy}$ denote the rooted subtree of $T$ induced by the $x, y$-path in $T$
rooted at the lowest common ancestor of $x$ and~$y$.
Suppose that $T$ has a unicentroid, ~$r$.
In this case, for each vertex $v \in V(T) - \{r\}$, we have $|V(T_{v})| \leq \lfloor (n-1)/2 \rfloor$.
Hence, for each vertex $v$ in $P_{xy} \setminus \{\lca(x, y)\}$, 
we have $|V(G\langle v \rangle)| < n/2$.
This implies that it is sufficient to check Condition~(iii) 
of Lemma~\ref{lem:restricted:1aug} to determine whether $G$ 
has a unique pendent tree with at least $n/2$ vertices or not.
Note that, if the root of $P_{xy}$ is not the unicentroid $r$, 
or the root of $P_{xy}$ is $x$ or $y$, then it immediately follows 
that $T + xy$ has a unique pendent tree with at least $n/2$ vertices.

Suppose that $T$ has a bicentroid, ~$e$.
If $e \notin E(P_{xy})$, 
then by Lemma~\ref{lem:restricted:1aug}-(iii), 
for the root $r = \lca(x, y)$ of $P_{xy}$, 
we have $|V(G\langle r \rangle)| = n - \sum_{c \in \ch(r)} |V(T(c))| \geq n - (n/2 - 1) \geq n/2$.
On the other hand, assume that $e \in E(P_{xy})$.
If $x$ (resp., $y$) is an endpoint of the bicentroid, then we have $|V(G\langle x \rangle)| = n/2$ (resp., $|V(G\langle y \rangle)| = n/2$).
Otherwise, for each vertex $v$ in $V(P_{xy})$, it holds that $|V(G\langle x \rangle)| < n/2$.

From the above observations, 
we propose an algorithm to determine whether a monocyclic chemical graph obtained 
by adding an edge to a multi-tree rooted at its centroid has 
a unique pendent tree with at least $n/2$ vertices or not 
as Procedure~{\procRestrictedCentroid}.

\bigskip 

{\bf Procedure~{\procRestrictedCentroid}}
\textsc{GenerateRestrictedCentroid-RootedMonocyclicGraphs}
\begin{algorithmic}[1]
\Require A multi-tree $T$ with $n$ vertices rooted at its centroid, and two non-adjacent vertices $x$ and $y$ in $T$.
\Ensure If the monocyclic chemical graph $T + xy$ has a pendent tree with at least $n/2$ vertices, 
	      then a message ``Yes, '' otherwise, a message ``No.''
\State{$\textrm{Answer} :=$ ``No''};
\State{$v := \lca(x, y)$};
\If{$T$ has a unicentroid}
	\If{$v$ is not the unicentroid}
		\State{$\textrm{Answer} :=$ ``Yes''}
	\ElsIf{$v$ is $x$ or $y$}
		\State{$\textrm{Answer} :=$ ``Yes''}	
	\EndIf;
	\State{$p := $ the child of $v$ that is $x$ or an ancestor of $x$};
	\State{$q := $ the child of $v$ that is $y$ or an ancestor of $y$};
	\If{$n - |V(T_p)| - |V(T_q)| \geq n/2$}
		\State{$\textrm{Answer} :=$ ``Yes''}	
	\EndIf
\Else ~~/* {$T$ has a bicentroid} */
	\State{$e := $ the bicentroid of $T$};
	\State{$P_{xy} := $ the $x, y$-path in $T$};
	\If{$e \notin E(P_{xy})$}
		\State{$\textrm{Answer} :=$ ``Yes''}
	\ElsIf{either $x$ or $y$ is an endpoint of the bicentroid}
		\State{$\textrm{Answer} :=$ ``Yes''}
	\EndIf
\EndIf;
\State{{\bf output} $\textrm{Answer}$}.	
\end{algorithmic}

 \section{Experimental results}
 \label{sec:experiments}
 
 To test the effectiveness of our algorithm for enumerating bi-block 2-augmented trees, 
 we have implemented it 
 and performed computational comparison with MOLGEN~\cite{MOLGEN5}, 
 a generator for chemical graphs.

 In particular, we did experiments for two different types of instances, 
 named EULF-$L$-A and  EULF-$L$-P, by considering
 a set $\pathset$ of colored sequences with length at most a given integer~$N$, 
 given lower and upper bounds, 
 $\mathbf{g}_{a}: \pathset \to \mathbb{Z}_+$ and
 $\mathbf{g}_{b}: \pathset \to \mathbb{Z}_+$, respectively, 
 on the path frequencies of the paths in $\pathset$, and integers $L$ and~$d$.
 For a given set $\pathset$ of colored sequences and a graph $G$, 
 let $\fvset_{\pathset}(G): \pathset \to \mathbb{Z}_+$ 
 denote the number $\freq(t, G)$ of rooted paths $P \subseteq G$
 such that $\gamma(P) = t \in \pathset$.
 Assuming that 
 $\mathbf{g}_{a} \leq \mathbf{g}_b$, 
 and in particular, that
 $\mathbf{g}_{a}[t] = \mathbf{g}_b[t]$ 
is satisfied for each colored sequence $t \in \pathset \cap \Sigma^{0, d}$, 
each of the instance types EULF-$L$-A and  EULF-$L$-P asks 
to enumerate chemical graphs $G$ such that 
$\mathbf{g}_{a} \leq \freq_{\pathset}(G) \leq \mathbf{g}_{b}$, 
and for any $P \subseteq G$ such that $\gamma(P) \notin \pathset$, 
it holds that $|P| > L$ and $|P| \leq L$, for
 instance types EULF-$L$-A and  EULF-$L$-P, respectively.

We have chosen six compounds 
 from the PubChem database 
 which when represented
 as hydrogen-suppressed chemical graphs have
 bi-block 2-augmented tree structure,  and constructed
 feature vectors based on the path frequencies
 of the paths  in the chemical graphs.
 All compounds have~13 non-hydrogen atoms, 
 maximum path length~11, 
 and maximum bond multiplicity $d \in \{2, 3\}$.
 All compounds include the three chemical
 elements {\tt C} (carbon), {\tt O} (oxygen), and {\tt N} (nitrogen).
 The information on the chosen compounds, 
 identified by their Compound ID (CID) number
 in the PubChem database is given in Table~\ref{table:6mols}.
\begin{table}[!hb]
\centering
 \caption{Information on the six compounds chosen from 
 the PubChem database for our experiments}
 \label{table:6mols}
   \begin{tabular}{@{} l l l @{} }
  \toprule
    \parbox{20mm}{Molecular formula }
	    & $d$ & CID \vspace{2mm} \\
   \toprule
   \multirow{2}{20mm}{\tt C$_9$N$_1$O$_3$} 
						    & 2 &  130964701 	 	 \\ 
						    & 3 &  131152558 		 \\  \midrule
  \multirow{2}{20mm} {\tt C$_9$N$_2$O$_2$} 
 						     & 2 &  742733 		 \\
						    & 3 &  23461643 	\\ \midrule
  \multirow{2}{20mm} {\tt C$_9$N$_3$O$_1$} 
  						    & 2 &  10307896 		 \\
						    & 3 &  10307899  		\\  \bottomrule
  \end{tabular}
\end{table}

We construct 
instances of types EULF-$L$-A and EULF-$L$-P 
for different values of parameter $L$ in the following way.
We take a set $\Sigma$ of colors to be $\Sigma = \{ \mathtt{C}, \mathtt{O}, \mathtt{N} \}$, 
such that $\val(\mathtt{C}) = 4$, $\val(\mathtt{O}) = 2$, and $\val(\mathtt{N}) = 3$.
For each hydrogen suppressed chemical graph $G$
that corresponds to a chemical compound in Table~\ref{table:6mols}, 
we take $d \in \{2, 3\}$ to be the maximum bond multiplicity in the chemical graph, 
and for some choice of values for $N \geq 0$ we construct
a set of colored sequences $\pathset \subseteq \Sigma^{\leq N, d}$
that consists of all colored sequences $t$ with length $|t| \leq N$
such that $G$ contains a rooted path $P$ with $\gamma(P) = t$. 
Finally, for an integer $s \in [0, 2]$ we set lower and upper bounds, 
${\bf g}_a$ and ${\bf g}_b$
on feature vectors
as follows: 
for $t \in \pathset$, if $|t| \in \Sigma^{0, d}$ then ${\bf g}_a[t] = {\bf g}_b[t] = \freq(t, G)$, 
otherwise ${\bf g}_{a}[t] = \max \{0, \freq(t, G) - s \}$
and  ${\bf g}_{b}[t] = \freq(t, G) + s$.
The parameter $s$ effectively serves to ``relax''
the path frequency specification.

On the other hand, we used MOLGEN~\cite{MOLGEN5}
without aromaticity detection
by specifying the hydrogen suppressed formula, 
the number of cycles to be two in enumerated structures
- thereby enumerating chemical graphs with 2-augmented tree structures
with a maximum allowed bond multiplicity.
Note that there is no option in MOLGEN to specify
whether the enumerated structures have a bi-block 
structure or not.

 We implemented our algorithm in the 
 C++ programming language, and compiled and executed on
the Linux~14.04.6 operating system by the gcc compiler  version~4.8.4 and optimization level~O3.
All experiments were done on a PC with Intel Xeon CPU E5-1660 v3 
running at 3.00~GHz, with 32~GB memory.

 \subsection{Experimental Results for EULF-$L$-A}
 \label{sec:experiments_EULF1}

To test the behavior of our algorithm for 
instance types EULF-$L$-A, especially the effect
the choice of problem parameters have on the running time
and the number of enumerated chemical graphs, 
we choose values for parameter $N \in [2, 6]$, 
and 
we took values for the parameter $L \in \{2, \lceil {N/2} \rceil, N\}$.
 
The results from our experiments 
for  EULF-$L$-A are summarized in 
Figs.~\ref{fig:result_graphs_1} to~\ref{fig:result_graphs_6}.
We observe that our algorithm has a clear advantage
when we are given a path frequency specification
for instances of type EULF-$L$-A
over using MOLGEN to generate molecules with a specified formula.
 We also observe some trends over the values of 
 the parameters $N$, $L$, and~$s$.
 Namely, the number of generated molecules, as well as the time it takes our algorithm, 
 reduces as the length $N$ of the longest path given in the 
 set of paths, as well as the parameter~$L$ increases, 
 but grows with an increasing value~$s$
 that we choose to relax the path frequency specification.

 \begin{figure}[!ht]
  \begin{minipage}{0.45\textwidth}
   \centering
   \includegraphics[width=1.1\textwidth]{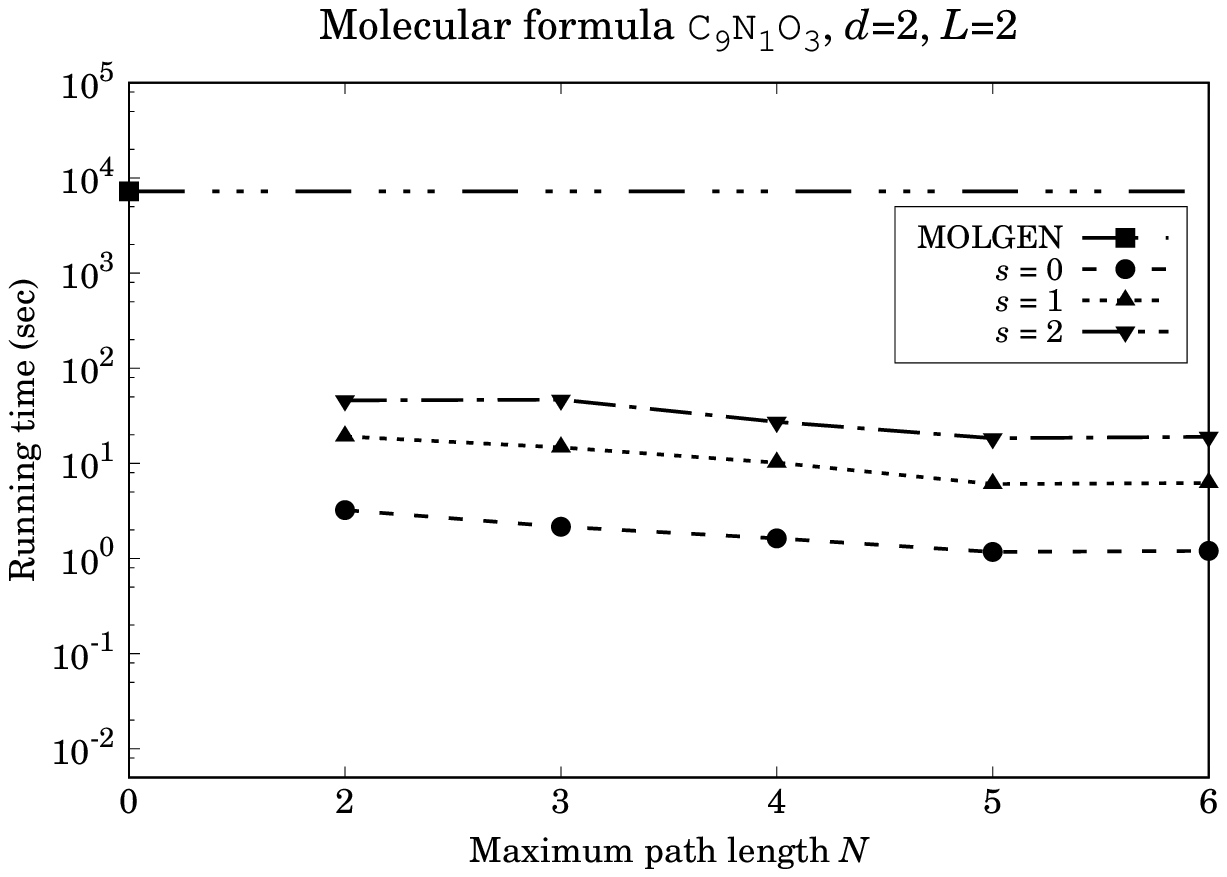}\\
   {\footnotesize (a)}\\
  \end{minipage}
\hfill
  \begin{minipage}{0.45\textwidth}
   \centering
   \includegraphics[width=1.1\textwidth]{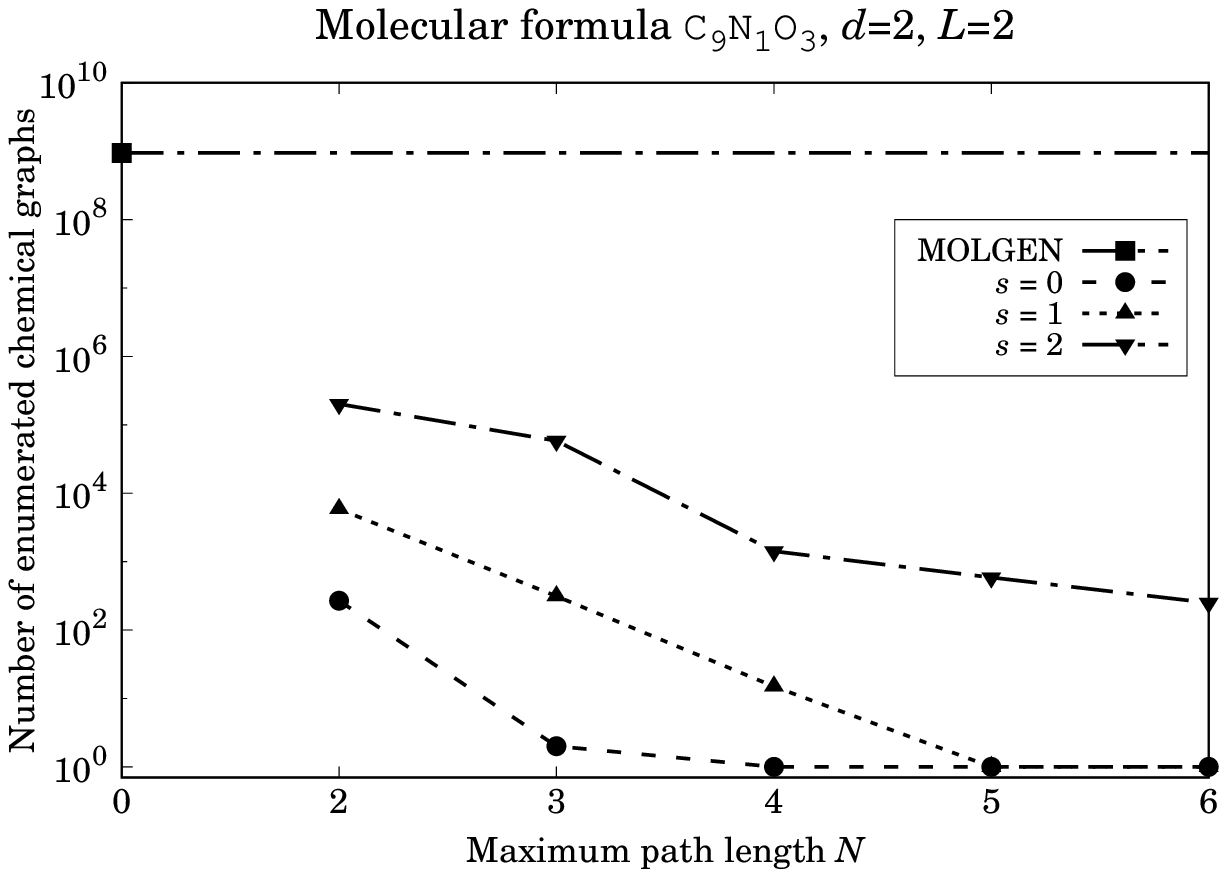}\\
   {\footnotesize (d)}\\
  \end{minipage} 
  \medskip

  \begin{minipage}{0.45\textwidth}
   \centering
      \includegraphics[width=1.1\textwidth]{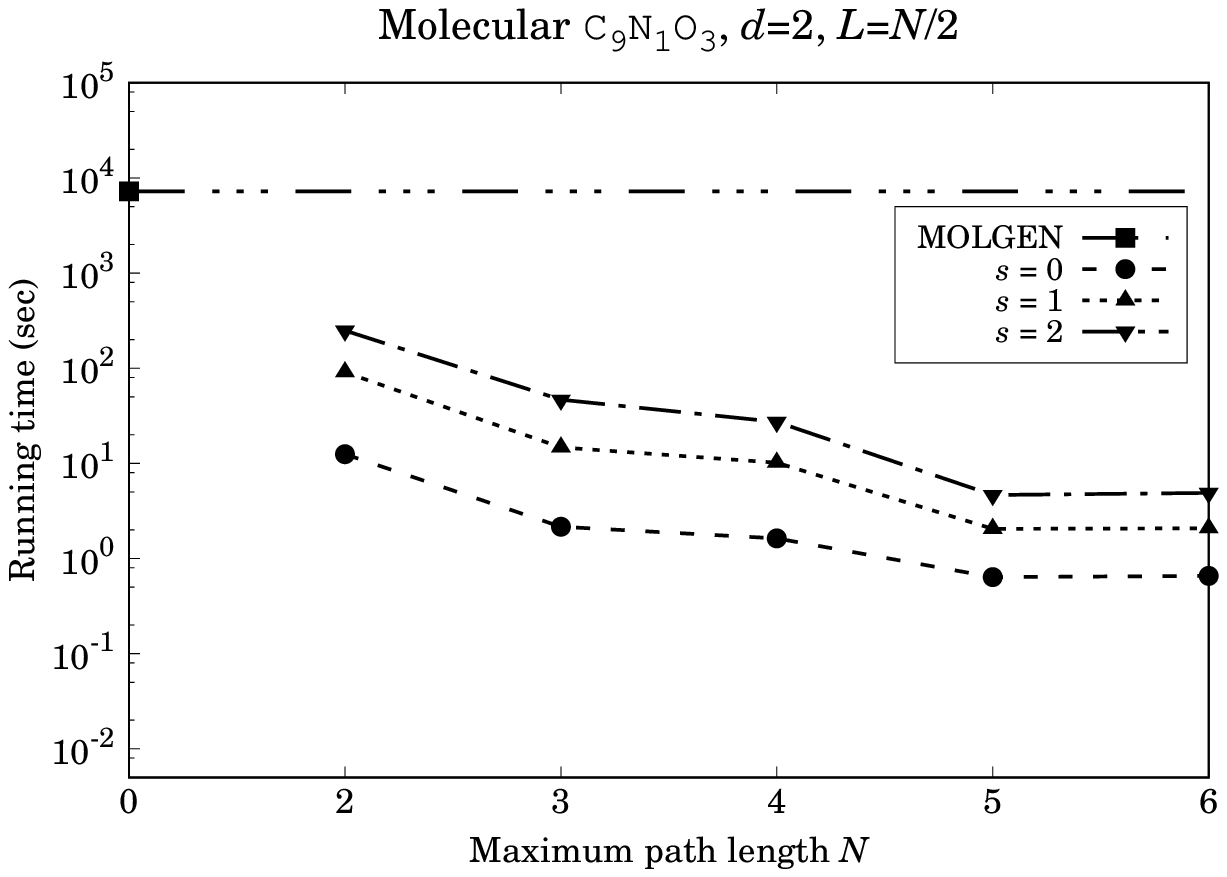}\\
      {\footnotesize (b)}\\
  \end{minipage} 
\hfill
  \begin{minipage}{0.45\textwidth}
   \centering
    \includegraphics[width=1.1\textwidth]{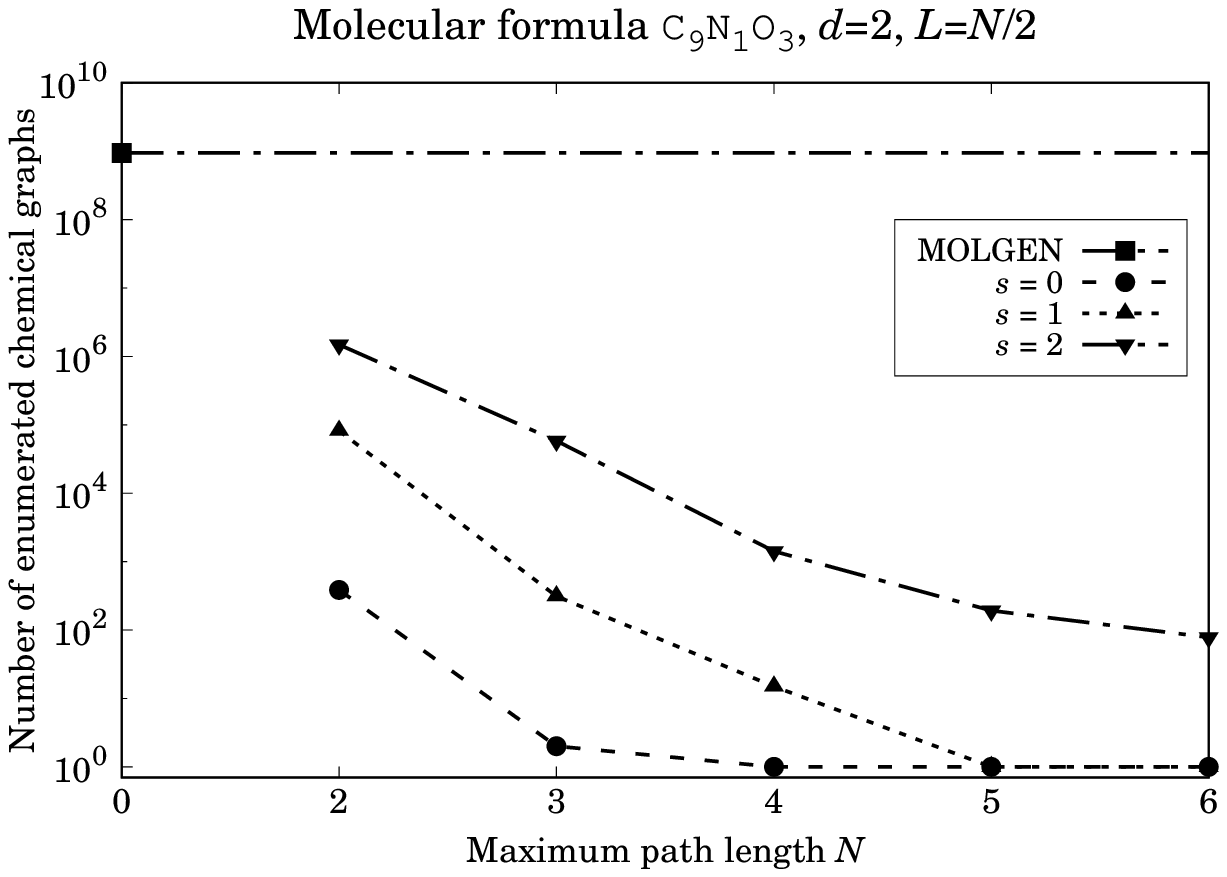}\\
    {\footnotesize (e)}\\
  \end{minipage} 
  \medskip

  \begin{minipage}{0.45\textwidth}
   \centering
      \includegraphics[width=1.1\textwidth]{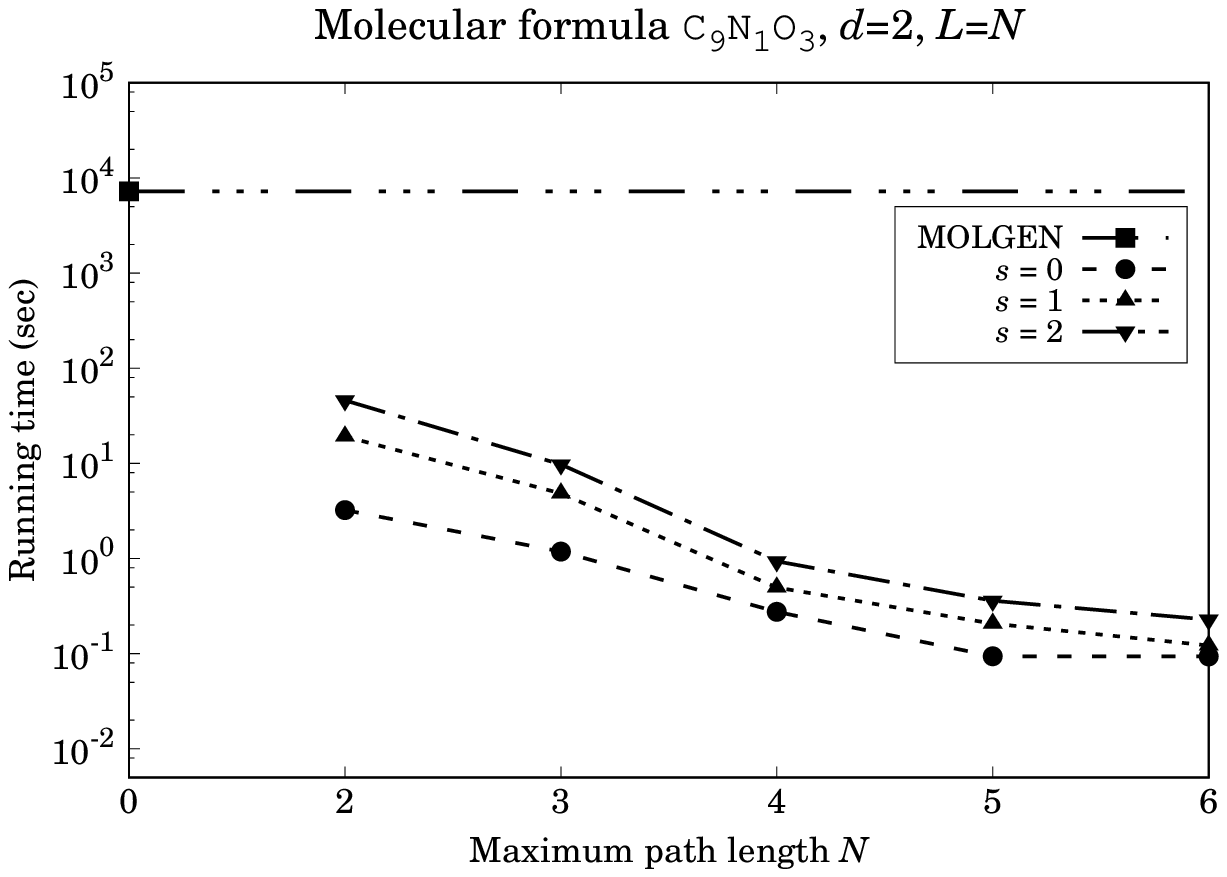}\\
      {\footnotesize (c)}\\
  \end{minipage} 
\hfill
  \begin{minipage}{0.45\textwidth}
   \centering
    \includegraphics[width=1.1\textwidth]{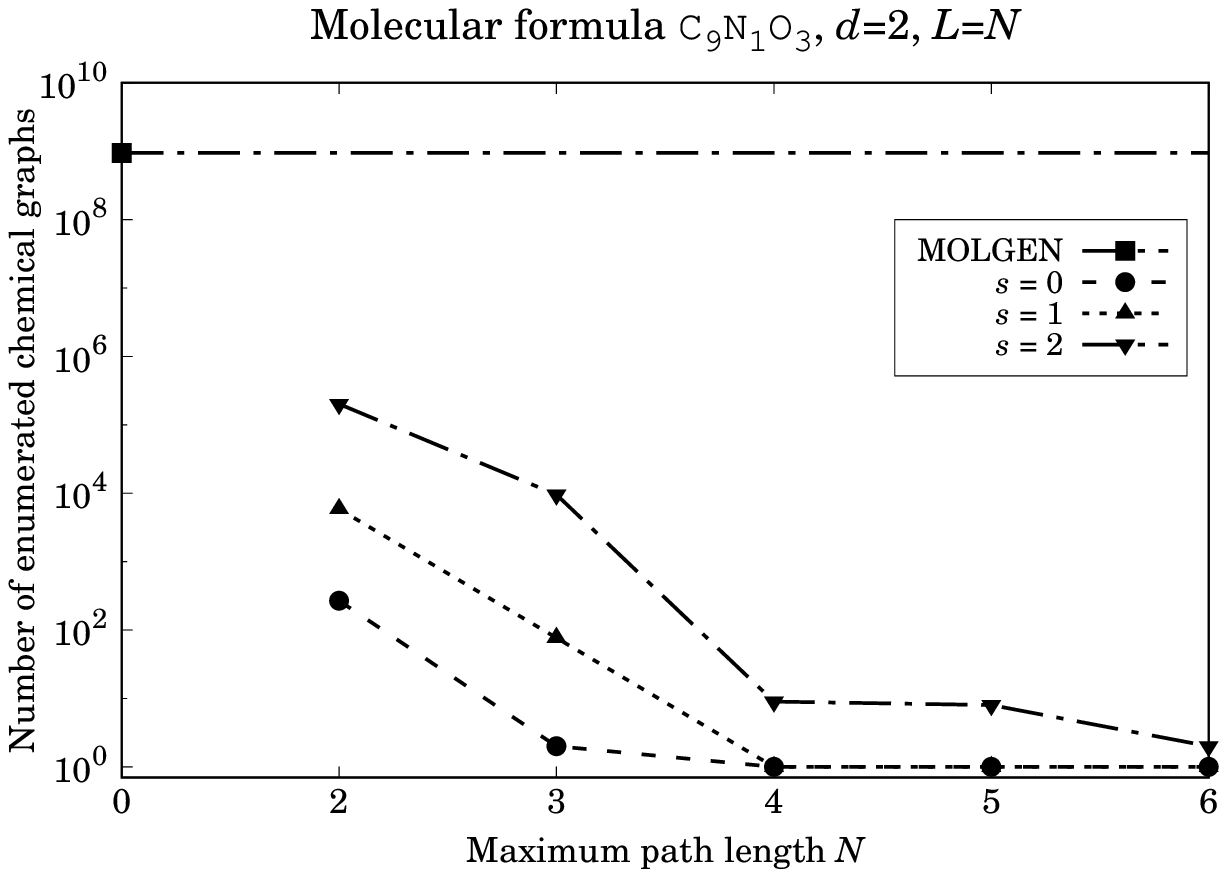}\\
    {\footnotesize (f)}\\
  \end{minipage} 
  \vspace{1cm}
  
  \caption{
    Plots showing the computation time 
    and number of chemical graphs enumerated by our algorithm
    for instance type EULF-$L$-A, as compared to MOLGEN.
    The sample structure from PubChem is with CID 130964701, 
    molecular formula {\tt C$_9$N$_1$O$_3$}, 
    and maximum bond multiplicity~$d=2$.
    (a)-(c)~Running time;
    (d)-(f)~Number of enumerated chemical graphs.
  }
 \label{fig:result_graphs_1}
 \end{figure}

  \begin{figure}[!ht]
  \begin{minipage}{0.45\textwidth}
   \centering
   \includegraphics[width=1.1\textwidth]{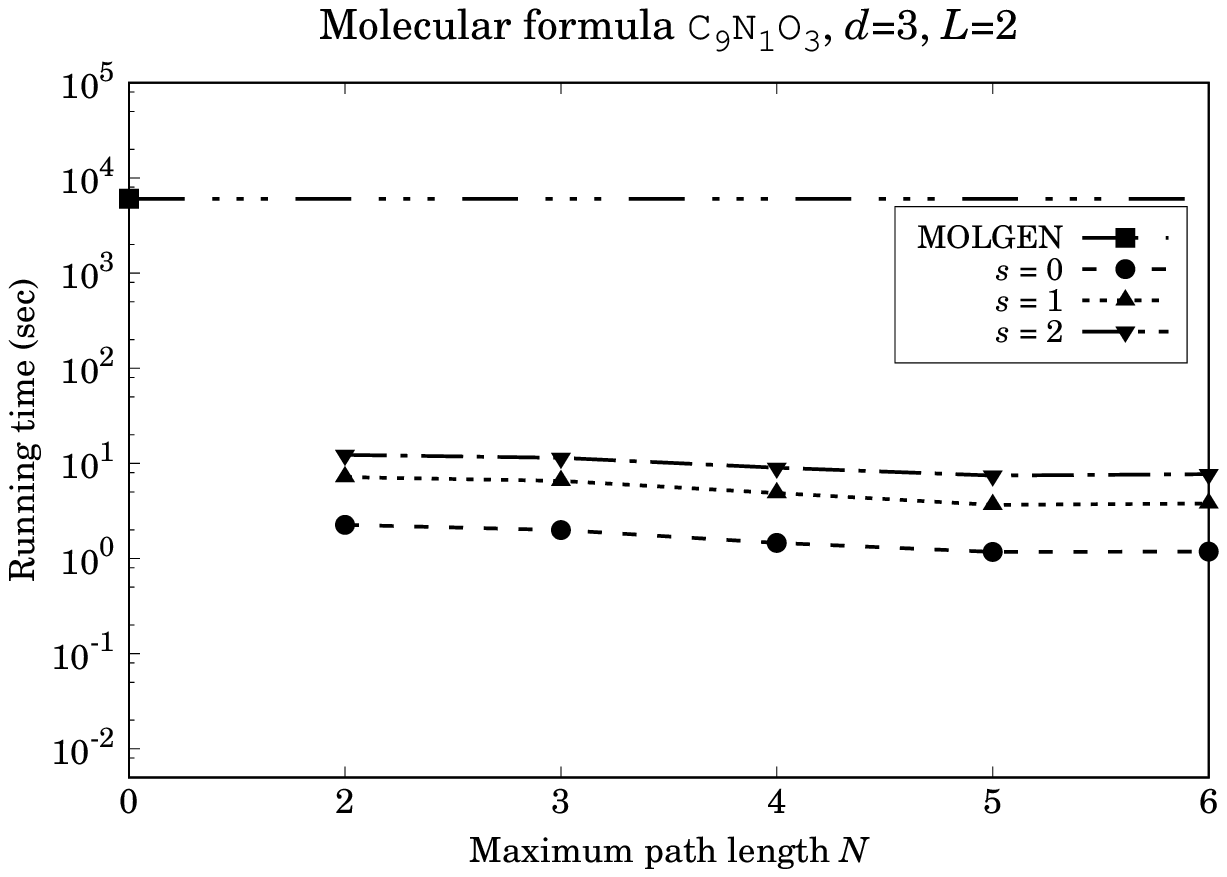}\\
   {\footnotesize (a)}\\
  \end{minipage}
\hfill
  \begin{minipage}{0.45\textwidth}
   \centering
   \includegraphics[width=1.1\textwidth]{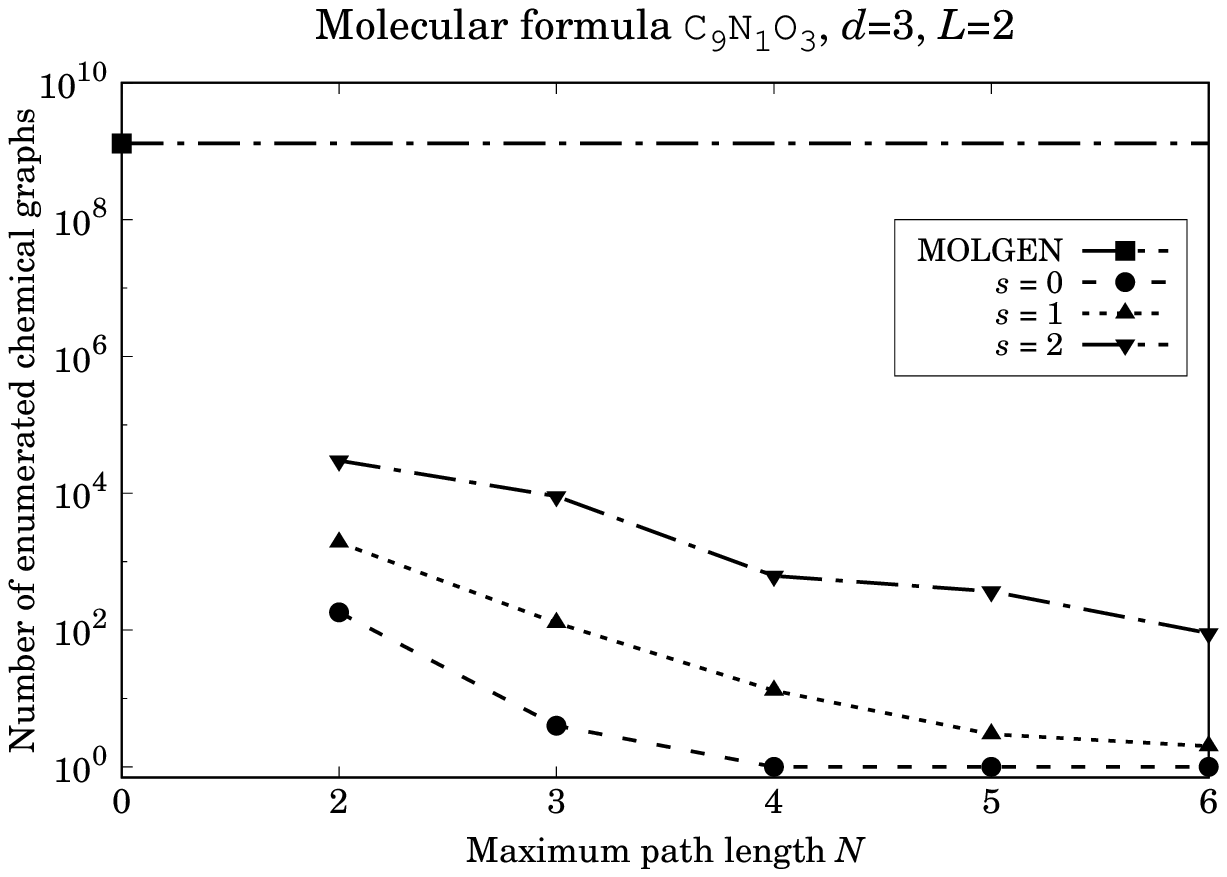}\\
   {\footnotesize (d)}\\
  \end{minipage} 
  \medskip

  \begin{minipage}{0.45\textwidth}
   \centering
      \includegraphics[width=1.1\textwidth]{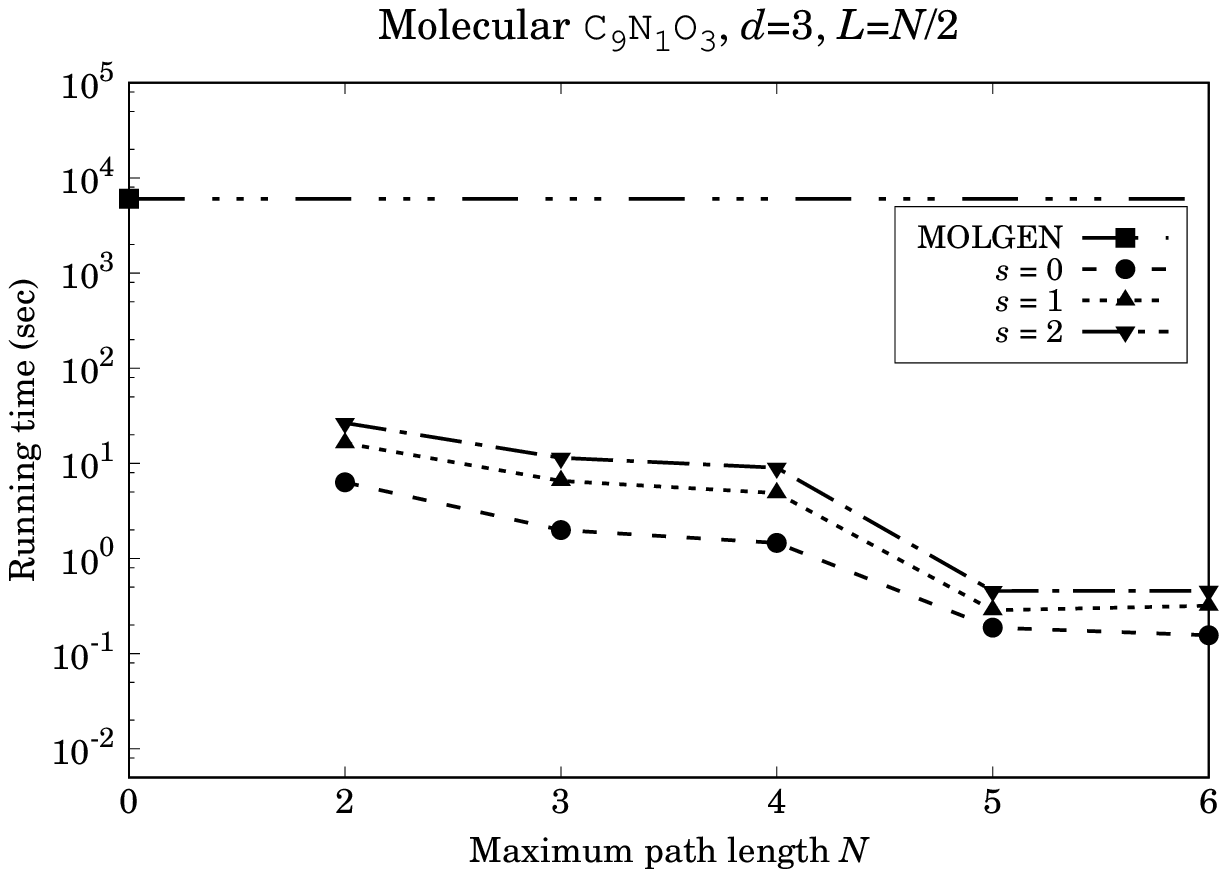}\\
      {\footnotesize (b)}\\
  \end{minipage} 
\hfill
  \begin{minipage}{0.45\textwidth}
   \centering
    \includegraphics[width=1.1\textwidth]{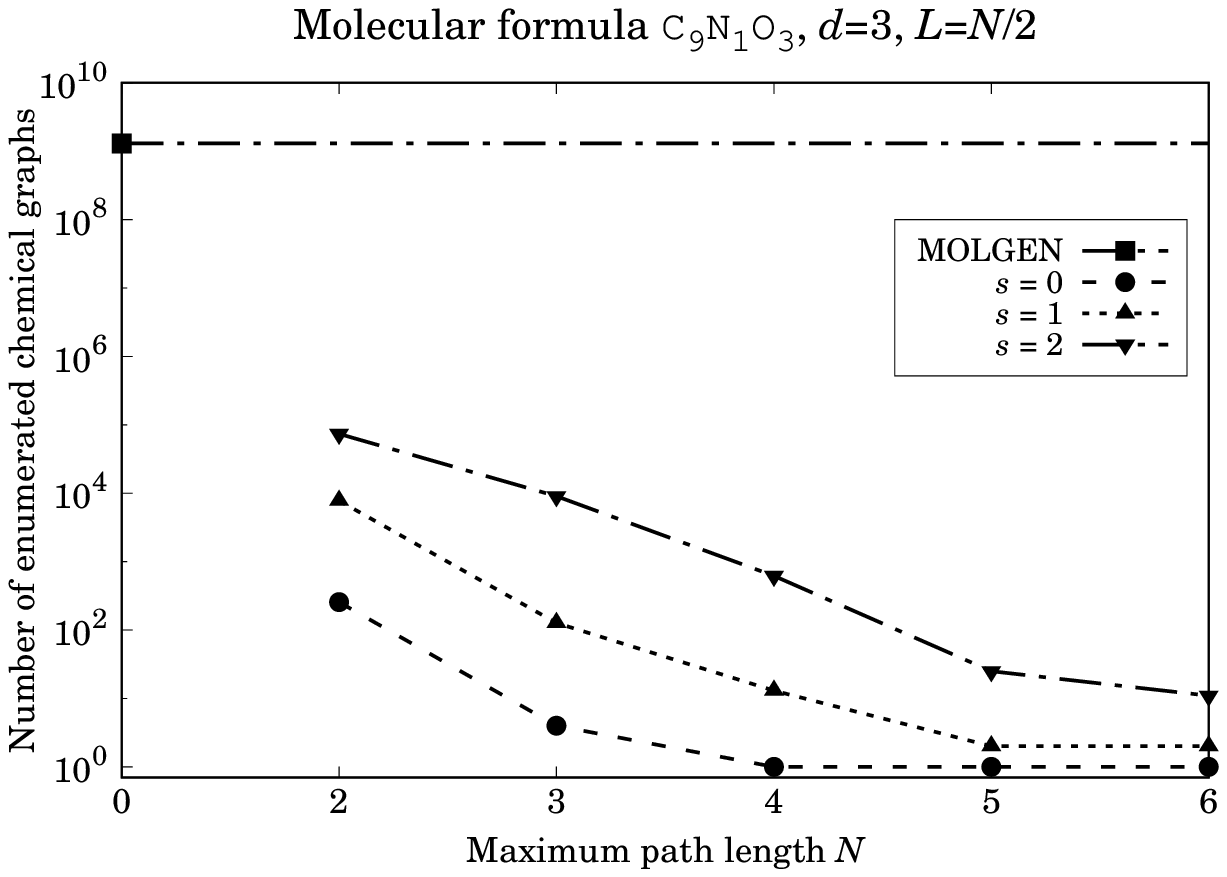}\\
    {\footnotesize (e)}\\
  \end{minipage} 
  \medskip

  \begin{minipage}{0.45\textwidth}
   \centering
      \includegraphics[width=1.1\textwidth]{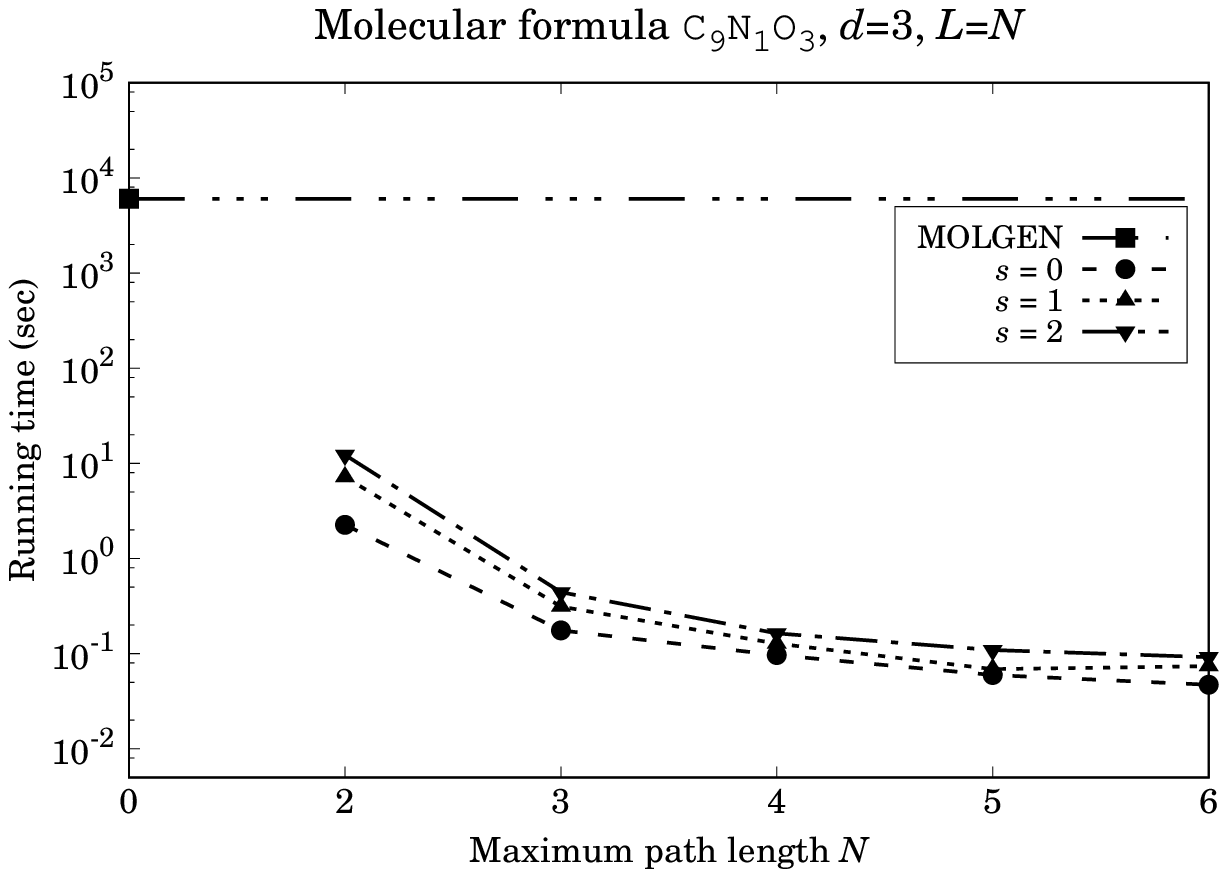}\\
      {\footnotesize (c)}\\
  \end{minipage} 
\hfill
  \begin{minipage}{0.45\textwidth}
   \centering
    \includegraphics[width=1.1\textwidth]{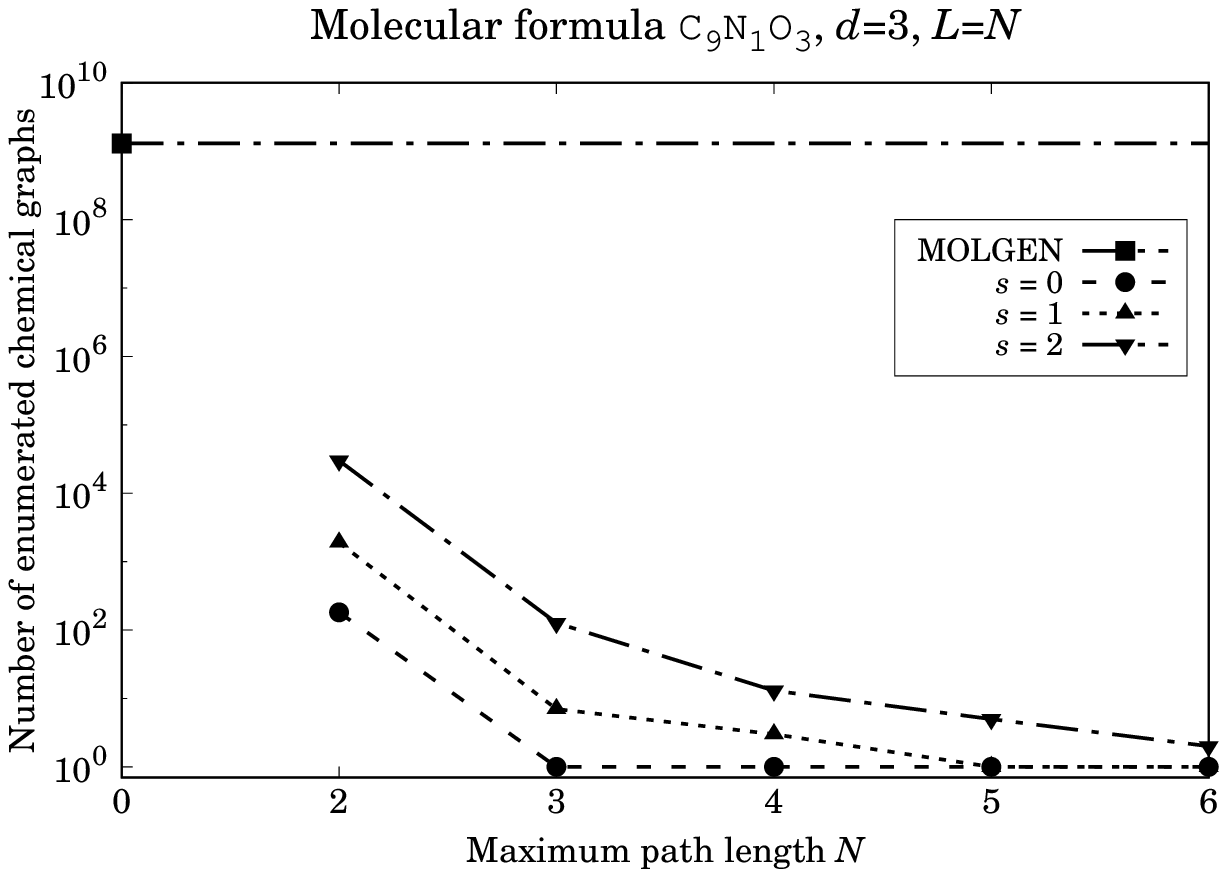}\\
    {\footnotesize (f)}\\
  \end{minipage} 
  \vspace{1cm}
  
  \caption{
    Plots showing the computation time 
    and number of chemical graphs enumerated by our algorithm
    for instance type EULF-$L$-A, as compared to MOLGEN.
    The sample structure from PubChem is with CID~131152558, 
    molecular formula {\tt C$_9$N$_1$O$_3$}, 
    and maximum bond multiplicity~$d=3$.
    (a)-(c)~Running time;
    (d)-(f)~Number of enumerated chemical graphs.
  }
 \label{fig:result_graphs_2}
 \end{figure}

  \begin{figure}[!ht]
  \begin{minipage}{0.45\textwidth}
   \centering
   \includegraphics[width=1.1\textwidth]{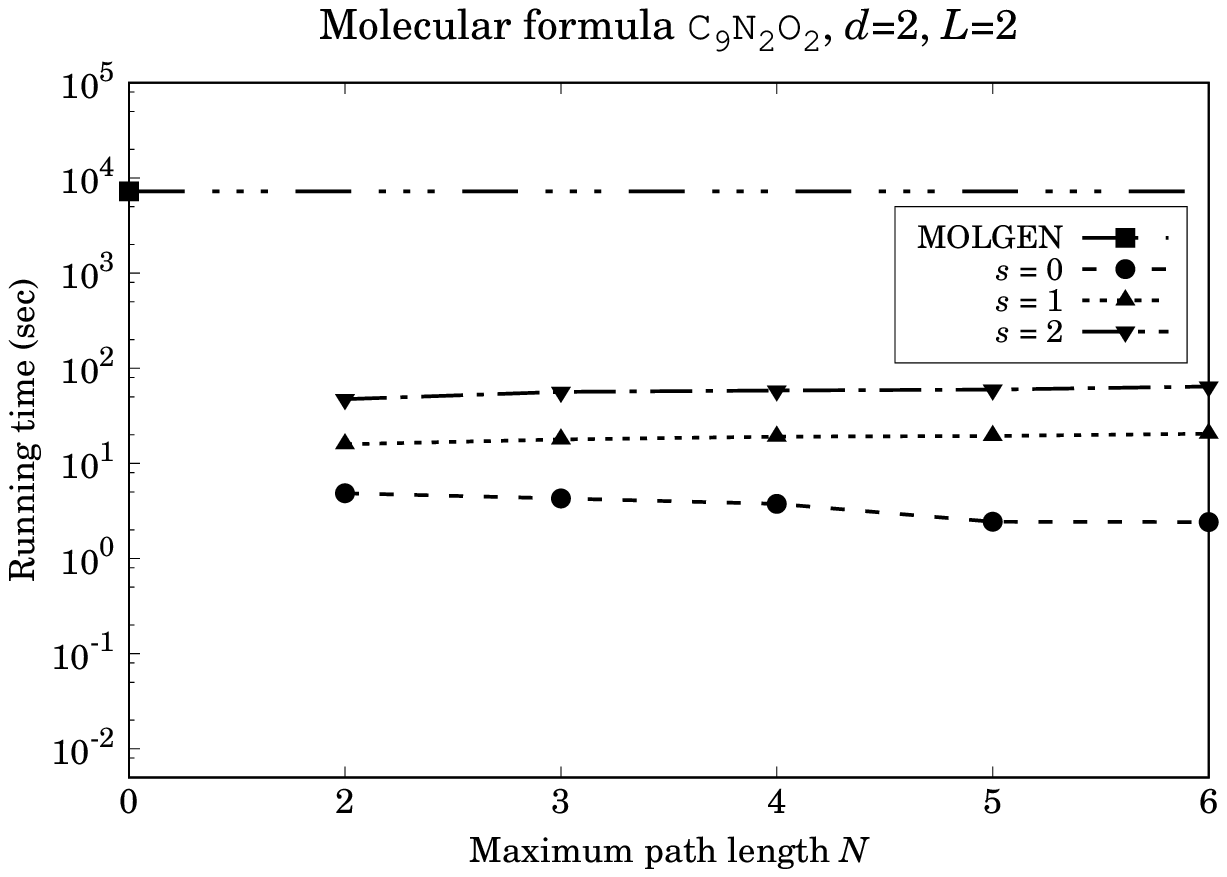}\\
   {\footnotesize (a)}\\
  \end{minipage}
\hfill
  \begin{minipage}{0.45\textwidth}
   \centering
   \includegraphics[width=1.1\textwidth]{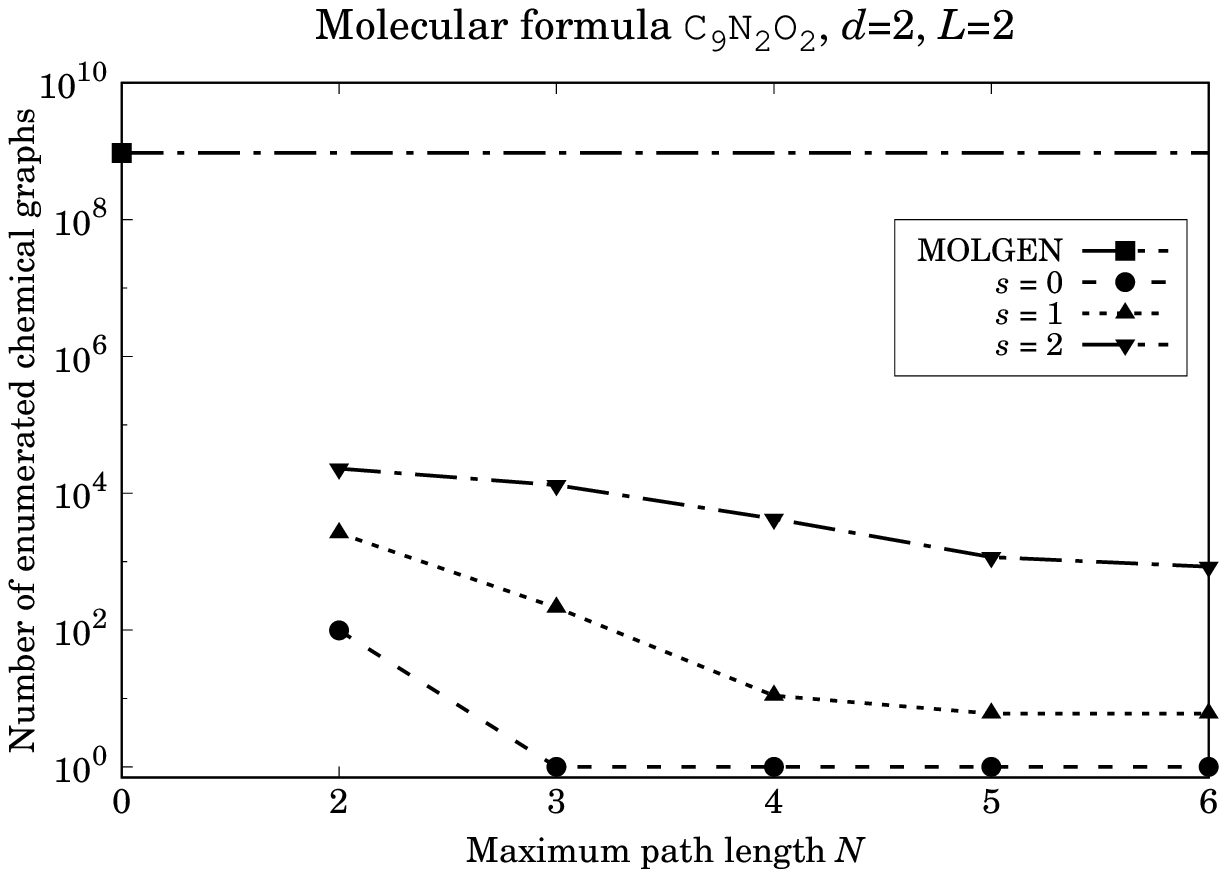}\\
   {\footnotesize (d)}\\
  \end{minipage} 
  \medskip

  \begin{minipage}{0.45\textwidth}
   \centering
      \includegraphics[width=1.1\textwidth]{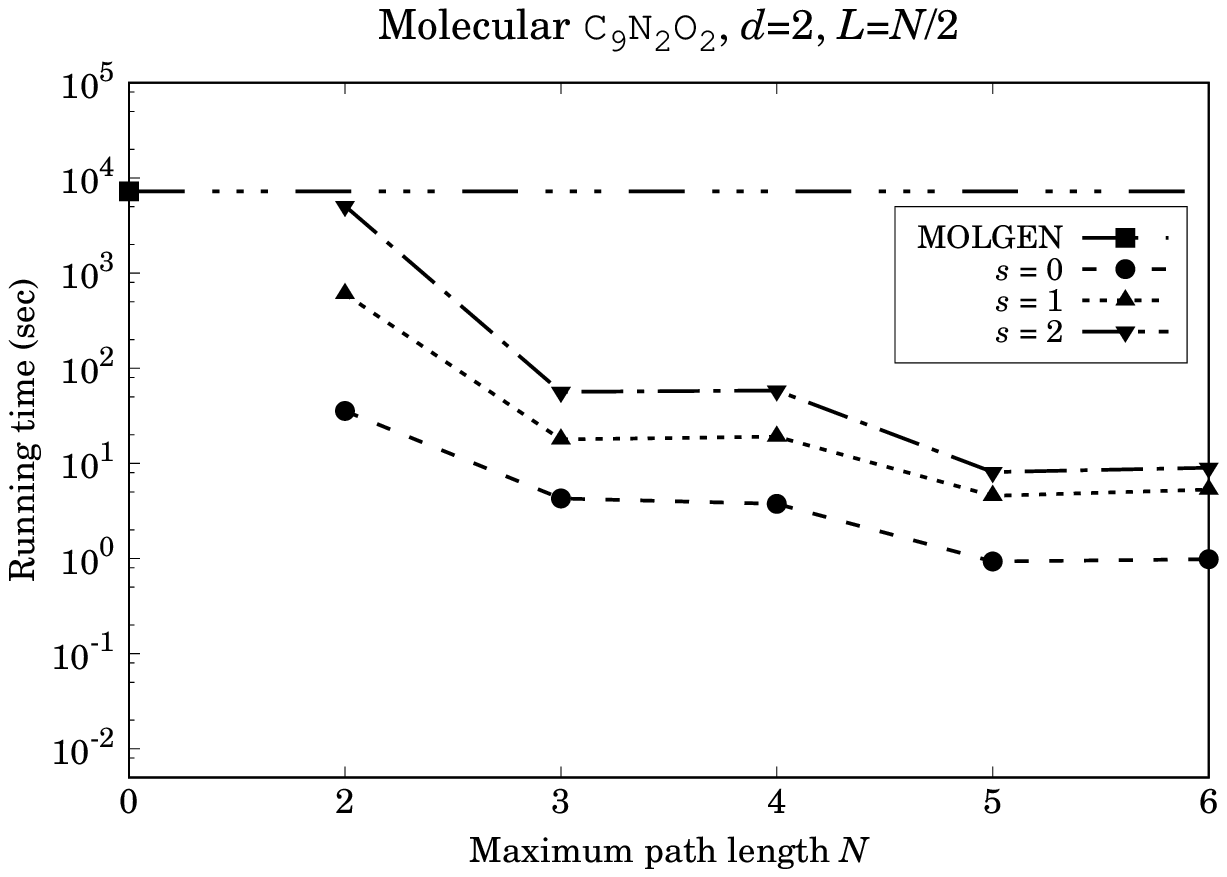}\\
      {\footnotesize (b)}\\
  \end{minipage} 
\hfill
  \begin{minipage}{0.45\textwidth}
   \centering
    \includegraphics[width=1.1\textwidth]{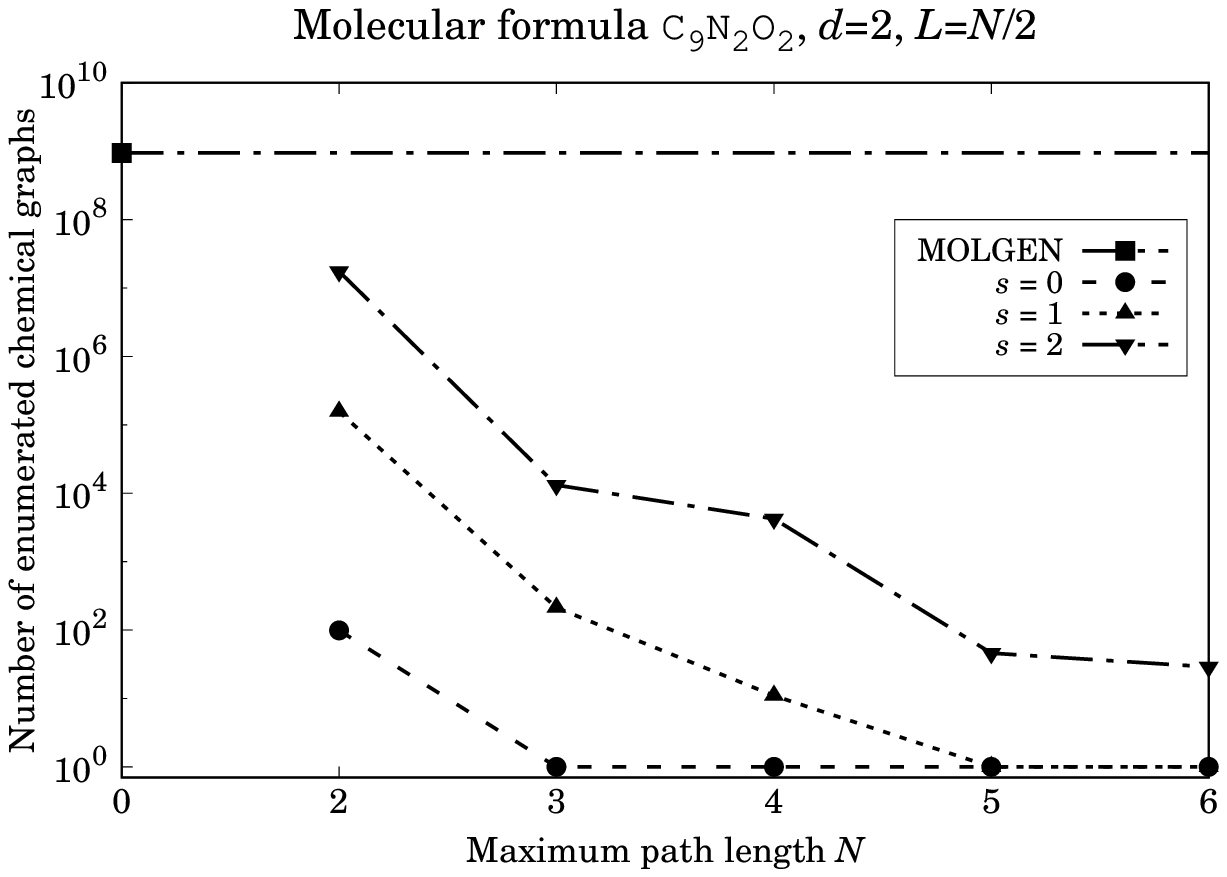}\\
    {\footnotesize (e)}\\
  \end{minipage} 
  \medskip

  \begin{minipage}{0.45\textwidth}
   \centering
      \includegraphics[width=1.1\textwidth]{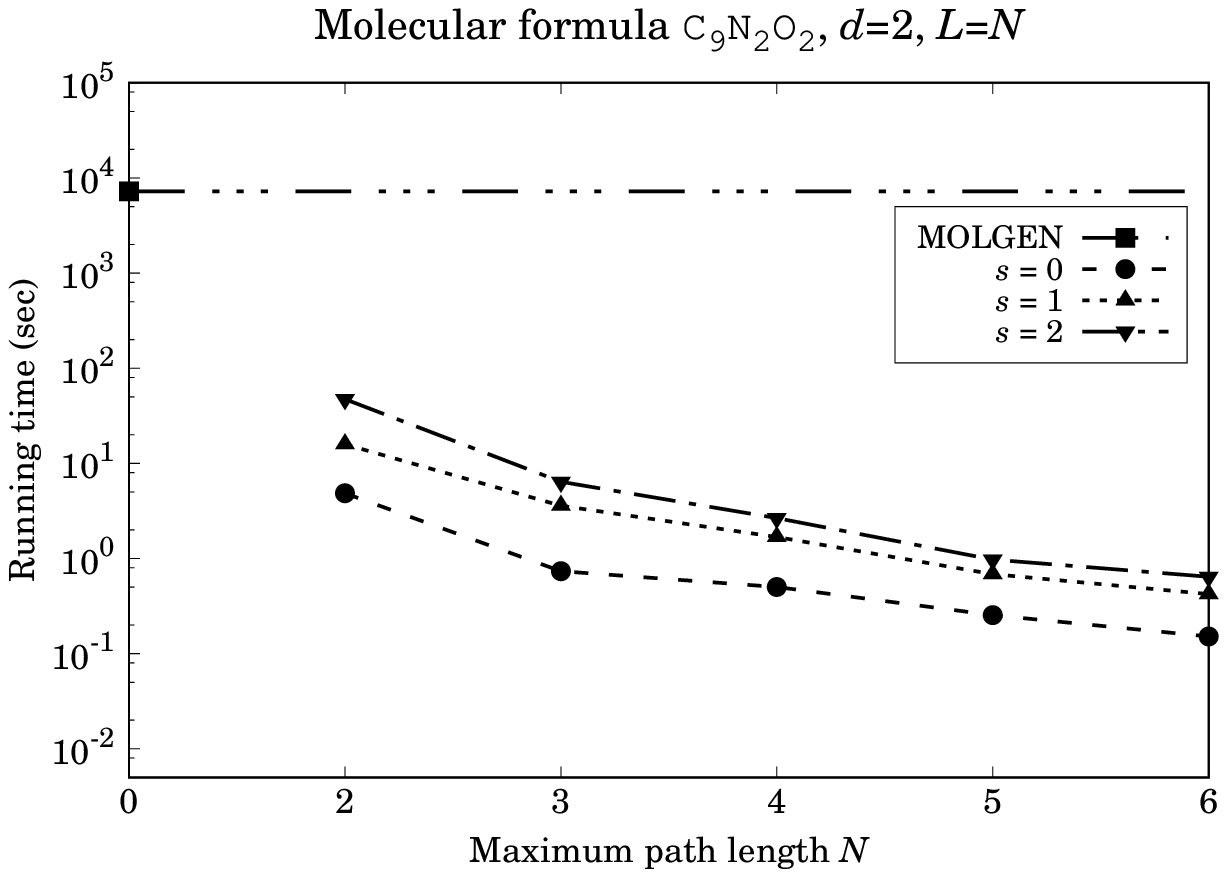}\\
      {\footnotesize (c)}\\
  \end{minipage} 
\hfill
  \begin{minipage}{0.45\textwidth}
   \centering
    \includegraphics[width=1.1\textwidth]{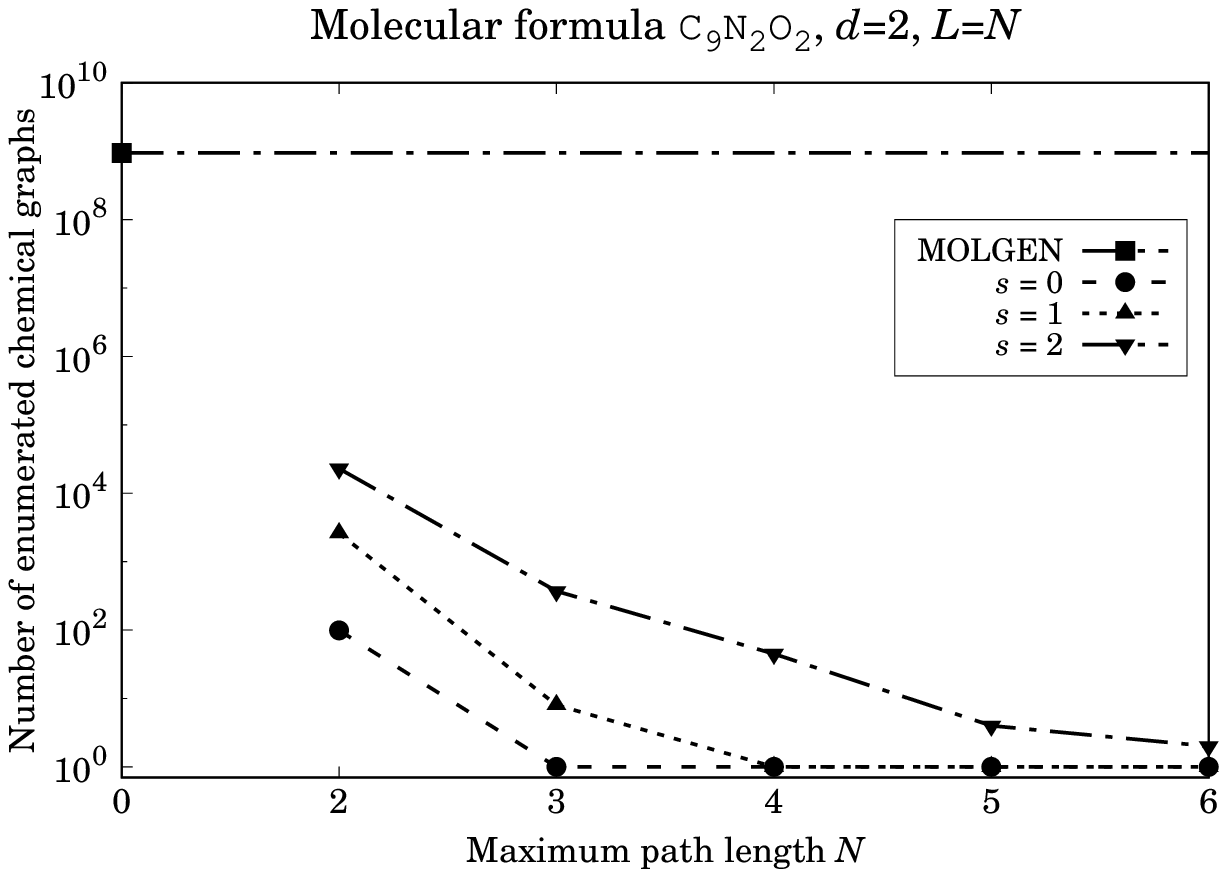}\\
    {\footnotesize (f)}\\
  \end{minipage} 
  \vspace{1cm}
  
  \caption{
    Plots showing the computation time 
    and number of chemical graphs enumerated by our algorithm
    for instance type EULF-$L$-A, as compared to MOLGEN.
    The sample structure from PubChem is with CID~742733, 
    molecular formula {\tt C$_9$N$_2$O$_2$}, 
    and maximum bond multiplicity~$d=2$.
    (a)-(c)~Running time;
    (d)-(f)~Number of enumerated chemical graphs.
  }
 \label{fig:result_graphs_3}
 \end{figure}

 \begin{figure}[!ht]
  \begin{minipage}{0.45\textwidth}
   \centering
   \includegraphics[width=1.1\textwidth]{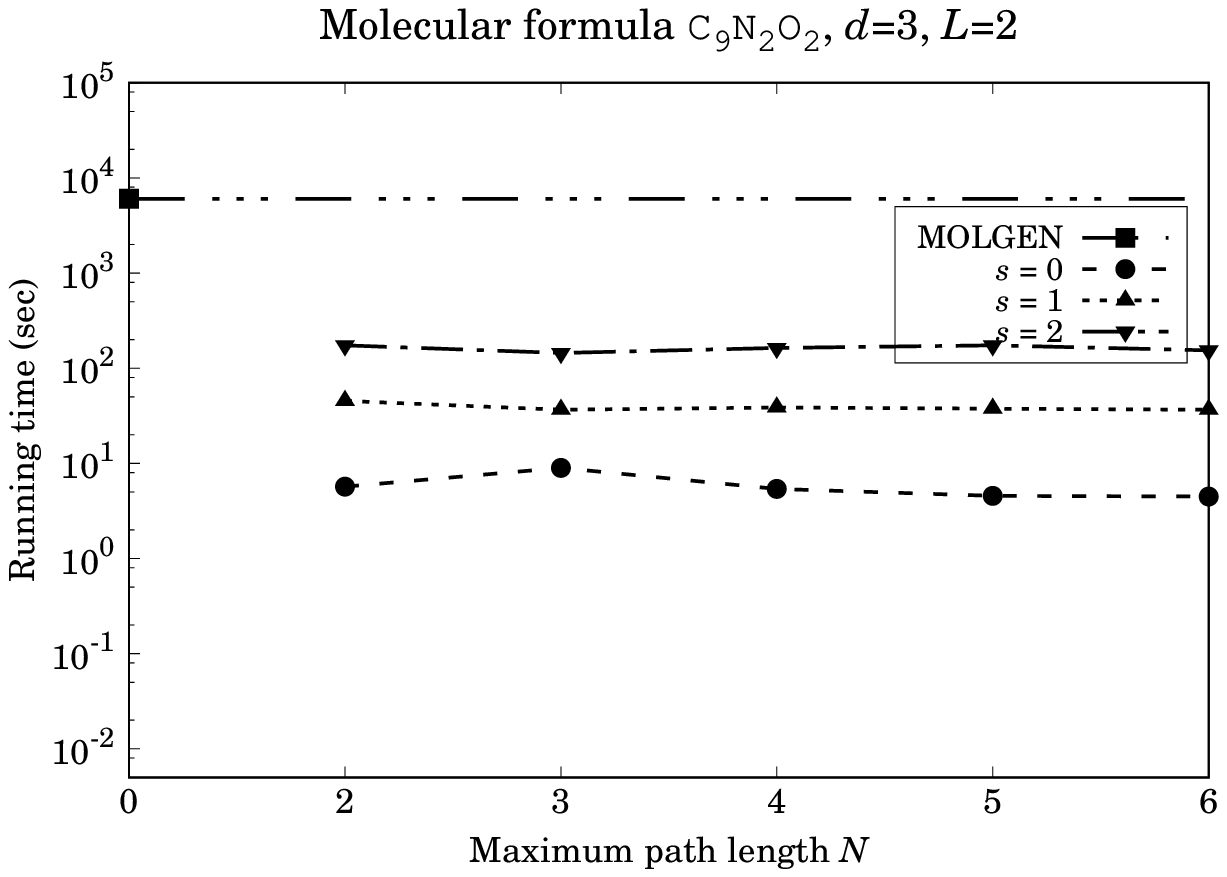}\\
   {\footnotesize (a)}\\
  \end{minipage}
\hfill
  \begin{minipage}{0.45\textwidth}
   \centering
   \includegraphics[width=1.1\textwidth]{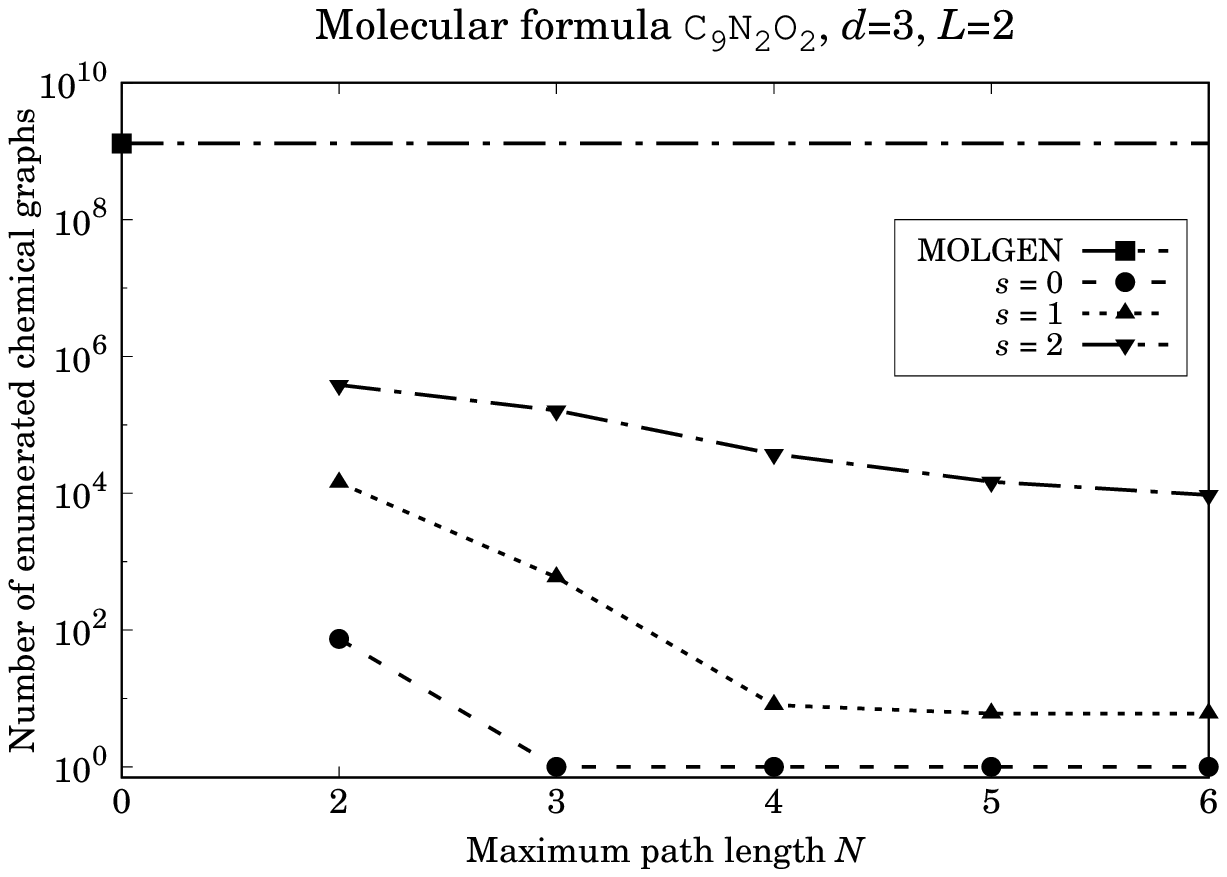}\\
   {\footnotesize (d)}\\
  \end{minipage} 
  \medskip

  \begin{minipage}{0.45\textwidth}
   \centering
      \includegraphics[width=1.1\textwidth]{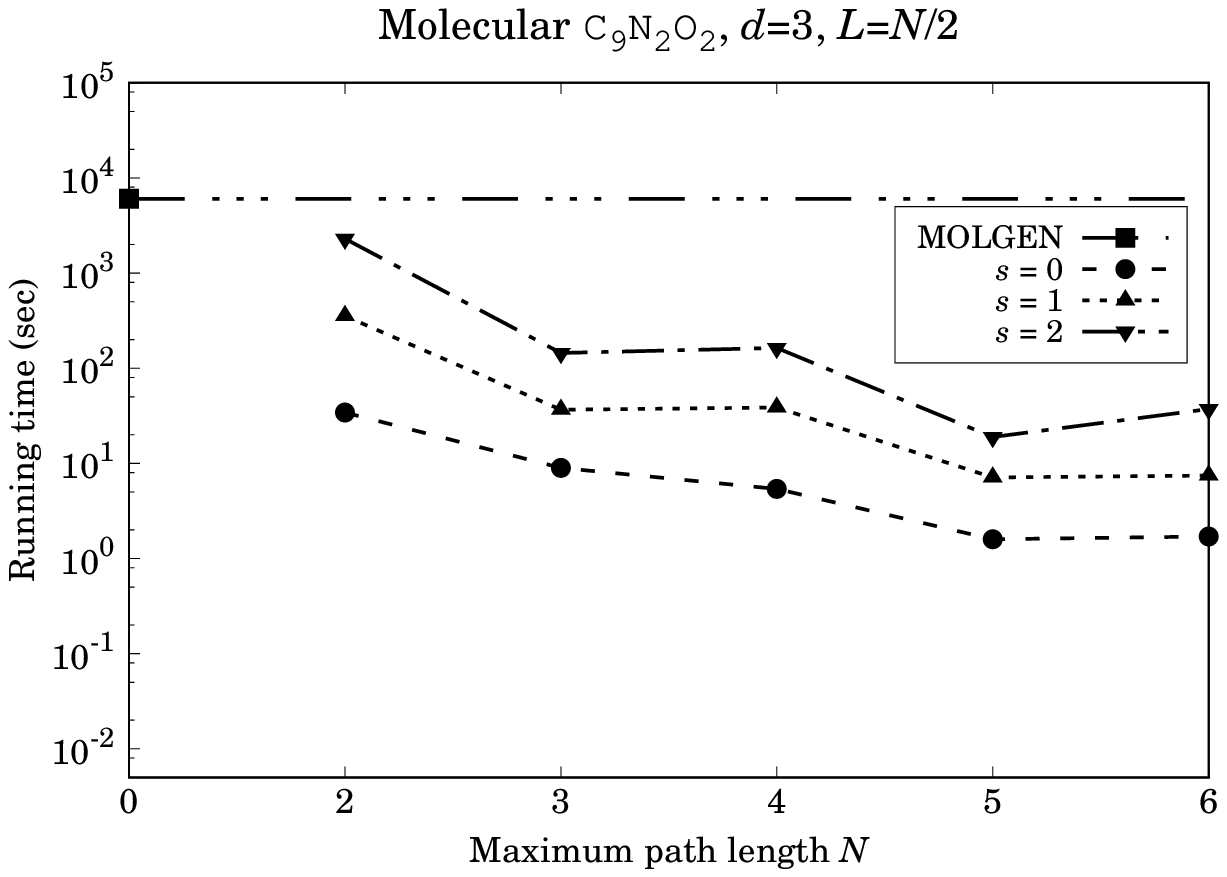}\\
      {\footnotesize (b)}\\
  \end{minipage} 
\hfill
  \begin{minipage}{0.45\textwidth}
   \centering
    \includegraphics[width=1.1\textwidth]{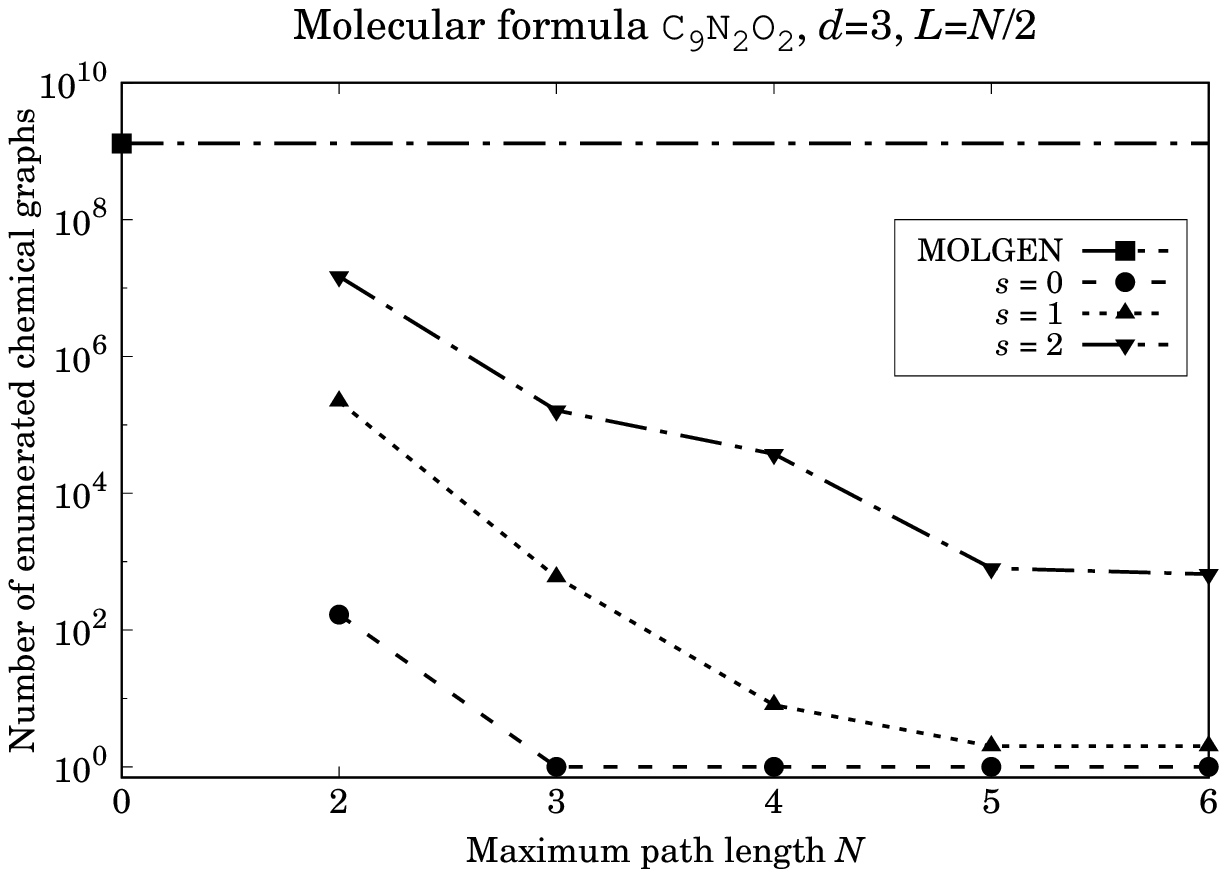}\\
    {\footnotesize (e)}\\
  \end{minipage} 
  \medskip

  \begin{minipage}{0.45\textwidth}
   \centering
      \includegraphics[width=1.1\textwidth]{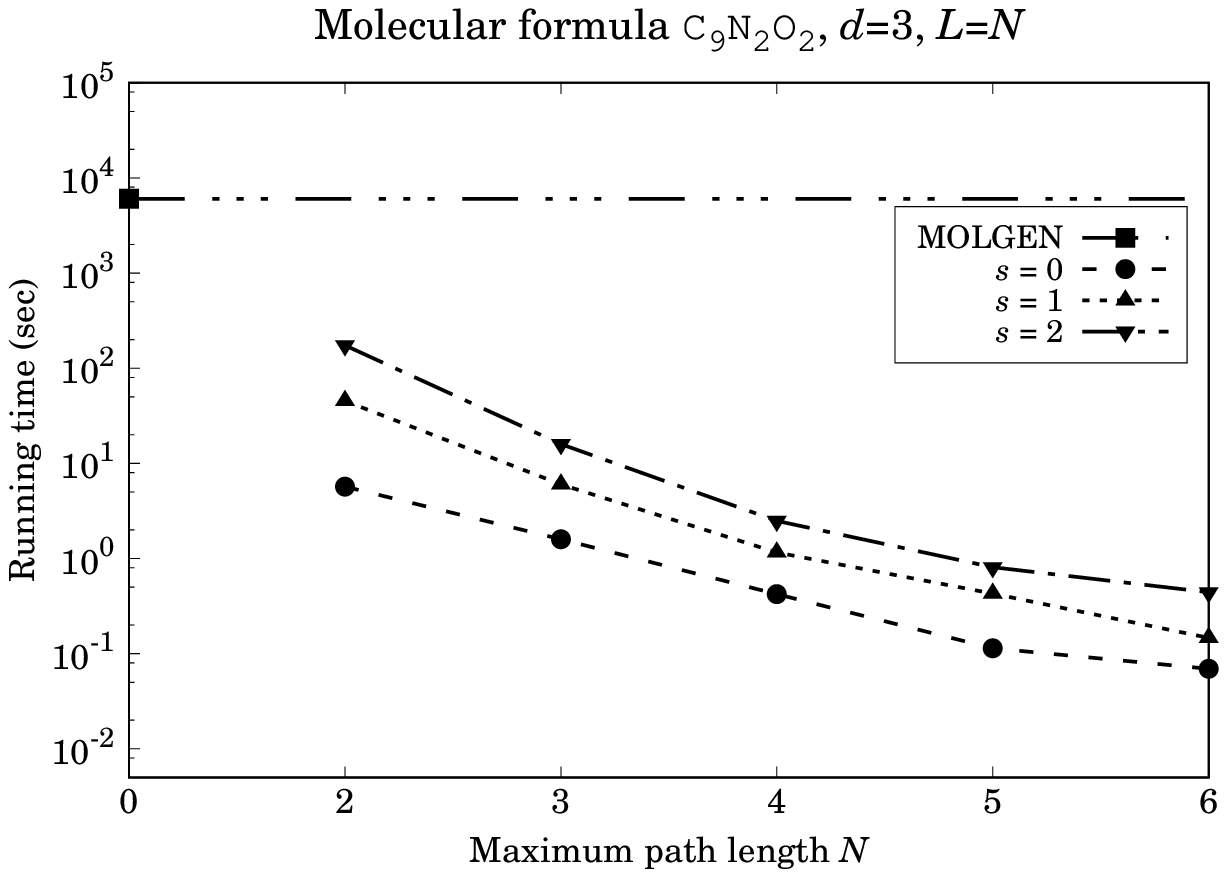}\\
      {\footnotesize (c)}\\
  \end{minipage} 
\hfill
  \begin{minipage}{0.45\textwidth}
   \centering
    \includegraphics[width=1.1\textwidth]{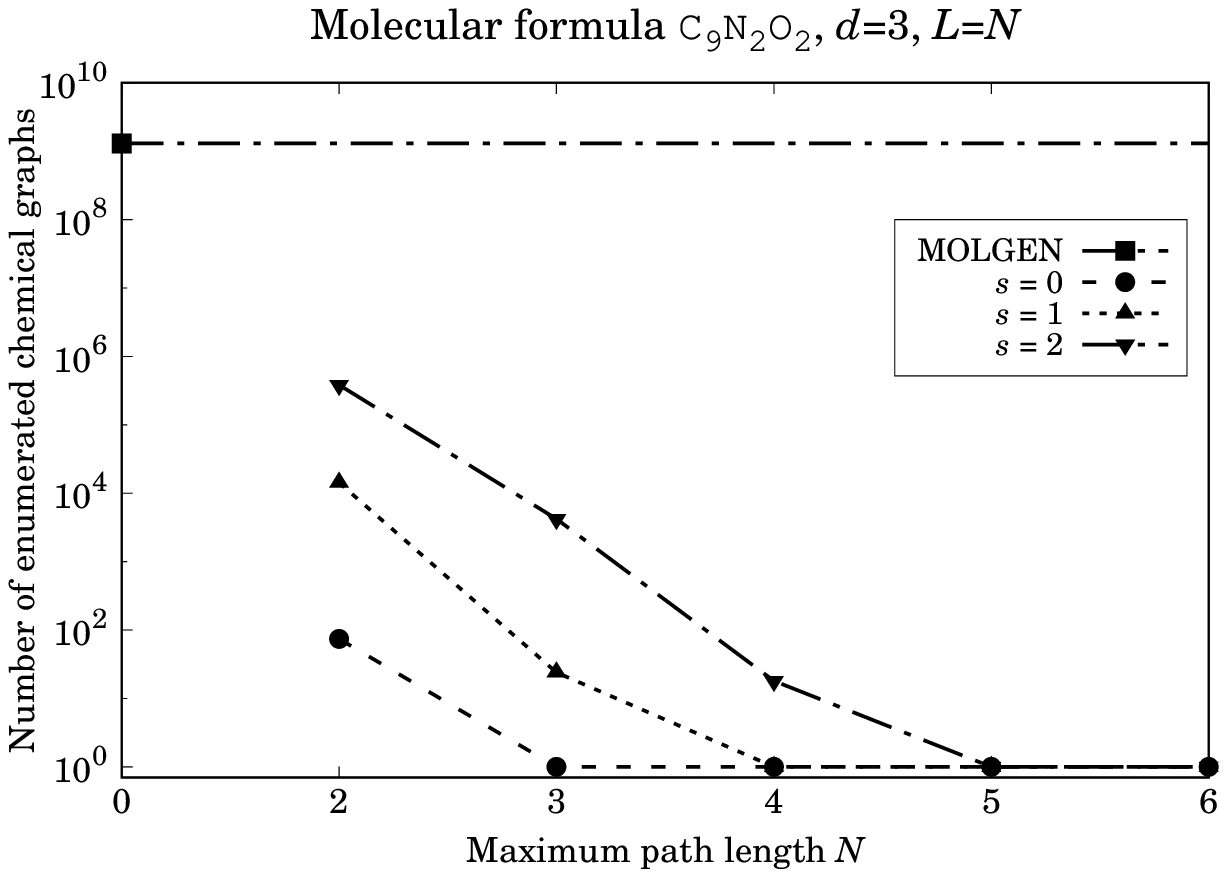}\\
    {\footnotesize (f)}\\
  \end{minipage} 
  \vspace{1cm}
  
  \caption{
    Plots showing the computation time 
    and number of chemical graphs enumerated by our algorithm
    for instance type EULF-$L$-A, as compared to MOLGEN.
    The sample structure from PubChem is with CID~23461643, 
    molecular formula {\tt C$_9$N$_2$O$_2$}, 
    and maximum bond multiplicity~$d=3$.
    (a)-(c)~Running time;
    (d)-(f)~Number of enumerated chemical graphs.
  }
 \label{fig:result_graphs_4}
 \end{figure}

 \begin{figure}[!ht]
  \begin{minipage}{0.45\textwidth}
   \centering
   \includegraphics[width=1.1\textwidth]{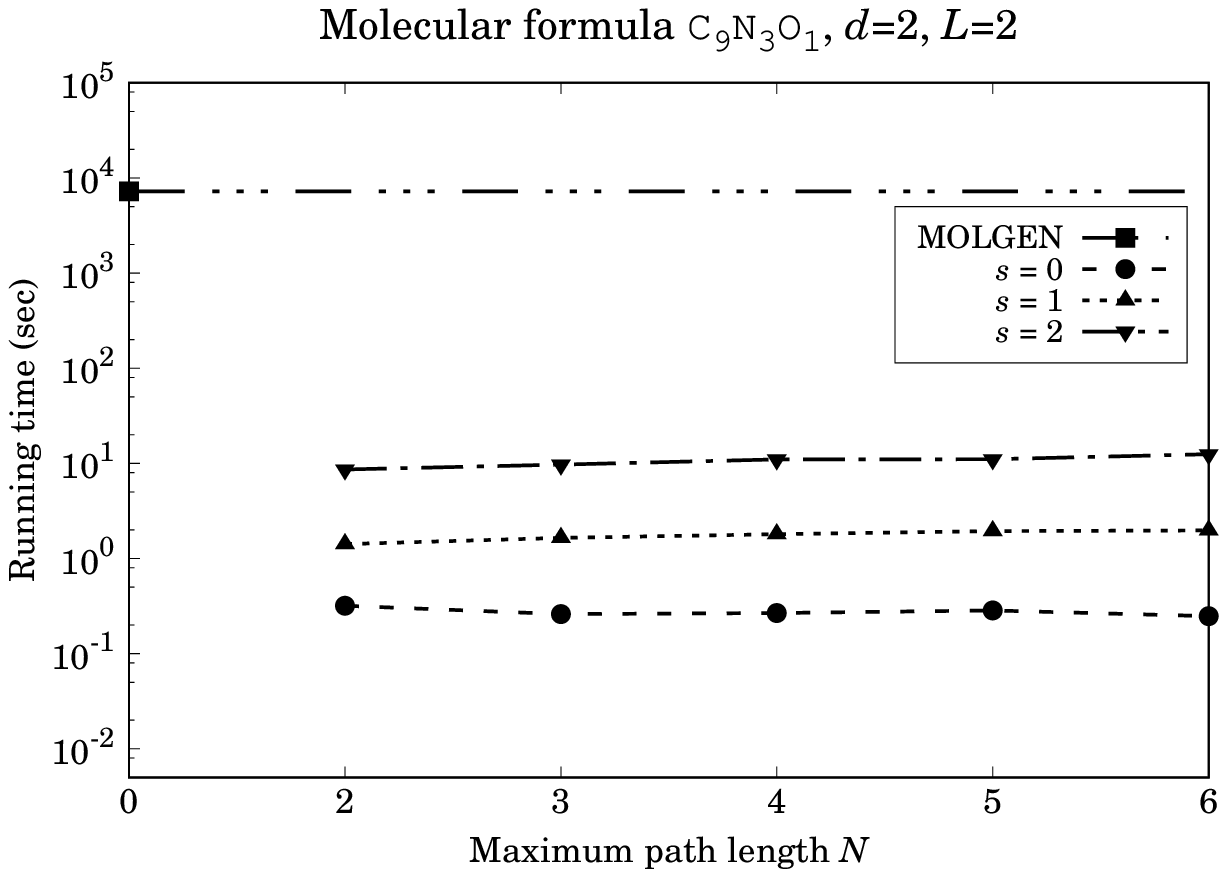}\\
   {\footnotesize (a)}\\
  \end{minipage}
\hfill
  \begin{minipage}{0.45\textwidth}
   \centering
   \includegraphics[width=1.1\textwidth]{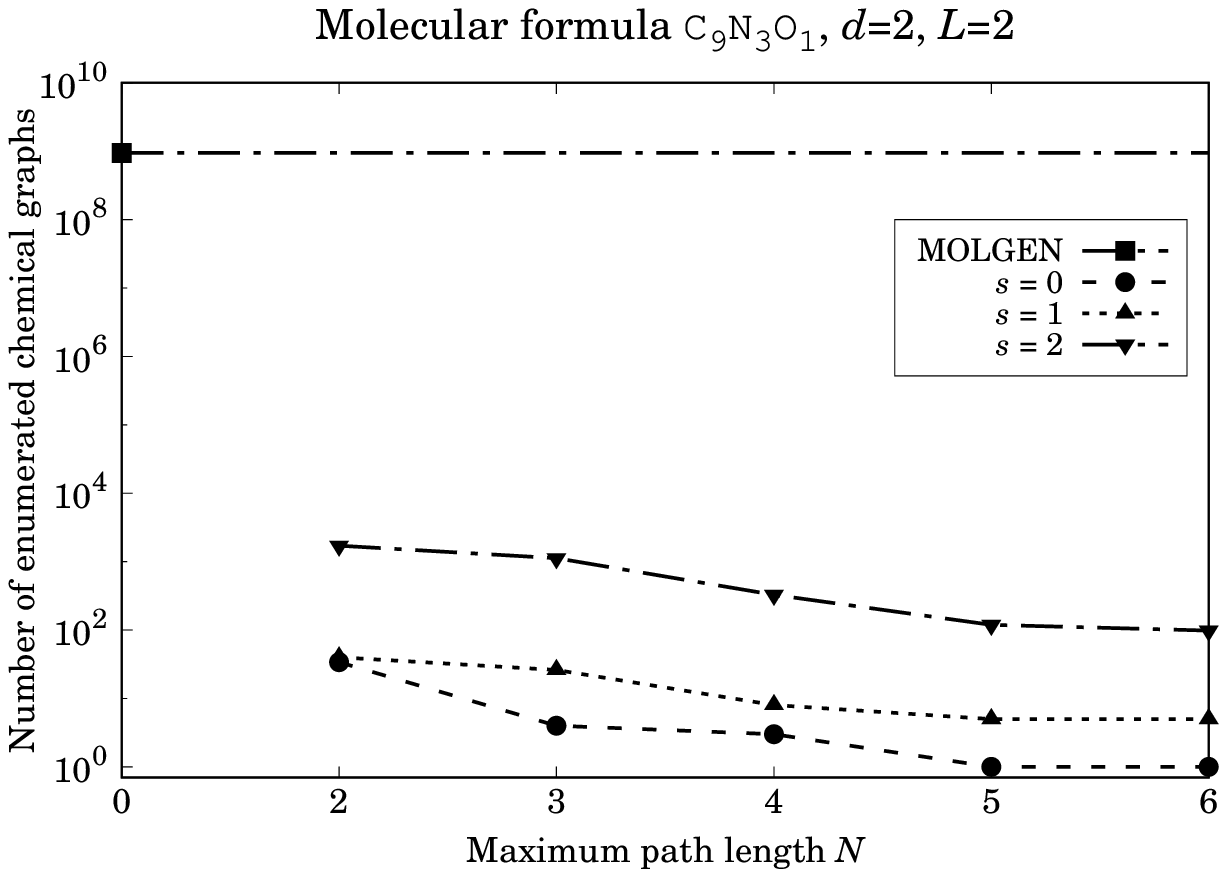}\\
   {\footnotesize (d)}\\
  \end{minipage} 
  \medskip

  \begin{minipage}{0.45\textwidth}
   \centering
      \includegraphics[width=1.1\textwidth]{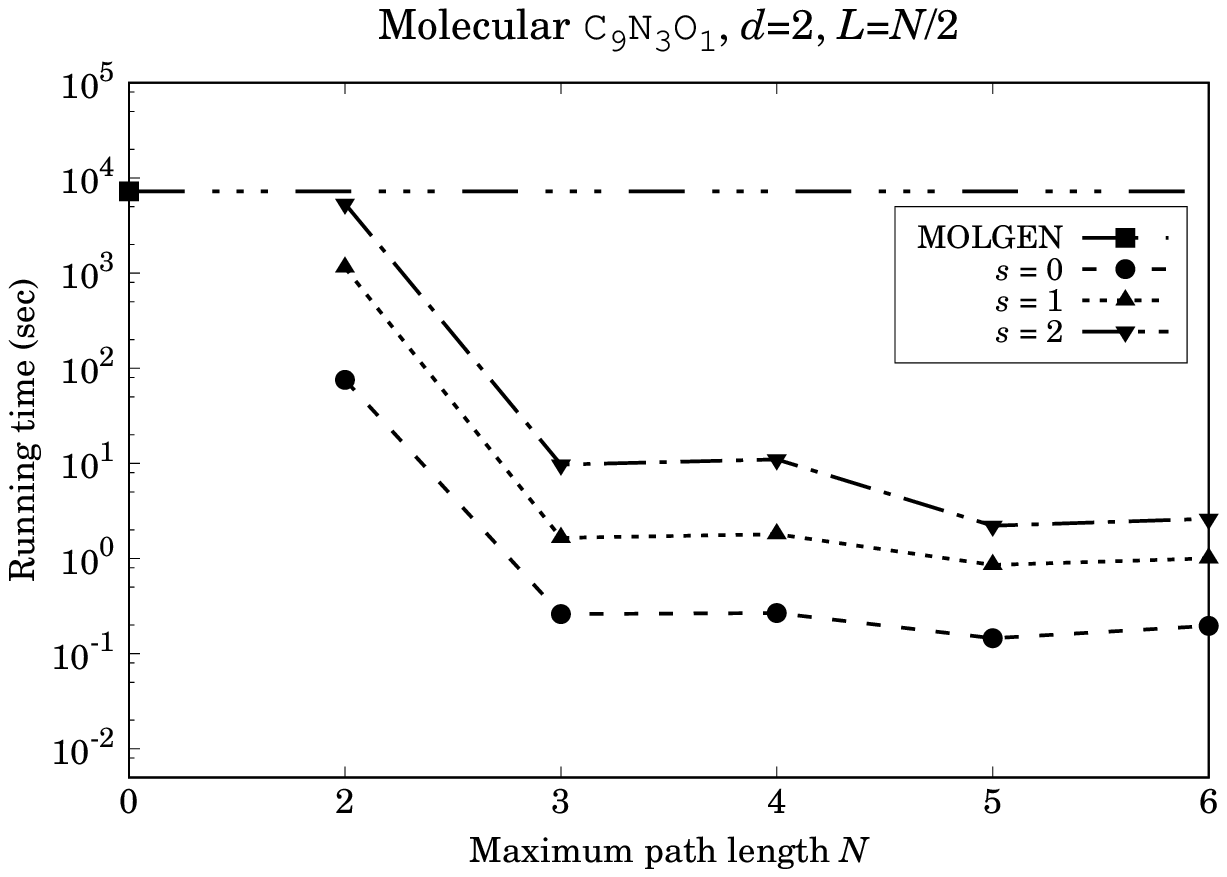}\\
      {\footnotesize (b)}\\
  \end{minipage} 
\hfill
  \begin{minipage}{0.45\textwidth}
   \centering
    \includegraphics[width=1.1\textwidth]{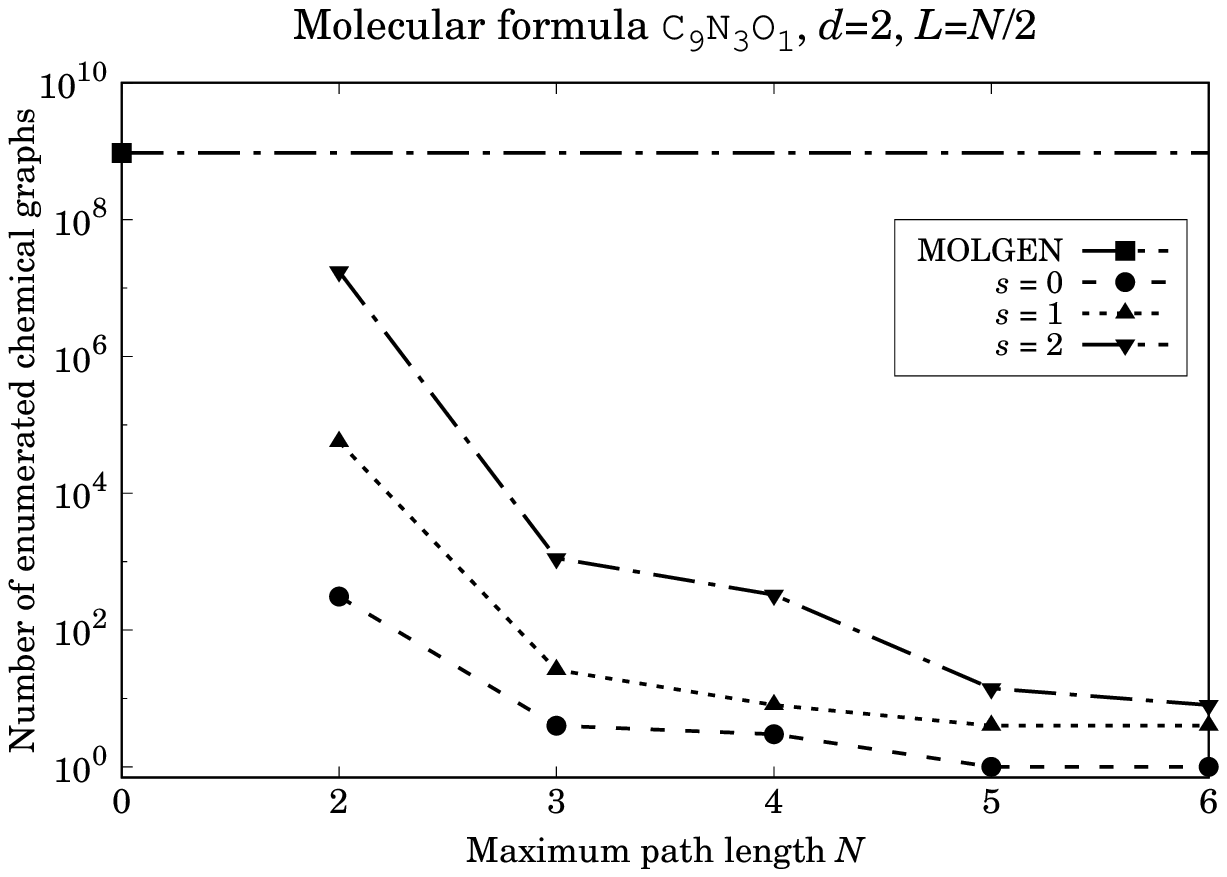}\\
    {\footnotesize (e)}\\
  \end{minipage} 
  \medskip

  \begin{minipage}{0.45\textwidth}
   \centering
      \includegraphics[width=1.1\textwidth]{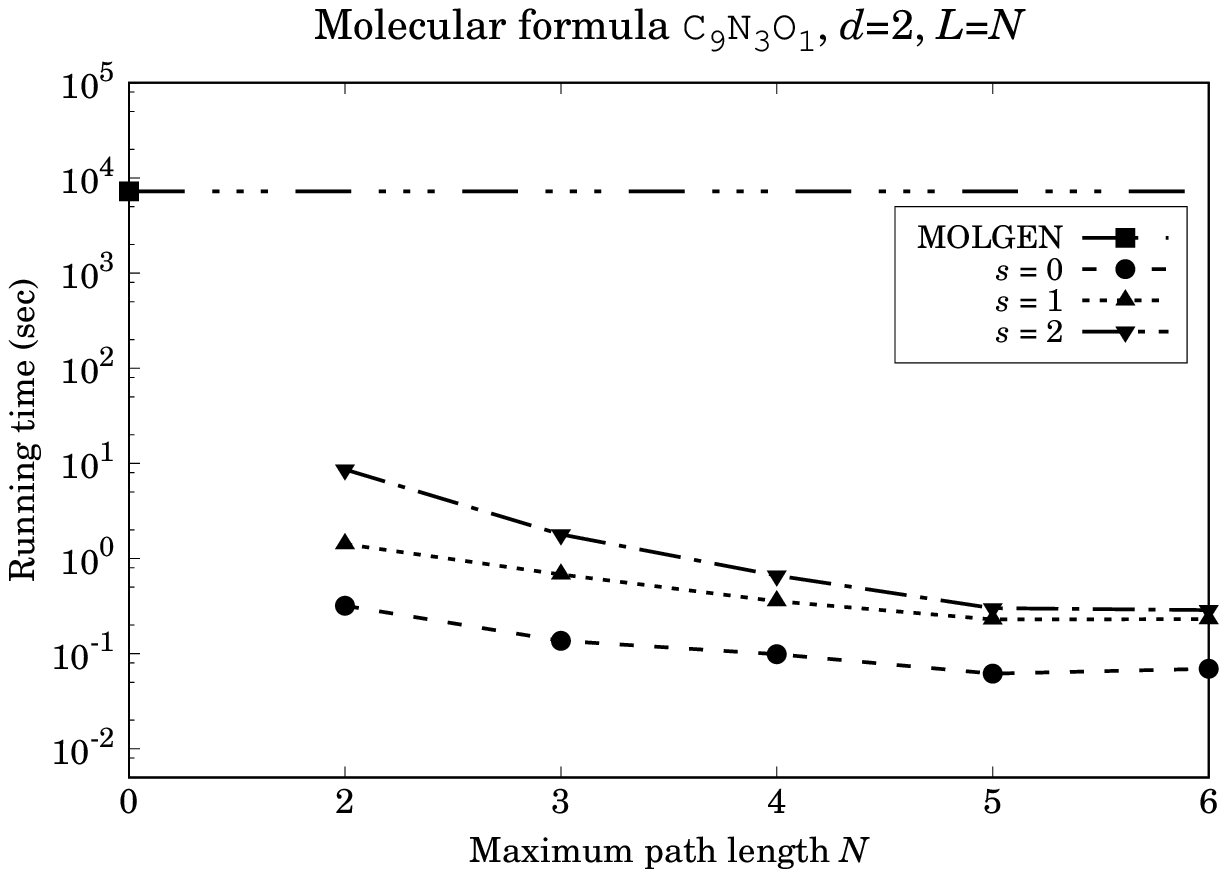}\\
      {\footnotesize (c)}\\
  \end{minipage} 
\hfill
  \begin{minipage}{0.45\textwidth}
   \centering
    \includegraphics[width=1.1\textwidth]{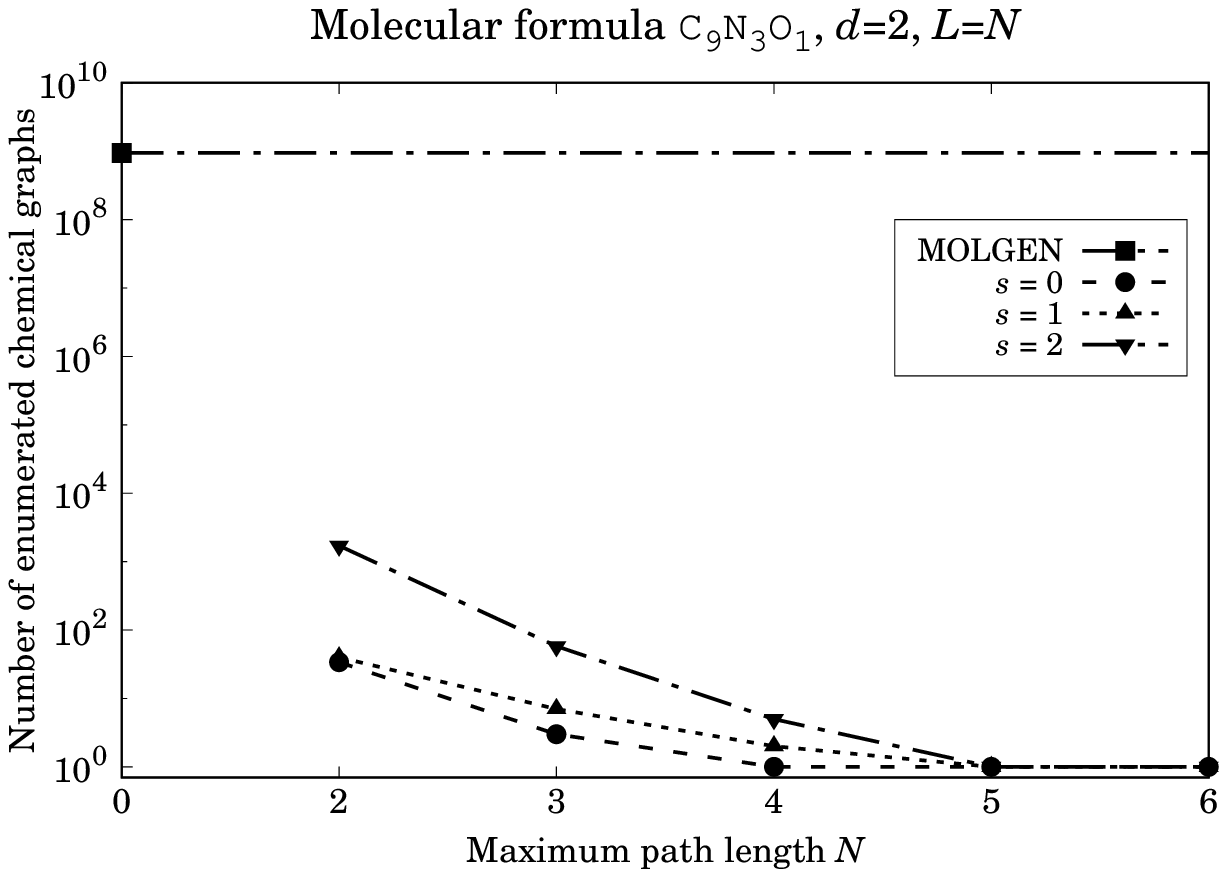}\\
    {\footnotesize (f)}\\
  \end{minipage} 
  \vspace{1cm}
  
  \caption{
    Plots showing the computation time 
    and number of chemical graphs enumerated by our algorithm
    for instance type EULF-$L$-A, as compared to MOLGEN.
    The sample structure from PubChem is with CID~10307896, 
    molecular formula {\tt C$_9$N$_3$O$_1$}, 
    and maximum bond multiplicity~$d=2$.
    (a)-(c)~Running time;
    (d)-(f)~Number of enumerated chemical graphs.
  }
 \label{fig:result_graphs_5}
 \end{figure}

 \begin{figure}[!ht]
  \begin{minipage}{0.45\textwidth}
   \centering
   \includegraphics[width=1.1\textwidth]{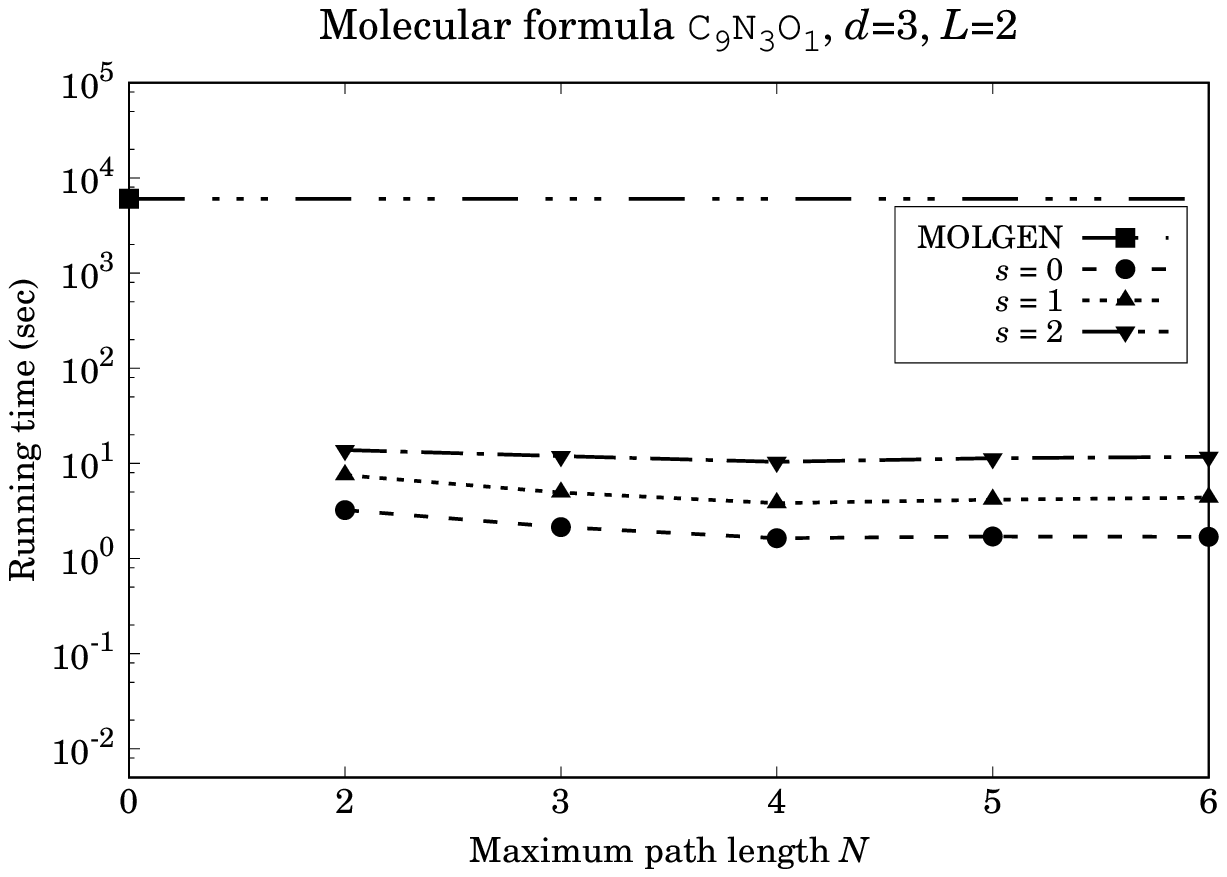}\\
   {\footnotesize (a)}\\
  \end{minipage}
\hfill
  \begin{minipage}{0.45\textwidth}
   \centering
   \includegraphics[width=1.1\textwidth]{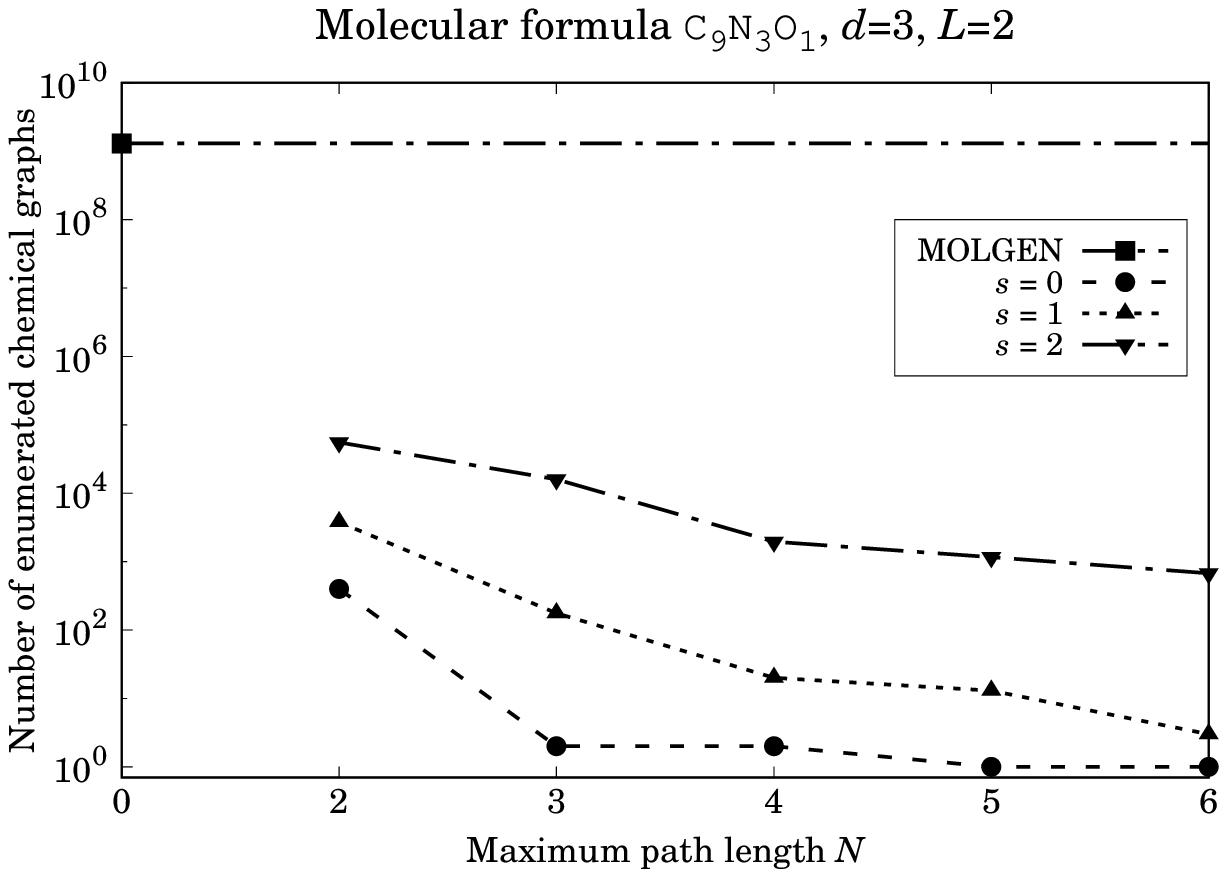}\\
   {\footnotesize (d)}\\
  \end{minipage} 
  \medskip

  \begin{minipage}{0.45\textwidth}
   \centering
      \includegraphics[width=1.1\textwidth]{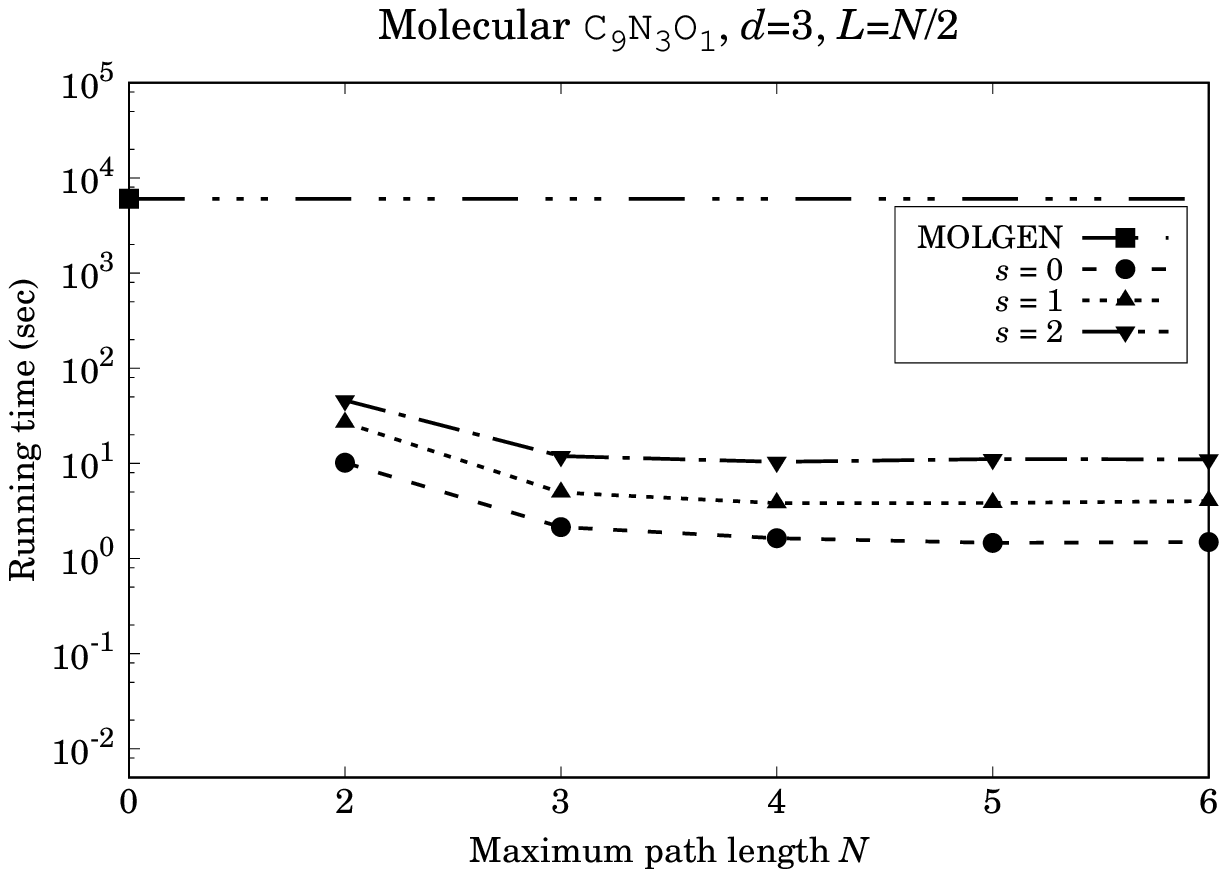}\\
      {\footnotesize (b)}\\
  \end{minipage} 
\hfill
  \begin{minipage}{0.45\textwidth}
   \centering
    \includegraphics[width=1.1\textwidth]{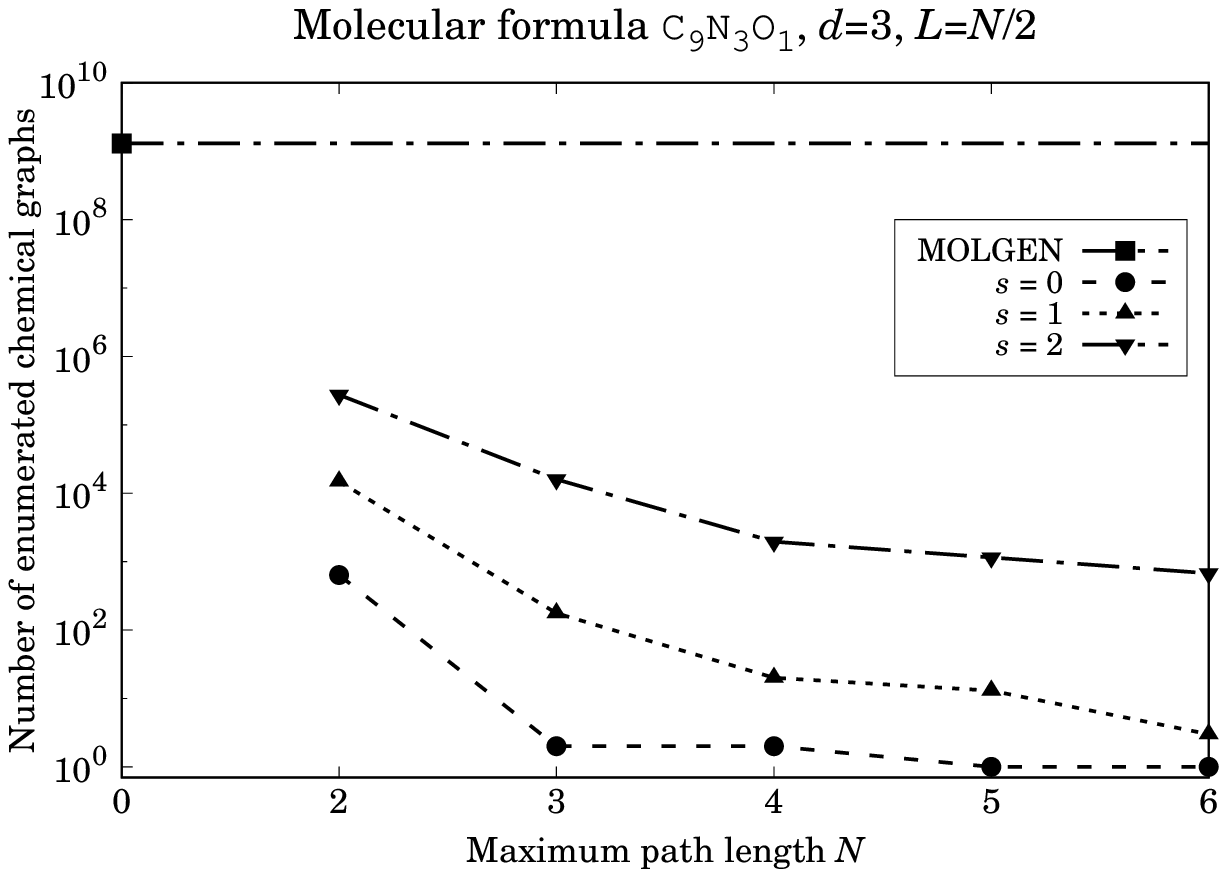}\\
    {\footnotesize (e)}\\
  \end{minipage} 
  \medskip

  \begin{minipage}{0.45\textwidth}
   \centering
      \includegraphics[width=1.1\textwidth]{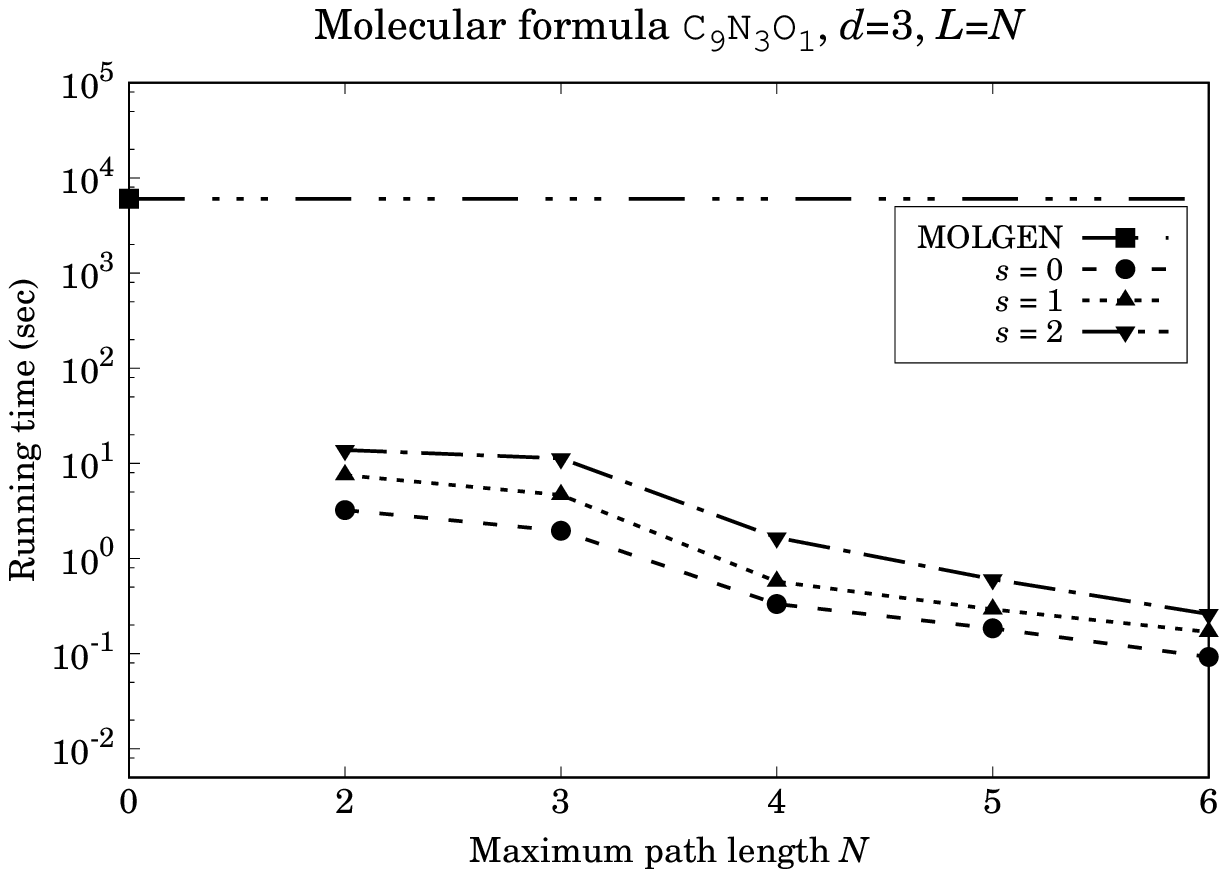}\\
      {\footnotesize (c)}\\
  \end{minipage} 
\hfill
  \begin{minipage}{0.45\textwidth}
   \centering
    \includegraphics[width=1.1\textwidth]{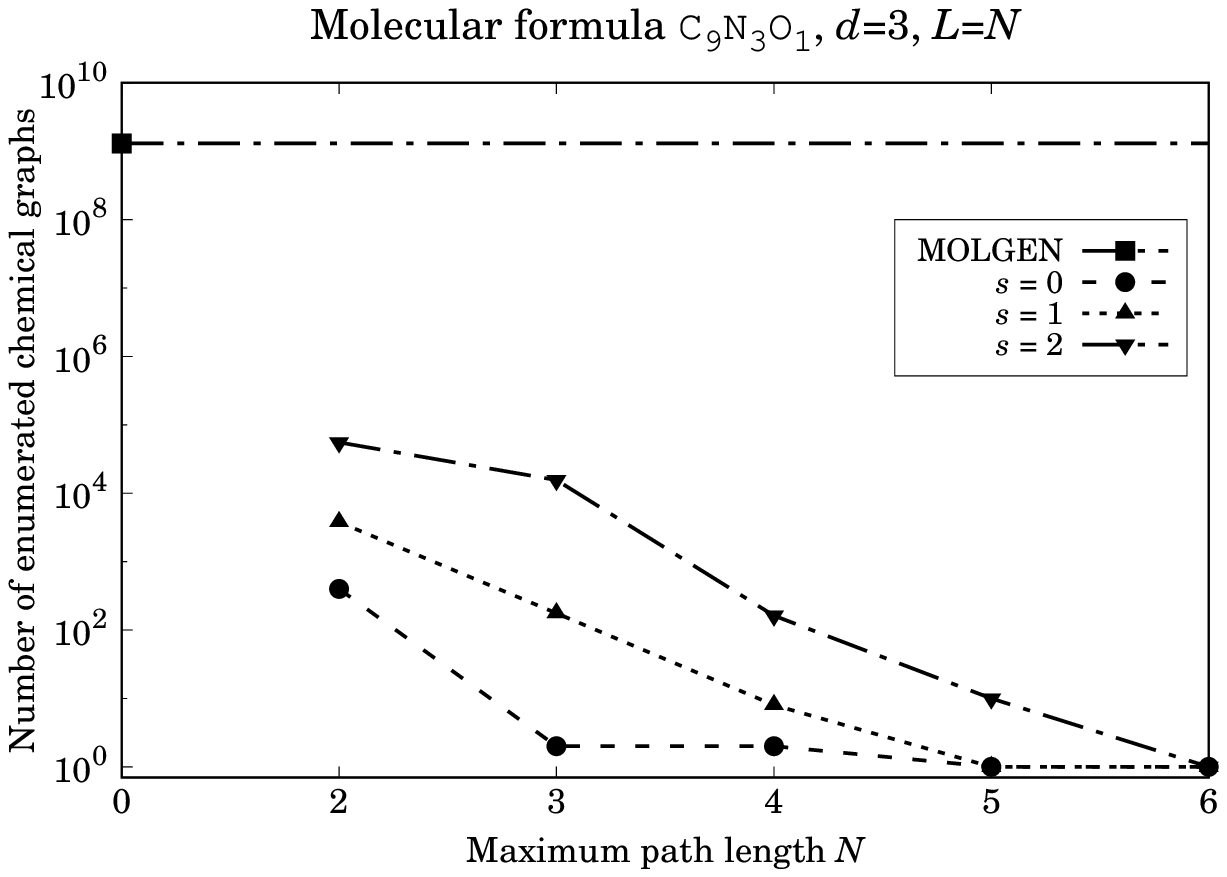}\\
    {\footnotesize (f)}\\
  \end{minipage} 
  \vspace{1cm}
  
  \caption{
    Plots showing the computation time 
    and number of chemical graphs enumerated by our algorithm
    for instance type EULF-$L$-A, as compared to MOLGEN.
    The sample structure from PubChem is with CID~10307899, 
    molecular formula {\tt C$_9$N$_3$O$_1$}, 
    and maximum bond multiplicity~$d=3$.
    (a)-(c)~Running time;
    (d)-(f)~Number of enumerated chemical graphs.
  }
 \label{fig:result_graphs_6}
 \end{figure}

 In addition, to check the limits as to the maximum number of vertices
 in graphs that can be enumerated in a reasonable time.
 Preliminary experiments indicated that the time limit of $3, 600$
 is reached when the number $n$ of 
 vertices is around 15, and
 we conducted experiments over a range $n \in [15, 45]$ for the number 
 of vertices in a target chemical graph.
 For a fixed number $n$ of vertices, 
 we tested two types of instances, one with molecular formula $\mathtt{C}_n$, 
 and the other with molecular formula  $\mathtt{C}_{n-4} \mathtt{N}_2 \mathtt{O}_2$, 
 and set an execution time limit of $3, 600$ seconds.
 The results are summarized in Fig.~\ref{fig:result_graphs_num_limit}.
 From Fig.~\ref{fig:result_graphs_num_limit}\, (a) and~(b), we see that the
 program still enumerates 
 structures within the time limit up to $n=30$, after which there are cases when
 not even a single chemical graph is enumerated during the time limit.

   \begin{figure}[!ht]
  \begin{minipage}{0.45\textwidth}
   \centering
   \includegraphics[width=1.1\textwidth]{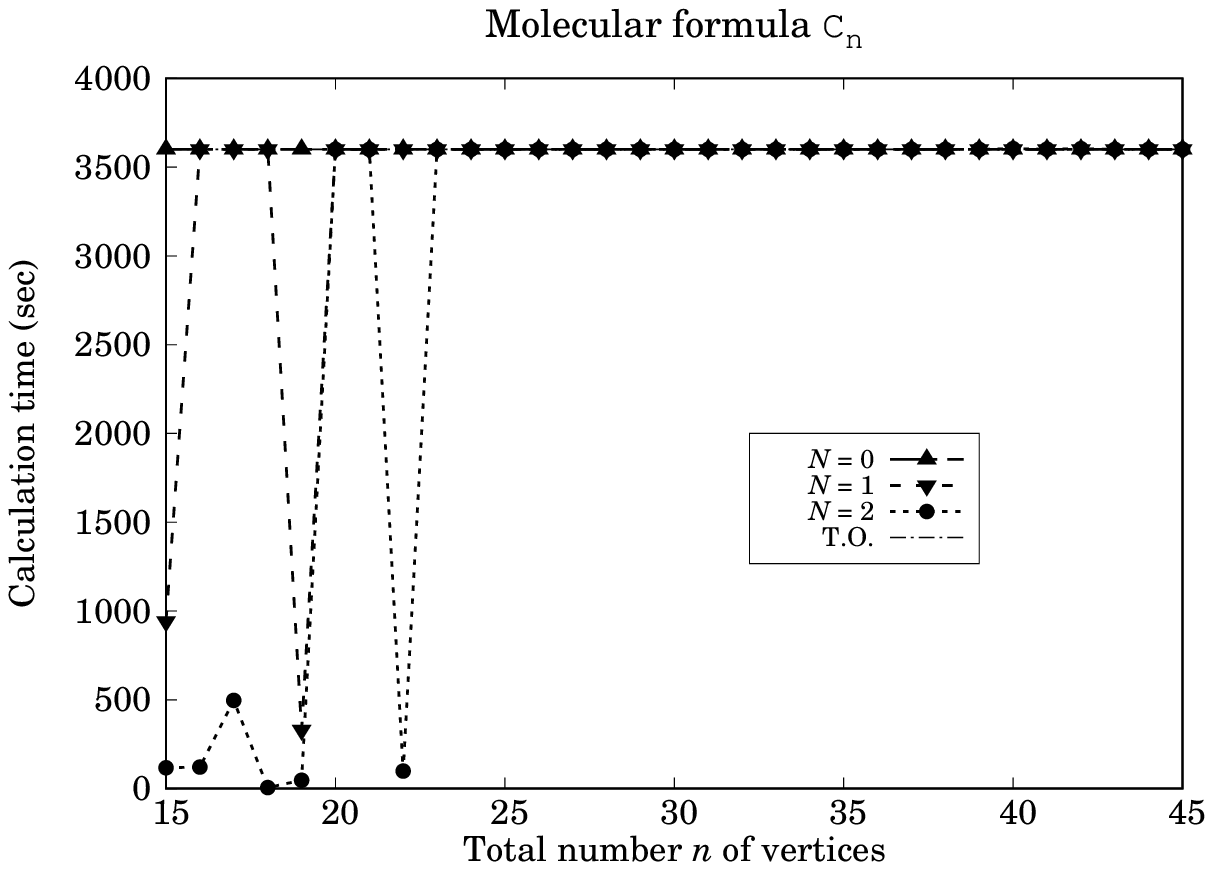}\\
   {\footnotesize (a)}\\
  \end{minipage}
\hfill
  \begin{minipage}{0.45\textwidth}
   \centering
   \includegraphics[width=1.1\textwidth]{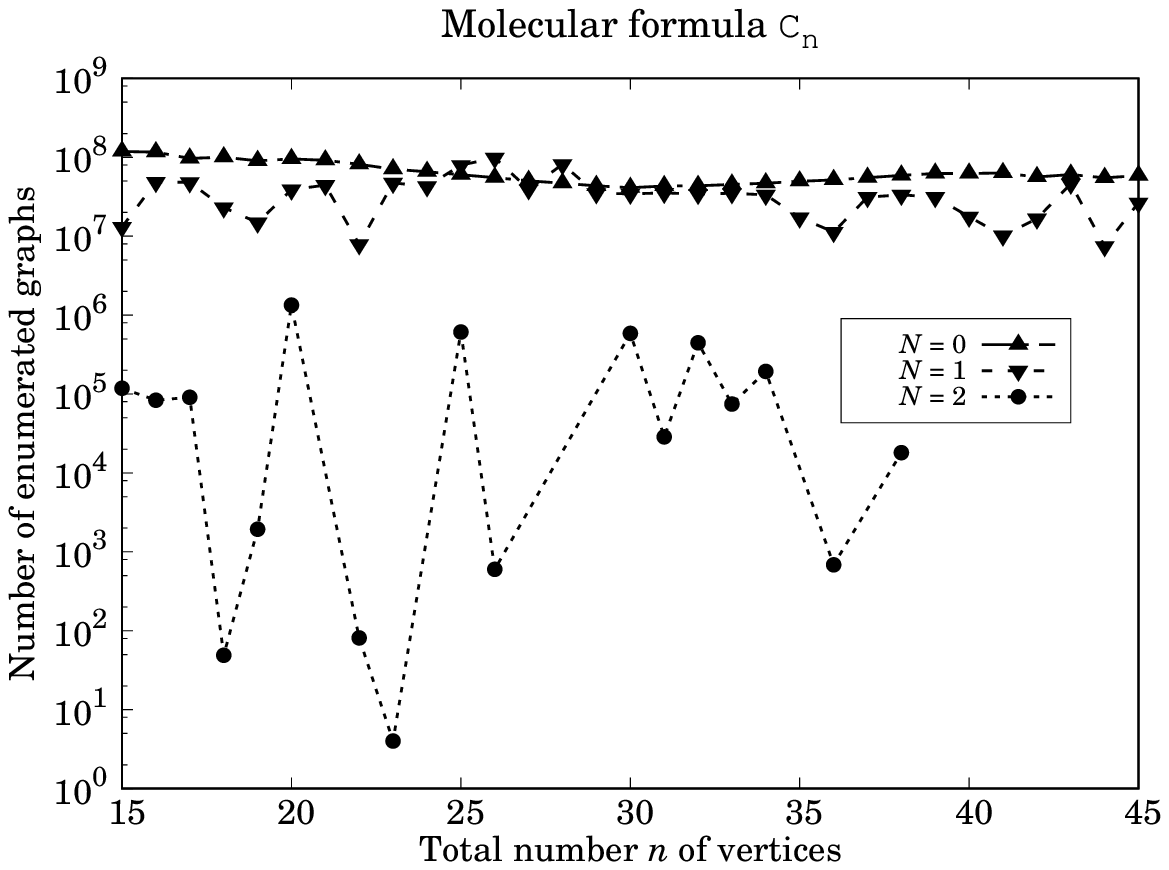}\\
   {\footnotesize (c)}\\
  \end{minipage} 
  \medskip

  \begin{minipage}{0.45\textwidth}
   \centering
      \includegraphics[width=1.1\textwidth]{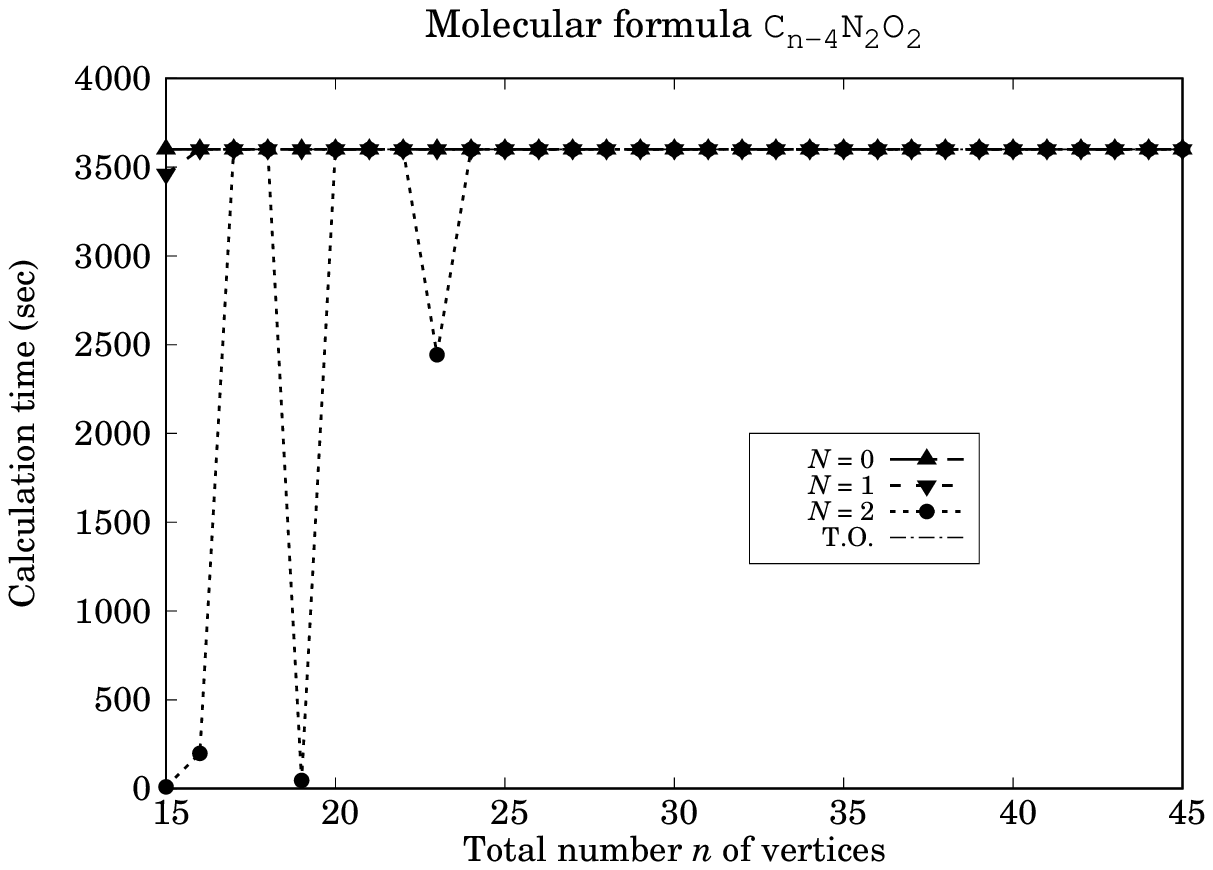}\\
      {\footnotesize (b)}\\
  \end{minipage} 
\hfill
  \begin{minipage}{0.45\textwidth}
   \centering
    \includegraphics[width=1.1\textwidth]{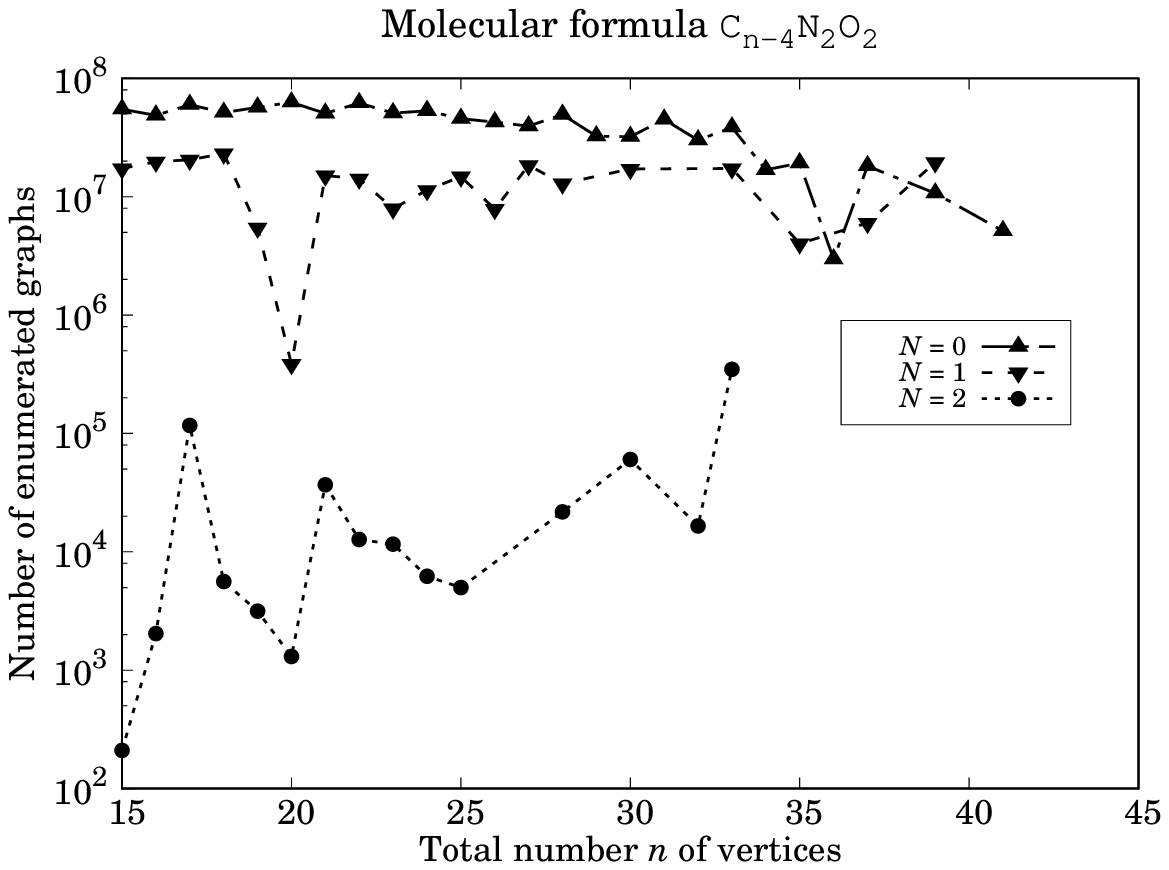}\\
    {\footnotesize (d)}\\
  \end{minipage} 
  \medskip
  
  \vspace{1cm}
  
  \caption{
    Plots showing the computation time 
    and number of chemical graphs enumerated by our algorithm
    for instance type EULF-$L$-A, over ranges for $N \in [0, 2]$, $L = N$, 
    $d = 3$ and $s = 0$.
    (a), (b)~Running time, T.O. stands for ``Time Out'';
    (c), (d)~Number of enumerated chemical graphs.
  }
 \label{fig:result_graphs_num_limit}
 \end{figure}

 \subsection{Experimental Results on EULF-$L$-P}
 \label{sec:experiments_EULF2}

We conducted similar computational 
experiments to test the performance of
our algorithm for Problem EULF-$L$-P
as in Section~\ref{sec:experiments_EULF1}.
We took values for $N \in [8, 11]$, 
and $L \in \{ 2, 3 \}$.

The results from our experiments 
for instance type EULF-$L$-P are summarized in 
Figs.~\ref{fig:result_graphs_1.2} to~\ref{fig:result_graphs_6.2}.
Our results for instance type EULF-$L$-P 
indicate that there in fact do not exist chemical graphs that
satisfy the path frequency specification for
our choice of a set $\pathset$ of colored paths
obtained from the six compounds from the PubChem database, parameter~$L$
for prescribed maximum path length less than the maximum
path length in the sample chemical graph, in our case $N \leq 10$.
This could be due to the structure of bi-block 2-augmented structures, 
namely, a single path frequency specification
has a strong influence on the structure of a chemical graph.

In addition, we observe that the running time of our algorithm, even when there are no
enumerated chemical graphs, grows rapidly, an order of magnitude,
with the value of the parameter~$L$.
It is an interesting idea for future research to 
improve our algorithm in such a way that the non-existence of any
chemical graphs that satisfy a given path frequency specification is determined 
much quicker.

  \begin{figure}[!ht]
  \begin{minipage}{0.45\textwidth}
   \centering
   \includegraphics[width=1.1\textwidth]{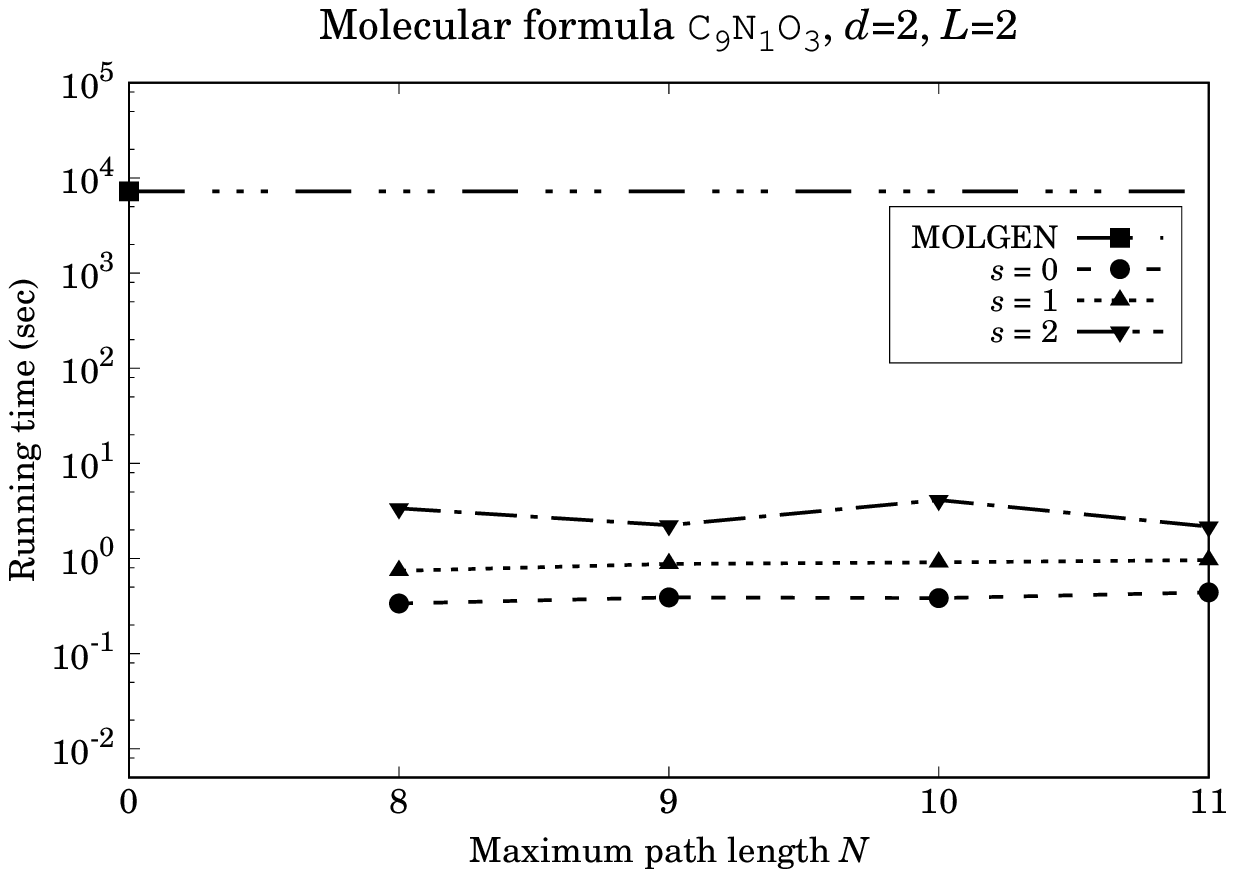}\\
   {\footnotesize (a)}\\
  \end{minipage}
\hfill
  \begin{minipage}{0.45\textwidth}
   \centering
   \includegraphics[width=1.1\textwidth]{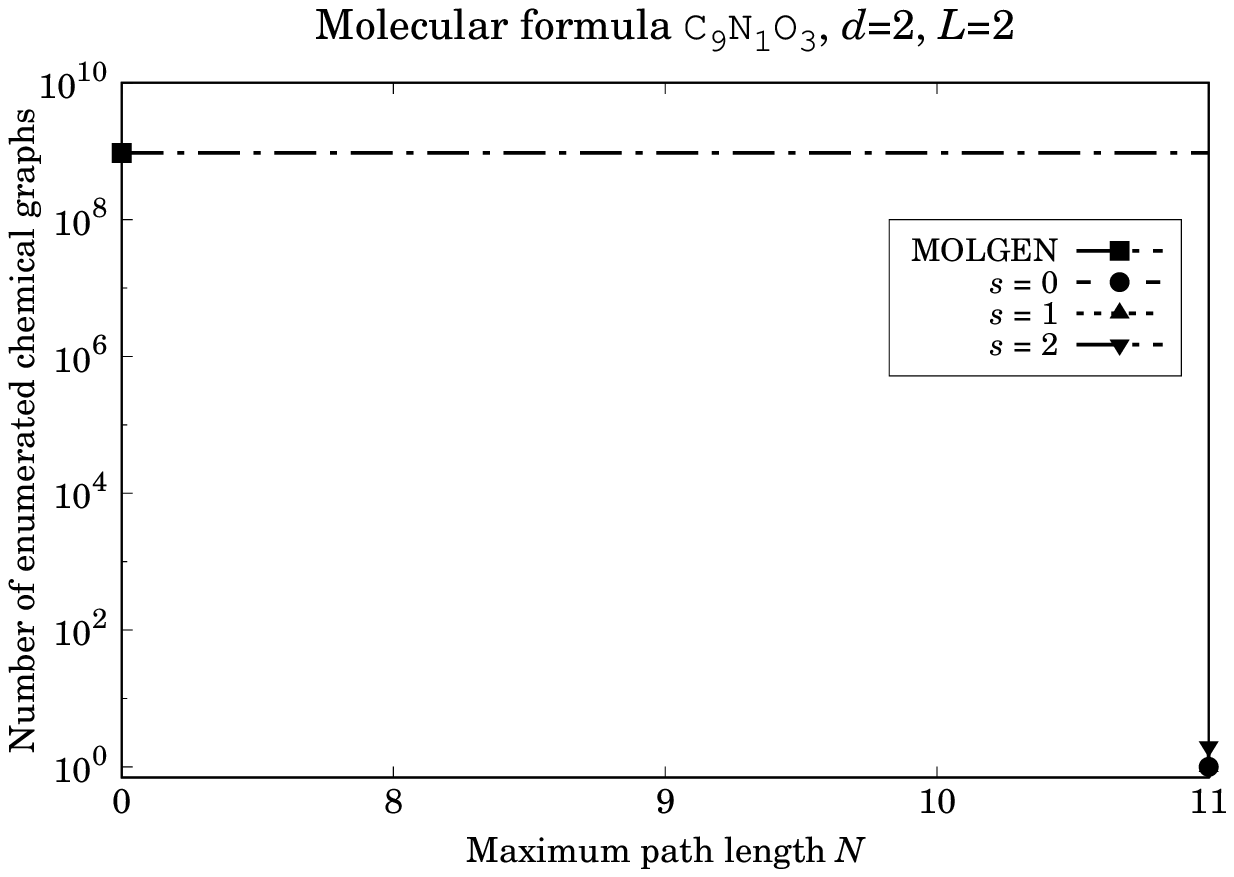}\\
   {\footnotesize (c)}\\
  \end{minipage} 
  \medskip

  \begin{minipage}{0.45\textwidth}
   \centering
      \includegraphics[width=1.1\textwidth]{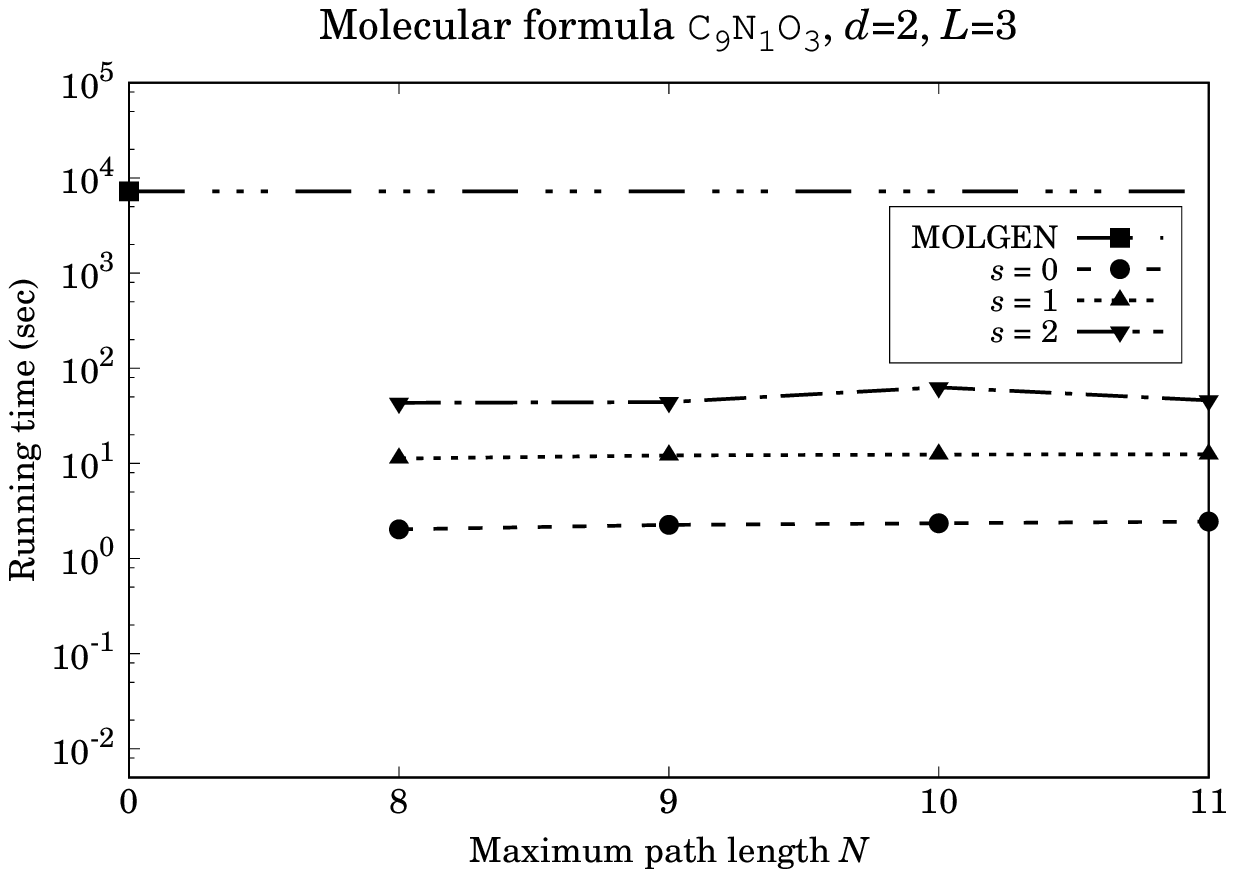}\\
      {\footnotesize (b)}\\
  \end{minipage} 
\hfill
  \begin{minipage}{0.45\textwidth}
   \centering
    \includegraphics[width=1.1\textwidth]{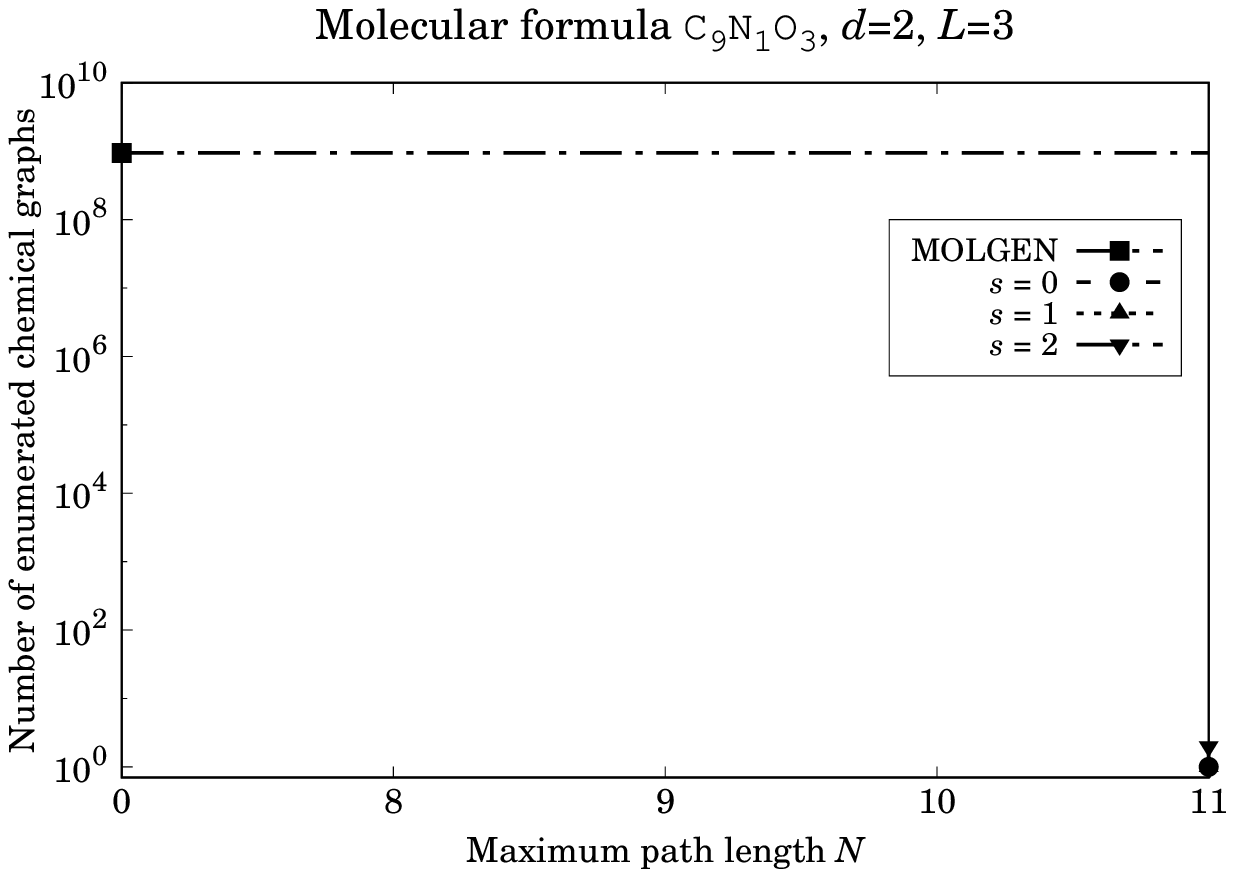}\\
    {\footnotesize (d)}\\
  \end{minipage} 
  \medskip
  
  \vspace{1cm}
  
  \caption{
    Plots showing the computation time 
    and number of chemical graphs enumerated by our algorithm
    for instance type EULF-$L$-P, as compared to MOLGEN.
    The sample structure from PubChem is with CID~130964701, 
    molecular formula {\tt C$_9$N$_1$O$_3$}, 
    and maximum bond multiplicity~$d=2$.
    (a), (b)~Running time;
    (c), (d)~Number of enumerated chemical graphs.
  }
 \label{fig:result_graphs_1.2}
 \end{figure}

  \begin{figure}[!ht]
  \begin{minipage}{0.45\textwidth}
   \centering
   \includegraphics[width=1.1\textwidth]{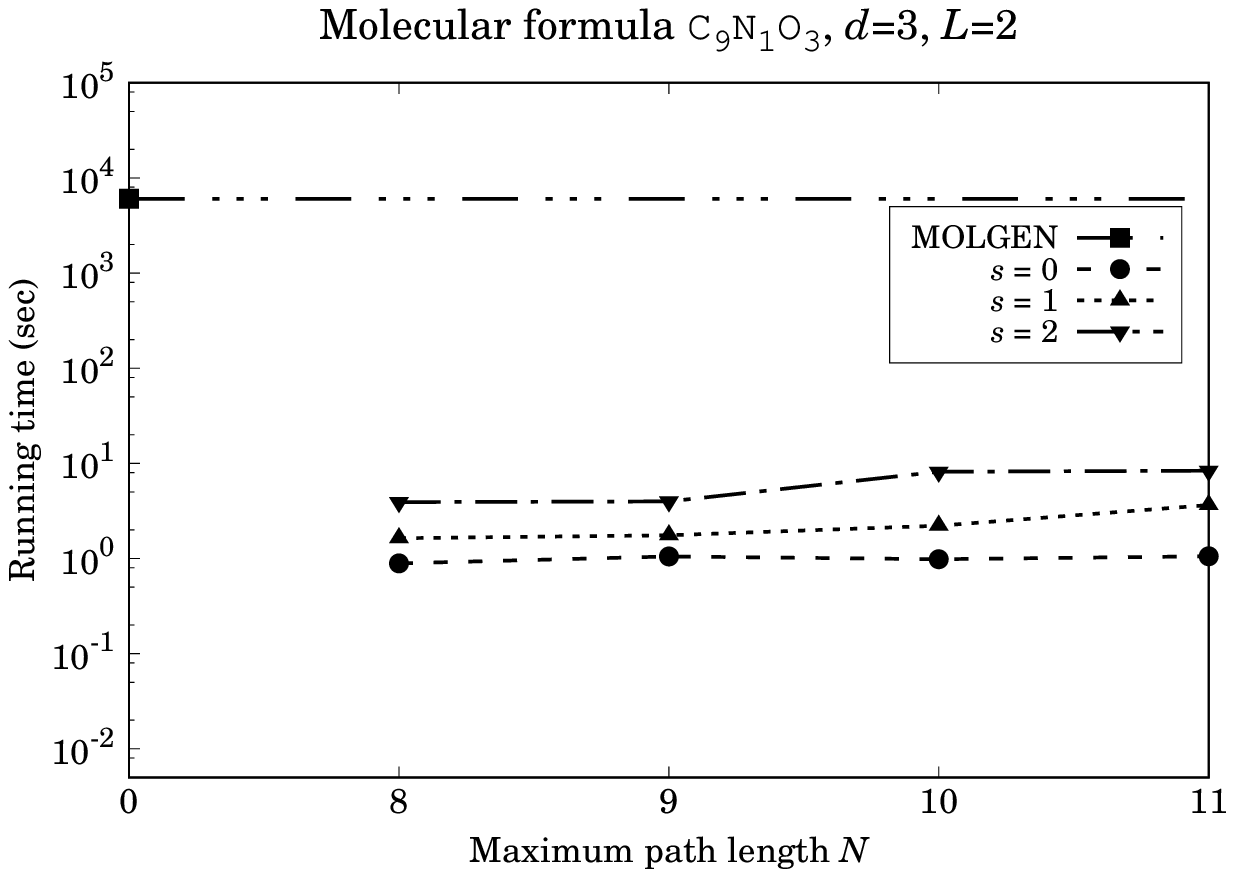}\\
   {\footnotesize (a)}\\
  \end{minipage}
\hfill
  \begin{minipage}{0.45\textwidth}
   \centering
   \includegraphics[width=1.1\textwidth]{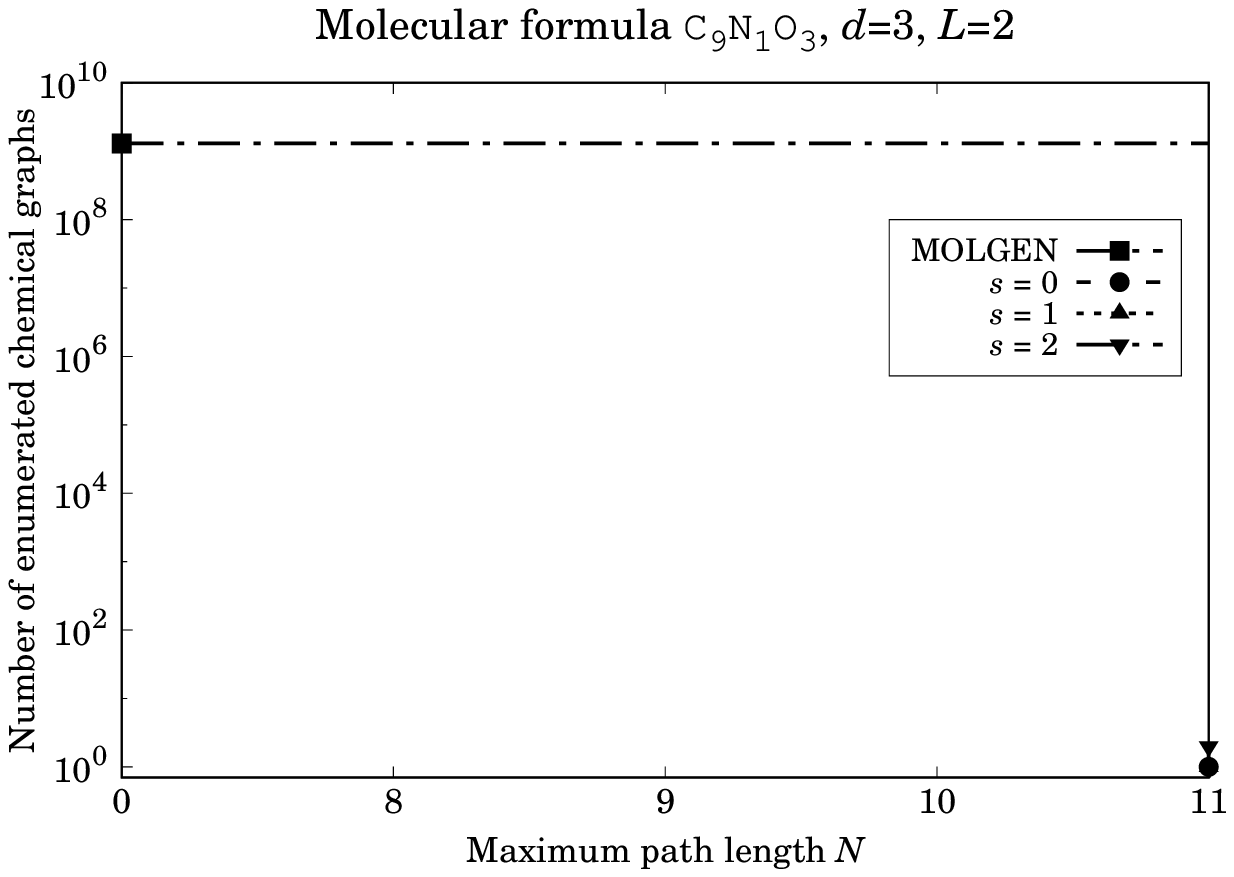}\\
   {\footnotesize (c)}\\
  \end{minipage} 
  \medskip

  \begin{minipage}{0.45\textwidth}
   \centering
      \includegraphics[width=1.1\textwidth]{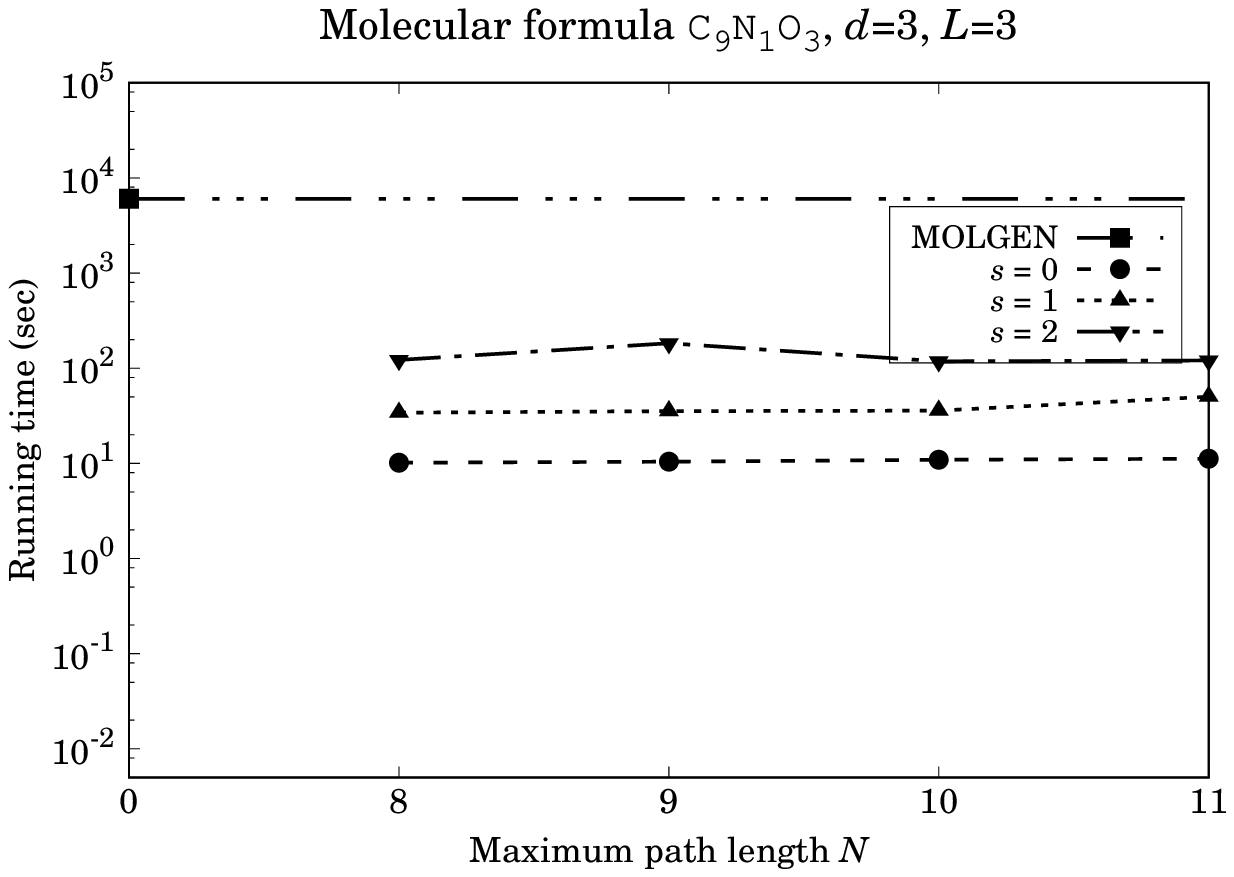}\\
      {\footnotesize (b)}\\
  \end{minipage} 
\hfill
  \begin{minipage}{0.45\textwidth}
   \centering
    \includegraphics[width=1.1\textwidth]{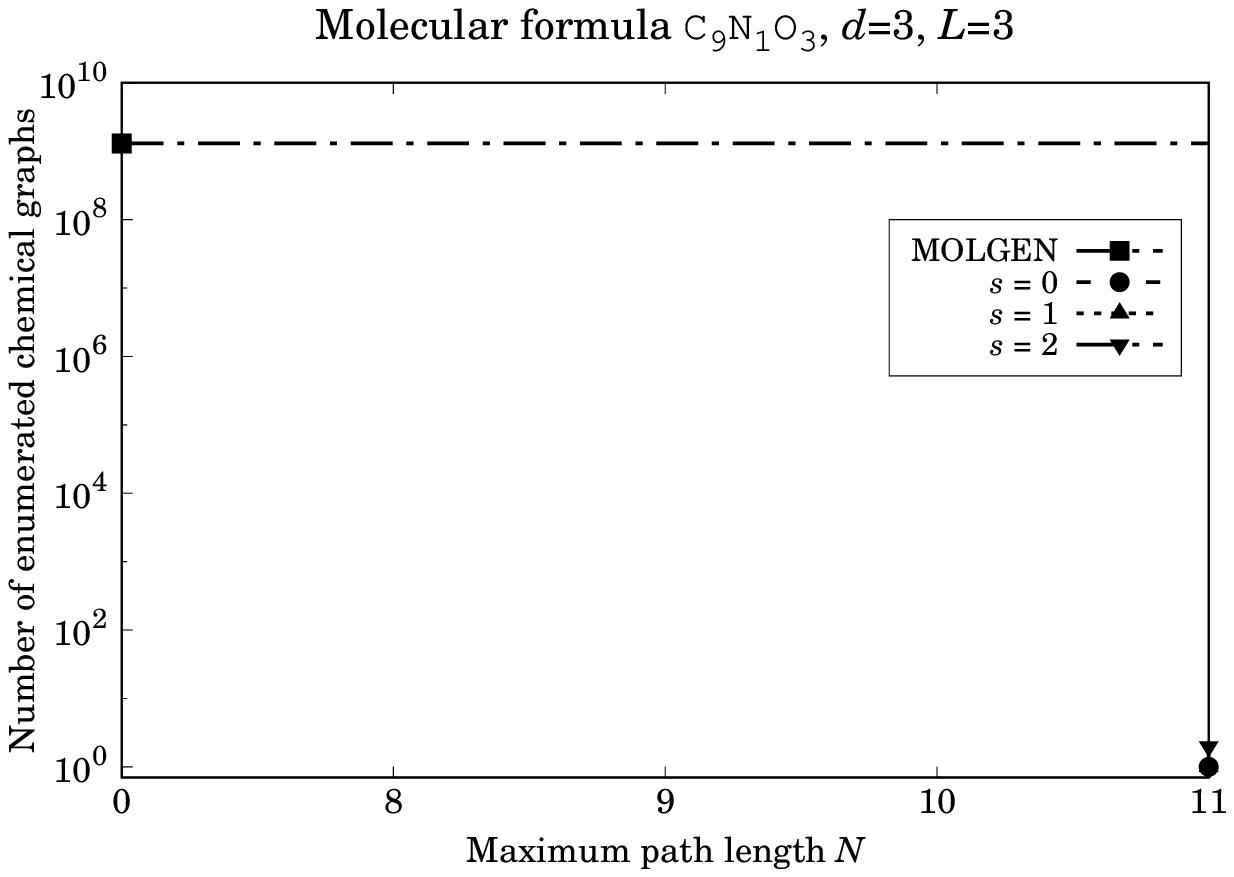}\\
    {\footnotesize (d)}\\
  \end{minipage} 
  \medskip
  
  \vspace{1cm}
  
  \caption{
    Plots showing the computation time 
    and number of chemical graphs enumerated by our algorithm
    for instance type EULF-$L$-P, as compared to MOLGEN.
    The sample structure from PubChem is with CID~131152558, 
    molecular formula {\tt C$_9$N$_1$O$_3$}, 
    and maximum bond multiplicity~$d=3$.
    (a), (b)~Running time;
    (c), (d)~Number of enumerated chemical graphs
    (our algorithm detects that there are no chemical graphs that
    satisfy the given path frequency specification).
  }
 \label{fig:result_graphs_2.2}
 \end{figure}

  \begin{figure}[!ht]
  \begin{minipage}{0.45\textwidth}
   \centering
   \includegraphics[width=1.1\textwidth]{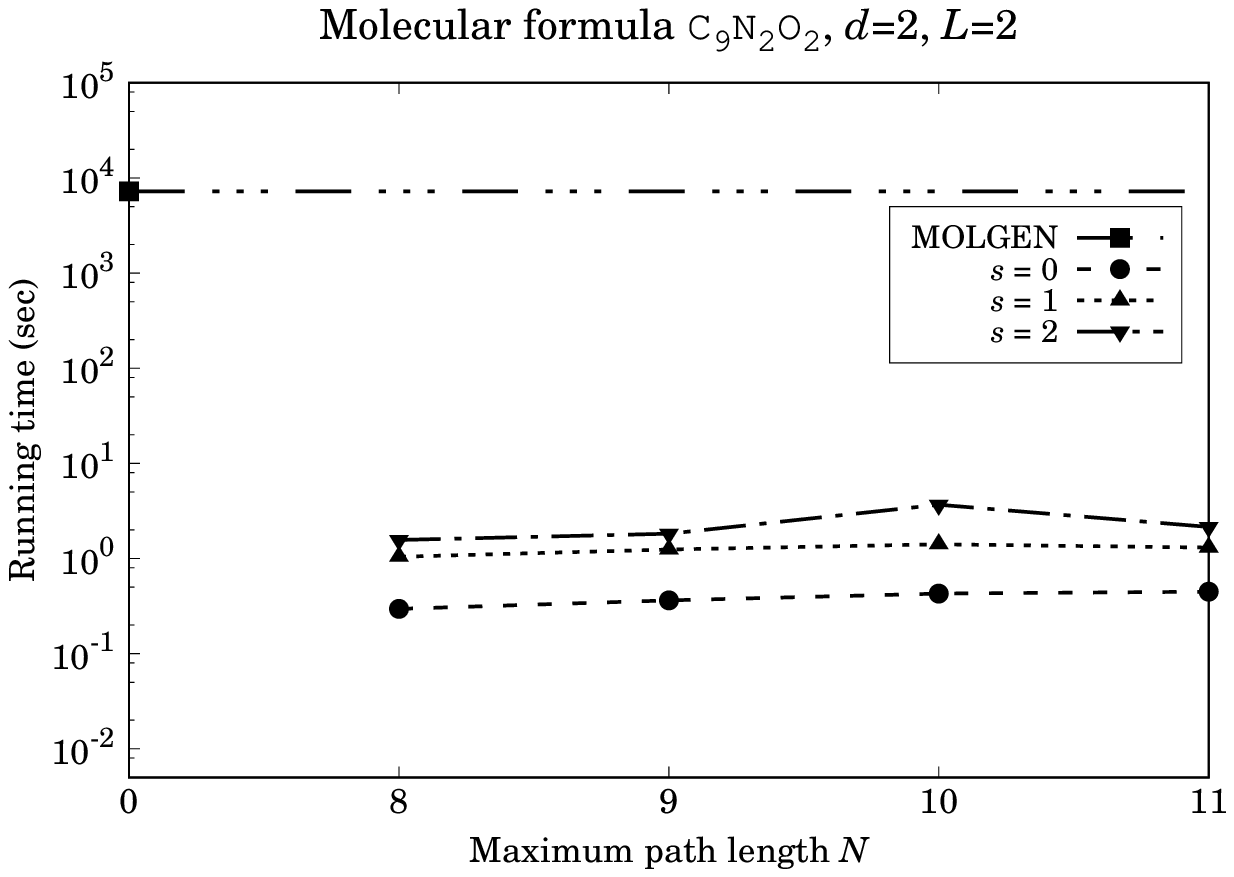}\\
   {\footnotesize (a)}\\
  \end{minipage}
\hfill
  \begin{minipage}{0.45\textwidth}
   \centering
   \includegraphics[width=1.1\textwidth]{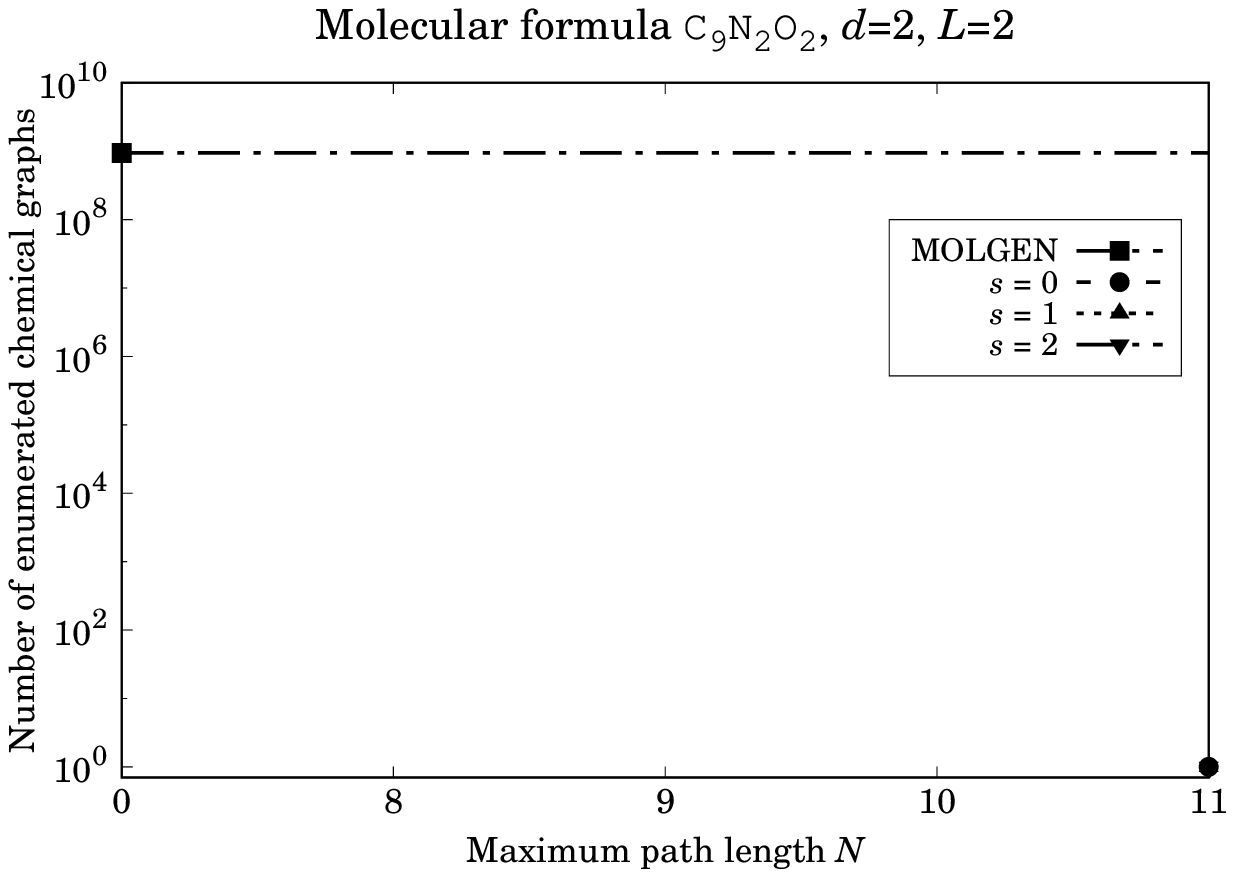}\\
   {\footnotesize (c)}\\
  \end{minipage} 
  \medskip

  \begin{minipage}{0.45\textwidth}
   \centering
      \includegraphics[width=1.1\textwidth]{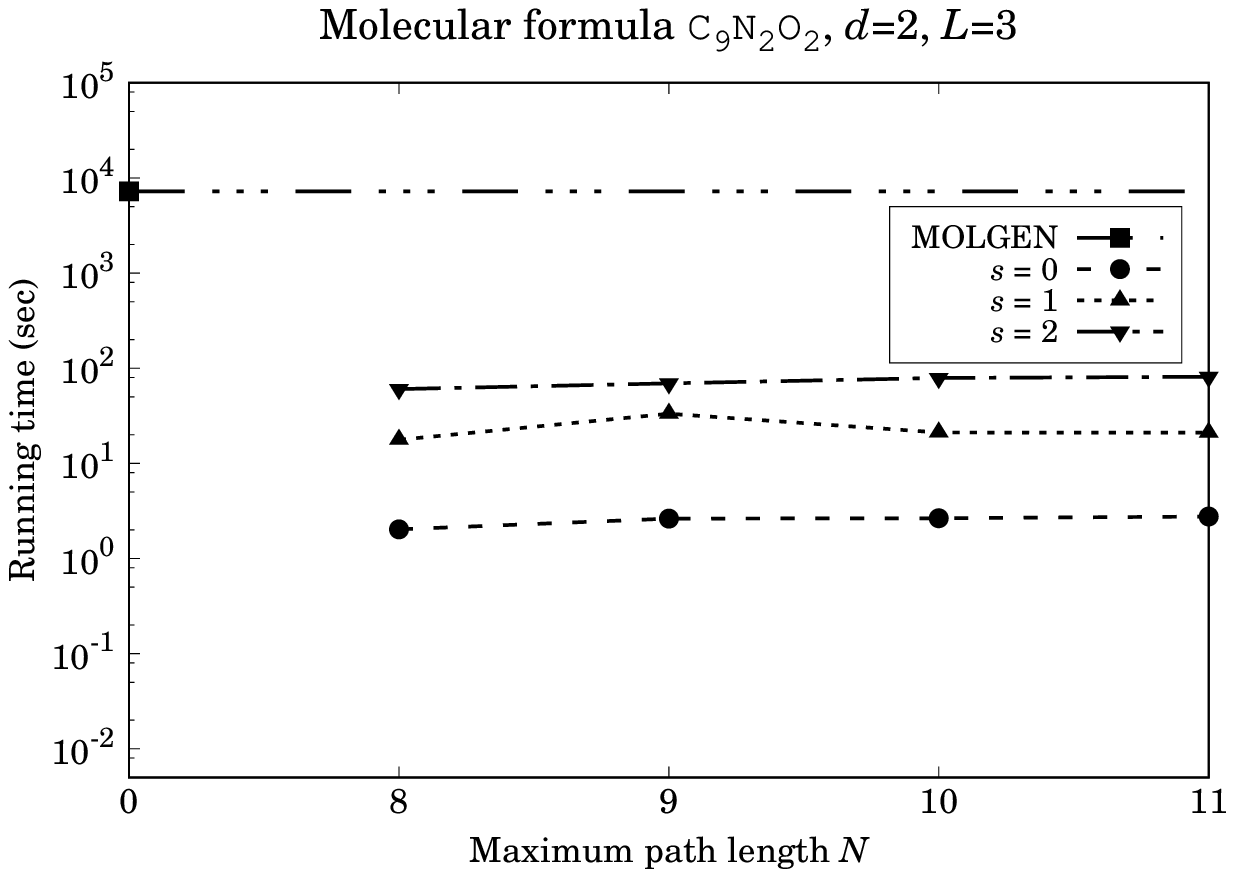}\\
      {\footnotesize (b)}\\
  \end{minipage} 
\hfill
  \begin{minipage}{0.45\textwidth}
   \centering
    \includegraphics[width=1.1\textwidth]{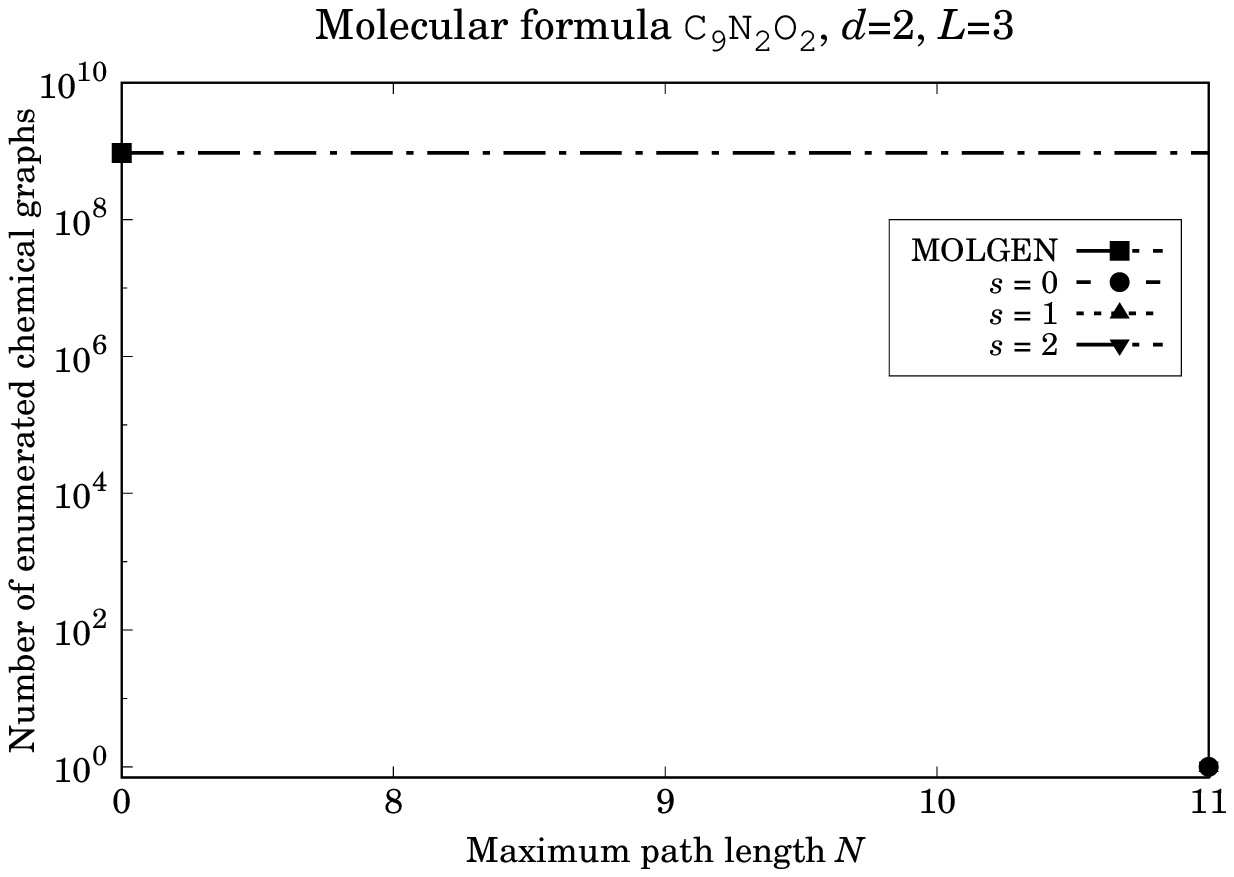}\\
    {\footnotesize (d)}\\
  \end{minipage} 
  \medskip
  
  \vspace{1cm}
  
  \caption{
    Plots showing the computation time 
    and number of chemical graphs enumerated by our algorithm
    for instance type EULF-$L$-P, as compared to MOLGEN.
    The sample structure from PubChem is with CID~742733, 
    molecular formula {\tt C$_9$N$_2$O$_2$}, 
    and maximum bond multiplicity~$d=2$.
    (a), (b)~Running time;
    (c), (d)~Number of enumerated chemical graphs
    (our algorithm detects that there are no chemical graphs that
    satisfy the given path frequency specification).
  }
 \label{fig:result_graphs_3.2}
 \end{figure}

  \begin{figure}[!ht]
  \begin{minipage}{0.45\textwidth}
   \centering
   \includegraphics[width=1.1\textwidth]{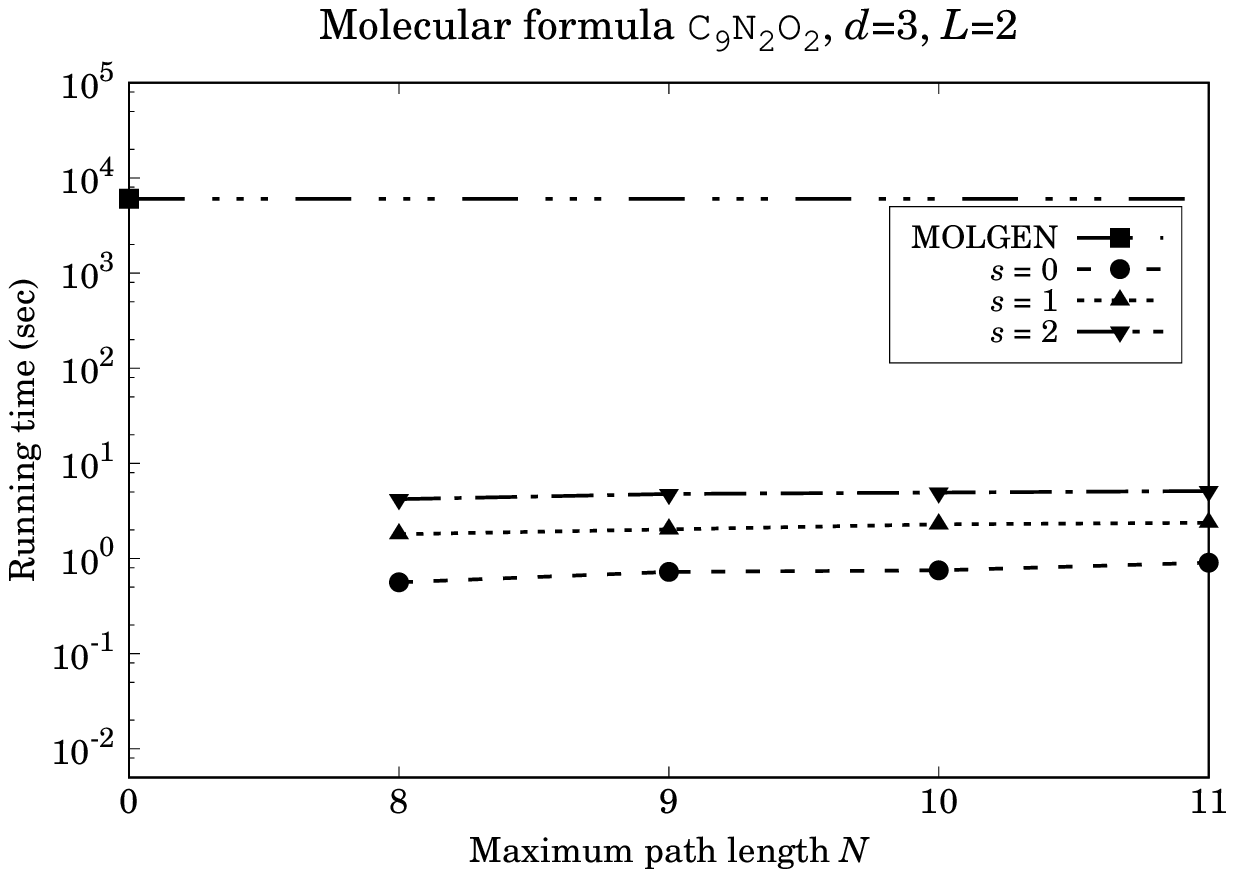}\\
   {\footnotesize (a)}\\
  \end{minipage}
\hfill
  \begin{minipage}{0.45\textwidth}
   \centering
   \includegraphics[width=1.1\textwidth]{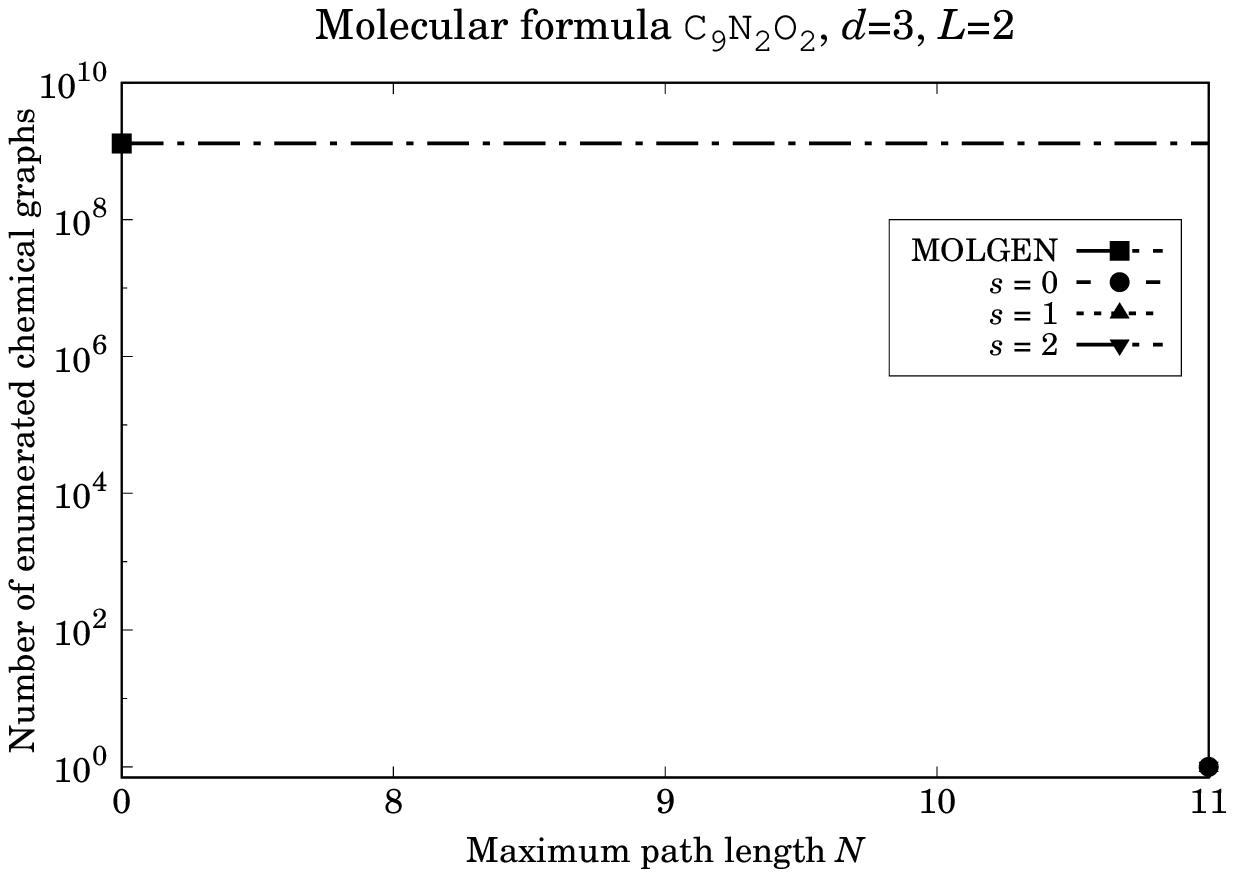}\\
   {\footnotesize (c)}\\
  \end{minipage} 
  \medskip

  \begin{minipage}{0.45\textwidth}
   \centering
      \includegraphics[width=1.1\textwidth]{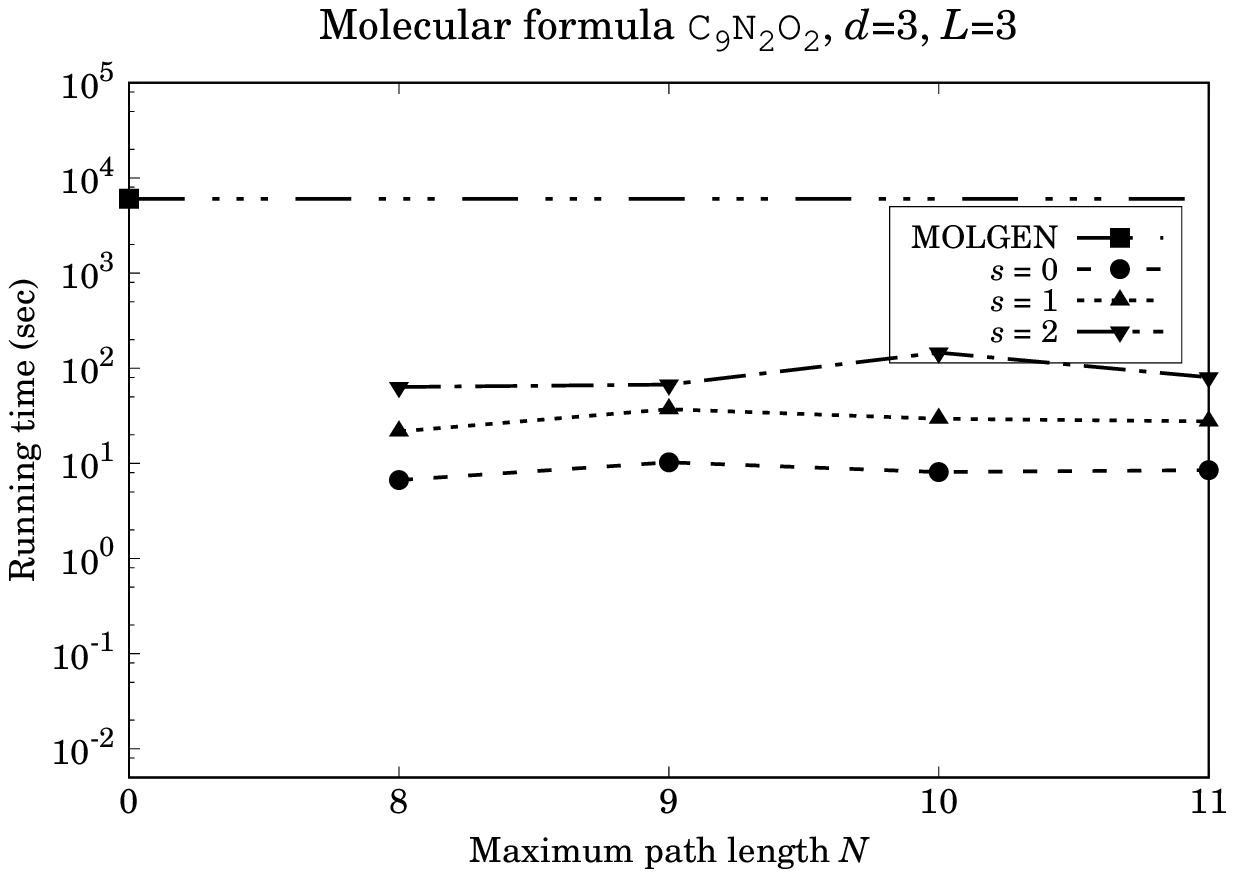}\\
      {\footnotesize (b)}\\
  \end{minipage} 
\hfill
  \begin{minipage}{0.45\textwidth}
   \centering
    \includegraphics[width=1.1\textwidth]{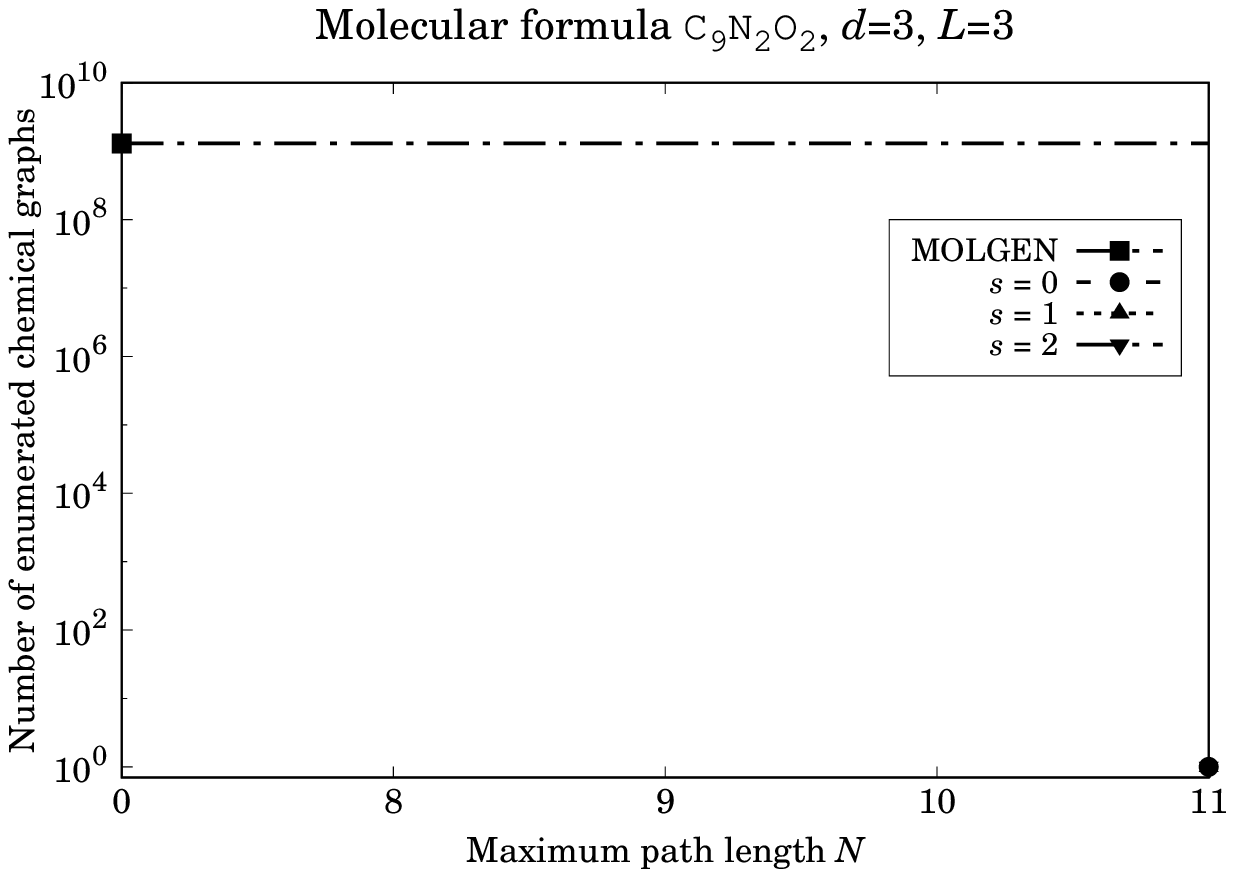}\\
    {\footnotesize (d)}\\
  \end{minipage} 
  \medskip
  
  \vspace{1cm}
  
  \caption{
    Plots showing the computation time 
    and number of chemical graphs enumerated by our algorithm
    for instance type EULF-$L$-P, as compared to MOLGEN.
    The sample structure from PubChem is with CID~23461643, 
    molecular formula {\tt C$_9$N$_2$O$_2$}, 
    and maximum bond multiplicity~$d=3$.
    (a), (b)~Running time;
    (c), (d)~Number of enumerated chemical graphs
    (our algorithm detects that there are no chemical graphs that
    satisfy the given path frequency specification).
  }
 \label{fig:result_graphs_4.2}
 \end{figure}

  \begin{figure}[!ht]
  \begin{minipage}{0.45\textwidth}
   \centering
   \includegraphics[width=1.1\textwidth]{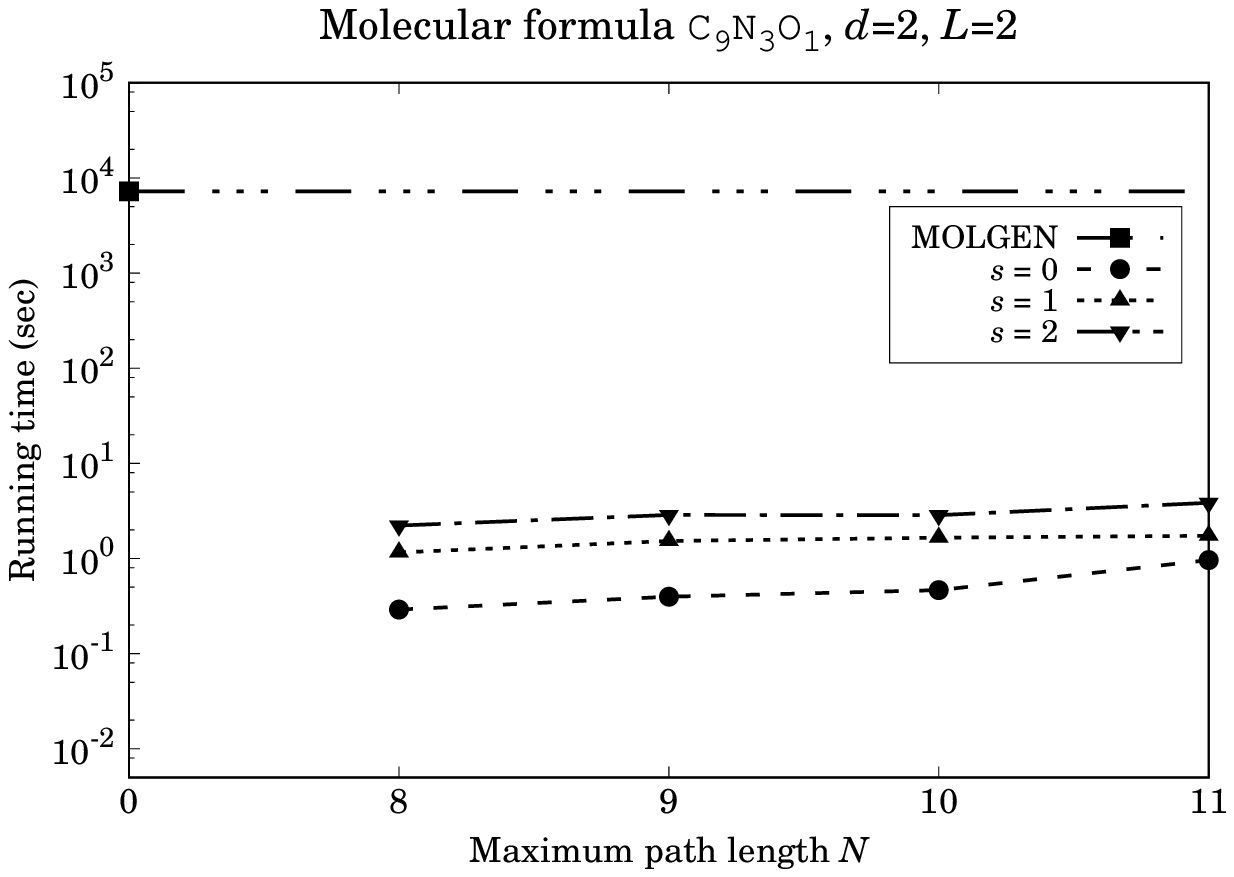}\\
   {\footnotesize (a)}\\
  \end{minipage}
\hfill
  \begin{minipage}{0.45\textwidth}
   \centering
   \includegraphics[width=1.1\textwidth]{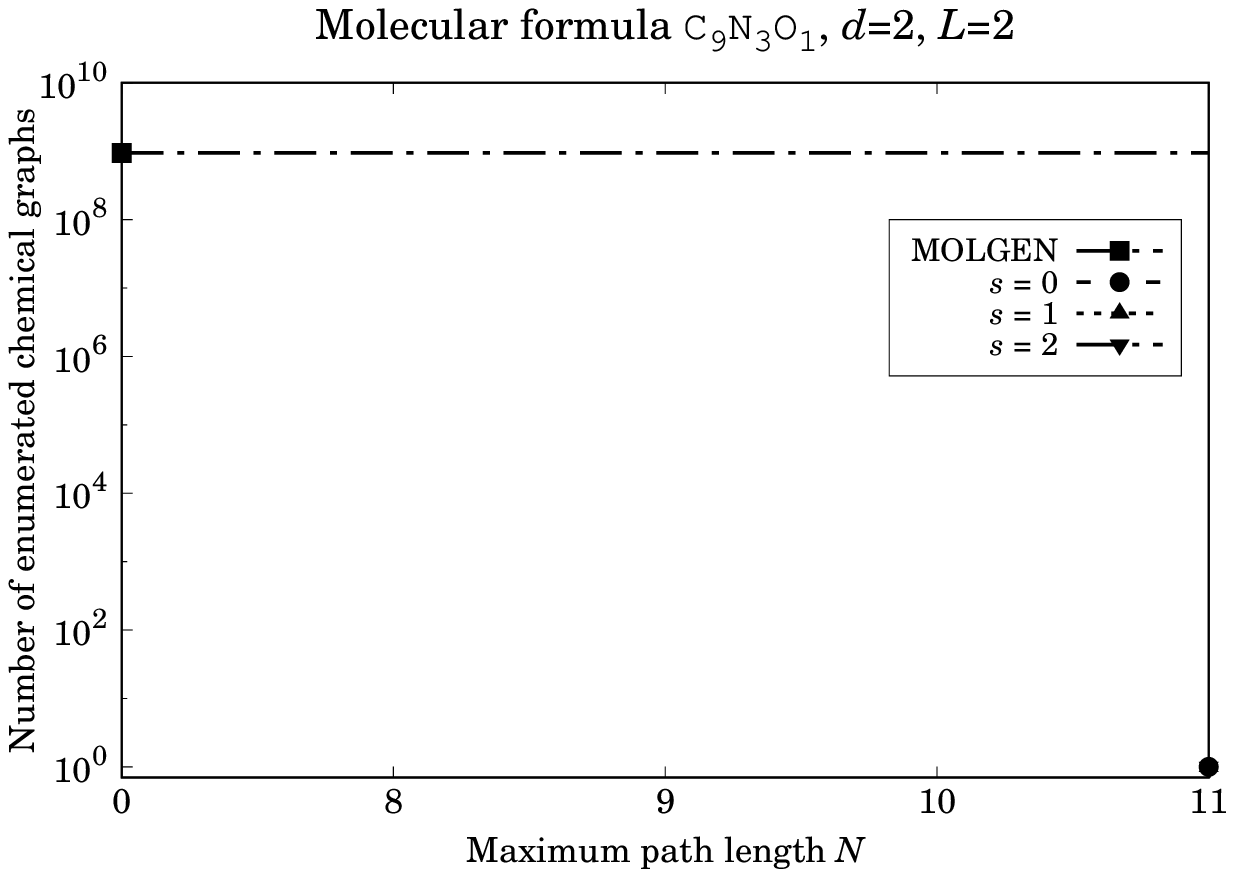}\\
   {\footnotesize (c)}\\
  \end{minipage} 
  \medskip

  \begin{minipage}{0.45\textwidth}
   \centering
      \includegraphics[width=1.1\textwidth]{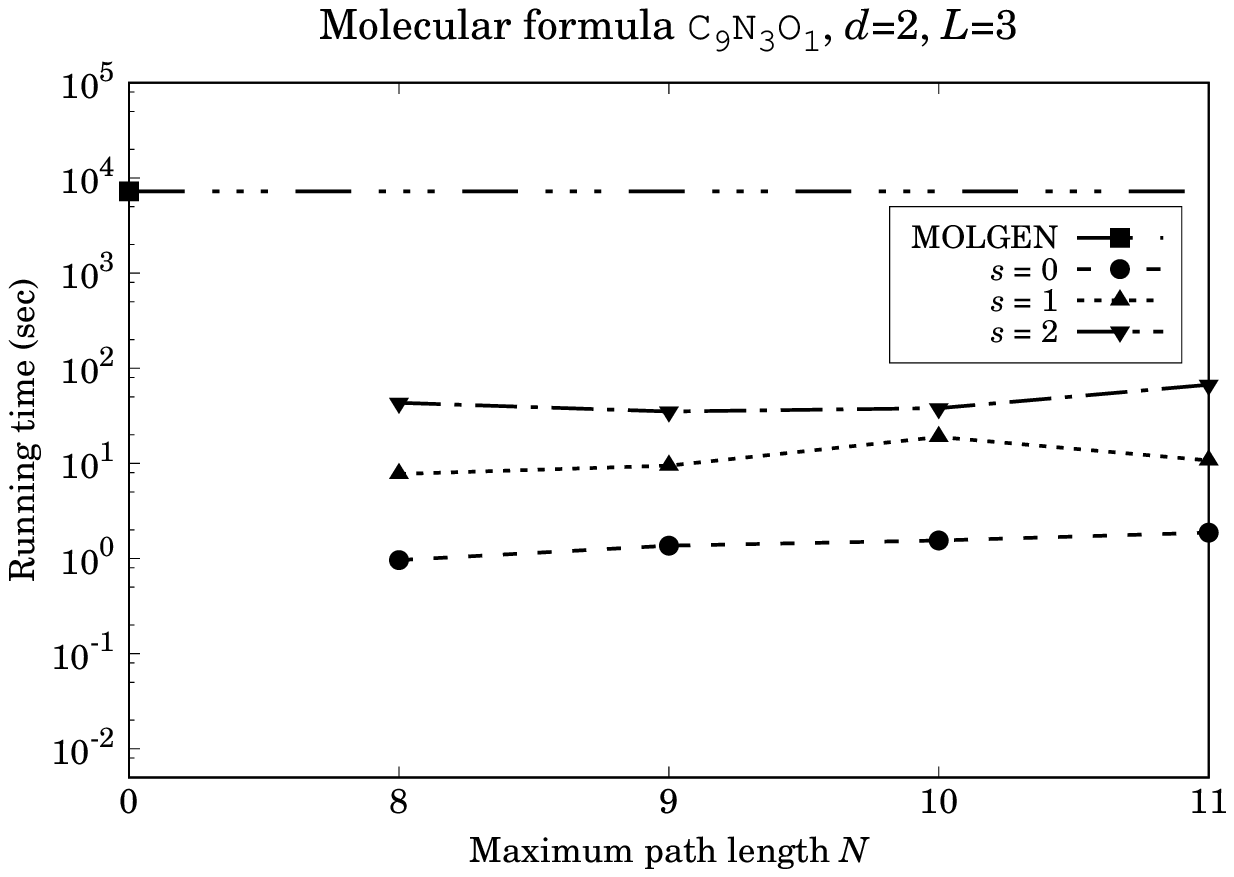}\\
      {\footnotesize (b)}\\
  \end{minipage} 
\hfill
  \begin{minipage}{0.45\textwidth}
   \centering
    \includegraphics[width=1.1\textwidth]{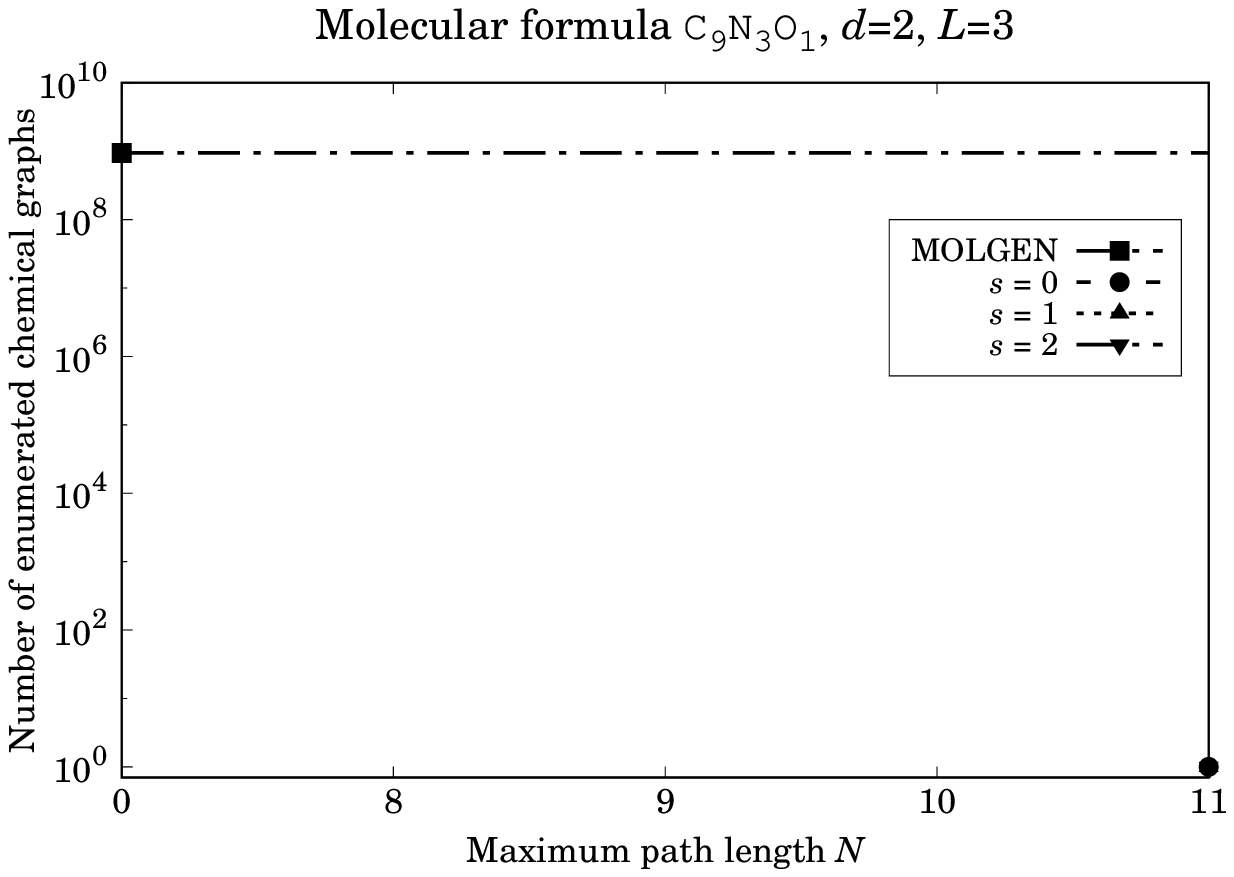}\\
    {\footnotesize (d)}\\
  \end{minipage} 
  \medskip
  
  \vspace{1cm}
  
  \caption{
    Plots showing the computation time 
    and number of chemical graphs enumerated by our algorithm
    for instance type EULF-$L$-P, as compared to MOLGEN.
    The sample structure from PubChem is with CID~10307896, 
    molecular formula {\tt C$_9$N$_3$O$_1$}, 
    and maximum bond multiplicity~$d=2$.
    (a), (b)~Running time;
    (c), (d)~Number of enumerated chemical graphs
    (our algorithm detects that there are no chemical graphs that
    satisfy the given path frequency specification).
  }
 \label{fig:result_graphs_5.2}
 \end{figure}

  \begin{figure}[!ht]
  \begin{minipage}{0.45\textwidth}
   \centering
   \includegraphics[width=1.1\textwidth]{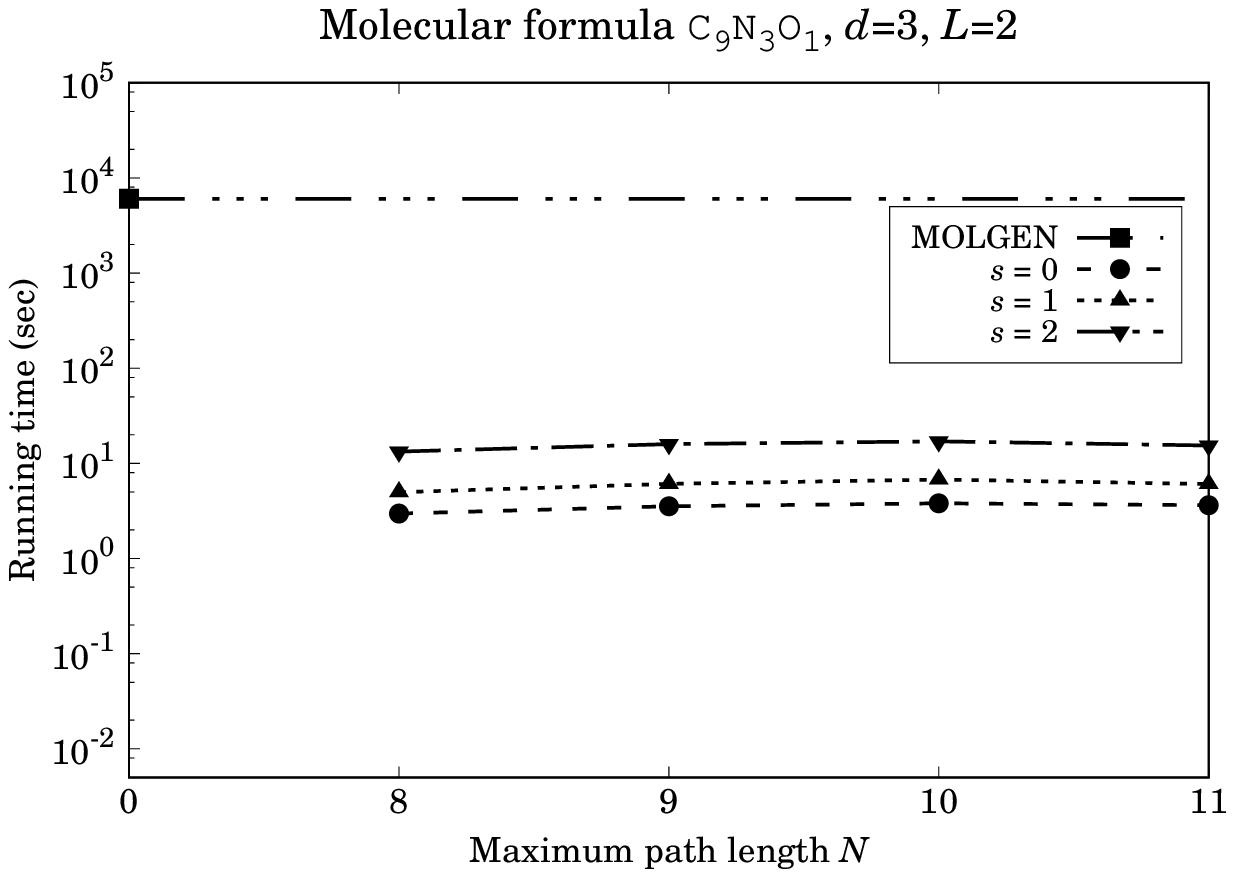}\\
   {\footnotesize (a)}\\
  \end{minipage}
\hfill
  \begin{minipage}{0.45\textwidth}
   \centering
   \includegraphics[width=1.1\textwidth]{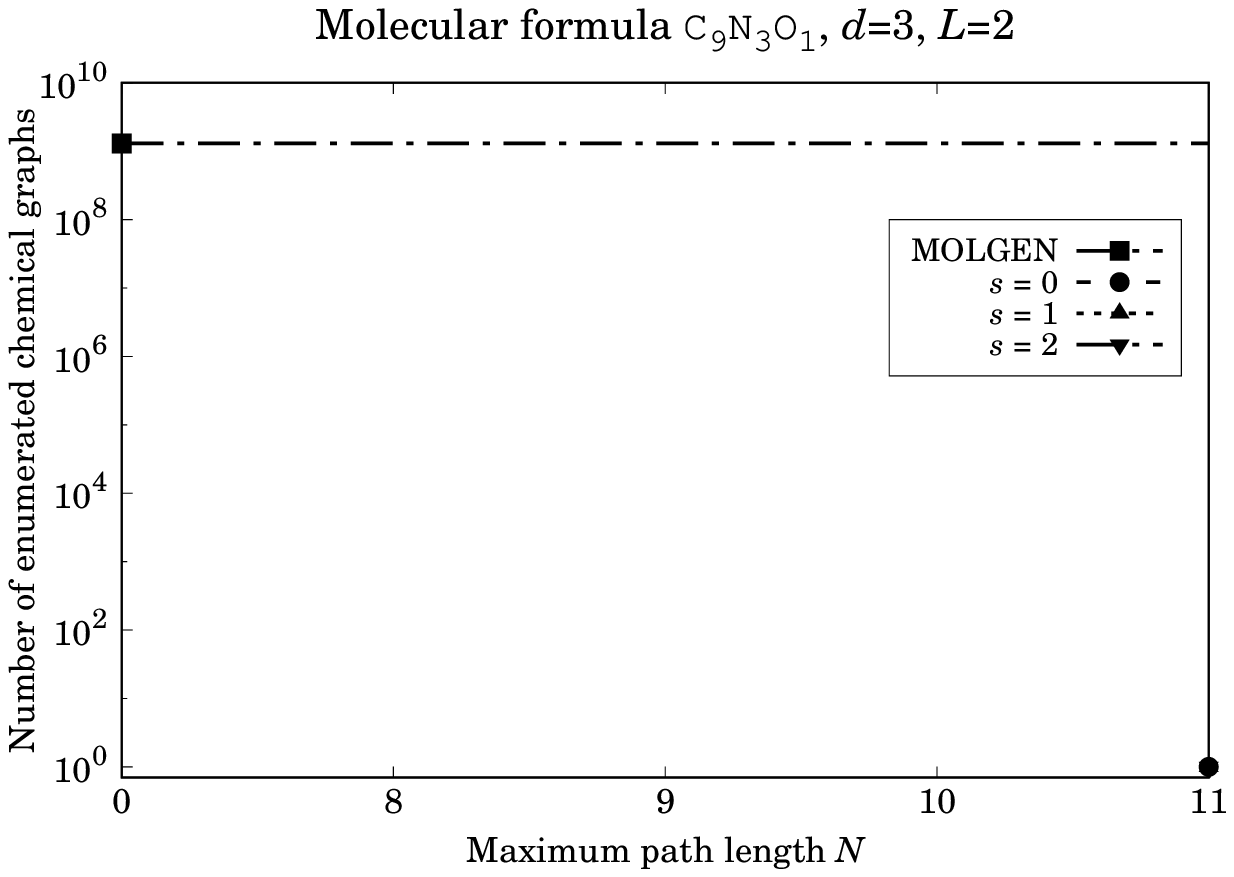}\\
   {\footnotesize (c)}\\
  \end{minipage} 
  \medskip

  \begin{minipage}{0.45\textwidth}
   \centering
      \includegraphics[width=1.1\textwidth]{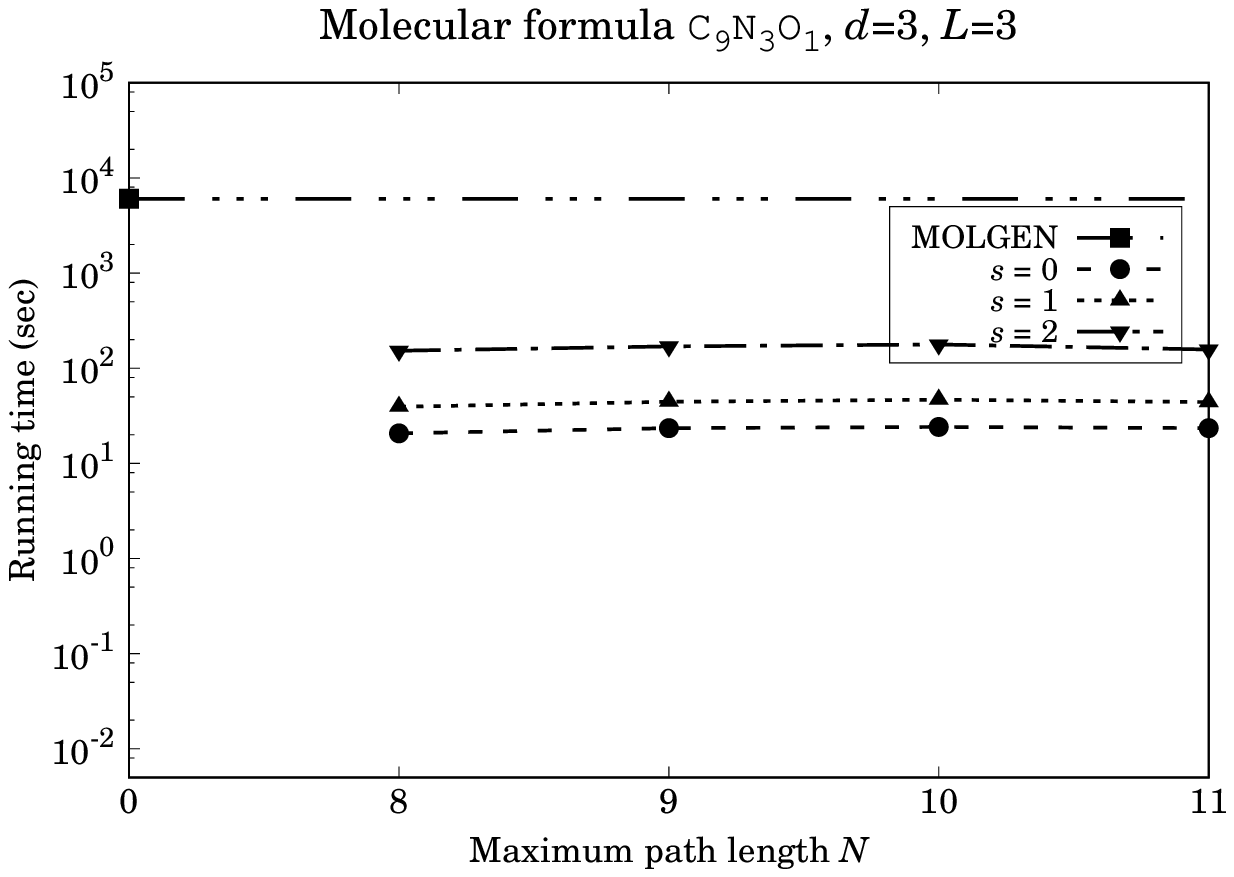}\\
      {\footnotesize (b)}\\
  \end{minipage} 
\hfill
  \begin{minipage}{0.45\textwidth}
   \centering
    \includegraphics[width=1.1\textwidth]{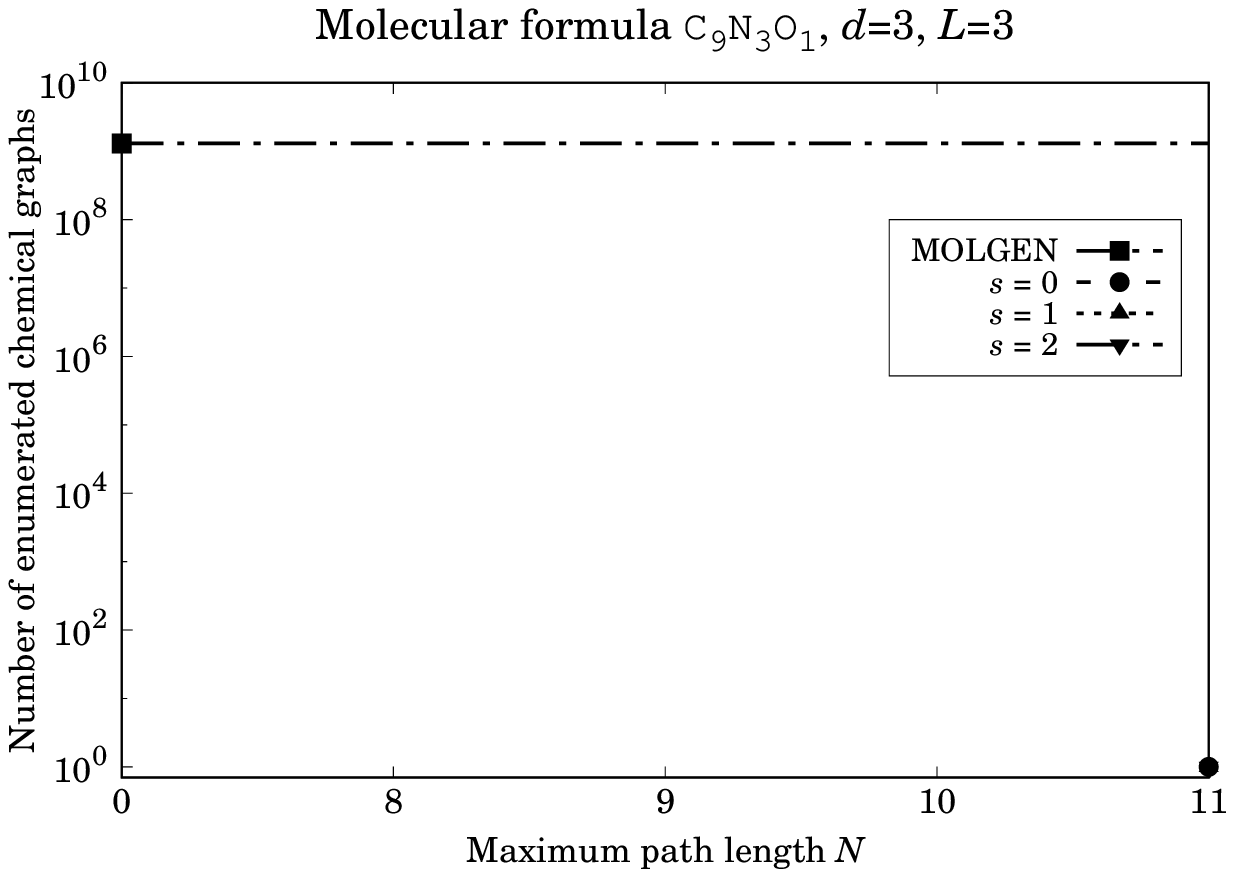}\\
    {\footnotesize (d)}\\
  \end{minipage} 
  \medskip
  
  \vspace{1cm}
  
  \caption{
    Plots showing the computation time 
    and number of chemical graphs enumerated by our algorithm
    for instance type EULF-$L$-P, as compared to MOLGEN.
    The sample structure from PubChem is with CID~10307899, 
    molecular formula {\tt C$_9$N$_3$O$_1$}, 
    and maximum bond multiplicity~$d=3$.
    (a), (b)~Running time;
    (c), (d)~Number of enumerated chemical graphs.
  }
 \label{fig:result_graphs_6.2}
 \end{figure}

\section{Conclusions and Future Work} \label{sec:conclusion}

We formulated the problem of enumerating bi-block 
2-augmented tree structures from a given set of feature vectors.
The set of feature vectors in turn, is specified by a pair of upper and lower bounds on feature vectors.
We designed a branch-and-bound algorithm for the problem 
partially following the ideas from the method of
enumerating monocyclic chemical graphs due to Suzuki~{\em et al.}~\cite{Suzuki14}.

Our experimental results show that our algorithm with a restriction on path
frequencies on paths of length greater than one offers a big advantage over
approaches that enumerate chemical graphs solely based on chemical formula,
in the sense that much fewer graphs of interest are enumerated, in drastically
shorter time.
Nevertheless, for the second of our problem settings, we observe from
computational experiments that the algorithm still takes a considerable
time even when there are no chemical graphs that satisfy the given
path frequency specification.
It would be of interest to develop conditions that detect 
such situation much quicker.

Other possible directions for future work is to extend our
 approach to chemical graphs with structure beyond 2-augmented trees.


\begin{thebibliography}{10}

\bibitem{AHU74}
A.~V. Aho, J.~E. Hopcroft, and J.~D. Ullman.
\newblock {\em The Design and Analysis of Computer Algorithms}.
\newblock 1974.

\bibitem{akutsu12}
T.~Akutsu, D.~Fukagawa, J.~Jansson, and K.~Sadakane.
\newblock Inferring a graph from path frequency.
\newblock {\em Discrete Applied Mathematics}, 160(10-11):1416--1428, 2012.

\bibitem{AN19}
T.~Akutsu and H.~Nagamochi.
\newblock A mixed integer linear programming formulation to artificial neural
  networks.
\newblock In {\em Proceedings of the 2nd International Conference on
  Information Science and Systems}, pages 215--220. ACM, 2019.

\bibitem{ACZSNA20}
N.~A. Azam, R.~Chiewvanichakorn, F.~Zhang, A.~Shurbevski, H.~Nagamochi, and
  T.~Akutsu.
\newblock A method for the inverse {QSAR/QSPR} based on artificial neural
  networks and mixed integer linear programming.
\newblock In {\em Proceedings of the 13th International Joint Conference on
  Biomedical Engineering Systems and Technologies -- Volume 3: BIOINFORMATICS}, 
  pages 101--108, 2020.

\bibitem{IWSNA20}
N.~A. Azam, R.~Ito, C.~Wang, A.~Shurbevski, H.~Nagamochi, and T.~Akutsu.
\newblock A novel method for the inverse {QSAR/QSPR} to monocyclic chemical
  compounds based on artificial neural networks and integer programming, 2020.
\newblock {\it BIOCOMP2020} (submitted).

\bibitem{BZT04}
G.~H. Bak{\i}r, A.~Zien, and K.~Tsuda.
\newblock Learning to find graph pre-images.
\newblock In C.~E. Rasmussen, H.~H. B{\"u}lthoff, B.~Sch{\"o}lkopf, and M.~A.
  Giese, editors, {\em Pattern Recognition: 26th DAGM Symposium, LNCS 3175}, 
  pages 253--261. Springer, 2004.

\bibitem{BMG96}
R.~S. Bohacek, C.~McMartin, and W.~C. Guida.
\newblock The art and practice of structure-based drug design: {A} molecular
  modeling perspective.
\newblock {\em Medicinal Research Reviews}, 16(1):3--50, 1996.

\bibitem{BF78-dendral}
B.~G. Buchanan and E.~A. Feigenbaum.
\newblock {DENDRAL and Meta-DENDRAL}: Their applications dimension.
\newblock {\em Artificial Intelligence}, 11(1-2):5--24, 1978.

\bibitem{BFSS03}
E.~Byvatov, U.~Fechner, J.~Sadowski, and G.~Schneider.
\newblock Comparison of support vector machine and artificial neural network
  systems for drug/nondrug classification.
\newblock {\em Journal of Chemical Information and Computer Sciences}, 
  43(6):1882--1889, 2003.

\bibitem{cayley1875}
A.~Cayley.
\newblock On the analytical forms called trees, with application to the theory
  of chemical combinations.
\newblock {\em Report of the British Association for the Advancement of
  Science}, 45:257--305, 1875.

\bibitem{CWZSNA20}
R.~Chiewvanichakorn, C.~Wang, Z.~Zhang, A.~Shurbevski, H.~Nagamochi, and
  T.~Akutsu.
\newblock A method for the inverse {QSAR/QSPR} based on artificial neural
  networks and mixed integer linear programming, 2020.
\newblock {\it ICBBB2020}, paper K0013.

\bibitem{DKWK05}
M.~Deshpande, M.~Kuramochi, N.~Wale, and G.~Karypis.
\newblock Frequent substructure-based approaches for classifying chemical
  compounds.
\newblock {\em IEEE Transactions on Knowledge and Data Engineering}, 
  17(8):1036--1050, 2005.
  
\bibitem{DIR99}
Y.~Dinitz, A.~Itai, and M.~Rodeh.
\newblock On an algorithm of {Zemlyachenko} for subtree isomorphism.
\newblock {\em Information Processing Letters}, 70(3):141--146, 1999.

\bibitem{FCV03}
J.-L. Faulon, C.~J. Churchwell, and D.~P. Visco.
\newblock The signature molecular descriptor. 2. Enumerating molecules from
  their extended valence sequences.
\newblock {\em Journal of Chemical Information and Computer Sciences}, 
  43(3):721--734, 2003.

\bibitem{FIOST13}
S.~Z. Fazekas, H.~Ito, Y.~Okuno, S.~Seki, and K.~Taneishi.
\newblock On computational complexity of graph inference from counting.
\newblock {\em Natural Computing}, 12(4):589--603, 2013.

\bibitem{FR07}
T.~Fink and J.-L. Reymond.
\newblock Virtual exploration of the chemical universe up to 11 atoms of {C, N, 
  O, F}: Assembly of 26.4 million structures (110.9 million stereoisomers) and
  analysis for new ring systems, stereochemistry, physicochemical properties, 
  compound classes, and drug discovery.
\newblock {\em Journal of Chemical Information and Modeling}, 47(2):342--353, 
  2007.

\bibitem{Fujiwara08}
H.~Fujiwara, J.~Wang, L.~Zhao, H.~Nagamochi, and T.~Akutsu.
\newblock Enumerating treelike chemical graphs with given path frequency.
\newblock {\em Journal of Chemical Information and Modeling}, 48(7):1345--1357, 
  2008.

\bibitem{FS96-chemics}
K.~Funatsu and S.-i. Sasaki.
\newblock Recent advances in the automated structure elucidation system, 
  {CHEMICS}. {U}tilization of two-dimensional {NMR} spectral information and
  development of peripheral functions for examination of candidates.
\newblock {\em Journal of Chemical Information and Computer Sciences}, 
  36(2):190--204, 1996.

\bibitem{MOLGEN5}
R.~Gugisch, A.~Kerber, A.~Kohnert, R.~Laue, M.~Meringer, C.~R{\"u}cker, and
  A.~Wassermann.
\newblock {MOLGEN} 5.0, a molecular structure generator.
\newblock In {\em Advances in Mathematical Chemistry and Applications}, pages
  113--138. Elsevier, 2015.

\bibitem{HDK93}
L.~H. Hall, R.~S. Dailey, and L.~B. Kier.
\newblock Design of molecules from quantitative structure-activity relationship
  models. 3. {R}ole of higher order path counts: Path 3.
\newblock {\em Journal of Chemical Information and Computer Sciences}, 
  33(4):598--603, 1993.

\bibitem{IN12}
T.~Imada and H.~Nagamochi.
\newblock Indexing all rooted subgraphs of a rooted graph.
\newblock {\em IEICE Transactions on Information and Systems}, 95(3):712--721,
  2012.

\bibitem{IZNA08}
Y.~Ishida, L.~Zhao, H.~Nagamochi, and T.~Akutsu.
\newblock Improved algorithms for enumerating tree-like chemical graphs with
  given path frequency.
\newblock {\em Genome Informatics}, 21:53--64, 2008.

\bibitem{J69}
C.~Jordan.
\newblock Sur les assemblages de lignes.
\newblock {\em Journal f{\"u}r die reine und angewandte Mathematik}, 
  70:185--190, 1869.

\bibitem{KTI03}
H.~Kashima, K.~Tsuda, and A.~Inokuchi.
\newblock Marginalized kernels between labeled graphs.
\newblock In {\em Proceedings of the 20th International Conference on Machine
  Learning}, pages 321--328, 2003.

\bibitem{KHLF93}
L.~B. Kier, L.~H. Hall, and J.~W. Frazer.
\newblock Design of molecules from quantitative structure-activity relationship
  models. 1. {I}nformation transfer between path and vertex degree counts.
\newblock {\em Journal of Chemical Information and Computer Sciences}, 
  33(1):143--147, 1993.

\bibitem{MUAPV05}
P.~Mah{\'e}, N.~Ueda, T.~Akutsu, J.-L. Perret, and J.-P. Vert.
\newblock Graph kernels for molecular structure-activity relationship analysis
  with support vector machines.
\newblock {\em Journal of Chemical Information and Modeling}, 45(4):939--951, 
  2005.

\bibitem{MS07}
H.~Mauser and M.~Stahl.
\newblock Chemical fragment spaces for de novo design.
\newblock {\em Journal of Chemical Information and Modeling}, 47(2):318--324, 
  2007.

\bibitem{nagamochi09}
H.~Nagamochi.
\newblock A detachment algorithm for inferring a graph from path frequency.
\newblock {\em Algorithmica}, 53(2):207--224, 2009.

\bibitem{NU03_rooted}
S.-i. Nakano and T.~Uno.
\newblock Efficient generation of rooted trees.
\newblock Technical Report NII-2003-005E, National Institute of Informatics, 
  2003.

\bibitem{NU05_colored}
S.-i. Nakano and T.~Uno.
\newblock Generating colored trees.
\newblock {\em Lecture Notes in Computer Science}, 3787:249--260, 2005.

\bibitem{OMG}
J.~E. Peironcely, M.~Rojas-Chert{\'o}, D.~Fichera, T.~Reijmers, L.~Coulier, 
  J.-L. Faulon, and T.~Hankemeier.
\newblock {OMG}: open molecule generator.
\newblock {\em Journal of Cheminformatics}, 4(1):21, 2012.

\bibitem{SNA11}
M.~Shimizu, H.~Nagamochi, and T.~Akutsu.
\newblock Enumerating tree-like chemical graphs with given upper and lower
  bounds on path frequencies.
\newblock {\em BMC Bioinformatics}, 12(14):S3, 2011.

\bibitem{Skvortsova93}
M.~I. Skvortsova, I.~I. Baskin, O.~L. Slovokhotova, V.~A. Palyulin, and N.~S.
  Zefirov.
\newblock Inverse problem in {QSAR/QSPR} studies for the case of topological
  indices characterizing molecular shape {(Kier indices)}.
\newblock {\em Journal of Chemical Information and Computer Sciences}, 
  33(4):630--634, 1993.

\bibitem{Suzuki14}
M.~Suzuki, H.~Nagamochi, and T.~Akutsu.
\newblock Efficient enumeration of monocyclic chemical graphs with given path
  frequencies.
\newblock {\em Journal of Cheminformatics}, 6(1):31, 2014.

\bibitem{VB12}
M.~Vogt and J.~Bajorath.
\newblock Chemoinformatics: a view of the field and current trends in method
  development.
\newblock {\em Bioorganic \& Medicinal Chemistry}, 20(18):5317--5323, 2012.

\bibitem{WSB04}
J.~Weston, B.~Sch{\"o}lkopf, and G.~H. Bak{\i}r.
\newblock Learning to find pre-images.
\newblock In {\em Advances in Neural Information Processing Systems}, pages
  449--456, 2004.

\end{thebibliography}

\end{document}